\renewcommand{\subset}{\subseteq}
\newcommand{\eqdef}{\stackrel{\mathrm{def}}{=}}
\newcommand{\CAT}{\mathrm{CAT}}
\def\resetMathstrut@{%
  \setbox\z@\hbox{%
    \mathchardef\@tempa\mathcode`\(\relax
    \def\@tempb##1"##2##3{\the\textfont"##3\char"}%
    \expandafter\@tempb\meaning\@tempa \relax
  }%
  \ht\Mathstrutbox@1.2\ht\z@ \dp\Mathstrutbox@1.2\dp\z@
}
\newtheorem{theorem}{Theorem}[section]
\newaliascnt{corollary}{theorem}
\newtheorem{corollary}[corollary]{Corollary}
\newaliascnt{proposition}{theorem}
\newtheorem{proposition}[proposition]{Proposition}
\newaliascnt{claim}{theorem}
\newtheorem{claim}[claim]{Claim}
\newaliascnt{observation}{theorem}
\newtheorem{observation}[observation]{Observation}
\newaliascnt{lemma}{theorem}
\newtheorem{lemma}[lemma]{Lemma}
\newaliascnt{conjecture}{theorem}
\theoremstyle{definition}
\newaliascnt{question}{theorem}
\newtheorem{question}[question]{Question}
\newaliascnt{definition}{theorem}
\newtheorem{definition}[definition]{Definition}
\newaliascnt{remark}{theorem}
\newtheorem{remark}[remark]{Remark}
\renewcommand*{\eqref}[1]{\hyperref[#1]{(\ref*{#1})}}
\DeclareMathOperator{\Log}{Log}
\DeclareMathOperator{\cone}{\mathsf{Cone}}
\DeclareMathOperator{\diam}{diam}
\DeclareMathOperator{\girth}{girth}
\DeclareMathOperator{\supp}{supp}
\DeclareMathOperator{\estimate}{\mathsf{Estimate}}
\DeclareMathOperator{\pair}{\mathsf{Pair}}
\DeclareMathOperator{\query}{Query}
\newcommand{\R}{\mathbb R}
\DeclareMathOperator{\E}{\mathbb{E}}
\newcommand{\dd}{\mathsf{d}}
\DeclareMathOperator{\sgn}{sgn}
\newcommand{\N}{\mathbb{N}}
\newcommand{\n}{\{1,\ldots,n\}}
\newcommand{\Z}{\mathbb{Z}}
\newcommand{\e}{\varepsilon}
\newcommand{\gr}{\mathsf{g}}
\newcommand{\F}{{\mathcal{F}}}
\newcommand{\X}{{\mathcal{X}}}
\newcommand{\RRG}{{\mathbb{X}}}
\newcommand{\sub}{\mathscr{C}}
\newcommand{\sfG}{\mathsf{G}}
\newcommand{\sfE}{\mathsf{E}}
\newcommand{\sfV}{\mathsf{V}}
\DeclareMathOperator{\avg}{avg}
\newcommand{\ud}{\underline{d}}
\newcommand{\od}{\overline{d}}
\newcommand{\DA}{\mathsf{DA}}
\newcommand{\SE}{\smash{\sqrt{\mathcal{E}}}}
\newcommand{\embed}{\text{\scalebox{0.52}[1]{$\hookrightarrow$}}}
\newcommand{\zigzag}{{\text{\textcircled z}}}
\newcommand{\oz}{\zigzag}
\newcommand{\A}{\mathcal{A}}
\newcommand{\circr}{{\text{\textcircled r}}}
\newcommand{\K}{\mathcal{K}}
\renewcommand{\le}{\leqslant}
\renewcommand{\ge}{\geqslant}
\renewcommand{\leq}{\leqslant}
\renewcommand{\geq}{\geqslant}
\newcommand{\MM}{\mathcal{M}}
\newcommand{\NN}{\mathcal{N}}
\newcommand{\cF}{\mathcal{F}}
\newcommand{\f}{\varphi}
\newcommand{\cc}{\mathsf{c}}
\renewcommand{\setminus}{\smallsetminus}
\newcommand{\mnote}[1]{$\ll$\textsf{\textcolor{red}{#1}} ---MM$\gg$}
\newcommand{\anote}[1]{$\ll$\textsf{\textcolor{red}{#1}} ---AE$\gg$}
\renewcommand{\mnote}[1]{}
\renewcommand{\anote}[1]{}
\title{An optimal algorithm  for average distance in typical regular graphs}
\begin{document}

\setcounter{page}{0}

\author{Alexandros Eskenazis}
\address{CNRS, Inst.\  de Math.\  de Jussieu, Sorbonne U., France}
\email{alexandros.eskenazis@imj-prg.fr}
\urladdr{https://www.alexandroseskenazis.com}

\author{Manor Mendel}
\address{Dep.\ of Math.\ and Comp.\ Sci., The Open U.\  of Israel, Israel}
\email{manorme@openu.ac.il}
\urladdr{https://sites.google.com/site/mendelma}

\author{Assaf Naor}
\address{Math. Dep., Princeton U., Princeton, NJ, USA}
\email{naor@math.princeton.edu}
\urladdr{https://web.math.princeton.edu/~naor/}

\backgroundsetup{contents={}}

\subjclass[2020]{%
30L05, 
30L15, 
46B85, 
53C23, 
05C82, 
68R10, 
68R12  
}

\keywords{%
CAT(0) spaces, metric transforms, bi-Lipschitz embeddings, Euclidean cones,  
expanders, Poincaré inequalities, average-distance approximation, distortion growth}

\begin{abstract}
We design a deterministic algorithm that, given $n$ points in a \emph{typical} constant degree regular~graph, queries  $O(n)$
distances to output a constant factor approximation to the average distance among those points, thus answering a question posed in~\cite{MN14}.
Our algorithm uses the method of~\cite{MN14} to construct a sequence of constant degree graphs that are expanders with respect to certain nonpositively curved metric spaces, together with a new rigidity theorem for metric transforms of nonpositively curved metric spaces. The fact that our algorithm works for typical (uniformly random) constant degree regular  graphs rather than for all constant degree graphs is unavoidable, thanks to the following impossibility result that we obtain: For every fixed $k\in \N$,  the approximation factor of  any algorithm for average distance that works for all constant degree graphs and queries $o(n^{1+1/k})$ distances must necessarily be  at least $2(k+1)$.
This matches the upper bound attained by the  algorithm that was designed for general finite metric spaces in~\cite{BGS}. Thus, any  algorithm for average distance in constant degree graphs whose approximation guarantee is less than $4$ must query $\Omega(n^2)$ distances, any such algorithm whose approximation guarantee is less than $6$ must query $\Omega(n^{3/2})$ distances, any such algorithm whose approximation guarantee less than $8$ must query $\Omega(n^{4/3})$ distances, and so forth, and furthermore there exist algorithms achieving those parameters.
\end{abstract}

\maketitle

\setcounter{tocdepth}{1} 

\tableofcontents

\vfill

\autoref{part:EA} is an extended abstract that will appear in the proceedings of the 37th ACM-SIAM Symposium on Discrete Algorithms (SODA 2026). 
\autoref{part:FV} is the draft submitted to SODA 2026 of a longer paper that contains, in particular, the proofs
of all the new results appearing in \autoref{part:EA}.

\thispagestyle{empty}

\newpage

\part{Extended Abstract}
\label{part:EA}

\section{Introduction}
\label{EA:sec:Intro}

 Indyk systematically investigated~\cite{Indyk-sublinear} fast algorithms for geometric computations involving distances. One of the main problems that was featured in~\cite{Indyk-sublinear} is approximating the average distance\footnote{Throughout the ensuing text, we will use the common notations $[n]=\n$ and $\tbinom{[n]}{2}=\{\{a,b\}\subset [n]:\ a\neq b\}$ for $n\in \N$.}
\begin{equation}\label{EA:eq:average dist}
\frac{1}{n^2}\sum_{(i,j)\in [n]^2} d_\MM(x_i,x_j)
\end{equation}
among $n$ points $x_1,\ldots,x_n\in \MM$ in a given metric space $(\MM,d_\MM)$.
As there are $n^2$ summands in~\eqref{EA:eq:average dist}, such an algorithm is considered sublinear if it performs $o(n^2)$ distance queries. Obtaining a finite approximation factor entails making at least $n-1$ distance queries since otherwise the graph whose edges are the pairs of points that were queried would be disconnected,  so by varying the distances between its connected components one can obtain two markedly different metric spaces that the algorithm cannot distinguish.

Here, we will study deterministic algorithms for the question stated above.
In combination with the mathematics that we will develop for that purpose (belonging to metric embeddings and Alexandrov geometry), this investigation will lead to the resolution of a range of questions that were left open in the literature. In order to explain these matters, it is beneficial to first ground the discussion (and set notation/terminology) by recalling the following formal definition of the computational model that will be treated herein.
Its generality entails that the lower bound that we will obtain is quite a strong statement, while the algorithm that we will design belongs to the least expressive aspect of this framework, namely, it will be non-adaptive and, in fact, it will be a ``universal approximator'' per terminology from~\cite{BGS} that we will soon recall.

\begin{definition}
\label{EA:def:adapt-approximator} Let $\mathcal{F}$ be a family of metric spaces. For $n,m\in \N$ and $\alpha\geq 1$, a deterministic $\alpha$-approximation algorithm for the average distance in $\mathcal{F}$ that makes $m$ distance queries on inputs of size $n$ consists of functions
\begin{equation}\label{EA:eq:functions for the game}
\Big\{\mathsf{Pair}_i:\Big(\tbinom{[n]}{2}\times [0,\infty)\Big)^{i-1}  \to
\tbinom{[n]}{2}\Big\}_{i=1}^m\qquad\mathrm{and}\qquad \estimate: \Big(\tbinom{[n]}{2}\times [0,\infty)\Big)^m\to [0,\infty),
\end{equation}
where $\pair_1$ is understood to be a fixed element $\pair_1=\{a_1,b_1\}\in \tbinom{[n]}{2}$ that satisfies the following property.

Given a metric space $(\MM,d_\MM)\in \cF$ and $x_1,\ldots,x_n\in \MM$, define  $\{a_1,b_1\},\{a_2,b_2\},\ldots,\{a_m,b_m\}\in \tbinom{[n]}{2}$ inductively (recalling that $\{a_1,b_1\}$ has already been fixed above) by setting, for every $i\in [m-1]$,
\begin{equation}\label{EA:eq:approximation game}
\{a_{i},b_{i}\}\eqdef \pair_{i} \Big(\big(\{a_1,b_1\},d_\MM(x_{a_1},x_{b_1})\big), \big(\{a_2,b_2\},d_\MM(x_{a_2},x_{b_2})\big),\ldots, \big(\{a_{i-1},b_{i-1}\},d_\MM(x_{a_{i-1}},x_{b_{i-1}})\big)\Big).
\end{equation}
We then require that%
\footnote{If the denominator in~\eqref{EA:eq:alpha approx def} vanishes (i.e., when $x_1=\ldots=x_n$),  then we require that the numerator in~\eqref{EA:eq:alpha approx def} also vanishes.}
\begin{equation}\label{EA:eq:alpha approx def}
1\le \frac{\estimate \Big(\big(\{a_1,b_1\},d_\MM(x_{a_1},x_{b_1})\big),\ldots, \big(\{a_{m},b_{m}\},d_\MM(x_{a_{m}},x_{b_{m}})\big)\Big)}{\frac{1}{n^2}\sum_{(i,j)\in [n]^2} d_\MM(x_i,x_j)} \le \alpha.
\end{equation}

A \emph{non-adaptive}  deterministic $\alpha$-approximation algorithm for average distance in $\mathcal{F}$ which makes $m$ distance queries on size $n$ inputs, is the restrictive case of the above setup in which $\pair_1,\ldots,\pair_m$  are constant functions, so their images are, respectively, $m$ pairs of indices $\{a_1,b_1\},\ldots,\{a_m,b_m\}\in \tbinom{[n]}{2}$. The output is then a function of the $d_\MM$-distances between the pairs $\{x_{a_1},x_{b_1}\},\ldots, \{x_{a_m},x_{b_m}\}\subset \{x_1,\ldots,x_n\}$ of points whose indices  were decided upfront and not adapted to the specific input $x_1,\ldots,x_n\in \MM$.

When we discuss $\alpha$-approximation algorithms for average distance that make $m$ distance queries on inputs of size $n$   without specifying the underlying family $\cF$, we will tacitly mean that $\cF$ consists of all metric spaces; as any finite metric space is isometric to a subset of $\ell_\infty$, we may take, in this case, $\cF=\{\ell_\infty\}$.
\end{definition}
The standard interpretation of~\eqref{EA:eq:approximation game} is that the algorithm performs the following ``approximation game'' in $m$ rounds.
Starting by querying the $d_\MM$-distance between $x_{a_1}$ and $x_{b_1}$, based only on the answer it receives, the algorithm computes a new pair of points $x_{a_2},x_{b_2}$ and queries the $d_\MM$-distance between them.
In round $i+1$, the algorithm selects a new pair of points $x_{a_{i+1}},x_{b_{i+1}}$ and queries their $d_\MM$ distance, where that pair is a function of only the (ordered) sequence of pairs queried by the algorithm and their $d_\MM$ distances (the responses to the queries) in the preceding $i$ rounds.
After $m$ rounds, the algorithm outputs an estimate for the average distance, which is required to be a function of only the (ordered) sequence of the pairs it queried and the responses to those queries throughout this procedure.

Although the new algorithmic results obtained herein only concern deterministic algorithms, we wish to mention randomized algorithms when referring to previous results in the literature.
Thus, in the context of \autoref{EA:def:adapt-approximator}, a randomized $\alpha$-approximation algorithm for the average distance in $\mathcal{F}$ that makes $m$ distance queries on inputs of size $n$ is the same as in \autoref{EA:def:adapt-approximator}, except that the functions in~\eqref{EA:eq:functions for the game} have an additional variable $\omega$ from some probability space $(\Omega,\mathbb{P})$, and it is required that the approximation guarantee~\eqref{EA:eq:alpha approx def} holds with probability at least, say,  $2/3$. In the non-adaptive special case, $\pair_1,\ldots,\pair_m$ depend only on $\omega$.
In this context, $\cF$ is understood to be all metric spaces if it is not specified.

Indyk considered the simple randomized non-adaptive algorithm that is obtained by averaging over a uniformly random sampling of $O(n/\e^{7/2})$ pairs for some $0<\e<1$. He proved that this straightforward procedure yields a $1+\e$ factor approximation with constant probability.
Barhum, Goldreich, and Shraibman~\cite{BGS} improved Indyk's analysis to
$O(n/\e^2)$ queries.
Goldreich and Ron~\cite{GR06,GR-avg} studied a restricted version of the above problem in which one wishes to approximate
the average distance among \emph{all} vertices of a given \emph{unweighted connected} graph on $n$ vertices, showing that averaging over a uniformly random sampling of $O(\sqrt{n}/\e^{2})$ pairs of vertices yields a $(1+\e)$-approximation algorithm with constant probability.%
\footnote{This algorithm depends crucially on estimating the average over all pairs of vertices and the graph being unweighted, whereas in our model there is $\Omega(n)$ lower bound even for adaptive randomized algorithms.}

Deterministic algorithms for average distance estimation were broached in~\cite{BGS}, where an especially simple example of non-adaptive algorithms, called
\emph{universal approximators}, was studied. Per Definition~6 in~\cite{BGS}, given $\alpha\ge 1$, an $\alpha$-universal approximator in a family $\cF$ of metric spaces  of size $m$ for  inputs of size $n$ consists of  $m$ unordered pairs of indices  $\{a_1,b_1\},\ldots,\{a_m,b_m\}\in \tbinom{[n]}{2}$ and a scaling factor  $\sigma>0$ such that for every metric space $(\MM,d_\MM)\in \cF$ and every $x_1,\ldots,x_n\in \MM$  the output of the algorithm is
$$
\estimate_{\mathrm{BGS}}\big(d_\MM(x_{a_1},x_{b_1}),\ldots, d_\MM(x_{a_m},x_{b_m})\big)\eqdef \sigma\sum_{\ell=1}^m d_\MM(x_{a_\ell},x_{b_\ell}),
$$
 and~\eqref{EA:eq:alpha approx def}
holds.
That is, $n^{-2}\sum_{(i,j)\in [n]^2}d_\MM(x_i,x_j)\le \sigma\sum_{\ell=1}^md_\MM(x_{a_\ell},x_{b_\ell})\le \alpha n^{-2}\sum_{(i,j)\in [n]^2}d_\MM(x_i,x_j)$ for every metric space $(\MM,d_\MM)\in\cF$ and every $x_1,\ldots,x_n\in \MM$. As before, if $\cF$ is not mentioned, then it will be assumed tacitly to be all the possible metric spaces.
The following theorem was proved in~\cite{BGS}:
\begin{theorem}
\label{EA:thm:BGS-approximator} For $k,n\in \N$ with $k\ge 2$ there is an $n$-vertex $(2k)$-universal approximator of  size $O(kn^{1+1/k})$.
\end{theorem}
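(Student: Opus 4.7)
The plan is to take $H=(V,E)$ to be the $k$-dimensional Hamming graph: set $t=\lceil n^{1/k}\rceil$, identify $[n]$ with $V=[t]^k$ (the case $n<t^k$ is handled by a routine padding that costs only constant factors depending on $k$), and let $E$ consist of the unordered pairs of vertices that differ in exactly one coordinate. Declare the queried pairs $\{a_\ell,b_\ell\}_{\ell=1}^m$ to be the edges of $H$ and set $\sigma=2/(nt)$. Then $H$ is $k(t-1)$-regular with $|E|=k\binom{t}{2}t^{k-1}=O(kn^{1+1/k})$, has diameter $k$, and carries the canonical path system $\{P_{uv}\}_{u,v\in V}$ in which $P_{uv}$ flips the coordinates of $u$ to those of $v$ in the order $1,2,\ldots,k$, skipping those where $u$ and $v$ already agree. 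Both inequalities in~\eqref{EA:eq:alpha approx def} will reduce to the triangle inequality applied in complementary ways.

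For the upper bound $\sigma\sum_\ell d_\MM(x_{a_\ell},x_{b_\ell})\le 2k\cdot\avg$, where $\avg\eqdef n^{-2}\sum_{(i,j)\in[n]^2}d_\MM(x_i,x_j)$, I will use the averaged triangle inequality: for every $\{u,v\}\in E$ and every $w\in V$, $d_\MM(x_u,x_v)\le d_\MM(x_u,x_w)+d_\MM(x_w,x_v)$. Averaging in $w\in V$, summing in $\{u,v\}\in E$, and invoking $k(t-1)$-regularity of $H$ yields
\begin{equation*}
\sum_{\{u,v\}\in E}d_\MM(x_u,x_v)\le \frac{k(t-1)}{n}\sum_{u,w\in V}d_\MM(x_u,x_w)=k(t-1)n\cdot\avg,
\end{equation*}
so multiplying by $\sigma$ gives $\sigma\sum_\ell d_\MM(x_{a_\ell},x_{b_\ell})\le \frac{2k(t-1)}{t}\avg\le 2k\cdot\avg$.

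For the lower bound $\sigma\sum_\ell d_\MM(x_{a_\ell},x_{b_\ell})\ge\avg$, I apply the triangle inequality along the canonical paths: $d_\MM(x_i,x_j)\le\sum_{\{u,v\}\in P_{ij}}d_\MM(x_u,x_v)$. Summing over $(i,j)\in V^2$ and swapping summations gives $n^2\avg\le\sum_{\{u,v\}\in E}d_\MM(x_u,x_v)\cdot L(\{u,v\})$, where $L(e)\eqdef|\{(i,j)\in V^2:e\in P_{ij}\}|$. A direct count enabled by the product structure of $V=[t]^k$ shows that $L(e)\le 2n/t$ uniformly: an edge that flips coordinate $j$ between digits $a,b$ at a prescribed location lies on $P_{pq}$ iff the last $k-j$ coordinates of $p$ and the first $j-1$ coordinates of $q$ are forced to match that location and $\{p_j,q_j\}=\{a,b\}$, yielding exactly $2t^{j-1}t^{k-j}=2n/t$ ordered pairs $(p,q)$. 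Hence $n^2\avg\le(2n/t)\sum_{\{u,v\}\in E}d_\MM(x_u,x_v)$, i.e.\ $\sigma\sum_\ell d_\MM(x_{a_\ell},x_{b_\ell})\ge\avg$.

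I do not anticipate a genuine obstacle once the Hamming graph is identified as the right construction: the load-balancing of the short paths, which would be the crux on a generic graph, is automatic from the product structure of $V$; and both sides of~\eqref{EA:eq:alpha approx def} collapse to a few lines of triangle-inequality arithmetic with matching algebra in $t$ and $k$. The only technicality is the reduction from arbitrary $n$ to a perfect $k$-th power, a standard padding that only affects absolute constants and does not interact with the main argument.
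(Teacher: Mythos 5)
Your construction (the Hamming graph on $[t]^k$ with the canonical coordinate-flipping path system), the averaged triangle inequality for the upper bound, and the load count $L(e)=2t^{k-1}$ for the lower bound are all correct, and this is essentially the construction of the cited source \cite{BGS} (the paper itself does not reprove this theorem, it quotes it). For $n=t^k$ your two displayed chains give exactly $\avg\le\sigma\sum_\ell d_\MM(x_{a_\ell},x_{b_\ell})\le \frac{2k(t-1)}{t}\avg\le 2k\avg$ with $|E|=\frac{k(t-1)n}{2}\le \frac{k}{2}n^{1+1/k}$, matching the quoted bound.

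The one step that does not go through as written is the reduction from general $n$ to a perfect $k$-th power, which you dismiss as a "routine padding that only affects absolute constants and does not interact with the main argument." The absolute constant \emph{is} the content of the theorem: the claim is a $(2k)$-universal approximator, and this constant is load-bearing in the paper because of the matching impossibility bound $2(k+1)$ for $o(n^{1+1/k})$ queries. The natural padding --- a balanced surjection $\pi:[t]^k\to[n]$ with $t=\lceil n^{1/k}\rceil$, setting $y_v=x_{\pi(v)}$ --- does not preserve the constant: since $t^k$ exceeds $n$ only slightly, the fibers have sizes in $\{1,2\}$, so $\frac{1}{t^{2k}}\sum_{v,w}d_\MM(y_v,y_w)=\frac{1}{t^{2k}}\sum_{i,j}|\pi^{-1}(i)||\pi^{-1}(j)|d_\MM(x_i,x_j)$ can differ from $\avg$ by a factor as large as $4$ in either direction depending on where the adversary concentrates the distances, degrading the guarantee to roughly $8k$. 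Relatedly, your claim that $L(e)\le 2n/t$ "uniformly" is an identity $L(e)=2t^{k-1}$ that equals $2n/t$ only when $n=t^k$; for $n<t^k$ it exceeds $2n/t$, so the lower-bound normalization $\sigma=2/(nt)$ also needs to be revisited. None of this is fatal to the approach --- one can, for instance, keep the constant by working with a product $[t_1]\times\cdots\times[t_k]$ whose factors multiply to exactly the right quantity, or by a more careful weighting of the queried pairs --- but some such argument must actually be supplied; as it stands the proof establishes a $(Ck)$-universal approximator for a universal $C>2$ rather than the stated $(2k)$.
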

In particular, \autoref{EA:thm:BGS-approximator} shows that for every fixed integer $k\ge 2$ there is a (nonadaptive) deterministic $(2k)$-approximation algorithm for the average distance that makes $O(n^{1+1/k})$ distance queries on inputs of size $n$. \autoref{EA:thm:impossibility-DA} below, which is one of the main results of the present work, provides a matching impossibility statement.
It should be noted that, in addition to proving~\autoref{EA:thm:BGS-approximator},
a (non-matching) lower bound on the size of universal approximators was proved in~\cite{BGS}.
Ruling out any algorithm whatsoever (in the aforementioned oracle model) rather than only universal approximators is conceptually an entirely different matter.
In fact, the concrete adversarial metric / widget that is used in~\cite{BGS} fails to fool even some non-adaptive algorithms, while our proof of \autoref{EA:thm:impossibility-DA} builds an implicit adversarial metric (algorithm-dependent) as a solution to an auxiliary optimization problem.
An overview of the proof of \autoref{EA:thm:impossibility-DA}  appears in \autoref{sec:hardness overview} below, and its complete details are presented in~\cite{EMN-hadamard}.

\begin{theorem} \label{EA:thm:impossibility-DA}
Fix $k\in\N$ and $\alpha\ge 1$. If for all $n\in\N$, there is a deterministic $\alpha$-approximation algorithm for the average distance that makes $o(n^{1+1/k})$ distance queries on inputs of size $n$, then necessarily
$\alpha\geq 2(k+1)$.
\end{theorem}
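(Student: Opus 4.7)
The plan is to prove this by the standard \emph{indistinguishability} method. For any fixed $n$ and any deterministic algorithm $\A$ with $m=o(n^{1+1/k})$ queries, I will construct two metrics $\mu_1,\mu_2$ on $[n]$ on which $\A$'s sequence of queries and responses is identical---so $\A$ outputs the same estimate on both---while the average distances satisfy $\bar{\mu}_2/\bar{\mu}_1 \ge 2(k+1)-o(1)$. Since $\alpha$ is a fixed constant and the output $\estimate$ must satisfy $\bar{\mu}_i\le \estimate\le \alpha\bar{\mu}_i$ for both $i\in\{1,2\}$, letting $n\to\infty$ forces $\alpha\ge 2(k+1)$.

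I would first simulate $\A$ against an \emph{adaptive responder}: at the $i$-th query $\{a_i,b_i\}$ the responder returns a real number $r_i$ chosen so that the family $\F_i$ of metrics on $[n]$ satisfying all revealed equations $\mu(x_{a_j},x_{b_j})=r_j$ together with the triangle inequality remains ``gap-rich''. Concretely, each $r_i$ is obtained by solving an auxiliary linear program in the $\binom{n}{2}$ pairwise-distance variables whose value is the ratio between the maximal and the minimal average-distance over $\F_i$; the final pair $\mu_1,\mu_2$ is picked as the average-minimizer and average-maximizer in $\F_m$. This is the ``implicit, algorithm-dependent adversarial metric obtained as a solution to an auxiliary optimization problem'' alluded to in the excerpt, and it differs sharply from the fixed widget of~\cite{BGS}, which cannot fool adaptive algorithms.

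The combinatorial core is a Moore-type sparsity lemma. Let $G$ denote the query graph at termination, so $|E(G)|=m=o(n^{1+1/k})$. After discarding the at most $o(n)$ vertices of degree $\omega(n^{1/k})$, a walk-counting argument bounds the number of ordered pairs at $G$-shortest-path distance at most $k$ by $O(n\cdot (2m/n)^k)=o(n^2)$, so $(1-o(1))n^2$ pairs satisfy $d_G\ge k+1$. Combining this with the responder's strategy, I would arrange that $\F_m$ contains a pseudo-metric $\mu_1$ whose zero-distance equivalence classes realize a balanced bipartition $[n]=A\sqcup B$ with cross-side distance~$1$, giving $\bar{\mu}_1\to 1/2$; and a metric $\mu_2$ equal to the shortest-path distance on the bipartite auxiliary graph on the class quotient induced by the cross-class responses, whose average is $\gtrsim k+1$ by the Moore bound. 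The ratio is then $2(k+1)(1-o(1))$, the factor~$2$ coming from the halving of $\bar{\mu}_1$ produced by the pseudo-metric structure.

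The main obstacle I anticipate is adaptive interference: $\A$ can try to force odd cycles into $G$ (destroying bipartiteness) or to pile up ``within-cluster'' queries (collapsing $A$ or $B$ to few equivalence classes), either of which would degrade the Moore-type bound on the auxiliary graph if handled naively. Overcoming this is exactly where the LP-based responder, rather than a greedy online $2$-coloring, is required: the joint tracking of the number of forced within-cluster merges and the density of between-cluster queries must certify that the budget $o(n^{1+1/k})$ is insufficient to simultaneously collapse the class quotient and make its auxiliary graph dense at the Moore scale~$k$. This simultaneous certification is what delivers the \emph{tight} constant $2(k+1)$ rather than a weaker $k+1$, and is the technically demanding step of the proof.
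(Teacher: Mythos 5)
Your high-level strategy (indistinguishability via two consistent metrics with average-distance ratio $2(k+1)-o(1)$, an adaptive adversary, and a ball-growth/Moore-type sparsity argument) is the same as the paper's, but two of your concrete steps have genuine gaps. First, your ``Moore-type sparsity lemma'' fails as stated. Discarding the $o(n)$ vertices of degree $\omega(n^{1/k})$ does not remove them from the query graph: if the algorithm queries a star centered at a single vertex with all responses equal to $1$, then \emph{every} pair of leaves is at shortest-path distance $2$, so $\Omega(n^2)$ pairs are within distance $k$ even though only $n-1$ queries were made and only one vertex is high-degree. Walk counting with the average degree $2m/n$ is only valid under a uniform degree bound, which the algorithm can violate at will. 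The paper's essential idea, absent from your proposal, is that the adversary's \emph{response itself} must be stratified by the current degree: a query touching a vertex whose degree has grown to roughly $\sqrt{m}\,n^{(h+1)/(2k)}$ receives weight about $h+1$ (the function $h_i$ in \eqref{eq:def hi}--\eqref{eq:def di+1}), so that heavy vertices are metrically pushed away from their neighborhoods and the radius-$(k-1)$ balls in the \emph{weighted} shortest-path metric stay of size $o(n)$. Without this degree-dependent weighting, no choice of a single response value defeats adaptive (or even non-adaptive) star-like query patterns, and the upper metric's average collapses to $O(1)$ independent of $k$.

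Second, your lower metric---a pseudo-metric that is $0$ inside the two sides of a balanced bipartition and $1$ across---is exactly the structure an adaptive algorithm can destroy (odd cycles force within-side responses of $1$ or cross responses of $0$; within-side spanning trees collapse the quotient), and the mechanism you offer to repair it, a responder that at each round ``solves an LP whose value is the ratio of the maximal to the minimal average over the consistent family,'' is not actually a linear program (the objective is a ratio) and comes with no argument that its value stays near $2(k+1)$. The paper avoids bipartiteness entirely: the responses are the explicit integer weights above, the large metric is the truncated shortest-path metric $\min\{d_{\sfG},k+1\}$, and the small metric is obtained \emph{after the interaction ends} as the minimizer of $\sum_{x,y}d(x,y)$ subject to consistency with the queried weights, the triangle inequality, and auxiliary lower bounds $d(x,y)\ge\min\{\overline{d}(x,y),\max\{h(x),h(y)\}+\tfrac12\}$; a variational argument (if $\underline{d}(x,y)>\tfrac12$ off the exceptional set, some constraint must be tight, and each tightness case leads to a contradiction) shows it equals $\tfrac12$ on all but $o(n^2)$ pairs. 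Note that one cannot simply set all non-queried pairs to $\tfrac12$ or to $0/1$: two distances of $\tfrac12$ force every third distance to be at most $1$, which is inconsistent with queried edges of weight $\ge 2$, and controlling exactly which pairs must exceed $\tfrac12$ is where the optimization and the exceptional sets $U,W$ do real work. So while your factor-of-two target and overall architecture are right, the two load-bearing components---the degree-stratified response rule and the variational construction of the small metric---are missing, and the substitutes you propose would not survive the adversarial query patterns you yourself identify.
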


By combining \autoref{EA:thm:BGS-approximator}  and \autoref{EA:thm:impossibility-DA}, we obtain the following description of the approximation landscape for the average distance.
Any algorithm for the average distance with an approximation guarantee less than $4$ must query $\Omega(n^2)$ distances;
any such algorithm with an approximation guarantee less than $6$ must query $\Omega(n^{3/2})$ distances; any such algorithm whose approximation guarantee is less than $8$ must query $\Omega(n^{4/3})$ distances, etc. Furthermore, there are algorithms that attain these parameters. See \autoref{EA:fig-ub-lb-chart} for a graphical representation of the lower bound versus the upper bound.

\begin{figure}[htb]
 \begin{center}
\includegraphics[width=0.5\textwidth]{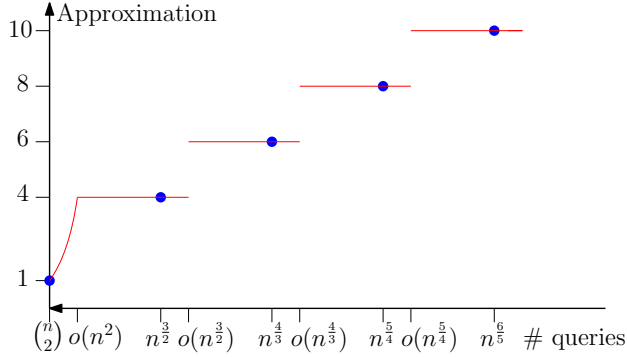}
 \end{center}
 \caption{The blue dots correspond to the available algorithms per \autoref{EA:thm:BGS-approximator}.
 The red lines  correspond to the impossibility statement of  \autoref{EA:thm:impossibility-DA}. 
 Notice that the graph's horizontal axis
 decreases as one moves from left to right, and it introduces a macroscopic gap between $O(n^{1+1/k})$ queries and $o(n^{1+1/k})$ queries.
 }
\label{EA:fig-ub-lb-chart}
\end{figure}

\begin{remark}
Similar-looking (but not identical!) tight approximation-to-size trade-offs are postulated as famous conjectures on related problems, such as spanners~\cite{AGDJS93} and approximate distance oracles~\cite{TZ05}. However, the tight trade-off that we obtain here is an unconditional theorem because, in contrast to the aforementioned examples, our lower bound does not assume the validity of the long-standing Erd\H{o}s girth conjecture~\cite{Erdos-girth-conj}. In our opinion, this aspect of \autoref{EA:thm:impossibility-DA} should be further explored.
\end{remark}

Thanks to \autoref{EA:thm:impossibility-DA}, we now know that there does not exist a deterministic constant approximation algorithm for the average distance that makes $O(n)$ distance queries on inputs of size $n$.
By the simple fact contained in \autoref{EA:prop:reg-universal} below, this impossibility result implies the same statement for constant approximation algorithms for the average distance in $\mathsf{Reg}(d)$, which makes $O(n)$ distance queries on inputs of size $n$.
Here for each $d\in \{3,4,\ldots,\}$ we let $\mathsf{Reg}(d)$ denote the family of metric spaces that consists of all finite connected $d$ regular graphs, equipped with their shortest-path metric.
A detailed (straightforward) proof of \autoref{EA:prop:reg-universal}  appears in~\cite{EMN-hadamard}.
\begin{proposition}\label{EA:prop:reg-universal}
For every $\e>0$ and every $d\in\{3,4,\ldots\}$, any finite metric space embeds with distortion $1+\e$ into some finite connected $d$-regular graph, equipped with its shortest-path metric.
\end{proposition}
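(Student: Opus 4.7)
The plan is to combine three ingredients: scaling-and-rounding $d_\MM$ to an integer-valued nearby metric, subdividing the resulting weighted complete graph into an unweighted graph, and a degree-equalization step that ``spreads'' each original point over a small cluster while attaching dead-end gadgets along the subdivided paths. The expected main difficulty is verifying that no shortcuts are introduced, so that the final $d$-regular graph records the scaled distance $N d_\MM(x_i,x_j)$ up to an additive error of order $\log n$ only.

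Fix $\e>0$, let $\delta=\min_{i\ne j} d_\MM(x_i,x_j)>0$, and pick a positive integer $N$ (to be chosen later). For $i\ne j$ set $\ell_{ij}=\lceil N d_\MM(x_i,x_j)\rceil$. Since $\lceil a+b\rceil\le\lceil a\rceil+\lceil b\rceil$, the $\ell_{ij}$ satisfy the triangle inequality, so in the weighted complete graph on $\{x_1,\ldots,x_n\}$ with edge weights $\ell_{ij}$ every direct edge is a shortest path between its endpoints. Replacing each such edge by an unweighted path of length $\ell_{ij}$ produces a graph $G_1$ with $d_{G_1}(x_i,x_j)=\ell_{ij}$, whose original vertices have degree $n-1$ and whose internal subdivision vertices have degree $2$.

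I would next regularize $G_1$. First, replace each $x_i$ by a finite $d$-regular ``cluster'' $C_i$ of diameter $D=O_d(\log n)$ (e.g., a $d$-regular expander on $\Theta(n)$ vertices) and designate $n-1$ ``ports'' in $C_i$; by deleting a suitable matching one may arrange that every port has degree $d-1$ while all other vertices of $C_i$ have degree $d$. Second, for each $i<j$ attach the length-$\ell_{ij}$ subdivided path for the pair $\{i,j\}$ between port $p_{i,j}\in C_i$ and port $p_{j,i}\in C_j$, which brings every port back up to degree $d$. Third, to each internal subdivision vertex (currently of degree $2$) attach a small ``leaf'' gadget with one anchor of degree $d-2$ and all other vertices of degree $d$; such a gadget exists for every $d\ge 3$ by an edge-swap modification of any small $d$-regular graph. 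The resulting graph $G^*$ is finite, connected, and $d$-regular.

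Set $\iota(x_i)$ to be any fixed vertex of $C_i$. Routing through $C_i$, the direct $\{i,j\}$-path, and $C_j$ yields $d_{G^*}(\iota(x_i),\iota(x_j))\le \ell_{ij}+2D$. For the matching lower bound, decompose any path from $C_i$ to $C_j$ in $G^*$ into intra-cluster segments, inter-cluster subdivided-path segments, and detours into leaf gadgets; because every leaf gadget is attached at a single vertex it can only contribute additional length, and listing the sequence of clusters visited as $C_i=C_{k_0},C_{k_1},\ldots,C_{k_s}=C_j$, the triangle inequality in $\MM$ yields $d_{G^*}(\iota(x_i),\iota(x_j))\ge \sum_{r=1}^{s}\ell_{k_{r-1},k_r}\ge N\sum_{r=1}^{s}d_\MM(x_{k_{r-1}},x_{k_r})\ge N d_\MM(x_i,x_j)$. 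Dividing by $N$ shows that $d_{G^*}(\iota(x_i),\iota(x_j))/N$ lies in $[d_\MM(x_i,x_j),\,d_\MM(x_i,x_j)+(1+2D)/N]$, so choosing $N\ge (1+2D)/(\e\delta)$ forces the distortion of $\iota$ to be at most $1+\e$. The main obstacle, and the step on which I would expect to spend most of the writing effort, is the combinatorial verification of the cluster and leaf-gadget constructions uniformly in $d\ge 3$ and in the parity of $n$, together with the shortcut-free decomposition of arbitrary paths in $G^*$; both are routine but finicky.
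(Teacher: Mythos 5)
Your proof is correct in substance, and while its first two steps (round $Nd_\MM$ up to integers---the ceiling indeed preserves the triangle inequality---and subdivide each weighted edge into a unit path) coincide with the paper's, the regularization step is genuinely different. The paper replaces each original vertex of degree $n-1$ by a \emph{cycle} of length $n-1$, absorbs the resulting additive detour of size up to $n$ by pre-multiplying all weights by $\lceil 3n/\e\rceil$, fixes the remaining degree-$2$ vertices by taking two disjoint copies and matching twins (yielding a $3$-regular graph), and only then passes to general $d$ by taking $d-2$ copies and joining corresponding vertices. You instead go to $d$-regularity in one shot, using expander clusters with $n-1$ ports of degree $d-1$ and leaf gadgets hung on the internal subdivision vertices, and you absorb the smaller $O_d(\log n)$ cluster detour by choosing $N$ large. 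Your route buys a quantitatively smaller blow-up of the graph at the cost of more delicate gadgeteering; the paper's route is cruder but makes every existence claim trivial (cycles and disjoint unions need no parity discussion). Your shortcut-free lower bound via the sequence of clusters visited is the same argument both proofs ultimately rest on.

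Two of the ``finicky'' details you defer do need care, though neither is fatal. First, deleting a matching whose vertex set is exactly the $n-1$ ports is impossible when $n-1$ is odd; you must either adjust the number of ports, realize the degree sequence ($n-1$ vertices of degree $d-1$, the rest of degree $d$, with the handshake parity fixed by the choice of cluster size) directly, or pair the leftover deficient vertex with an auxiliary gadget---and any such auxiliary attachment must be a pendant (cut-vertex) attachment so as not to create shortcuts between clusters. Second, your leaf gadget only balances degrees if its anchor is \emph{identified} with the degree-$2$ subdivision vertex (giving $2+(d-2)=d$); joining them by an edge would leave the subdivision vertex at degree $3$ and the anchor at degree $d-1$. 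Your later phrase ``attached at a single vertex'' indicates you intend the identification, which is consistent with the cut-vertex argument in your lower bound, so the proof goes through once this is stated explicitly.
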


The (standard; see, e.g.,~\cite[Chapter~15]{Mat02}) terminology that is used in \autoref{EA:prop:reg-universal} (as well as throughout the ensuing text) is that an embedding $f:\MM\to \NN$ of a metric space $(\MM,d_\MM)$ into a metric space $(\NN,d_\NN)$
has (bi-Lipschitz) distortion at most $D\ge 1$
if there exists (a scaling factor) $s>0$ such that
\begin{equation}\label{eq:def distortion}
 \forall x,y\in \MM,\qquad d_\MM(x,y)\le sd_\NN\big(f(x),f(y)\big) \le Dd_\MM(x,y).
\end{equation}
The $\NN$-distortion of $\MM$ is a numerical invariant that is commonly denoted $\cc_\NN(\MM)$, following~\cite{LLR}, and is defined as the infimum over those $D\ge 1$ such that there exists an embedding $f:\MM\to \NN$ whose distortion is at most $D$. Sometimes, one also needs to use the notations $\cc_{(\NN,d_\NN)}(\MM,d_\MM)$ or $\cc_{\NN}(\MM,d_\MM)$ instead of $\cc_\NN(\MM)$ when the corresponding underlying metrics are not clear from the context.
Such situations in which multiple metrics on the same space must be considered, and will indeed occur below.
When $\MM$ is finite and $p\ge 1$, it is common to use the shorthand  $\cc_p(\MM)=\cc_{\ell_{p}}(\MM)$.

It is possible to bypass the aforementioned hardness by considering deterministic approximation algorithms for the average distance that work when the input is allowed to be an arbitrary $n$-tuple of vertices in a {\em typical} finite connected regular graph rather than in any such graph. The precise formulation of this phenomenon appears in the following theorem, which is another main result of the present work:

\begin{theorem}\label{EA:thm:main positive} For  $N\in \N$ and $d\in \{3,4,\ldots,\}$, let $\mathcal{G}_{N,d}$ be the set of all the (isomorphism classes of) $N$-vertex $d$-regular graphs. There exists $\cF_{N,d}\subset \mathcal{G}_{N,d}$ that consists of connected graphs (equipped with their shortest path metric) that satisfy $\lim_{N\to \infty} |\cF_{N,d}|/ |\mathcal{G}_{N,d}|=1$, such  that the family $\cF_d=\bigcup_{N=1}^\infty \cF_{N,d}$ of metric spaces has the following property. There exists a deterministic algorithm that takes as input $n\in \N$ and outputs in $O(n)$ time $m$ pairs $\{a_1,b_1\},\ldots, \{a_m,b_m\}\subset [n]$, where  $m=O(n)$, such that\footnote{We will use the following (standard) conventions for asymptotic notation, in addition to the usual $O(\cdot),\Omega(\cdot),\Theta(\cdot)$ notation. Given $a,b>0$, by writing
$a\lesssim b$ or $b\gtrsim a$ we mean $a\le \kappa b$ for a
universal constant $\kappa>0$, and $a\asymp b$
stands for $(a\lesssim b) \wedge  (b\lesssim a)$.  }
\begin{equation}\label{eq:super expander random}
\forall \mathsf{G}=(\mathsf{V}_{\!\mathsf{G}},\mathsf{E}_{\mathsf{G}})\in \cF_d,\ \forall x_1,\ldots,x_n\in \mathsf{V}_{\!\mathsf{G}},\qquad \frac{1}{m}\sum_{\ell=1}^m d_{\mathsf{G}}(x_{a_\ell},x_{b_\ell})\asymp \frac{1}{n^2}\sum_{(i,j)\in [n]^2} d_{\mathsf{G}}(x_i,x_j),
\end{equation}
where in~\eqref{eq:super expander random}, and throughout what follows, $d_\mathsf{G}$ denotes the shortest-path metric on a connected graph $\mathsf{G}$. In particular, there exists a  non-adaptive  deterministic $O(1)$-approximation algorithm for the average distance in $\mathcal{F}_d$  which makes $O(n)$ distance queries on inputs of size $n$ (in contrast to \autoref{EA:thm:impossibility-DA}).
\end{theorem}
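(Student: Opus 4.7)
The plan is to output the edge set of an $O(1)$-regular graph $H_n$ on $[n]$ of size $m\asymp n$ that serves as a non-adaptive universal approximator for the family $\cF_d$. We construct $H_n$ via the iterative super-expander machinery of~\cite{MN14}, designed so that $H_n$ satisfies a nonlinear Poincar\'e inequality
\begin{equation*}
\forall X\in\mathcal{X},\ \forall f\colon [n]\to X,\qquad \frac{1}{n^2}\sum_{i,j\in[n]} d_X\big(f(i),f(j)\big)\lesssim \frac{1}{m}\sum_{\{a,b\}\in E(H_n)} d_X\big(f(a),f(b)\big),
\end{equation*}
with respect to a fixed class $\mathcal{X}$ of nonpositively curved (NPC) metric spaces. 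The family $\cF_{N,d}\subset \mathcal{G}_{N,d}$ will consist of those $d$-regular graphs whose shortest-path metric becomes comparable, after a suitable metric transform, to some metric in $\mathcal{X}$; standard random-regular-graph estimates (e.g.\ Friedman-type second-eigenvalue bounds or configuration-model counts) will yield $|\cF_{N,d}|/|\mathcal{G}_{N,d}|\to 1$ as $N\to\infty$.

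For the easy direction of~\eqref{eq:super expander random}, choose $k\in[n]$ minimizing $\frac{1}{n}\sum_j d_\sfG(x_k,x_j)$; since the minimum is at most the average, this quantity is bounded by $\mu\eqdef \frac{1}{n^2}\sum_{i,j}d_\sfG(x_i,x_j)$. The triangle inequality and $d'$-regularity of $H_n$ (with $d'=O(1)$) then give
\begin{equation*}
\sum_{\{a,b\}\in E(H_n)} d_\sfG(x_a,x_b)\le \sum_{\{a,b\}\in E(H_n)}\big(d_\sfG(x_a,x_k)+d_\sfG(x_k,x_b)\big)=d'\sum_{v\in[n]} d_\sfG(x_v,x_k)\le d' n\mu,
\end{equation*}
and dividing by $m=d'n/2$ bounds the $H_n$-average by $2\mu$ uniformly over every finite metric space, with no use of typicality. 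For the harder lower bound, we apply the new rigidity theorem for metric transforms of NPC spaces to produce, for each $\sfG\in\cF_{N,d}$, a universal-distortion bi-Lipschitz embedding $\iota$ of (a suitable metric transform of) $d_\sfG$ into some $X\in\mathcal{X}$ that is compatible with the Poincar\'e inequality above. Composing with the given $x\colon [n]\to \sfV_\sfG$, applying the Poincar\'e inequality to $\iota\circ x$, and translating back to $d_\sfG$ via the bi-Lipschitz bounds and the inverse of the transform yields $\mu\lesssim \frac{1}{m}\sum_{\{a,b\}\in E(H_n)} d_\sfG(x_a,x_b)$.

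The main obstacle is the combination at the heart of the lower bound: since $d$-regular expanders themselves do not admit bi-Lipschitz embeddings into any NPC space, a direct embedding route is unavailable, and it is precisely the metric transform---together with the rigidity theorem that constrains what such transforms must look like---that circumvents this obstruction in the \emph{typical} rather than worst-case regime. Moreover, the nonlinear Poincar\'e inequality must be established in a form that pertains to the metric $d_X$ itself rather than to $d_X^2$, which is not the classical $L_2$ setting of~\cite{MN14}, so a variant of their construction must be executed; and the metric transform needs enough rigidity to be undone after the Poincar\'e inequality is applied without spoiling the constants. This is the crucial role of the new rigidity theorem, and it is also what reconciles the positive result here with the worst-case impossibility of \autoref{EA:thm:impossibility-DA}, which rules out such a universal approximator over all $d$-regular graphs.
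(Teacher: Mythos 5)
Your overall architecture matches the paper's: a constant-degree expander-type graph $H_n$ serving as a non-adaptive universal approximator, with the easy direction of \eqref{eq:super expander random} obtained from the triangle inequality (your choice of a single minimizing center $k$ is a fine variant of the paper's averaging over all $k$, and correctly requires no typicality assumption). The gap is in the hard direction, specifically in the step ``translating back to $d_\sfG$ via the bi-Lipschitz bounds and the inverse of the transform.'' If your Poincar\'e inequality for the NPC class $\mathcal{X}$ is stated at power $1$, as in your display, and you embed $(\sfV_{\sfG},\varphi\circ d_\sfG)$ into some $X\in\mathcal{X}$ with $O(1)$ distortion, what you obtain is
\begin{equation*}
\frac{1}{n^2}\sum_{i,j}\varphi\big(d_\sfG(x_i,x_j)\big)\lesssim \frac{1}{m}\sum_{\{a,b\}\in E(H_n)}\varphi\big(d_\sfG(x_a,x_b)\big).
\end{equation*}
You cannot invert the concave transform $\varphi$ across this inequality: Jensen gives $\frac1m\sum_E\varphi(d)\le\varphi\big(\frac1m\sum_E d\big)$ on the right, but on the left it gives $\frac{1}{n^2}\sum\varphi(d)\le\varphi\big(\frac{1}{n^2}\sum d\big)$ as well, i.e.\ the wrong direction for lower-bounding $\varphi$ of the true average. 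The rigidity theorem does not repair this; it says nothing about undoing a transform across averaged quantities. Moreover, establishing a power-$1$ Poincar\'e inequality directly for CAT(0)-valued functions via the zigzag/martingale machinery is precisely what is impossible (martingale cotype $p$ fails for $p<2$), which is the obstruction the paper emphasizes.

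The paper's actual mechanism avoids any inversion: it proves power-$q$ Poincar\'e inequalities, for every integer $q\ge 2$, with respect to \emph{every} metric transform of $d_{\RRG}$ simultaneously. This is done by representing an arbitrary transform as (roughly) a sum of truncations, realizing each truncation inside a Euclidean cone over a rescaling of $\Sigma(\RRG)$, and showing that the expander controls all such cones $\cone(\Sigma(\RRG),\tau d_{\Sigma(\RRG)})$ uniformly over $\tau\ge\sigma_{\RRG}$ (using the decomposition of $\RRG$ into an $L_1$-embeddable part and a high-girth part whose cone is CAT(0)). One then specializes to the snowflake $\varphi(t)=t^{1/q}$, for which $\big(\varphi\circ d_\sfG\big)^q=d_\sfG$ exactly, so the power of the inequality and the transform cancel algebraically and the $\gamma_1$ bound falls out with no Jensen step. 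Relatedly, your stated premise that ``$d$-regular expanders themselves do not admit bi-Lipschitz embeddings into any NPC space'' is false and misidentifies the obstruction: by Gromov--Kondo, high-girth graphs do embed with $O(1)$ distortion into CAT(0) spaces, and the paper uses exactly this; the real difficulties are the low-girth portion of a typical regular graph and the unavailability of the $p<2$ martingale inequality. A secondary point you omit is that the expander family's sizes must grow at most geometrically so that a member of size $\Theta(n)$ exists for every $n$; the paper arranges this explicitly.
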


The precursor to \autoref{EA:thm:main positive} is~\cite[Theorem~1.2]{MN14} or~\cite[Theorem~2.1]{MN-expanders2}, providing the same statement as \autoref{EA:thm:main positive} except that the distances appearing in~\eqref{eq:super expander random} are {\em squared}.
The obvious question if \autoref{EA:thm:main positive} holds as stated above for approximating the actual average distance (in accordance with the prior literature on this topic) rather than approximating the average quadratic distance was a central problem that was left open in~\cite{MN14,MN-expanders2}.
Now, it is resolved (positively) by \autoref{EA:thm:main positive}.

\begin{remark}\label{rem:bf} Even though the impossibility of a constant approximation with  $m=O(n)$ queries that \autoref{EA:thm:impossibility-DA}  establishes  was not known at the time of writing of~\cite{MN14,MN-expanders2}, in hindsight we see that~\cite{MN14,MN-expanders2} proposed a different way to cope with the nonexistence of deterministic algorithms. The aforementioned sublinear  average distance approximation problem falls into the category of multiple known algorithmic tasks (in an oracle model) for which constant factor approximation is provably impossible for deterministic algorithms (with suitable limitations  on the number of oracle queries), yet it is achievable using randomized algorithms. A notable instance of this type of situation is the problem of approximating the volume of a convex body in $\R^n$ that is given by a weak membership oracle (see~\cite{GLS93} for the relevant terminology), for which of B\'ar\'any and F\"uredi established~\cite{BF87} the impossibility of constant-factor approximation in oracle-polynomial time, yet Dyer, Frieze and Kannan  proved~\cite{DFK91} that such a randomized algorithm does exist (a curiosity of this analogy is that the chronology is reversed for average distance approximation, namely, in that setting a randomized algorithm was known to exist~\cite{Indyk-sublinear} before hardness for deterministic algorithms was established). The idea of~\cite{MN14,MN-expanders2}, which is taken up by \autoref{EA:thm:main positive}, is that when it is known that there is no deterministic algorithm, instead of asking for a randomized algorithm that succeeds (with high probability) on all possible inputs, one could still demand that the algorithm will be deterministic but require it to efficiently provide  a  good approximation  on {\em every} sub-structure of a typical input.   In the present setting ``typical'' is understood to be uniformly random and ``sub-structure'' is an  arbitrary subset, but one could conceive of other versions of  such phenomena  to investigate (e.g.~typical inputs could arise from various probabilistic  models that are thought to mimic real-world instances). For example, in the above setting of volume computation, one could  consider for some integers $1\le n\le N$ a random convex body  $K\subset \R^N$ (a well-studied example of this type is Gluskin bodies~\cite{Glu81} and variants thereof~\cite{STJ04}, which are convex hulls of randomly selected points in $\R^N$, but there are many more such possibilities in the literature), and then demand that a deterministic algorithm yields a good approximation for the $n$-dimensional  volume of {\em every} $n$-dimensional section of $K$. It could be a worthwhile research direction to study in multiple settings such deterministic algorithmic tasks that are required to  succeed on every sub-structure of a random input.
\end{remark}

Passing from squared distances to the distances themselves is not merely a technical matter that was left unresolved in the literature; indeed, it was speculated in~\cite{MN14,MN-expanders2} (specifically, see Remark~2.3 there) that achieving this would require a substantially new idea beyond those of~\cite{MN14,MN-expanders2}, because the proof in~\cite{MN14,MN-expanders2} of the aforementioned quadratic variant makes heavy use of a metric space-valued martingale inequality\footnote{A  version due to~\cite{MN-barycentric} of Pisier's classical martingale cotype inequality~\cite{Pis75} for (suitably defined) {\em nonlinear} martingales.} which fails to hold (even when the underlying metric space is an interval in the real line)  if  the distances are raised to  power $1$ (or any power $0< p<2$) rather than   power $2$.

The above prediction materializes herein through the geometric rigidity result in \autoref{EA:thm:cat(0)-transform-closure-intro} below.
It provides an ingredient that makes it possible to apply the methodology of~\cite{MN14,MN-expanders2} to obtain \autoref{EA:thm:main positive}.
Although the present extended abstract focuses on the algorithmic context, it should be noted that \autoref{EA:thm:cat(0)-transform-closure-intro}  is of geometric interest in its own right and implies solutions to long-standing questions in metric geometry.
Those will be briefly mentioned in \autoref{rem:math applications} below and thoroughly treated in~\cite{EMN-hadamard}.
Before stating \autoref{EA:thm:cat(0)-transform-closure-intro}, we will next recall a modicum of (standard) background.

We will use throughout the common terminology (see, e.g., the monographs~\cite{Blu53,WW75,DL10,BB12}) that a function $\varphi:[0,\infty)\to [0,\infty)$ is called a \emph{metric transform} if $\varphi$ is nondecreasing, concave and satisfies $\varphi(0)=0$. This ensures that $(\MM,\f\circ d_\MM)$ is a metric space for every metric space $(\MM,d_\MM)$.

A metric space $(\MM,d_\MM)$ is said to be nonpositively curved in the sense of Alexandrov if it is a geodesic metric space, i.e., for any two points  $x$ and $y$ in $\MM$ there exists a constant speed geodesic $\gamma:[0,1]\to \MM$ joining $x$ to $y$ (namely, $d_\MM(\gamma(t),x)=td_\MM(x,y)$ and $d_\MM(\gamma(t),y)=(1-t)d_\MM(x,y)$
 for every $0\le t\le 1$), and   the distances between points along geodesic triangles in $\MM$ are bounded from above by the corresponding distances in a triangle with the same edge lengths that resides in the Euclidean plane $\R^2$. This is equivalent to requiring that for every $x,y,z\in \MM$, if $\gamma:[0,1]\to \MM$ is a geodesic of constant-speed
 that joins $x$ to $y$, then
\begin{equation}\label{eq:def CTA0}
\forall t\in [0,1],\qquad d_\MM(\gamma(t),z)^2\le td_\MM(x,z)^2+(1-t)d_\MM(y,z)^2-t(1-t)d_\MM(x,y)^2.
\end{equation}
Thorough treatments of this important, useful and extensively studied notion appears in, e.g.,~\cite{Jos97,BH99,BBI01,Sturm-NPC,Gro07,Bac14}. Following~\cite{Gro87}, metric spaces that are nonpositively curved in the sense of Alexandrov are also called $\CAT(0)$ spaces (shorthand for ``Cartan, Alexandrov and Toponogov''), and complete $\CAT(0)$ spaces are commonly called Hadamard spaces.

We can now state \autoref{EA:thm:cat(0)-transform-closure-intro},  which is the main geometric contribution of the present work; an outline of its proof is contained in \autoref{sec:1.7} below, and all details appear in~\cite{EMN-hadamard}.
It should be clarified at the outset that even though our proof of \autoref{EA:thm:main positive} relies on \autoref{EA:thm:cat(0)-transform-closure-intro}, the relevance of \autoref{EA:thm:cat(0)-transform-closure-intro} to \autoref{EA:thm:main positive} is roundabout, and readers who are exposed for the first time to this material should not expect it to be obvious how \autoref{EA:thm:cat(0)-transform-closure-intro} could lead to \autoref{EA:thm:main positive}.
The complete details of this deduction appear in~\cite{EMN-hadamard}, and we will indicate some of its ingredients below, but the entirety of the proof is beyond the scope of the present extended abstract.

\begin{theorem} \label{EA:thm:cat(0)-transform-closure-intro}
Let $\varphi:[0,\infty)\to [0,\infty)$ be a metric transform, and let $(\MM,d_\MM)$ be a $\CAT(0)$ metric space. Then, the metric space $(\MM,\varphi\circ d_\MM)$ is embedded with constant distortion in some (other) $\CAT(0)$ space.
\end{theorem}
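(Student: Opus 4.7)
The plan is to reduce the theorem to an atomic family of metric transforms (truncations), to embed each such truncation into an explicit CAT(0) cone, and finally to aggregate these atomic embeddings into one CAT(0) target realizing $\varphi \circ d_\MM$.

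First, I would exploit the integral representation of concave functions on $[0,\infty)$: since $\varphi$ is nondecreasing, concave, and satisfies $\varphi(0)=0$, one can write $\varphi(t) = at + \int_0^\infty \min(t, s)\, d\mu(s)$ for some $a \ge 0$ and some Borel measure $\mu$ on $(0,\infty)$. The linear summand $at$ contributes the rescaled CAT(0) metric $a\, d_\MM$, which is harmless, so only the family of truncations $\min(\cdot, s)$ remains to be handled. By rescaling $d_\MM$ one may further normalize a single truncation to $\min(\cdot, 1)$.

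Second, I would embed $(\MM, \min(d_\MM, s))$ into a CAT(0) space with universal distortion by means of a Euclidean cone: form $\cone(\MM)$ with an aperture $\lambda \asymp 1/s$, identify $\MM$ with the unit slice $\MM \times \{1\}$, and observe that the induced chord distance $2\sin(\min(\lambda d_\MM, \pi)/2)$ is uniformly comparable to $\min(d_\MM, s)$. The output is a CAT(0) space by Berestovskii's theorem (the Euclidean cone over a CAT(1) space is CAT(0)), once a suitable rescaling / capping of the CAT(0) base is verified to meet the CAT(1) hypothesis (possibly via a Reshetnyak gluing or warped-product variant if direct CAT(1) fails at the global scale).

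The third step, aggregating the truncation embeddings $\{f_s: (\MM, \min(d_\MM, s)) \to Y_s\}_{s>0}$ into one CAT(0) embedding realizing $\varphi \circ d_\MM$, is the main obstacle. The difficulty is that the integral representation of Step 1 is additive ($\ell_1$-type), while the CAT(0) category is closed only under $\ell_2$-direct sums: a naive $L^2$-sum of the $Y_s$ with respect to $\mu$ recovers $\sqrt{\int \min(d_\MM, s)^2 \, d\mu(s)}$, rather than $\int \min(d_\MM, s)\, d\mu(s) = \varphi\circ d_\MM$. To bridge this $\ell_1$-versus-$\ell_2$ gap within CAT(0), I would look for a warped-cone or Gaussian-integral construction expressing the additive integral as an expected \emph{squared} distance in a larger CAT(0) space, while keeping the distortion uniform in $\mu$ (and hence universal over all metric transforms $\varphi$). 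Making this aggregation work with a universal, $\varphi$-independent distortion constant is the technical heart of the argument.
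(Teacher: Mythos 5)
Your Steps 1 and 2 are essentially the paper's Steps 1 and 2 (the Brudny\u{\i}--Krugljak-type decomposition of a metric transform into a linear part plus truncations, and the embedding $x\mapsto (1,x)$ of $(\MM,\min\{d_\MM,\pi\})$ into $\cone(\MM)$ with distortion $\pi/2$, combined with Berestovski\u{\i}'s theorem). One small simplification you missed in Step 2: no Reshetnyak gluing or warped product is needed to verify the $\CAT(1)$ hypothesis, because every $\CAT(0)$ space is $\CAT(1)$ and rescaling a $\CAT(0)$ metric by any positive constant preserves the $\CAT(0)$ property, so $\cone(\MM,sd_\MM)$ is $\CAT(0)$ for every $s>0$ with no further work.

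The genuine gap is Step 3, which you correctly identify as the heart of the matter but leave unresolved: you observe that the additive ($\ell_1$-type) decomposition of $\varphi$ clashes with the fact that $\CAT(0)$ is closed only under Pythagorean ($\ell_2$) products, and you propose to ``look for a warped-cone or Gaussian-integral construction'' without producing one. The resolution is much simpler and does not require any new geometric construction: apply the truncation decomposition not to $\varphi$ but to $\omega(t)\eqdef \varphi(\sqrt{t})^2$, which is again a metric transform (an elementary verification, recorded in~\cite[Remark~5.4]{MN-quotients}). This yields
\begin{equation*}
\varphi(t)^2=\omega(t^2)\asymp \alpha_0 t^2+\sum_{k=1}^\infty \min\{\alpha_k t^2,\beta_k\},
\qquad\text{hence}\qquad
\varphi(t)\asymp \sqrt{\alpha_0 t^2+\sum_{k=1}^\infty \min\{\alpha_k t^2,\beta_k\}},
\end{equation*}
and since $\min\{\alpha_k t^2,\beta_k\}=\min\bigl\{\sqrt{\alpha_k}\,t,\sqrt{\beta_k}\bigr\}^2$, the right-hand side is exactly the metric of a weighted Pythagorean product of truncations of rescalings of $d_\MM$. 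Each factor embeds with universal distortion into a Euclidean cone over a rescaled copy of $\MM$ (your Step 2), each such cone is $\CAT(0)$, and the Pythagorean product of $\CAT(0)$ spaces is $\CAT(0)$ because the defining quadratic inequality passes to $\ell_2$-products coordinatewise. The distortion is universal because the two-sided comparison in the Brudny\u{\i}--Krugljak representation and the cone-truncation lemma both carry absolute constants. Without this pre-composition with the square, your plan as written does not close: there is no $\ell_1$-type aggregation of $\CAT(0)$ spaces that would realize $\int \min(t,s)\,d\mu(s)$ with uniform distortion.
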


Our proof of \autoref{EA:thm:main positive}  follows the approach developed in~\cite{MN14,MN-expanders2} to obtain its aforementioned quadratic variant.
\autoref{EA:thm:cat(0)-transform-closure-intro} provides the previously missing ingredient that enables us to implement the ideas of~\cite{MN14,MN-expanders2} for this purpose, ideas that, on their own, failed to produce \autoref{EA:thm:main positive}.
We do not see how to use for this purpose the results of~\cite{MN14,MN-expanders2} as a ``black box,'' so we need to adjust multiple details within the {\em proofs of}~\cite{MN14,MN-expanders2} in order to be able to combine them with the new geometric input provided by \autoref{EA:thm:cat(0)-transform-closure-intro} so as to derive \autoref{EA:thm:main positive}.

The starting point of the reasoning is the important realization of~\cite{BGS} that universal approximators are related to \emph{expander graphs}; indeed, using this connection, \cite{BGS} obtained for any $0<\e<1$ a deterministic non-adaptive  $(1+\e)$-approximation algorithm for the average distance in the family $\cF$ of subsets of Hilbert space that makes $O(n/\e^2)$ queries on inputs of size $n$. The  link between expanders and universal approximators is the point of view from which~\cite{MN14,MN-expanders2} approached the construction of a constant approximation algorithm of size $O(n)$ for the average squared distance in subsets of random regular graphs, with the ``twist'' that now one encounters the need to consider this question from the perspective of the theory of  nonlinear spectral gaps and expanders with respect to metric spaces; this is an extensively studied topic that is recalled in~\cite{EMN-hadamard} (information on this rich mathematical research direction can be found in~\cite{Mat-extrapolation,Wan98,Ozawa,IN05,Lafforgue-type-expanders,
Pisier-interpolation,NS11,MN-barycentric,MN-superexpanders,Naor-comparison,MN-expanders2,Mim15,Nao18,EMN,Laat-Salle,Naor-avg-john,Esk22,Nao24-mixing}, and applications of it to algorithm design can be found in~\cite{MN14,ANNRW18}).
The special expanders that were used in the construction of~\cite{MN14,MN-expanders2} are variants of the zigzag construction of Reingold, Vadhan and Wigderson~\cite{RVW}, relying on its adaptation to the metric space setting in~\cite{MN-superexpanders,MN-barycentric}.

The method of~\cite{MN14,MN-expanders2} relies on tracking how nonlinear spectral gaps with respect to certain metric spaces evolve under a suitable iteration of the zigzag product; this utilized inequalities for nonlinear martingales that are available by~\cite{MN-barycentric} on $\CAT(0)$ spaces (among others).
For that reason, even though the ultimate goal is to obtain a nonlinear spectral gap with respect to (discrete) random regular graphs, in~\cite{MN14,MN-expanders2}  those graphs were considered as subsets of a larger (continuous)  $\CAT(0)$ space on which the aforementioned martingale analysis can be performed.
\autoref{EA:thm:cat(0)-transform-closure-intro} comes to our aid at this point by taking the metric transform $\varphi(t)=\sqrt{t}$ of the aforementioned larger auxiliary metric space, which \autoref{EA:thm:cat(0)-transform-closure-intro} embeds into an even larger $\CAT(0)$ on which one could hope to apply the martingale tools developed in~\cite{MN-superexpanders,MN-barycentric,MN14,MN-expanders2}.
However, this observation itself is insufficient
and a better understanding of the connection between metric transforms of random regular graphs and $\CAT(0)$ spaces is needed; the complete details of this appear in~\cite{EMN-hadamard}.

\begin{question}\label{EA:ques:Kleinberg}
Our proof of \autoref{EA:thm:main positive} uses methods from analysis and geometry (in addition to graph theory and probability), even though its statement is solely about elementary combinatorics. It remains an intriguing challenge to find a combinatorial proof of \autoref{EA:thm:main positive} that does not make such a circuitous excursion to mathematical areas that are (seemingly) distant from the setting of \autoref{EA:thm:main positive}. In addition to its intrinsic interest, this goal is likely to be relevant to related questions that remain open; an example of such a question (posed by J.~Kleinberg) is discussed in~\cite[
§~2]{MN-expanders2}.
\end{question}

\begin{remark}\label{rem:math applications} Even though the present extended abstract is devoted to algorithmic matters, it would be remiss of us not to mention that in~\cite{EMN-hadamard} we derive other applications of \autoref{EA:thm:cat(0)-transform-closure-intro} to pure mathematics. For example, by combining \autoref{EA:thm:cat(0)-transform-closure-intro} with~\cite{Gromov-rand-groups,Kondo,ANN}, we prove that there exists a metric space $(\MM,d_\MM)$ that embeds with constant distortion into a metric space $(\mathcal{X},d_\mathcal{X})$ of nonpositive Alexandrov  curvature, and also $(\MM,d_\MM)$ embeds with constant distortion into a metric space $(\mathcal{Y},d_\mathcal{Y})$ of nonnegative Alexandrov curvature%
\footnote{Nonnegative Alexandrov curvature for a geodesic metric space corresponds to reversing the inequality in~\eqref{eq:def CTA0}.},
yet $(\MM,d_\MM)$ does not admit a coarse embedding into a Hilbert space (see~\cite{Roe03,Gro07,NY12} for background on the well-studied notion of coarse embeddings, though in~\cite{EMN-hadamard} we obtain an even stronger impossibility result).
This is in contrast to a classical result of Wilson~\cite{Wil32} and Blumenthal~\cite{Blu35} (see also~\cite[pages~122--128]{Blu53} or~\cite[Section~7]{Lurie-Hadamard}) which implies that a geodesic metric space that embeds isometrically into both a metric space of nonpositive Alexandrov curvature and a metric space of nonnegative Alexandrov curvature must be isometric to a subset of a Hilbert space. As another example of an application of \autoref{EA:thm:cat(0)-transform-closure-intro},  in~\cite{EMN-hadamard} we resolve (negatively) a question that Ding, Lee and Peres posed in~\cite[Question~1.13]{DLP} by proving that  there exists a  metric space which has Markov type\/~$2$ (per~\cite{Ball}), \emph{sharp} metric cotype\/~$2$ (per~\cite{MN-cotype}), and is Markov\/ $2$-convex (per~\cite{LNP-markov-convexity}), yet, it does not embed coarsely into a Hilbert space.
This rules out (in a strong form) an approach (in accordance with the Ribe program~\cite{Bourgain-superreflexivity,Kal08,Naor-Ribe,Ball-Ribe,Ostrovskii-book,God17,Nao18}) toward a possible metric version of Kwapie\'n's theorem~\cite{Kwapien} that was investigated in~\cite{DLP}.
As a third example of an application of \autoref{EA:thm:cat(0)-transform-closure-intro},  in~\cite{EMN-hadamard} we address an old folklore question in the theory of metric embeddings about the possible $\ell_1$ distortion growth rates, by proving that for any metric transform $\varphi$ there exists a Hadamard space $\MM_\varphi$ for which $\sup\{\cc_1(\sub): \sub\subset \MM_\f\ \wedge |\sub|\le n\}$ is bounded from above and from below by positive universal constant multiples of $\varphi(\log (n+1))/\varphi(1)$. For embeddings into a Hilbert space, we are currently able to obtain this result under the additional assumption that $\varphi(t)\gtrsim \varphi(1)\sqrt{\log t}$.
In particular, we prove that for every $0\le \theta\le 1$ there exists a Hadamard space $\MM_\theta$ for which $\sup\{\cc_2(\sub): \sub\subset \MM_\theta\ \wedge |\sub|\le n\}$ is bounded from above and from below by positive universal constant multiples of $(\log (n+1))^\theta$; this statement was not previously known when $\theta\notin \{0,\frac12,1\}$, even if one relaxes it by allowing $\MM_\theta$ to be any geodesic metric space.
\end{remark}

\section{Overview of the proof of \texorpdfstring{\autoref{EA:thm:impossibility-DA}}{Theorem \ref{EA:thm:impossibility-DA}}}\label{sec:hardness overview}

Due to their length, the complete details of the proof of \autoref{EA:thm:impossibility-DA}---our (optimal) lower bounds for deterministic approximation algorithms for average distance that work for all metric spaces---are deferred to~\cite{EMN-hadamard}.
This section presents an outline of this proof.

The following theorem implies \autoref{EA:thm:impossibility-DA} in a straightforward manner (see below):

\begin{theorem} \label{EA:thm:det-adaptive-lb}
Fix $k\in\N$. Continuing with the notation of \autoref{EA:def:adapt-approximator}, suppose that there is a deterministic $\alpha$-approximation algorithm for the average distance which makes $m=o(n^{1+1/k})$ distance queries on inputs of size $n$. Thus, we are given functions as in~\eqref{EA:eq:functions for the game} which satisfy the properties described in \autoref{EA:def:adapt-approximator} when $\cF$ is the family of all possible metric spaces. Then, there exist
\begin{equation}\label{eq:two adverseries}
\od,\ud:[n]^2\to [0,k+1]\qquad\mathrm{and}\qquad  \{a_1,b_1\},\ldots,\{a_m,b_m\}\in \tbinom{[n]}{2},
\end{equation}
with the following properties:
\begin{enumerate}[label=(\Roman*)]
\item Both $\od$ and $\ud$ are metrics on $[n]$ (of diameter at most $k+1$); \label{EA:it:(1)}
    \item For every $i\in [m]$ we have $\od(a_i,b_i)=\ud(a_i,b_i)\eqdef w(a_i,b_i)$; \label{EA:it:(2)}
    \item \label{EA:it:(3)} The recursive relation~\eqref{EA:eq:approximation game} holds for $(x_1,\ldots,x_n)=(1,\ldots,n)$, i.e.,
    \begin{equation}\label{eq:recursion with w}
\forall i\in [m-1],\qquad \{a_{i+1},b_{i+1}\}=\pair_{i+1} \Big(\big(\{a_1,b_1\},w({a_1},{b_1})\big),\ldots, \big(\{a_{i},b_{i}\},w({a_{i}},{b_{i}})\big)\Big);
\end{equation}
    \item There exists $X\subset [n]^2$ with $|X|=o(n^2)$ such that $\od(x,y)=k+1$ for every $(x,y)\in [n]^2\setminus X$;\label{EA:it:(4)}
    \item There exists $Y\subset [n]^2$ with $|Y|=o(n^2)$ such that $\ud(x,y)=\tfrac12$ for every $(x,y)\in [n]^2 \setminus Y$ with $x\neq y$. \label{EA:it:(5)}
\end{enumerate}
\end{theorem}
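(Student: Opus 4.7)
The plan is to simulate the deterministic algorithm against an adaptive adversary that builds both the transcript and the two target metrics as it goes. At each round $i$, the adversary is presented with the query $\{a_i,b_i\}$ and must respond with a weight $w_i$ that is consistent with some pair of completions $(\od,\ud)$ to be produced at the end. The target profile is that $\ud$ is essentially uniform of side $\tfrac12$, with large distances only on a small ``hub'' set $H\subset[n]$, while $\od$ is essentially uniform of side $k{+}1$, obtained as a truncated shortest-path metric on the non-hub subgraph; all exceptional pairs get placed into $X$ (for $\od$) or $Y$ (for $\ud$), and I would bound $|X|,|Y| = o(n^2)$. Because the algorithm receives the same transcript under both completions, property~\ref{EA:it:(3)} will hold automatically under the identification $x_j = j$, and~\ref{EA:it:(1)},~\ref{EA:it:(2)} are built into the definitions.

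More concretely, fix a threshold $\Delta_0 = o(n^{1/k})$ so that the hypothesis $m = o(n^{1+1/k})$ gives $2m/\Delta_0 = o(n)$. The adversary maintains a hub set $H$: at round $i$, if an endpoint of $\{a_i,b_i\}$ has already been incident to $\Delta_0$ earlier queries, add it to $H$. Responses are $w_i = k{+}1$ when the query involves a hub and $w_i = 1$ otherwise. Once the simulation terminates, define $\ud(a_i,b_i)=w_i$ on queried pairs, $\ud(v,u)=k{+}1$ whenever $\{u,v\}$ meets $H$, and $\ud(x,y)=\tfrac12$ for all remaining pairs of distinct points; define $\od(x,y)=k{+}1$ whenever $\{x,y\}$ meets $H$ and $\od(x,y)=\min\!\big(k{+}1,d_{G_0}(x,y)\big)$ for $x,y\notin H$, where $G_0$ is the query graph restricted to $[n]\setminus H$ with unit edge weights. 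The degree-threshold argument gives $|H|\le 2m/\Delta_0 = o(n)$, and the ball bound $|B_{G_0}(v,k)|\le \Delta_0^k = o(n)$ controls pairs of non-hubs at $G_0$-distance at most $k$, so $|X|,|Y|\le |H|\cdot n + n\cdot o(n) = o(n^2)$, verifying~\ref{EA:it:(4)} and~\ref{EA:it:(5)}. Applying~\eqref{EA:eq:alpha approx def} to both completions, whose averages tend to $k+1$ and $\tfrac12$ respectively, then forces $\alpha\ge 2(k+1)$.

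The main obstacle is the on-line/off-line circularity in this sketch: when a vertex $v$ is declared a hub at round $i$, the rule above retroactively asserts $\ud(v,u)=\od(v,u)=k{+}1$ for every $u$, but the adversary may already have committed to responses $w_j = 1$ on up to $\Delta_0$ earlier queries $\{v,\cdot\}$, forcing $\ud(v,\cdot)=\od(v,\cdot)=1$ on those pairs. Reconciling the two requirements forces a more careful choice of weights than the naive $\{1,k{+}1\}$ dichotomy above, and is presumably where the ``implicit adversarial metric obtained as a solution to an auxiliary optimization problem'' mentioned in the overview enters: at each round $w_i$ should be chosen to preserve the feasibility of some long-vs-short completion extending the full transcript, rather than by a fixed ad hoc rule. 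Verifying that this feasibility invariant can be maintained across all $m=o(n^{1+1/k})$ rounds --- and that the final exceptional sets still satisfy the required $o(n^2)$ bounds --- is the delicate combinatorial step, and the exponent $1+1/k$ enters precisely through the Moore-type balance between the hub count $2m/\Delta_0$ and the ball size $\Delta_0^k$.
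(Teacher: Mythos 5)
Your high-level strategy is the paper's: an adaptive adversary that answers queries so that the transcript admits two completions, $\od$ mostly equal to $k+1$ (a truncated shortest-path metric on the query graph) and $\ud$ mostly equal to $\tfrac12$, with exceptional sets controlled by a degree-threshold count and a Moore-type ball bound, which is exactly where the exponent $1+\tfrac1k$ enters. However, the concrete construction you propose does not work, and the circularity you flag is not merely a bookkeeping issue but a genuine infeasibility of any two-level weight scheme. If a vertex $v$ becomes a hub after receiving a weight-$1$ answer on an earlier query $\{v,u\}$ with $u$ a non-hub, then $\ud(v,u)=1$ and $\ud(u,x)=\tfrac12$ for almost every $x$, so the triangle inequality forces $\ud(v,x)\le \tfrac32$ for almost every $x$; in particular no later query $\{v,y\}$ can be answered with weight $k+1$ (you would need $\ud(v,y)=k+1\le\tfrac32$). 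So the adversary can never escalate a vertex's answers from $1$ to $k+1$, and without escalation the high-degree vertices destroy the ball bound for $\od$.

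The paper's fix, which is the content you defer, is a \emph{graded} escalation through $k$ levels rather than a binary one. Each vertex carries a height $h_i(x)\in\{0,\dots,k-1\}$ determined by $k$ nested degree thresholds of order $\sqrt{\theta_n}\,n^{h/k}$ (see \eqref{eq:def hi}), and the answer to a query $\{a,b\}$ is essentially $\max\{h_i(a),h_i(b)\}+1\in[k]$, further corrected by two terms: a cap by the current shortest-path distance and a lower bound $\max_{\{u,v\}\in\sfE_i}\bigl(w(u,v)-d_{\sfG_i}(u,a)-d_{\sfG_i}(v,b)\bigr)$ that guarantees the answered weight is realized as the final shortest-path distance (so \autoref{EA:it:(2)} and \autoref{EA:it:(3)} hold for \emph{both} completions). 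Correspondingly, $\ud$ cannot be prescribed pointwise: it is taken to be a minimizer of the total distance subject to the linear constraints \eqref{eq: constraints}, whose key lower bound is the graded quantity $\min\{\od(x,y),\max\{h(x),h(y)\}+\tfrac12\}$ rather than $\tfrac12$; one then proves $\ud=\tfrac12$ off a small set by a variational argument that chases tight triangle inequalities to a maximal discrete geodesic and derives a contradiction from its endpoints. The grading is what reconciles the two requirements your scheme cannot satisfy simultaneously: points at $\ud$-distance less than $h(x)$ from a height-$h(x)$ vertex form balls whose sizes are controlled by the refined growth estimate \eqref{eq:growth rate of balls}, level by level, and these balls (together with the hub rows and the queried pairs) are exactly the exceptional set $Y$. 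As written, your proposal establishes the easy non-adaptive analogue (where one may choose the weights after seeing all queries) but not the adaptive theorem.
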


We check that \autoref{EA:thm:det-adaptive-lb} indeed implies \autoref{EA:thm:impossibility-DA}, by contrasting { \autoref{EA:it:(1)}}, { \autoref{EA:it:(2)}} and { \autoref{EA:it:(3)}} in the conclusion of \autoref{EA:thm:det-adaptive-lb} with \autoref{EA:def:adapt-approximator} we see that
\begin{align}\label{eq:output same for bith metrics}
\begin{split}
\frac{1}{n^2}\sum_{(i,j)\in [n]^2} \od(i,j)&\stackrel{\eqref{EA:eq:alpha approx def}}{\le} \estimate \Big(\big(\{a_1,b_1\},\od(x_{a_1},x_{b_1})\big),\ldots, \big(\{a_{m},b_{m}\},\od({a_{m}},{b_{m}})\big)\Big)\\
&= \estimate \Big(\big(\{a_1,b_1\},\ud(x_{a_1},x_{b_1})\big),\ldots, \big(\{a_{m},b_{m}\},\ud({a_{m}},{b_{m}})\big)\Big) \stackrel{\eqref{EA:eq:alpha approx def}}{\le} \frac{\alpha}{n^2}\sum_{(i,j)\in [n]^2} \ud(i,j).
\end{split}
\end{align}
At the same time, by { \autoref{EA:it:(4)}} we have
\begin{equation}\label{eq:upper average}
\frac{1}{n^2}\sum_{(i,j)\in [n]^2} \od(i,j)\ge \frac{n^2-|X|}{n^2}(k+1)=\big(1-o(1)\big)(k+1),
\end{equation}
and by {\autoref{EA:it:(4)}} we have (using the fact that the diameter of the metric $\ud$ is at most $k+1$),
\begin{equation}\label{eq:lower average}
\frac{1}{n^2}\sum_{(i,j)\in [n]^2} \ud(i,j)\le \frac{n^2-|Y|}{n^2}\cdot\frac12 +\frac{|Y|}{n^2}\diam\big([n],\ud\big)\le \frac{1-o(1)}{2}+o(1)(k+1).
\end{equation}
By contrasting~\eqref{eq:output same for bith metrics} with~\eqref{eq:upper average} and~\eqref{eq:lower average} and letting $n\to \infty$, we conclude that $\alpha\ge 2(k+1)$, as required.

\smallskip

The remainder of this section will sketch the proof of \autoref{EA:thm:det-adaptive-lb}.
For that and throughout what follows, given an edge-weighted graph $\sfG=\big(\mathsf{V},\mathsf{E},w:\sfE\to [0,\infty)\big)$ we will denote by $d_{\sfG}:\sfV^2\to [0,\infty]$ its associated shortest-path (extended) metric, with the convention that $d_\sfG(x,y)=\infty$  if $x,y\in \sfV$ belong to distinct connected components of $\sfG$. The degree in $\sfG$ of a vertex $x\in \sfV$ will be denoted $\deg_\sfG(x)\in \N\cup\{0\}$.

We will define $\{a_1,b_1\},\ldots,\{a_i,b_i\}\in \tbinom{[n]}{2}$ and a metric $d_i$ on $[n]$ by induction on $i\in [m]$.
Per \autoref{EA:def:adapt-approximator}, the initial pair $\{a_1,b_1\}$ is fixed as part of the algorithm's parameters.
Set $d_1(a_1,b_1)=1$. Assume inductively that $\{a_1,b_1\},\ldots,\{a_{i},b_{i}\}$  and $d_i$ have already been defined for $i\in [m-1]$.

To go from $i$ to $i+1$ in our recursive definition, let the next pair $\{a_{i+1},b_{i+1}\}\in \tbinom{[n]}{2}$ be given by
\begin{equation}\label{eq:recursion with di}
\{a_{i+1},b_{i+1}\}\eqdef \pair_{i+1} \Big(\big(\{a_1,b_1\},d_i({a_1},{b_1})\big),\ldots, \big(\{a_{i},b_{i}\},d_i({a_{i}},{b_{i}})\big)\Big).
\end{equation}
Write
$
\sfE_{i}\eqdef\big\{\{a_1,b_1\},\ldots,\{a_{i},b_{i}\}\big\}
$.
Let $\sfG_{i}\eqdef \big([n],\sfE_{i}, d_{i}\big)$ be the edge-weighted graph whose vertex set is $[n]$, whose edge set is $\sfE_{i}$, and in which the weight of each edge $\{a,b\}\in \sfE_i$ equals $d_i(a,b)$.
Next, define
\begin{align}\label{eq:def hi}
\begin{split}
\forall x\in [n],\qquad h_i(x)\eqdef \min\bigg\{\bigg\lfloor\frac{k}{\log n}&\log\Big(1+\frac{1}{\sqrt{m}}n^{\frac{k+1}{2k}}\deg_{\sfG_i}(x)\Big)\bigg\rfloor,k-1\bigg\}\in \{0,\ldots,k-1\}.
\end{split}
\end{align}
We now define $d_{i+1}(a,b)$ as follows for every $a,b\in [n]$ such that $\{a,b\}\in \big\{\{a_1,b_1\},\ldots,\{a_{i+1},b_{i+1}\}\big\}\eqdef \sfE_{i+1}$:
\begin{equation}\label{eq:def di+1}
d_{i+1}(a,b)\eqdef \max \Big\{\min\big\{\max\{h_{i}(a),h_{i}(b)\}+1,d_{\sfG_{i}}(a,b)\big\}, \max_{\{u,v\}\in \sfE_{i}} \big( d_{i}(u,v)-d_{\sfG_{i}}(u,a)-d_{\sfG_{i}}(v,b)\big)\Big\}.
\end{equation}

Observe that $d_{i+1}(a,b)=d_{i+1}(b,a)$, so we may consider the weighted graph $\sfG_{i+1}\eqdef \big([n],\sfE_{i+1},d_{i+1}\big)$. An elementary induction that is carried out in~\cite{EMN-hadamard}
shows that
\begin{equation}\label{consistent and less than k}
\forall j\in [i+1],\qquad d_{i+1}(a_j,b_j)=d_{\sfG_{i+1}}(a_j,b_j)=d_{\sfG_j}(a_j,b_j)\le k.
\end{equation}
Although $d_{i+1}(a,b)$ was initially defined in~\eqref{eq:def di+1} only for those $(a,b)\in [n]^2$ such that $\{a,b\}\in \sfE_{i+1}$, by~\eqref{consistent and less than k} the following definition provides an extension of $d_{i+1}$ to all of $[n]^2$ which is also a metric on $[n]$:
\begin{equation}\label{eq:definition of di+1 on all of n}
\forall (a,b)\in [n]^2,\qquad d_{i+1}(a,b)\eqdef \min \big\{d_{\sfG_{i+1}}(a,b),k+1\big\}.
\end{equation}
This completes the inductive step from $i$ to $i+1$.

Continuing with the proof of \autoref{EA:thm:det-adaptive-lb}, we will take the metric $\od$ in~\eqref{eq:two adverseries} to be $\od\eqdef d_m$.
That is, it is obtained as the result of the inductive construction
described above in step $i=m$.
It indeed takes values in $[0,k+1]$, as required in~\eqref{eq:two adverseries}, thanks to~\eqref{eq:definition of di+1 on all of n}. The pairs $\{a_1,b_1\},\ldots,\{a_m,b_m\}$ in~\eqref{eq:two adverseries} are also given in step $i=m$.

Writing $w(a_i,b_i)\eqdef\od(a_i,b_i)$ for $i\in [m]$ according to  {\autoref{EA:it:(2)}} of \autoref{EA:thm:det-adaptive-lb}, we see from~\eqref{consistent and less than k} that $d_i(a_i,b_i)=d_{i+1}(a_i,b_i)=\ldots=d_m(a_i,b_i)=w(a_i,b_i)$ for every $i\in [m]$.
Consequently, the desired equality~\eqref{eq:recursion with w} in  {\autoref{EA:it:(3)}} of \autoref{EA:thm:det-adaptive-lb} coincides with~\eqref{eq:recursion with di}.

For the ensuing discussion, it will be convenient to write $\sfE\eqdef \sfE_m$ and $\sfG\eqdef \sfG_m$.
Thus, we have
\begin{equation}\label{eq:rewrite od G}
\forall x,y\in [n],\qquad \od(x,y)=\min\big\{d_\sfG(x,y),k+1\big\}\in \{0,\ldots,k+1\}.
\end{equation}
We will also use the notation $h\eqdef h_m$, and let $B_\sfG(x,r)\eqdef \{y\in [n]:\ d_\sfG(x,y)\le r\}$ be the closed ball with respect to the shortest-path metric $d_\sfG$ on the weighted graph $\sfG$ centered at $x\in [n]$ of radius $r\ge 0$.
The following lemma on the geometry of $\sfG$ is proved in~\cite{EMN-hadamard}:

\begin{lemma}\label{lem:ball growth} The growth rate of the size of $\sfG$-balls  of radius at most $k-1$ satisfies the following:
\begin{equation}\label{eq:growth rate of balls}
\forall (x,r)\in [n]\times [k-1],\qquad |B_\sfG(x,r)|\lesssim \frac{\sqrt{m}}{n^{\frac{k-2r+1}{2k}}}.
\end{equation}
Also, the following relation holds between the size of any $\sfG$-ball of radius $k$ and the $\sfG$-degree of its center:
\begin{equation}\label{eq:k ball small if degree small}
\forall x\in [n],\qquad \deg_\sfG(x)< \frac{\sqrt{m}}{n^{\frac{k+1}{2k}}}(n-1)\implies |B_\sfG(x,k)|\lesssim n^{\frac12-\frac{1}{2k}}\sqrt{m}.
\end{equation}
\end{lemma}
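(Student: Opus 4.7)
The approach extracts from the weight definition \eqref{eq:def di+1} the following degree--weight dichotomy: for every edge $\{a_j,b_j\}\in\sfE$, the first term of the outer maximum gives
\[
d_j(a_j,b_j)\ge \min\bigl\{\max\{h_{j-1}(a_j),h_{j-1}(b_j)\}+1,\ d_{\sfG_{j-1}}(a_j,b_j)\bigr\}.
\]
Unpacking \eqref{eq:def hi}, the inequality $h_{j-1}(x)\ge r$ is equivalent to $\deg_{\sfG_{j-1}}(x)\gtrsim L_r\eqdef \sqrt{m}\,n^{(2r-k-1)/(2k)}$; so once $x$ has accumulated $L_r$ prior incident edges, any further incident edge $\{x,y\}$ either has weight $\ge r+1$ or ``closes a short path,'' meaning its far endpoint was already in the $\sfG_{j-1}$-distance-$r$ ball of $x$.

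Based on this, I plan a joint induction on $r\in\{1,\dots,k-1\}$ simultaneously controlling both the ball sizes and the low-weight neighborhoods $N_r(x)\eqdef \{y\in[n]:\{x,y\}\in\sfE,\ d_\sfG(x,y)=d(\{x,y\})\le r\}$, with target $|B_\sfG(x,r)|\lesssim L_r$ and $|N_r(x)|\lesssim L_r$ for every $x\in[n]$. The base case $r=1$ is essentially immediate: since we may assume that no pair is queried twice (a repeated query leaves $\sfG_j$ unchanged), for a newly added edge $\{x,y_\ell\}$ the prior distance $d_{\sfG_{j_\ell-1}}(x,y_\ell)$ is at least~$2$, so the short-path loophole cannot fire, yielding $|N_1(x)|\le L_1$ and $|B_\sfG(x,1)|\le 1+L_1$. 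For the inductive step, every $y\in B_\sfG(x,r)\setminus\{x\}$ arrives through some last edge of weight $\le r$ from a vertex $z\in B_\sfG(x,r-1)$, so that
\[
|B_\sfG(x,r)|\le 1+|B_\sfG(x,r-1)|+\sum_{z\in B_\sfG(x,r-1)}|N_r(z)|.
\]
Similarly, the ``late'' ($\ell>L_r$) contribution to $|N_r(z)|$ consists of vertices in an earlier $r$-ball of $z$, controlled by the inductive hypothesis at strictly smaller radius. The resulting product estimates collapse to the desired $L_r$ because of the elementary numerical inequality $L_sL_t\lesssim L_{s+t}$ whenever $1\le s,t$ and $s+t\le k$, which is equivalent to $\sqrt{m}\lesssim n^{(k+1)/(2k)}$---precisely the standing regime $m=o(n^{1+1/k})$ of \autoref{EA:thm:det-adaptive-lb}.

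The principal obstacle will be to handle the short-path loophole recursively without double-counting ``late'' edges at any given $z$. The second max in \eqref{eq:def di+1} is benign for this purpose, since it can only increase edge weights and hence strengthens the lower bound on $d(e)$; nevertheless, whenever an edge does manage to have weight $\le r$ past the degree threshold, its other endpoint must genuinely lie in an earlier small ball around $z$, and the induction must track these nested ball inclusions carefully, with the small slack produced by $\sqrt m/n^{(k+1)/(2k)}=o(1)$ absorbing the cost.

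Finally, the supplementary estimate \eqref{eq:k ball small if degree small} is obtained by running the same recursion one step further to $r=k$. The point of the hypothesis $\deg_\sfG(x)<\sqrt{m}(n-1)/n^{(k+1)/(2k)}$ is that this is exactly the condition that the clipping $h\le k-1$ in \eqref{eq:def hi} has not already saturated at $x$; equivalently, if $h$ were computed without the truncation at $k-1$, we would still have $h(x)<k$. That is precisely what the threshold argument needs in order to extend the ball-growth bound from $r=k-1$ to $r=k$ at the vertex $x$, which is the only place where the clipping could otherwise obstruct the argument.
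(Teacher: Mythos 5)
Your proposal follows essentially the same route as the paper's proof of this lemma: the degree--weight dichotomy read off from \eqref{eq:def hi} and \eqref{eq:def di+1}, the elimination of the ``short-path-closing'' edges, and the multiplicative collapse $L_sL_t\lesssim L_{s+t}$ (with $L_a\eqdef \sqrt{m}\,n^{\frac{2a-k-1}{2k}}$), valid precisely because $\sqrt{m}\le n^{\frac{k+1}{2k}}$, are exactly the paper's three ingredients. The paper organizes the count slightly differently: it deletes once and for all the set $S$ of edges $\{a_i,b_i\}$ with $\max\{h_{i-1}(a_i),h_{i-1}(b_i)\}\ge d_{\sfG_{i-1}}(a_i,b_i)$, notes that these satisfy $w=d_{\sfG_{i-1}}(a_i,b_i)$ and hence do not alter the shortest-path metric, and then bounds the sphere $B_\sfG(x,r)\setminus B_\sfG(x,r-1)$ by enumerating weighted paths in $\sfE\setminus S$ over the $2^{r-1}$ ordered partitions $r=a_1+\cdots+a_s$, getting $\prod_t L_{a_t}\le L_r$ per partition. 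Your reading of the role of the hypothesis in \eqref{eq:k ball small if degree small} (the clipping $h\le k-1$ only matters for the single weight-$k$ edge leaving the center) also matches the paper.

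Two steps of your write-up do not close as stated and need repair. First, the displayed recursion $|B_\sfG(x,r)|\le 1+|B_\sfG(x,r-1)|+\sum_{z\in B_\sfG(x,r-1)}|N_r(z)|$ is too lossy: with $|N_r(z)|\lesssim L_r$ and $|B_\sfG(x,r-1)|\lesssim L_{r-1}$ it only yields $|B_\sfG(x,r)|\lesssim L_{r-1}L_r\asymp \sqrt{\theta_n}\,L_{2r-1}$, which exceeds $L_r$ by a factor of order $n^{(r-1)/k}$. The correct accounting is that the last edge of a shortest path to $y\in B_\sfG(x,r)$, leaving $z$, has weight at most $r-d_\sfG(x,z)$, so the sum must be $\sum_{j=0}^{r-1}\sum_{z:\,d_\sfG(x,z)=j}|N_{r-j}(z)|$; only then does $L_jL_{r-j}\lesssim L_r$ apply, and this is exactly the paper's partition of $r$ into edge weights. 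Second, the loophole edges are \emph{not} ``controlled by the inductive hypothesis at strictly smaller radius'': their far endpoints lie in $B_{\sfG_{j-1}}(z,r)$, a radius-$r$ ball, so that route is circular. The clean fact, which the paper uses, is that any such edge has weight \emph{exactly} $d_{\sfG_{j-1}}(z,y)$ (combine the lower bound from \eqref{eq:def di+1} with the upper bound $w(z,y)=d_{\sfG_j}(z,y)\le d_{\sfG_{j-1}}(z,y)$ from \eqref{consistent and less than k}); it is therefore metrically redundant and can be deleted outright before any counting, so no nested-ball bookkeeping is needed. With these two corrections your induction unrolls to the paper's path count.
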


Item {\ref{EA:it:(4)}} of \autoref{EA:thm:det-adaptive-lb} can be quickly deduced from the second part~\eqref{eq:k ball small if degree small} of \autoref{lem:ball growth} for the following choice of $X\subset [n]^2$:
\begin{equation}\label{eq:choose our X}
X\eqdef \big\{(x,y)\in [n]^2:\ \od(x,y)\le k\big\}\stackrel{\eqref{eq:rewrite od G}}{=}\big\{(x,y)\in [n]^2:\ y\in B_\sfG(x,k)\big\}.
\end{equation}
As $\od$ is integer-valued, the first equality in~\eqref{eq:choose our X} gives that if $(x,y)\in [n]^2\setminus X$, then $\od(x,y)\ge k+1$.
Next, set
\begin{equation}\label{eq:def H and small ball on H}
 H\eqdef \bigg\{x\in [n]:\ \deg_\sfG(x)\ge \frac{\sqrt{m}}{n^{\frac{k+1}{2k}}}(n-1)\bigg\}\stackrel{\eqref{eq:k ball small if degree small}}{\implies} \forall x\in [n]\setminus H,\qquad |B_\sfG(x,k)|\lesssim n^{\frac12-\frac{1}{2k}}\sqrt{m}.
\end{equation}
Assuming that $n\ge 2$, we then have
\begin{equation}\label{eq:H small}
m=|\sfE|=\frac12\sum_{x\in [n]} \deg_\sfG(x)\ge  \frac{\sqrt{m}}{2n^{\frac{k+1}{2k}}}(n-1)|H|\ge \frac{n^{\frac{k-1}{2k}}\sqrt{m}}{4}|H|\implies |H|\le \frac{4\sqrt{m}}{n^{\frac{k-1}{2k}}}.
\end{equation}
\begin{equation}\label{eq:X small}
|X|\stackrel{\eqref{eq:choose our X}}{\le}\sum_{x\in [n]}|B_\sfG(x,k)|\le |H|n+n\max_{x\in [n]\setminus H}|B_\sfG(x,k)| \stackrel{\eqref{eq:def H and small ball on H} \wedge \eqref{eq:H small}}{\lesssim} n^{\frac12+\frac{1}{2k}}\sqrt{m} +n^{\frac32-\frac{1}{2k}}\sqrt{m} \asymp n^{\frac32-\frac{1}{2k}}\sqrt{m},
\end{equation}
where the second step of~\eqref{eq:X small}  uses the trivial estimate $|B_\sfG(x,k)| \leq n$ for $x\in H$ and the last step of~\eqref{eq:X small} holds since $k\ge 1$.
Due to the assumption $m=o(n^{1+1/k})$ of \autoref{EA:thm:det-adaptive-lb}, it follows from~\eqref{eq:X small} that $|X|=o(n^2)$. This concludes the verification of {\autoref{EA:it:(4)}} of \autoref{EA:thm:det-adaptive-lb}.

The second metric $\ud$ in~\eqref{eq:two adverseries} is implicitly defined as any minimizer of the (linear) objective function
\begin{equation}\label{eq:ud objective}
\left(d\in[0,\infty)^{[n]^2}\right)\mapsto  \sum_{(x,y)\in [n]^2} d(x,y),
\end{equation}
subject to the following system of (linear) constraints:
\begin{equation}\label{eq: constraints}
\left\{ \begin{array}{ll} \forall x\in [n],\qquad &d(x,x)=0,\\ \forall x,y\in [n],\qquad &d(x,y)=d(y,x),\\ \forall x,y,z\in [n],\qquad &d(x,z)\le d(x,y)+d(y,z), \\ \forall \{x,y\}\in \sfE,\qquad  &d(x,y)= w(x,y),    \\
 \forall x,y \in [n], \qquad &\min\Big\{ \od(x,y), \max\{h(x),h(y)\}+\tfrac12\Big\}\le d(x,y)\le \od(x,y).
\end{array}\right.
\end{equation}
The set of all those $d:[n]^2\to [0,\infty)$ that satisfy~\eqref{eq: constraints} is nonempty since $\od$ belongs to it.
Due to the last constraint in~\eqref{eq: constraints}, this set is compact, so there is
a minimizer $\ud$ of~\eqref{eq:ud objective}.
The first three constraints in~\eqref{eq: constraints} ensure that $\ud$ is a metric on $[n]$.
Since by~\eqref{eq:rewrite od G} we know that $\od$ takes values in $[0,k+1]$, thanks to the last constraint in~\eqref{eq: constraints} we also know that $\ud$ takes values in $[0,k+1]$, as required in~\eqref{eq:two adverseries}.
The fourth constraint in~\eqref{eq: constraints} ensures that  {\autoref{EA:it:(2)}} of \autoref{EA:thm:det-adaptive-lb} holds.
Hence, the proof \autoref{EA:thm:det-adaptive-lb} will be complete if we prove that its conclusion { \autoref{EA:it:(5)}} is true.

\begin{remark} A na\"ive way  to obtain $\ud$ would be to consider a minimizer of~\eqref{eq:ud objective} subject to the first four constraints in~\eqref{eq: constraints},  as this would produce a (pseudo)metric%
\footnote{The remaining requirement $d(x,y)\ge 0$ that is needed for $d$ to be a pseudometric follows by taking $x=z$ in the third constraint in~\eqref{eq: constraints} while applying the first constraint in~\eqref{eq: constraints}.}
for which {\autoref{EA:it:(2)}} of \autoref{EA:thm:det-adaptive-lb} holds,  and it has the smallest possible average distance among all such (pseudo)metrics.
However, it turns out that adding the fifth constraint in~\eqref{eq: constraints} yields further information that facilitates the subsequent analysis.
\end{remark}

Below, closed balls  that are induced by the metrics $\od$ and $\ud$ are denoted
$$
\forall (x,r)\in [n]\times \R,\qquad B_{\od}(x,r)\eqdef \big\{y\in [n]:\ \od(x,y)\le r\big\}\qquad\mathrm{and}\qquad B_{\ud}(x,r)\eqdef \big\{y\in [n]:\ \ud(x,y)\le r\big\}.
$$
It turns out that the following subset of $[n]^2$ satisfies Conclusion~{\em \eqref{EA:it:(5)}} of \autoref{EA:thm:det-adaptive-lb}:
\begin{equation}\label{eq:def Y}
Y\eqdef \big(U\times [n]\big)\cup \big([n]\times U\big)\cup W\cup\ \widetilde{W}\cup \big\{(x,y)\in [n]^2:\ \{x,y\}\in\sfE\big\}\cup  \big\{(x,x):\ x\in [n]\big\},
\end{equation}
where $U\subset [n]$ and $W,\widetilde{W}\subset [n]^2$  in~\eqref{eq:def Y} are defined as follows:
\begin{align}\label{def UW}
\begin{split}
U&\eqdef  \bigcup_{x\in [n]}B_{\ud}\big(x,h(x)-1\big),\\
W&\eqdef \bigcup_{u\in [n]} \Big(B_{\ud}\big(u,h(u)\big)\times \big\{v\in [n]:\ \{u,v\}\in \sfE\ \  \mathrm{and}\ \ w(u,v)=h(u)+1\big\}\Big),\\
\widetilde{W} &\eqdef \bigcup_{u\in [n]} \Big(\big\{v\in [n]:\ \{u,v\}\in \sfE\ \  \mathrm{and}\ \ w(u,v)=h(u)+1\big\}\times B_{\ud}\big(u,h(u)\big)\Big)=\big\{(x,y)\in [n]^2:\ (y,x)\in W\big\}.
\end{split}
\end{align}
To prove that $Y$ satisfies the requirements of {\autoref{EA:it:(5)}} of \autoref{EA:thm:det-adaptive-lb}, one must show that $|Y|=o(n^2)$ and that $\ud(x,y)=\tfrac12$  for distinct $x,y\in [n]$ such that $(x,y)\in [n]^2\setminus Y$.
The latter is the content of \autoref{lem:variational} below,  whose detailed proof appears in~\cite{EMN-hadamard}.
This is a key step in which the choice of $\ud$ as a minimizer of~\eqref{eq:ud objective} is used extensively through variational reasoning, as discussed below.
\begin{lemma}\label{lem:variational} If $x,y\in [n]^2$ and $(x,y)\notin Y$, then $\ud(x,y)\in \bigl\{0,\tfrac12\bigr\}$.
\end{lemma}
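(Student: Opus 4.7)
The plan is a variational argument that exploits the minimality of $\ud$ in~\eqref{eq:ud objective}. First, I extract the immediate consequences of $(x,y)\notin Y$: since the diagonal and $\sfE$ lie in $Y$, we have $x\ne y$ and $\{x,y\}\notin\sfE$, so $\od(x,y)\ge 1$ by~\eqref{eq:rewrite od G}. Since $(U\times[n])\cup([n]\times U)\subseteq Y$, neither $x$ nor $y$ lies in $U$; but $h(u)\ge 1$ would place $u\in B_{\ud}(u,h(u)-1)\subseteq U$, so $h(x)=h(y)=0$. The fifth constraint of~\eqref{eq: constraints} then collapses to $\tfrac12\le\ud(x,y)\le\od(x,y)$, and the lemma reduces to proving the upper estimate $\ud(x,y)\le\tfrac12$.

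Toward a contradiction, suppose $\ud(x,y)>\tfrac12$, and consider the perturbation that decreases $\ud(x,y)=\ud(y,x)$ alone by a small $\delta>0$ while leaving every other entry of $\ud$ unchanged. This candidate remains symmetric, zero on the diagonal, at most $\od$, compatible with the fourth constraint (since $\{x,y\}\notin\sfE$), and with the fifth-constraint lower bound provided $\delta\le\ud(x,y)-\tfrac12$; every triangle inequality whose left-hand side is $\ud(x,y)$ is automatically unaffected. The only possible obstruction is a \emph{tight triangle} of the form $\ud(x,c)=\ud(x,y)+\ud(y,c)$ or $\ud(c,y)=\ud(c,x)+\ud(x,y)$ for some $c\in[n]\setminus\{x,y\}$, which would be violated after the decrease. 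Minimality of $\ud$ forces such a triangle to exist; up to symmetry, pick $c_1\in[n]\setminus\{x,y\}$ with $\ud(x,c_1)=\ud(x,y)+\ud(y,c_1)$, whence $\ud(x,c_1)>\ud(x,y)>\tfrac12$.

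Now I iterate the dichotomy at $(x,c_1)$. The three possibilities are: (A) the fifth-constraint lower bound is tight, $\ud(x,c_1)=\max\{h(x),h(c_1)\}+\tfrac12=h(c_1)+\tfrac12$, forcing $h(c_1)\ge 1$ and hence $c_1\in B_\ud(c_1,h(c_1)-1)\subseteq U$, i.e.\ $(x,c_1)\in[n]\times U\subseteq Y$; (B) $\{x,c_1\}\in\sfE\subseteq Y$; or (C) a new tight triangle producing $c_2$ with $\ud(x,c_2)=\ud(x,c_1)+\ud(c_1,c_2)$. Applying this repeatedly, the chain $\ud(x,y)<\ud(x,c_1)<\ud(x,c_2)<\cdots$ grows by at least $\tfrac12$ per step (since the fifth constraint gives $\ud(c_\ell,c_{\ell+1})\ge\tfrac12$ for distinct points) and is bounded above by $k+1$; it must therefore terminate within $O(k)$ iterations at some $c_j$ with $(x,c_j)\in Y$.

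The main obstacle is the final backward-propagation step: using the tight-geodesic identity
\[
\ud(x,c_j)=\ud(x,y)+\ud(y,c_1)+\ud(c_1,c_2)+\cdots+\ud(c_{j-1},c_j),
\]
together with $x\notin U$ (which rules out $(x,c_j)\in U\times[n]$), one must show that each remaining possible location of $(x,c_j)$ in $Y$—namely $[n]\times U$, $\sfE$, $W$, or $\widetilde W$—forces the starting pair $(x,y)$ itself into $Y$, contradicting the hypothesis. The delicate sub-cases are $(x,c_j)\in W$ and $(x,c_j)\in\widetilde W$: here one tracks the defining ball $B_\ud(u,h(u))$ and the $\sfE$-neighbor of $u$ with weight $h(u)+1$ along the tight geodesic, verifying that some intermediate vertex $c_\ell$ (or $y$) lands in a ball that places the pair $(x,y)$ itself in $W$, $\widetilde W$, or $U\times[n]$. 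This backward closure is precisely what the definitions~\eqref{def UW} are engineered to guarantee, and it is also the step for which the strengthened lower bound $\max\{h(u),h(v)\}+\tfrac12$ in the fifth constraint of~\eqref{eq: constraints}—rather than a naive lower bound of $\tfrac12$—is essential.
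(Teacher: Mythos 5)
Your overall strategy --- reduce to $\ud(x,y)\le\tfrac12$ via $h(x)=h(y)=0$, run the variational argument to extract a tight triangle inequality, iterate to a maximal tight geodesic, and then analyze which non-triangle lower bound is tight at the terminal pair --- is the same as the paper's. But there are two genuine gaps. First, your chain only ever extends the geodesic past $c_j$, never past $x$. The triangle constraints that could block a decrease of $\ud(x,c_j)$ include those of the form $\ud(w,x)+\ud(x,c_j)=\ud(w,c_j)$, i.e., extensions \emph{beyond $x$}; your trichotomy (A)/(B)/(C) omits this case, so your termination argument does not yet produce a pair at which no triangle constraint is tight. The paper extends in both directions and ends with $\ud(u,x)+\ud(x,y)+\ud(y,v)=\ud(u,v)$, where $u$ need not equal $x$; this matters because in the terminal case analysis the vertex with $h\ge 1$ may sit on either side of the geodesic.

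Second, and more seriously, the step you label ``the main obstacle'' is where essentially all of the content of the lemma lives, and you have not carried it out. Your framing --- first conclude $(x,c_j)\in Y$, then ``propagate backward'' to $(x,y)\in Y$ --- is also not how the contradiction is actually reached. The tight geodesic identity is used \emph{directly} to place $x$ or $y$ (not the terminal vertex) inside a small ball: for instance, if the fifth constraint of \eqref{eq: constraints} is tight at $(u,v)$ with $h(u)\ge1$, then $\ud(u,x)\le\ud(u,v)-\ud(x,y)<h(u)$, and the integrality fact \eqref{eq:less than 1/2} (if $\ud(a,b)<\max\{h(a),h(b)\}+\tfrac12$ then $\ud(a,b)=\od(a,b)$, which is an integer) rounds this to $\ud(u,x)\le h(u)-1$, i.e., $x\in U$, contradicting $(x,y)\notin Y$ at once. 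You never invoke this integrality step, which is needed repeatedly. The hardest case, $\{u,v\}\in\sfE$, first requires proving $\ud(u,v)=w(u,v)\le\max\{h(u),h(v)\}+1$ --- an argument that goes back to the adversary's definition of the weights in \eqref{eq:def di+1} and uses the maximality of the geodesic to rule out the second branch of the max there --- and is then followed by sub-cases landing $x$ in $U$, $y$ in $U$, or $(x,y)$ in $W$. None of this appears in your write-up; asserting that ``the definitions are engineered to guarantee'' the conclusion is not a proof of it.
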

The proof of \autoref{lem:variational} is by contradiction.
It starts by taking the distinct $(x,y)\in [n]^2\setminus Y$ with $\ud(x,y)\neq 1/2$, and furthermore $\ud(x,y)$ is {\em maximal} among all those $x,y\in [n]$ with this property.
Observe that the first inequality in the fifth constraint in~\eqref{eq: constraints} ensures that $\ud(u,v)\ge 1/2$ for every distinct $u,v\in [n]$, so we necessarily have $\ud(x,y)> 1/2$.
Recalling the definition~\eqref{eq:def Y} of $Y$, the fact that $(x,y)\notin Y$ implies $x,y\notin U$.
By the definition of $U$ in~\eqref{def UW}, this implies $h(x)=h(y)=0$, since otherwise if, say, $h(x)\ge 1$ (recall that $h\ge 0$ is integer valued), then we have $x\in B_{\ud}(x,h(x)-1)\subset U$.
Hence, the first inequality in the fifth constraint in~\eqref{eq: constraints}  is strict.
Furthermore, since $Y\supseteq \{(a,b)\in [n]^2:\ \{a,b\}\in \sfE\}$ by~\eqref{eq:def Y}, we know that $\{x,y\}\notin \sfE$, so the fourth constraint in~\eqref{eq: constraints} does not apply to $\ud(x,y)$.
Because $\ud$ is a minimizer of~\eqref{eq:ud objective} subject to the system of constraints~\eqref{eq: constraints}, the aforementioned considerations imply that there must exist $z\in [n]\setminus \{x,y\}$ such that $\ud(x,z)=\ud (x,y)+\ud(y,z)$, because otherwise it would have been possible to  reduce $\ud(x,y)$ without changing the rest of the values of $\ud$ while ensuring that all of the constraints in~\eqref{eq: constraints} are still satisfied (this is where  the strictness of the lower bound on $\ud(x,y)$ that appears in the fifth constraint in~\eqref{eq: constraints} is used), which contradicts  the fact that $\ud$ is a minimizer of~\eqref{eq:ud objective}  subject to those constraints.

The above process is iterated in~\cite{EMN-hadamard} to obtain $(u,v)\in[n]^2$ which are endpoints of a discrete geodesic with respect to $\ud$ that contains $x,y$ (thus $\ud(u,x)+\ud(x,y)+\ud(y,v)=\ud(u,v)$), and is maximal with respect to inclusion. As before, one of the lower bounds on $\ud$ that appear in the system of constraints~\eqref{eq: constraints} must hold as equality, but this time, due to the maximality of the aforementioned geodesic whose endpoints are $u$ and $v$,
it cannot be the third constraint in~\eqref{eq: constraints}, i.e., the triangle inequality. Hence, either the first inequality in the fifth constraint in~\eqref{eq: constraints} holds as equality, or $\{u,v\}\in \sfE$, i.e., the fourth constraint in~\eqref{eq: constraints} applies. From here, a case analysis that is performed in~\cite{EMN-hadamard} leads to the desired contradiction.

To conclude the sketch of the proof of\/ \autoref{EA:thm:det-adaptive-lb}, it remains to explain why $|Y|=o(n^2)$.  Observe that
\begin{equation}\label{eq:less than 1/2}
\forall x,y\in [n],\qquad \ud(x,y)< \max\{h(x),h(y)\}+\tfrac12\implies \ud(x,y)=\od(x,y).
\end{equation}
Indeed, \eqref{eq:less than 1/2} is implied by the fifth constraint in~\eqref{eq: constraints}. Now,
\begin{equation}\label{eq:balls of ran h the same}
\forall x\in [n], \forall r\in [h(x)],\qquad B_{\ud}(x,r)\stackrel{\eqref{eq:less than 1/2}}{=}B_{\od}(x,r)\stackrel{\eqref{eq:def hi}\wedge \eqref{eq:rewrite od G}}{=}B_{\sfG}(x,r),
\end{equation}
where the last step of~\eqref{eq:balls of ran h the same} holds since $h(x)\le k-1$ for every $x\in [n]$ by~\eqref{eq:def hi}  and $\od$ is given by~\eqref{eq:rewrite od G}.
The a priori upper bound $h(x)\le k-1$ also allows us to apply the first part~\eqref{eq:growth rate of balls} of \autoref{lem:ball growth}  to deduce using~\eqref{eq:balls of ran h the same} that
\begin{equation}\label{eq:upper bound on ud balls}
\forall x\in [n], \forall r\in [h(x)],\qquad \left|B_{\ud}(x,r)\right|\lesssim \frac{\sqrt{m}}{n^{\frac{k-2r+1}{2k}}}\le n^{\frac12-\frac{3}{2k}}\sqrt{m}.
\end{equation}
Consequently,
\begin{align*}
|Y|&\stackrel{\eqref{eq:def Y}}{\le} 2n|U|+2|W|+2|\sfE|+n\\&\stackrel{\eqref{def UW}}{\lesssim} n\sum_{x\in [n]} \left|B_{\ud}\big(x,h(x)-1\big)\right|+\sum_{u\in [n]} \left|B_{\ud}\big(u,h(u)\big)\right|\deg_\sfG(u)+m\\
&\stackrel{\eqref{eq:upper bound on ud balls}}{\lesssim} n\sum_{x\in [n]} \frac{\sqrt{m}}{n^{\frac{k-2(h(x)-1)+1}{2k}}} +n^{\frac12-\frac{3}{2k}}\sqrt{m}\sum_{u\in [n]} \deg_\sfG(u)+m \displaybreak[0]\\
&\ \asymp n^{\frac12-\frac{3}{2k}}\sqrt{m}\sum_{x\in [n]} n^{\frac{h(x)}{k}} +n^{\frac12-\frac{3}{2k}}m^{\frac32}\\
&\stackrel{\eqref{eq:def hi}}{\le} n^{\frac12-\frac{3}{2k}}\sqrt{m}\sum_{x\in [n]}\Big(1+\frac{1}{\sqrt{m}}n^{\frac{k+1}{2k}}\deg_{\sfG}(x)\Big)+n^{\frac12-\frac{3}{2k}}m^{\frac32}\displaybreak[0]\\
&\ \asymp n^{\frac32 -\frac{3}{2k}}\sqrt{m}+n^{1-\frac{1}{k}}m+n^{\frac12-\frac{3}{2k}}m^{\frac32}\\&\ =o(n^2),
\end{align*}
where we used (twice) the fact that $\sum_{x\in [n]}\deg_\sfG(x)=2m$ and the last step is valid because $m=o(n^{1+1/k})$.

\section{Overview of the proof of
\texorpdfstring{\autoref{EA:thm:cat(0)-transform-closure-intro}}{Theorem \ref{EA:thm:cat(0)-transform-closure-intro}}}
\label{sec:1.7}

The Euclidean cone over a metric space $(\MM,d_\MM)$, denoted $\cone(\MM,d_\MM)$ or $\cone(\MM)$ when the metric is clear from the context, is
defined~\cite{Ber-Cone} as the completion of $(0,\infty)\times \MM$ under the following
metric:
\begin{equation}\label{EA:eq:def cone}
\forall (s,x),(t,y)\in (0,\infty)\times \MM,\qquad d_{\cone(\MM,d_\MM)}\big((s,x),(t,y)\big)\eqdef\sqrt{ {s^2+t^2-2st\cos\big(\min\{\pi,d_X(x,y)\}\big)}}.
\end{equation}
See~\cite{ABN86} and~\cite[Chapter~I.5]{BH99} for a treatment of this useful concept (including an explanation of the nomenclature); in particular, \cite[Proposition~I.5.9]{BH99} proves that~\eqref{EA:eq:def cone} indeed defines a metric.

The following general proposition is a key component of the proof of \autoref{EA:thm:cat(0)-transform-closure-intro}:

\begin{proposition}
\label{EA:prop:transform-in-producr-cones}
For every metric transform $\varphi:[0,\infty)\to [0,\infty)$ there exist coefficients $\alpha_0=\alpha_0^\f\ge 0$ and $\{\alpha_n=\alpha_n^\f\}_{n=1}^\infty ,\{\beta_n=\beta_n^\f\}_{n=1}^\infty\subset (0,\infty)$
with the following property. Suppose that $(\MM,d_\MM)$ is a metric space. There exist $\{z_k\}_{k=0}^\infty\subset \MM$ and $\{r_k\}_{k=1}^\infty\subset (0,\infty)$ such that if we consider the Pythagorean product  $(\mathcal{P},\rho)$, where
$$
\mathcal{P}\eqdef \left\{\big(x_0,(s_1,x_1),(s_2,x_2),\ldots,\big)\in \MM\times \big((0,\infty)\times \MM\big)^{\N}:\ \sum_{k=1}^\infty \beta_{k} d_{\cone\left(\MM,\pi\sqrt{\frac{\alpha_k}{\beta_k}}d_{\MM}\right)}\big((s_k,x_k),(r_k,z_k)\big)^2<\infty
\right\},
$$
and $\rho$ is the metric on $\mathcal{P}\subset \MM\times \big((0,\infty)\times \MM\big)^{\N}$ that is given by setting
$$
\rho(\chi,\upsilon)\eqdef \sqrt{\alpha_0d_\MM(x_0,y_0)^2+\frac{1}{\pi^2}\sum_{k=1}^\infty \beta_{k} d_{\cone\left(\MM,\pi\sqrt{\frac{\alpha_k}{\beta_k}}d_{\MM}\right)}\big((s_k,x_k),(t_k,y_k)\big)^2},
$$
for each $\chi=\big(x_0,(s_1,x_1),(s_2,x_2),\ldots,\big),\upsilon=\big(y_0,(t_1,y_1),(t_2,y_2),\ldots,\big)\in \mathcal{P}$. Then
$
\cc_{(\mathcal{P},\rho)}(\MM,\f\circ d_\MM)\lesssim 1.
$
\end{proposition}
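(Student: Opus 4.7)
The plan is to explicitly embed $(\MM, \f\circ d_\MM)$ into $(\mathcal{P},\rho)$ by encoding $\f^2$ as a dyadic sum of capped quadratics, each one modeled by a single Euclidean cone over $\MM$ at a specific scale. The starting point would be the claim that $\psi(s) \eqdef \f(\sqrt{s})^2$ is a nondecreasing concave function on $[0,\infty)$ with $\psi(0)=0$. Monotonicity and $\psi(0)=0$ are automatic; concavity reduces (after a standard smoothing argument handling possible nondifferentiability of $\f$) to the monotonicity in $t$ of $g(t)\eqdef \f(t)\f'(t)/t$, which follows from the pointwise inequality
\[
t^2 g'(t) \;=\; \f'(t)\bigl(t\f'(t)-\f(t)\bigr) + t\f(t)\f''(t) \;\le\; 0.
\]
Here the first summand is nonpositive because $\f'\ge 0$ and $\f(t)\ge t\f'(t)$ (the tangent-line inequality for the concave $\f$ with $\f(0)=0$), and the second is nonpositive because $\f''\le 0$. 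With $\psi$ concave, the classical integral representation of concave functions vanishing at the origin supplies a constant $a\ge 0$ and a nonnegative Borel measure $\mu$ on $(0,\infty)$ such that
\[
\f(t)^2 \;=\; \psi(t^2) \;=\; a t^2 + \int_0^\infty \min(t^2,u)\, d\mu(u), \qquad t\ge 0.
\]

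The next step is to dyadically discretize this representation. Setting $\nu_\ell \eqdef \mu\bigl([2^\ell, 2^{\ell+1})\bigr)$ for $\ell\in\Z$ and using $\min(t^2,u) \asymp \min(t^2, 2^\ell)$ on $u\in[2^\ell,2^{\ell+1})$ yields $\f(t)^2 \asymp a t^2 + \sum_{\ell\in\Z} \nu_\ell \min(t^2, 2^\ell)$. After a fixed bijection $\N\leftrightarrow\Z$, for each $k\in\N$ corresponding to some $\ell\in\Z$ I would take $\alpha_0 \eqdef a$, $\alpha_k\eqdef \nu_\ell$, $\beta_k\eqdef \nu_\ell 2^\ell$, so that the ratio $\beta_k/\alpha_k = 2^\ell$ fixes the effective scale of the $k$-th cone; all of these depend only on $\f$, as required. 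On the $\MM$-side I would fix an arbitrary base point $o\in\MM$, set $z_k \eqdef o$ and $r_k\eqdef 1$ for every $k$, and define the candidate embedding $f:\MM\to\mathcal{P}$ by $f(x)\eqdef \bigl(x,(1,x),(1,x),\ldots\bigr)$.

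The bi-Lipschitz estimate is then a direct computation with~\eqref{EA:eq:def cone}. For $s=t=r_k=1$, the cone distance formula reads
\[
d_{\cone(\MM,\pi 2^{-\ell/2} d_\MM)}\bigl((1,x),(1,y)\bigr)^2 \;=\; 4\sin^2\!\Bigl(\tfrac{\pi}{2}\min\bigl\{1,\,2^{-\ell/2}d_\MM(x,y)\bigr\}\Bigr),
\]
and since $\sin\theta\asymp \theta$ on $[0,\pi/2]$, this quantity is $\asymp \min\bigl(2^{-\ell}d_\MM(x,y)^2, 1\bigr)$. Multiplying by $\beta_k/\pi^2 = \nu_\ell 2^\ell/\pi^2$ shows that the $k$-th summand of $\rho(f(x),f(y))^2$ is $\asymp \nu_\ell \min\bigl(d_\MM(x,y)^2, 2^\ell\bigr)$, and summing and combining with the $\alpha_0 d_\MM(x,y)^2$ term recovers $\f(d_\MM(x,y))^2$ up to universal multiplicative constants. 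Applying the same computation with $y=o$ shows $\sum_k \beta_k d_{\cone(\MM,\pi 2^{-\ell/2} d_\MM)}((1,x),(1,o))^2 \asymp \f(d_\MM(x,o))^2 - a d_\MM(x,o)^2 < \infty$, so $f(x)\in\mathcal{P}$ for every $x\in\MM$. The main obstacle is the concavity of $\psi$; this is the only genuinely nontrivial input, since once it is in hand the remainder is bookkeeping matching the Pythagorean-cone sum to the explicit integral representation of $\f^2$.
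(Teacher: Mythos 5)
Your proposal is correct and follows essentially the same route as the paper: it establishes that $\psi(s)=\f(\sqrt{s})^2$ is a metric transform (the paper cites~\cite[Remark~5.4]{MN-quotients}, you prove it directly), decomposes $\f^2$ into a linear part plus truncations (the paper invokes the Brudny\u{\i}--Krugljak representation~\cite[Proposition~3.2.6]{BK}, you use the equivalent integral representation with a dyadic discretization), and realizes each truncation via the map $x\mapsto(1,x)$ into a suitably rescaled Euclidean cone, exactly as in \autoref{EA:prop:cone-truncation}, before assembling the Pythagorean product. The only cosmetic gap is that some of your dyadic weights $\nu_\ell$ may vanish, whereas the proposition asks for coefficients in $(0,\infty)$; this is resolved by simply discarding those indices.
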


The full details of the proof of \autoref{EA:prop:transform-in-producr-cones} appear in~\cite{EMN-hadamard}; we will next sketch its steps.
The following lemma shows that for any metric space $(\MM,d_\MM)$, the truncated metric space $(\MM,\min\{d_\MM,\pi\})$ embeds into the Euclidean cone over $(\MM,d_\MM)$ with bi-Lipschitz distortion $O(1)$.  The corresponding embedding is simply $(x\in \MM)\mapsto (1,x)\in (0,\infty)\times \MM$, for which the stated distortion bound is proved in a straightforward way using elementary calculus; the details appear in~\cite{EMN-hadamard}.
\begin{lemma} \label{EA:prop:cone-truncation} Every metric space $(\MM,d_\MM)$ satisfies $\cc_{\cone(\MM,d_\MM)}\big(\MM,\min\{d_\MM,\pi\}\big)\le \frac{\pi}{2}$.
\end{lemma}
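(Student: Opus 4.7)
The plan is to verify that the simple embedding $f:\MM\to \cone(\MM,d_\MM)$ defined by $f(x)\eqdef (1,x)$ already achieves the claimed distortion bound. First I would compute the cone distance directly from the defining formula~\eqref{EA:eq:def cone}: for $x,y\in\MM$, writing $\theta\eqdef \min\{\pi,d_\MM(x,y)\}\in[0,\pi]$, one obtains
\[
d_{\cone(\MM,d_\MM)}\bigl(f(x),f(y)\bigr)=\sqrt{1+1-2\cos\theta}=\sqrt{2-2\cos\theta}=2\sin\!\left(\tfrac{\theta}{2}\right),
\]
via the half-angle identity $2-2\cos\theta=4\sin^2(\theta/2)$ together with the nonnegativity of $\sin(\theta/2)$ on $[0,\pi/2]$.

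Next, per the definition~\eqref{eq:def distortion} of distortion with scaling factor $s=\pi/2$, the desired bound $\cc_{\cone(\MM,d_\MM)}(\MM,\min\{d_\MM,\pi\})\le \pi/2$ reduces to verifying the one-variable double inequality
\[
\tfrac{2}{\pi}\,\theta\ \le\ 2\sin\!\left(\tfrac{\theta}{2}\right)\ \le\ \theta \qquad \text{for every } \theta\in[0,\pi],
\]
applied with $\theta=\min\{d_\MM(x,y),\pi\}$. The upper bound is immediate from the standard estimate $\sin u\le u$ for $u\ge 0$. For the lower bound I would invoke concavity of $\sin$ on $[0,\pi/2]$: its graph lies above the chord from $(0,0)$ to $(\pi/2,1)$, so evaluating at $u=\theta/2\in[0,\pi/2]$ yields $\sin(\theta/2)\ge (2/\pi)(\theta/2)=\theta/\pi$, as needed.

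There is no substantive obstacle in this argument; it is a one-line computation followed by standard trigonometric estimates. The only real ``content'' of the lemma is the choice of the radial coordinate $s=t=1$ in the embedding, and the sharpness of the constant $\pi/2$ is visible at $\theta=\pi$, where the ratio $2\sin(\theta/2)/\theta$ attains its minimum value $2/\pi$ on $[0,\pi]$.
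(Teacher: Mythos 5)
Your proof is correct and takes essentially the same approach as the paper: the paper also uses the embedding $x\mapsto (1,x)$ and verifies the bound $\frac{2}{\pi}\theta\le \sqrt{2(1-\cos\theta)}\le\theta$ on $[0,\pi]$, differing only in that it derives this from elementary bounds on $\cos\theta$ rather than via the half-angle identity and concavity of $\sin$.
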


The connection between \autoref{EA:prop:cone-truncation} and \autoref{EA:prop:transform-in-producr-cones} is obtained using the following representation by Brudny\u{\i} and Krugljak~\cite{BK} of any metric transform $\omega:[0,\infty)\to [0,\infty)$ as an affine combination of truncations. By~\cite[Proposition~3.2.6]{BK}, there are $\alpha_0=\alpha_0^\omega\ge 0$ and $\{\alpha_n=\alpha_n^\omega\}_{n=1}^\infty ,\{\beta_n=\beta_n^\omega\}_{n=1}^\infty\subset (0,\infty)$ such that
$$
\forall t\ge 0,\qquad \omega(t)\asymp \alpha_0 t+ \sum_{k=1}^\infty \min\{\alpha_k t , \beta_k\}.
$$
Furthermore, $\lim_{t\to \infty} \omega(t)/t>0$ if and only if $\alpha_0>0$.

It is elementary to verify (as is done in, e.g.,~\cite[Remark~5.4]{MN-quotients}) that if $\f:[0,\infty)\to [0,\infty)$ is a metric transform,
then the mapping $\omega=\omega^\f:[0,\infty)\to [0,\infty)$, given by setting $$\forall t\ge 0,\qquad \omega(t)\eqdef  \f\big(\sqrt{t}\big)^2$$ is also a metric transform. Using the aforementioned general representation of this specific $\omega$, we see that
\begin{equation}
\label{EA:eq:l2-transform-approximation-intro}
\forall t\ge 0,\qquad     \f(t)\asymp \sqrt{\alpha_0t^2+\sum_{k=1}^\infty \min\{\alpha_k t^2,\beta_k\}}.
\end{equation}
\autoref{EA:prop:transform-in-producr-cones} now follows by combining~\eqref{EA:eq:l2-transform-approximation-intro} with  \autoref{EA:prop:cone-truncation}  through a sequence of elementary estimates that are performed in~\cite{EMN-hadamard}.

With \autoref{EA:prop:transform-in-producr-cones}  at hand, we can now outline how \autoref{EA:thm:cat(0)-transform-closure-intro} is proved; the complete details of the subsequent deduction appear in~\cite{EMN-hadamard}.

A metric space $(\NN,d_\NN)$ is said to be a $\CAT(1)$ space if it satisfies the following conditions. Firstly, we require that for every $x,y\in \NN$ with $d_\NN(x,y)<\pi$ there exists a constant speed geodesic in $\NN$ joining $x$ to $y$.
Next, suppose that $x,y,z\in \NN$ satisfies
 $ d_\NN(x,y)+d_\NN(y,z)+d_\NN(z,x)<2\pi,$
and that $\gamma_{x,y},\gamma_{y,z},\gamma_{z,x}:[0,1]\to \NN$ are geodesics
that join $x$ to $y$, $y$ to $z$, and $z$ to $x$, respectively.
Let $\mathbb{S}^2$ be the unit Euclidean sphere in $\R^3$, and let $d_{\mathbb S^2}$
denote the geodesic metric on $\mathbb{S}^2$ (thus the diameter of $\mathbb S^2$
equals $\pi$ under this metric).
As explained in~\cite{BH99},
there exist $a,b,c\in \mathbb{S}^2$ such that $d_{\mathbb{S}^2}(a,b)=d_\NN(x,y)$,
$d_{\mathbb{S}^2}(b,c)=d_\NN(y,z)$, and $d_{\mathbb{S}^2}(c,a)=d_\NN(z,x)$. Let
$\phi_{a,b},\phi_{b,c},\phi_{c,a}:[0,1]\to \mathbb{S}^2$ be $d_{\mathbb S^2}$-geodesics that join $a$
to $b$, $b$ to $c$, and $c$ to $a$, respectively.
The second
requirement in the definition of a $\CAT(1)$ space is that $d_\NN(\phi_{x,y}(s),\phi_{y,z}(t))\le
d_{\mathbb{S}^2}(\phi_{a,b}(s),\phi_{b,c}(t))$ for every
$s,t\in [0,1]$.  See~\cite{BGP92,Per95,BH99,Stu99,BBI01,Sturm-NPC,Gro07} for more on this fundamental notion.

A straightforward consequence
of the relation between distances in $\mathbb R^2$ and
$\mathbb S^2$ shows that every $\CAT(0)$ space is also a $\CAT(1)$ space; see~\cite[Theorem~II.1.12]{BH99}.
An important theorem of Berestovski{\u\i}~\cite{Ber-Cone} (see also~\cite{ABN86}
and~\cite[Theorem~II.3.14]{BH99}) states that a metric space $(\NN,d_\NN)$ is a $\CAT(1)$ space if and only if $\cone(\NN,d_\NN)$ is a $\CAT(0)$ space. By combining these two facts (and the obvious fact that if one multiplies the metric of a $\CAT(0)$ space by a positive constant, then the resulting metric space is also a $\CAT(0)$ space), if $(\MM,d_\MM)$ is a $\CAT(0)$ metric space, then $\cone(\MM,sd_\MM)$ is also a $\CAT(0)$ space for any $s>0$.

Returning to the setting of \autoref{EA:thm:cat(0)-transform-closure-intro}, we are given a $\CAT(0)$ metric space $(\MM,d_\MM)$ and a metric transform $\f:[0,\infty)\to [0,\infty)$. We will next apply \autoref{EA:prop:transform-in-producr-cones}, and proceed using the notation in its statement. As explained above, $\cone(\MM,\pi\sqrt{\alpha_k/\beta_k}d_{\MM})$ is a $\CAT(0)$ space for each $k\in \N$. Consequently, the Pythagorean product $(\mathcal{P}, \rho)$  of \autoref{EA:prop:transform-in-producr-cones} is also a $\CAT(0)$ space, since the $\CAT(0)$ condition~\eqref{eq:def CTA0} is a quadratic inequality that evidently passes to Pythagorean products (by applying it coordinate-wise).
\autoref{EA:prop:transform-in-producr-cones} asserts that the metric space $(\MM,\f\circ d_\MM)$ admits an embedding
into the $\CAT(0)$ space $(\mathcal{P}, \rho)$
whose bi-Lipschitz distortion is $O(1)$, so the conclusion of \autoref{EA:thm:cat(0)-transform-closure-intro} holds.

\subsection*{Added in proof} The concurrent work~\cite{ADTT} made substantial (exciting) progress towards Question~\ref{EA:ques:Kleinberg} by proving that, with high probability, a random regular graph is a universal $O(1)$-approximator with respect to an independently chosen random regular graph. This resolves the question of Kleinberg that \autoref{EA:ques:Kleinberg}  mentions and provides a randomized algorithm that performs as described in~\autoref{EA:thm:main positive}.
Derandomizing that construction remains a major challenge; see~\cite{ADTT} for more details.

\newpage
\thispagestyle{empty}

\backgroundsetup{
  opacity=0.1,
  contents={\textsf{\textup{DRAFT}}}
}

\part{\MakeUppercase{Hadamard subspaces are closed under metric transforms (draft)}}
\markright{\MakeUppercase{Hadamard subspaces are closed under metric transforms (Draft)}}
\label{part:FV}

\begin{abstract}
We prove that every metric transform of a CAT(0) space admits a bi-Lipschitz embedding into some CAT(0) space. 
Using this rigidity result, we prove that there is a metric space $\MM$ that admits bi-Lipschitz embeddings into a space of nonpositive Alexandrov curvature and also into a space of nonnegative Alexandrov curvature, yet $\MM$ does not embed coarsely into a Hilbert space. 
This result refutes a bi-Lipschitz analog of a  theorem of Wilson (1932) and Blumenthal (1935), and as a consequence, it resolves a question of Ding, Lee, and Peres (2013) by establishing the impossibility of a  certain weak metric interpretation of a theorem of Kwapie\'n  (1972).
Furthermore, we use the above embedding result to construct an expander with respect to random regular graphs that satisfies the $p$-Poincar\'e inequalities for all $p \in (0, \infty)$, thus answering a question of Mendel and Naor (2014).  This in turn yields a deterministic approximation algorithm to compute average distances in $n$-point subsets of typical graphs with constant degree using $O(n)$ queries.  We complement it by proving super-linear lower bound on the query complexity for arbitrary such graphs, matching the approximation algorithm of Barhum, Goldreich, and Shraibman (2007) for general finite metric spaces. Finally, we construct  $CAT(0)$ spaces for which the growth rates of the distortion of embeddings of their finite subsets into a Hilbert space or into $L_1$ grows to $\infty$ at an essentially arbitrary rate.
\end{abstract}

\ifdefined \withtitle
\maketitle
\fi

\setcounter{tocdepth}{1} 
\tableofcontents

\section{Introduction} \label{sec:intro}

\subsection{Metric transforms and rigidity}
\label{sec:intro:transform}

The main result of this paper is the following theorem:.
\begin{theorem}\label{thm:cat(0)-transform-closure}
Let $\varphi:[0,\infty)\to [0,\infty)$ be a metric transform, and $(X,d_X)$ be a CAT$(0)$ space. Then $(X,\varphi\circ d_X)$ embeds in some CAT$(0)$ space with distortion at most $\pi \sqrt{3/2}<4$.
\end{theorem}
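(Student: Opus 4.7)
The plan is to assemble the ingredients that have already been set up in Section~\ref{sec:1.7}. I would first invoke Proposition~\ref{EA:prop:transform-in-producr-cones} applied to the given metric transform $\varphi$ and the given $\mathrm{CAT}(0)$ space $(X,d_X)$. This hands us, essentially for free, a bi-Lipschitz embedding of $(X,\varphi\circ d_X)$ with distortion $O(1)$ into the Pythagorean product $(\mathcal{P},\rho)$ whose factors are a scaled copy of $(X,d_X)$ and a sequence of Euclidean cones $\cone\bigl(X,\pi\sqrt{\alpha_k/\beta_k}\,d_X\bigr)$. After that, the entire task reduces to arguing that $(\mathcal{P},\rho)$ is itself a $\mathrm{CAT}(0)$ space, and to sharpening the constants so that the distortion is at most $\pi\sqrt{3/2}$.

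\textbf{$\mathrm{CAT}(0)$-ness of the target.} The key structural step is: every cone factor $\cone(X,sd_X)$ (with $s>0$) is $\mathrm{CAT}(0)$. I would verify this by the chain recalled in Section~\ref{sec:1.7}: the scaled space $(X,sd_X)$ inherits the $\mathrm{CAT}(0)$ property, every $\mathrm{CAT}(0)$ space is a $\mathrm{CAT}(1)$ space (by the comparison between triangles in $\mathbb{R}^2$ and $\mathbb{S}^2$, cf.\ \cite[Theorem~II.1.12]{BH99}), and Berestovski\u{\i}'s theorem~\cite{Ber-Cone} then says that the Euclidean cone over a $\mathrm{CAT}(1)$ space is $\mathrm{CAT}(0)$. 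The original factor $(X,\sqrt{\alpha_0}\,d_X)$ is trivially $\mathrm{CAT}(0)$. Finally, the $\mathrm{CAT}(0)$ condition~\eqref{eq:def CTA0} is a quadratic inequality in the four distances $d(\gamma(t),z), d(x,z), d(y,z), d(x,y)$, and constant-speed geodesics in a weighted $\ell_2$-product may be built coordinatewise; hence~\eqref{eq:def CTA0} passes termwise through the Pythagorean product, so $(\mathcal{P},\rho)$ is $\mathrm{CAT}(0)$.

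\textbf{Tracking the constant $\pi\sqrt{3/2}$.} With existence of \emph{some} $\mathrm{CAT}(0)$ target settled, what remains is making the distortion explicit rather than merely $O(1)$. Two quantitative ingredients must be combined. First, the cone-truncation estimate of Lemma~\ref{EA:prop:cone-truncation} gives $2\sin(u/2)\le u\le \tfrac{\pi}{2}\cdot 2\sin(u/2)$ for $u\in[0,\pi]$, i.e.\ a distortion factor of exactly $\pi/2$ when one identifies $(X,\min\{sd_X,\pi\})$ with the image $\{1\}\times X$ inside $\cone(X,sd_X)$. Second, I would open the Brudny\u\i{}--Krugljak decomposition of $\omega^\varphi(t)=\varphi(\sqrt{t})^2$ used to derive~\eqref{EA:eq:l2-transform-approximation-intro} and choose the scale parameters $\alpha_k,\beta_k$ so that the two-sided equivalence in~\eqref{EA:eq:l2-transform-approximation-intro} holds with ratio at most $\sqrt{3/2}$, which is the sharp constant for the truncation representation of a concave nondecreasing function vanishing at $0$. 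Multiplying the two factors yields distortion $\pi\sqrt{3/2}<4$.

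\textbf{Main obstacle.} The geometric content (cones, Pythagorean product, $\mathrm{CAT}(0)$-ness) is the straightforward part once Proposition~\ref{EA:prop:transform-in-producr-cones} is in hand. The actual work --- which is deferred to~\cite{EMN-hadamard} and which I expect to be the main obstacle --- lies in Proposition~\ref{EA:prop:transform-in-producr-cones} itself: one has to verify, for the specific embedding $x\mapsto(x,(r_1,z_1),(r_2,z_2),\ldots)$ obtained by sending $x$ to the ``common basepoint'' in all but finitely many cone factors, that cross terms $d_{\cone(X,\pi\sqrt{\alpha_k/\beta_k}d_X)}((1,x),(1,y))$ aggregate \emph{on both sides} to something comparable to $\varphi(d_X(x,y))$, using~\eqref{EA:eq:l2-transform-approximation-intro}, together with the elementary bound $2\sin(\min\{\pi,u\}/2)\asymp \min\{u,1\}$ that converts each Euclidean-cone distance into the corresponding truncated distance. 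Optimizing the constants in this aggregation step is the only place where squeezing the distortion down to $\pi\sqrt{3/2}$ (rather than a larger universal constant) really requires care.
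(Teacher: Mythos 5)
Your proposal follows essentially the same route as the paper's proof: \autoref{prop:transform-in-producr-cones} (with $q=2$) supplies the distortion-$\pi\sqrt{3/2}$ embedding of $(X,\varphi\circ d_X)$ into the weighted $\ell_2$-product of a scaled copy of $X$ and Euclidean cones over rescalings of $X$, and the CAT(0)-ness of that target is exactly the paper's combination of \autoref{obs:scaled-CAT(0)}, \autoref{cor:cone-stable} (via CAT(0)$\Rightarrow$CAT(1) and Berestovski\u{\i}), and \autoref{prop:CAT(0)-l2sum}. One bookkeeping correction on the constant: $\pi\sqrt{3/2}=(\pi/2)\cdot\sqrt{6}$, where $\pi/2$ is the cone-truncation distortion of \autoref{prop:cone-truncation} and $\sqrt{6}$ is the square root of the ratio $3/(1/2)=6$ in the Brudny\u{\i}--Krugljak bound $\omega/2\le\hat\omega\le 3\omega$ of \autoref{prop:transform-approx}; it does not come from sharpening the truncation representation to ratio $\sqrt{3/2}$ (which \cite{BK} does not provide), and note that $(\pi/2)\cdot\sqrt{3/2}\ne\pi\sqrt{3/2}$ in any case.
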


We recall the concepts used in \autoref{thm:cat(0)-transform-closure}.
The CAT(0) property is a metric notion for (global) nonpositive curvature that was introduced by Aleksandrov; see \autoref{def:CAT(0)} (the nomenclature here is due to Gromov, who studied this notion extensively). 
Distortion is the multiplicative factor that is needed to approximate one metric using the other. Formally:
\begin{definition}[Bi-Lipschitz embedding]
\label{def:biLipschitz-embedding}
An embedding $f:(X,d_X)\to(Y,d_Y)$ of a metric space $X$ in a metric space $Y$
is said to have distortion at most $D$
if there exists $s>0$ such that
\[ d_X(x,y)\le s\cdot d_Y(f(x),f(y)) \le D\cdot d_X(x,y),\quad
\forall x,y\in X.
\]
The distortion of embedding $X$ in $Y$ is denoted $c_Y(X)$ and defined as
\begin{equation} \label{eq:def:c_Y(X)} c_Y(X)=\inf \bigl \{D\ge 1:\ \exists f:X\to Y \text{ with distortion at most }D\bigr\}.\end{equation}
$X$ admits a bi-Lipschitz embedding in $Y$ if $c_Y(X)<\infty$. We use the shorthand $c_p(X)=c_{L_p}(X)$.
\end{definition}


\begin{definition}[Metric transform] \label{def:metric-transform}
A real function $\varphi:[0,\infty)\to [0,\infty)$ is called a \emph{metric transform} if $\varphi$ is nondecreasing, concave, and $\varphi(0)=0$.
\end{definition}

It follows immediately that if $(X,d_X)$ is a metric space and $\varphi$ is a metric transform, then $(X,\varphi\circ d_X)$ is also a metric space.
Important special cases of metric transforms are \emph{truncations} $\varphi(t)=\min\{\tau,t\}$ for some $\tau>0$ and \emph{snowflakes} $\varphi(t)={t}^\theta$, for some $\theta\in(0,1]$.

The demonstrable property of \autoref{thm:cat(0)-transform-closure} is called rigidity (under metric transforms) of the class of CAT(0) spaces.
Previously, it was proved that
Hilbert spaces are rigid in this sense~\cite[Remark~5.4]{MN-quotients}, and so is $L_1$ \cite[Remark~5.5]{MN-expanders2} (see also \cite[Remark~123]{Naor-extension}).
On the other hand, for any $p\in(2,\infty)$, $L_p$  is not rigid~\cite[Remark~5.12]{MN-quotients}.
The rigidity of
$L_p$, when $p\in(1,2)$ is an open question, although some progress is reported in~\cite[Lemma~5.9]{MN-quotients}.
The rigidity of $L_1$ and $L_2$ under metric transforms has been found to be useful in metric geometry.
Examples include embeddability of quotients of subspaces of the Hamming cube in Hilbert spaces~\cite{MN-quotients}, construction of expander graphs with strong metric property~\cite{MN-expanders2}, extension of Lipschitz functions
of planar graphs~\cite{DLP}, and more.
In the rest of this section, we present applications of \autoref{thm:cat(0)-transform-closure} that we derive herein.

\subsection{Counterexample to nonlinear Kwapie\'n formulations}
\label{sec:intro-kwapien}

We establish the existence of a metric space that admits a
bi-Lipschitz embedding into both a CAT(0) space (\autoref{def:CAT(0)}) and an Aleksandrov space of nonnegative curvature (\autoref{def:NNC}),
but fails to embed in a Hilbert space, even in the following weak sense.

\begin{definition}[Coarse embedding on the average]
\label{def:coarse-average}
A metric space $X$ is said to be coarsely embedded on the average in another metric space $Y$ if there exist
monotone
$\omega,\rho,\Omega:[0,\infty)\to [0,\infty)$ satisfying  $\omega(0)=\rho(0)=0$ and $\omega(t),\rho(t),\Omega(t)\to\infty$ as $t\to \infty$,
with the following properties:
For any Borel probability measure $\mu$ on $X$ there exists
an embedding $f_\mu:\supp(\mu)\to Y$ satisfying, on the one hand,
\begin{equation} \label{eq:coarse-ub}
d_Y(f_\mu(x),f_\mu(y)) \le \Omega(d_X(x,y)), \quad \forall x,y\in\supp(\mu),
\end{equation}
and on the other hand,
\begin{equation}\label{eq:coarse-avg-lb}
\iint_{X\times X} d_Y(f_\mu(x),f_\mu(y)) 
\dd\mu(x)\dd\mu(y)
\ge \rho\biggl( \iint_{X\times X} \omega(d_X(x,y))\dd\mu(x)\dd\mu(y)\biggr).
\end{equation}
\end{definition}

\begin{remark} \label{rem:embedding-notions}
Coarse embedding on the average is a weakening of several known notions of embedding.
The relevant ones here are bi-Lipschitz embedding (\autoref{def:biLipschitz-embedding}),
weak bi-Lipschitz embedding~\cite{NPSS,DLP},
{bi-H\"older embedding}~\cite{Assouad},
{quadratic average embedding of the snowflake}~\cite{Naor-comparison},
and {coarse embedding}~\cite{Gromov-coarse}.
See \autoref{fig:embedding-notions} for the relations between them when embedding into Hilbert space.%
\footnote{%
Coarse embedding~\cite{Gromov-coarse} is a coarse embedding on the average 
with 
$f_\mu=f_\nu$ on $\supp(\mu)\cap \supp(\nu)$ for any pair of Borel measures $\mu,\nu$. 
Bi-H\"older embedding is bi-Lipschitz embedding of a snowflake of the metric.
Quadratic average distortion $D$ embedding of the $\theta$-snowflake is a coarse embedding on the average with
$\Omega(t)=D\cdot t$, $\rho(t)=t^{1/2}$ and $\omega(t)=t^{2\theta}$.
$(\tau,L)$-weak bi-Lipschitz embedding is a Lipschitz mapping that maps distances at least $\tau$ to distances at least $\tau/L$. Weak bi-Lipschitz embedding is $(\tau,L)$-weak bi-Lipschitz embedding for some $L$ and all $\tau>0$. 
It is called \emph{threshold embedding} in~\cite{DLP}.}
\end{remark}

\begin{figure}[ht]
\begin{center}
\begin{tikzcd}[cramped, sep=small, cells={nodes={draw=black}}]
\text{Isometric} \ar[r] & \text{Bi-Lipschitz} \ar[r]
 & \text{Weak bi-Lipschitz}
 \ar[d] \ar[r]
 & \text{Quadratic average of the snowflake} \ar[dd]
 \\
&  & \text{Bi-H\"older} \ar[d]
 \\
  &  & \text{Coarse}
 \ar[r]
& \text{\fbox{Coarse on the average}}
\end{tikzcd}
\end{center}
\caption{Notions of embedding in \autoref{rem:embedding-notions} and their relative strength for embedding into Hilbert space.}
\label{fig:embedding-notions}
\end{figure}
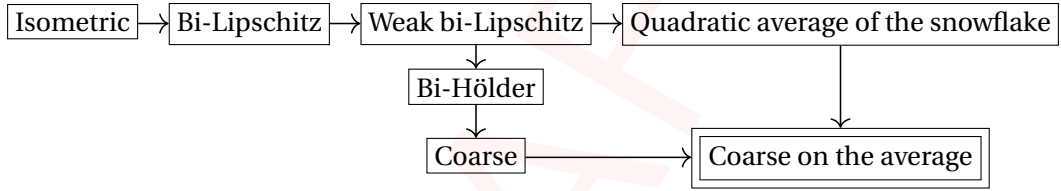

\begin{theorem} \label{thm:no-Kwapien}
There exists a metric space $\SE$ that admits a bi-Lipschitz embedding in some CAT$(0)$ space as well as in some Alexandrov spaces of nonnegative curvature.
Yet, $\SE$  is not coarsely embedded on average in Hilbert space.
\end{theorem}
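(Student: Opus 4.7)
The plan is to construct $\SE$ as the $1/2$-snowflake of a metric space $\mathcal{E}$ arising from Gromov's random-group construction, and then verify each of the three assertions of \autoref{thm:no-Kwapien} in turn. I would take $\mathcal{E}$ to be a bounded-degree metric space---for instance the Cayley graph of a suitable Gromov monster group \cite{Gromov-rand-groups}---with two features: (i) $\mathcal{E}$ admits a bi-Lipschitz embedding into some CAT(0) space $\mathcal{X}_0$, supplied by the geometric inputs of \cite{Kondo,ANN}; and (ii) $\mathcal{E}$ contains, on some subsequence, isometric copies of Hilbert-space expanders $\{G_n\}_{n=1}^\infty$, so it fails to embed coarsely into Hilbert space in a quantitative Poincar\'e-inequality form. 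Set $\SE\eqdef(\mathcal{E},\sqrt{d_\mathcal{E}})$.

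For the CAT(0) embedding of $\SE$, I apply \autoref{thm:cat(0)-transform-closure} to $\mathcal{X}_0$ with the metric transform $\varphi(t)=\sqrt{t}$, producing a CAT(0) space $\mathcal{X}_1$ and a bi-Lipschitz embedding $(\mathcal{X}_0,\sqrt{d_{\mathcal{X}_0}})\hookrightarrow \mathcal{X}_1$. Composing with the embedding $\mathcal{E}\hookrightarrow \mathcal{X}_0$ from (i) yields the required embedding of $\SE$ into a CAT(0) space, since the square root of a bi-Lipschitz embedding is again bi-Lipschitz with universal constants on the scales at issue. For the nonnegative-curvature embedding, I invoke the classical observation that the $1/2$-snowflake of any metric space admits a bi-Lipschitz embedding into some Alexandrov space of nonnegative curvature: on bounded pieces, $\sqrt{d_\mathcal{E}(x,y)}$ matches, up to universal constants, the chord distance on a large Euclidean sphere via a Schoenberg/Wilson-type map, and scales are assembled using a Euclidean-cone or product construction that preserves nonnegative curvature.

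The heart of the argument, and the step I expect to be the hardest, is the non-embedding claim: $\SE$ does not admit a coarse embedding on the average into Hilbert space in the sense of \autoref{def:coarse-average}. Supposing toward a contradiction that $\omega,\rho,\Omega$ and a family $\{f_\mu\}$ witnessing such an embedding exist, I specialize $\mu$ to the uniform measure on one of the expanders $G_n\subset \mathcal{E}$ (at a suitably chosen scale) and plug into \eqref{eq:coarse-avg-lb}. This produces a lower bound on the Hilbert-space average $\iint d_Y(f_\mu(x),f_\mu(y))\,\dd\mu(x)\,\dd\mu(y)$ in terms of $\rho\circ\omega$ applied to the snowflaked distances on $G_n$, which, combined with the standard linear Poincar\'e inequality for $G_n$ applied to the Hilbert-valued $f_\mu$, yields a quantitative constraint among $\omega,\rho,\Omega$ and the spectral gap of $G_n$. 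The main difficulty is that $1/2$-snowflaking genuinely weakens expander-type non-embedding behavior (indeed, $\sqrt{d}$ is Hilbertian whenever $d$ has negative type), so one must carefully exploit the asymptotic behavior of $\omega,\rho$ at infinity together with the uniform spectral expansion of $\{G_n\}$, via a dyadic scale decomposition, to force the desired contradiction as $n\to\infty$.
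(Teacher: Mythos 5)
Your overall skeleton (snowflake an expander family, get the CAT$(0)$ embedding from \autoref{thm:cat(0)-transform-closure}, get the nonnegatively curved embedding from a general snowflake-embedding result, and kill coarse-average embeddability with a Poincar\'e inequality) is the right one, and your sketch of the third step is essentially the paper's argument: take $\mu$ uniform on an expander, lower-bound the left side of \eqref{eq:coarse-avg-lb} by $\rho(\tfrac12\omega(\sqrt{\log n}))$ since most snowflaked distances are $\gtrsim\sqrt{\log n}$, and upper-bound the Hilbert-space average by $O(\Omega(1))$ via the $q=1$, $p=2$ Poincar\'e inequality on edges (which have snowflaked length $1$). No dyadic decomposition is needed. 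However, two of your steps have genuine gaps.

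First, your choice of $\mathcal E$ as the Cayley graph of a Gromov monster does not deliver item (i). The Gromov--Kondo input is \emph{not} that such a Cayley graph embeds bi-Lipschitzly into a CAT$(0)$ space; it is that \emph{high-girth} graphs do, with distortion controlled by $\diam/\girth$ (quantitatively \cite[Lemma~3.9]{MN-expanders2}), because balls of radius below half the girth are trees. A monster Cayley graph is not high-girth, and no bi-Lipschitz CAT$(0)$ embedding of it is available; moreover the expanders sit inside it only coarsely, not isometrically. The fix is to take $\mathcal E$ to be a disjoint family of \emph{high-girth} expanders (e.g.\ \cite{LPS}, as in \autoref{thm:hi-girth-expander}, with $\girth\gtrsim\diam$), embed each piece into CAT$(0)$ with uniformly bounded distortion, and assemble the snowflaked pieces by a pointed $\ell_2$-union, which preserves embeddability into CAT$(0)$ (and into nonnegative curvature) by \autoref{cor:CAT(0)-union}.

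Second, your mechanism for the nonnegative-curvature embedding is false as stated. A Schoenberg/Wilson chord map onto a Euclidean sphere realizes $\sqrt{d}$ bi-Lipschitzly only when $d$ has negative type, i.e.\ only when $\sqrt{d}$ is (essentially) Hilbertian. For an $n$-vertex expander $G$ the very Poincar\'e inequality you use in step three forces $c_2(V,\sqrt{d_G})\gtrsim\sqrt{\log n}$, so no sphere- or Hilbert-space-based construction can give a \emph{uniformly} bi-Lipschitz embedding of the snowflaked family---indeed, if it could, $\SE$ would embed coarsely into Hilbert space and the theorem would be vacuous. The embedding into nonnegative curvature must come from a genuinely non-Hilbertian target: the theorem of \cite{ANN} that the $\tfrac12$-snowflake of \emph{every} finite metric space embeds with distortion $O(1)$ (in fact $1+\e$) into the Wasserstein space $W_2(\R^3)$, which is an Alexandrov space of nonnegative curvature. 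With these two corrections the rest of your argument goes through as in the paper.
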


\autoref{thm:no-Kwapien} is proved in \autoref{sec:no-kwapien}.
The space $\sqrt{\mathcal E}$ is the snowflake of a union of high-girth expander graphs. 
Those graphs are discussed in~\autoref{sec:intro-extrapolation-hadamard} and \autoref{sec:no-kwapien}.
\autoref{thm:no-Kwapien} stands in contrast to the following classical characterization of convex subsets of Hilbert space by Blumenthal:

\begin{theorem}[{\cite{Blu35}, see also \cite[pp.~122-128]{Blu53}}]
\label{thm:blumenthal}
A geodesic space is a convex subset of a Hilbert space if and only if
it satisfies both the CAT(0) condition and Aleksandrov's nonnegative curvature condition.
\end{theorem}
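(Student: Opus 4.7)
The ``only if'' direction is immediate: in a convex subset of Hilbert space, every geodesic triangle is a flat Euclidean triangle, so Aleksandrov's comparison inequalities hold as equalities, yielding both $\CAT(0)$ and nonnegative curvature simultaneously.

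For the converse, my first step is to observe that the $\CAT(0)$ upper bound and the nonnegative-curvature lower bound on $d_X(\gamma(t),z)^2$, for $\gamma:[0,1]\to X$ a constant-speed geodesic from $x$ to $y$ and $z\in X$, must coincide, so that
$$ d_X(\gamma(t),z)^2 = (1-t)d_X(x,z)^2 + t\,d_X(y,z)^2 - t(1-t)d_X(x,y)^2. $$
Setting $t=1/2$ gives the parallelogram identity $d_X(x,z)^2+d_X(y,z)^2 = 2d_X(m,z)^2+\tfrac{1}{2}d_X(x,y)^2$ at the midpoint $m$ of $[x,y]$. In particular, every geodesic triangle in $X$ is isometric to its Euclidean comparison triangle (``flatness''), and geodesics between any two points are unique.

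Next, I would fix a basepoint $o\in X$ and consider the symmetric kernel
$$ K(x,y)\eqdef \tfrac{1}{2}\bigl(d_X(o,x)^2+d_X(o,y)^2-d_X(x,y)^2\bigr),\qquad x,y\in X. $$
The goal is to show that $K$ is positive semidefinite, since Schoenberg's theorem then delivers an isometric embedding $\iota:(X,d_X)\hookrightarrow H$ into a Hilbert space $H$ sending $o\mapsto 0$. I would establish positive semidefiniteness through Wald's four-point criterion: each quadruple in $X$ embeds isometrically in $\R^3$. Flatness already supplies all four Euclidean triangles of the tetrahedron, pinning down five of the six pairwise distances and leaving the sixth determined up to a ``folding'' reflection across the plane of three of the vertices; applying the midpoint identity to an edge and the opposite vertex rules out the folded configuration and forces the Euclidean one. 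Menger's classical extension theorem then combines the four-point embeddings into a global isometric embedding $\iota:X\hookrightarrow H$.

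Finally, convexity of $\iota(X)$ is automatic: the image under $\iota$ of a constant-speed $X$-geodesic from $x$ to $y$ is a curve in $H$ of length $\|\iota(x)-\iota(y)\|_H$, and the only such curve in Hilbert space is the straight segment, so $[\iota(x),\iota(y)]\subset \iota(X)$. The step I expect to be the main obstacle is the four-point analysis: flatness of a single triangle is essentially formal from the two-sided curvature bound, but gluing four flat triangles into a genuine Euclidean tetrahedron requires ruling out the folded configuration, and this is where one uses the full strength of the hypothesis (namely, the double comparison inequality applied to a fourth point, not merely within a single triangle) rather than flatness alone.
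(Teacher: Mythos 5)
First, note that the paper does not prove this statement; it is quoted as a classical result with references to Blumenthal's book and to Lurie's notes, so your proposal can only be judged on its own merits.

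Your ``only if'' direction, the derivation of the equality
$d_X(\gamma(t),z)^2=(1-t)d_X(x,z)^2+t\,d_X(y,z)^2-t(1-t)d_X(x,y)^2$ from the two-sided comparison, the resulting flatness of triangles, and the final convexity step (a curve of length $\|\iota(x)-\iota(y)\|_H$ in Hilbert space is a segment) are all fine. The gap is in the middle step, the positive semidefiniteness of $K$. Your four-point argument does not work as described: a reflection of $w$ across the plane of $x,y,z$ preserves all six pairwise distances, so there is no ``folded configuration'' to rule out, and a generic quadruple is non-planar in any case. The actual obstruction to embedding a quadruple in $\R^3$ is a Cayley--Menger determinant inequality (equivalently, that $d(w,x)$ lies in the interval swept out as the two flat triangles $wyz$ and $xyz$ hinge about $[y,z]$), and the median identity applied to $[y,z]$ and the vertices $w,x$ only yields the triangle inequality for the triple $w,x,m$, which is strictly weaker than that interval condition. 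Separately, even granting the four-point property, ``Menger's extension theorem'' is not the right tool: Menger-type criteria require $(n+2)$- or $(n+3)$-point conditions for all $n$, and the passage from the four-point property to a global embedding for geodesic spaces is precisely the content of Wilson's theorem --- i.e., the statement you are trying to prove.

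The clean way to close the gap avoids quadruples entirely. The equality case with $t=\tfrac12$ gives the median identity, which shows that $\langle x,y\rangle_o\eqdef\tfrac12(d(o,x)^2+d(o,y)^2-d(x,y)^2)$ is midpoint-affine in each variable: $\langle m_{x_1x_2},y\rangle_o=\tfrac12\langle x_1,y\rangle_o+\tfrac12\langle x_2,y\rangle_o$. Iterated midpoints realize all dyadic convex combinations of a finite set as actual points of $X$, so for any real coefficients $c_i$ one writes $c=c^+-c^-$, realizes the two nonnegative parts (after normalization) by points $u',v'\in X$, and gets $\sum_{i,j}c_ic_j\langle x_i,x_j\rangle_o=\lambda^2\|u'\|^2-2\lambda\mu\langle u',v'\rangle_o+\mu^2\|v'\|^2\ge(\lambda\|u'\|-\mu\|v'\|)^2\ge 0$, where the two-point Cauchy--Schwarz bound $|\langle u',v'\rangle_o|\le\|u'\|\,\|v'\|$ is just the triangle inequality. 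Schoenberg's criterion then gives the isometric embedding, and your convexity argument finishes the proof. This is essentially the argument in Lurie's notes that the paper cites.
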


\begin{remark}
\autoref{thm:blumenthal} is part of the classical characterizations of Hilbert space via $4$-point properties.
It was preceded by Wilson~\cite{Wil32}.
Analogous results for normed spaces were obtained by
Fr\'echet~\cite{Fre35}, Jordan--von-Neumann~\cite{JvN35} and Day~\cite{Day47}.
From a different perspective,
Blumenthal's characterization was extended to a
regularity theory of Aleksandrov spaces with two-sided curvature bounds.
Aleksandrov first explored this in~\cite{Ale57} (see also~\cite[Theorem~11.4]{ABN86}) under additional topological assumptions,
and ultimately by Nikolaev~\cite{Nik83}
(see also~\cite[Theorem~10.10.13]{BBI01}).
We refer to Lurie's notes~\cite[\S~7]{Lurie-Hadamard}
for a simple modern proof of \autoref{thm:blumenthal}.
\end{remark}

In light of the above, we ask:

\begin{question} \label{ques:isometric-embedding}
Does every separable metric space that embeds \emph{isometrically} in some Aleksandrov space of nonnegative curvature and in some CAT(0) space must coarsely embed on average in Hilbert space?
\end{question}

\begin{question}
Does every \emph{geodesic} metric space that embeds \emph{bi-Lipschitzly} in some Alexandrov space of nonnegative curvature and in some CAT(0) space must coarsely embed on average in Hilbert space?
\end{question}

\medskip

\autoref{thm:no-Kwapien} also refutes some non-linear formulations of
 Kwapie\'n theorem, which is stated as follows.

\begin{theorem}[\cite{Kwapien}]
A Banach space $X$ has type-$2$ and cotype-$2$  if and only if it is linearly isomorphic to Hilbert space.
\end{theorem}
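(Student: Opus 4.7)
The easy direction is immediate: any Hilbert space $\mathcal{H}$ satisfies the parallelogram identity, which, applied coordinate-wise to Rademacher (or Gaussian) sums, yields $\E\bigl\|\sum_i \varepsilon_i x_i\bigr\|^2 = \sum_i \|x_i\|^2$. Thus $T_2(\mathcal H)=C_2(\mathcal H)=1$, and both properties are clearly preserved under linear isomorphism.

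For the nontrivial implication, suppose $X$ has type $2$ and cotype $2$. The plan is to first pass from Rademacher to Gaussian variables using the contraction principle together with the central limit theorem, obtaining the two-sided estimate
\[
C_2^{-1}\Bigl(\sum_i \|x_i\|^2\Bigr)^{1/2}\ \lesssim\ \Bigl(\E\Bigl\|\sum_i g_i x_i\Bigr\|^2\Bigr)^{1/2}\ \lesssim\ T_2\Bigl(\sum_i \|x_i\|^2\Bigr)^{1/2}
\]
for every finite sequence $x_1,\ldots,x_n\in X$, where $g_1,\ldots,g_n$ are i.i.d.\ standard Gaussians. The point of working with Gaussians rather than signs is their rotation invariance: for any linear map $T:\ell_2^n\to X$ with image spanned by $Te_1,\ldots,Te_n$, the quantity $\gamma(T)\eqdef (\E\|\sum_i g_i Te_i\|^2)^{1/2}$ is independent of the choice of orthonormal basis $e_1,\ldots,e_n$. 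The displayed estimate therefore reads $\gamma(T)\asymp \|T\|_{\mathrm{HS}}$ for every such $T$.

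From here I would carry out the following standard factorization step. For any finite-dimensional subspace $E\subset X$, the formula $\langle Tu,Tv\rangle_E\eqdef \E\bigl[\bigl\langle Tg,Tu\bigr\rangle \bigl\langle Tg,Tv\bigr\rangle\bigr]$ (suitably set up after choosing a reference $T:\ell_2^{\dim E}\to E$, exploiting rotation invariance to make the definition basis-free) endows $E$ with a Hilbert norm that is equivalent to the restriction of $\|\cdot\|_X$ with constants depending only on $T_2$ and $C_2$. The crucial technical point is that these Hilbert norms are mutually compatible up to a universal constant, because $\gamma(\cdot)$ is a single Banach-ideal norm on operators into $X$. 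The main obstacle is then to promote this uniform finite-dimensional Hilbertization to a global isomorphism $X\cong \mathcal H$. I would handle this either by (i) an ultraproduct / Banach--Mazur compactness argument over all finite-dimensional subspaces (reducing first to the separable case by standard exhaustion), or, more efficiently, (ii) by invoking Pietsch-style factorization through a Hilbert space of the identity operator on $X$: the equivalence $\gamma\asymp \|\cdot\|_{\mathrm{HS}}$ forces every operator $\ell_2\to X$ and every operator $X\to\ell_2$ to factor boundedly through a Hilbert space, and composing these factorizations around $\mathrm{id}_X$ yields the desired isomorphism. The main difficulty throughout is packaging the local Gaussian estimate into a coherent global inner product; every other ingredient (Rademacher-to-Gaussian passage, HS–$\gamma$ equivalence, Pietsch factorization) is essentially a direct consequence of the hypothesis.
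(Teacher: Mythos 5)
This statement is Kwapie\'n's theorem, which the paper merely cites from \cite{Kwapien} and does not prove, so there is no in-paper argument to compare against; I can only assess your outline on its own terms. The easy direction and the first reduction (Rademacher to Gaussian averages, hence $\gamma(T)\asymp\|T\|_{\mathrm{HS}}$ for finite-rank $T:\ell_2^n\to X$) are correct and standard. The gap is at the decisive step. Your option (ii) is wrong as stated: every operator $\ell_2\to X$ and every operator $X\to\ell_2$ factors through a Hilbert space \emph{trivially} (through $\ell_2$ itself), so that observation carries no information, and there is no way to ``compose these factorizations around $\mathrm{id}_X$'' --- factoring various other operators through Hilbert spaces does not factor the identity. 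What is actually needed, and what Kwapie\'n's proof supplies, is a Hahn--Banach/separation criterion for when a \emph{single} operator $u:X\to Y$ factors through a Hilbert space: $u$ is $2$-factorable with constant $C$ iff $\sum_j\|uy_j\|^2\le C^2\sum_i\|x_i\|^2$ whenever $\sum_j|\langle x^*,y_j\rangle|^2\le\sum_i|\langle x^*,x_i\rangle|^2$ for all $x^*\in X^*$. One then verifies this criterion for $u=\mathrm{id}_X$ using type $2$ of the domain, cotype $2$ of the range, and the rotation invariance of Gaussian vectors (the domination hypothesis produces a ``contractive'' matrix carrying $(x_i)$ to $(y_j)$, which rotation invariance converts into an inequality between the two Gaussian averages). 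This separation argument is the heart of the theorem and is absent from your sketch.

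Your option (i) has a parallel problem: the uniform bound on $d_{BM}(E,\ell_2^{\dim E})$ over all finite-dimensional $E\subseteq X$ is precisely the finite-dimensional form of the theorem, and extracting it from $\gamma(T)\asymp\|T\|_{\mathrm{HS}}$ requires trace duality for the Gaussian norm (to get $\gamma^*(S)\asymp\|S\|_{\mathrm{HS}}$ for $S:E\to\ell_2^n$ from the estimate on the predual side) together with a good choice of position via $n=\mathrm{tr}(T^{-1}T)\le\gamma(T)\,\gamma^*(T^{-1})$; none of this appears in your proposal. Finally, the formula $\langle Tu,Tv\rangle_E=\E\bigl[\langle Tg,Tu\rangle\langle Tg,Tv\rangle\bigr]$ is circular as written: there is no inner product on $E\subseteq X$ with which to evaluate $\langle Tg,Tu\rangle$. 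The architecture you describe is the right one, but the step that actually proves the theorem is missing.
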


Type-$2$ and cotype-$2$ are linear isomorphic properties of Banach spaces. Together, they constitute the following high-dimensional isomorphic generalization of the parallelogram identity:
\begin{equation}
\label{eq:isomorphic-parallelgoram}    
\forall n\in\mathbb N,\ \forall x_1,\ldots,x_n\in X,\qquad 
2^{-n}\!\!\!\!\!\!\!\!\!\sum_{\e_1,\ldots\e_n\in\{\pm 1\}}
\Bigl\|\sum_{i=1}^n \e_i x_i\Bigr\|_X^2 \asymp
\sum_{i=1}^n \|x_i\|_X^2.
\end{equation}
More specifically, type~2 property is when the left-hand side of~\eqref{eq:isomorphic-parallelgoram}
is at most the right-hand side up to some constant factor, while cotype~2 is the reverse inequality.
See, e.g.,~\cite{Maurey} for their importance.
The Ribe program (see~\cite{Ball-Ribe,Naor-Ribe}) is an attempt to formulate nonlinear analogs to linear concepts and theorems.
As part of this program, concepts of nonlinear type such as \emph{BMW type}~\cite{BMW}, \emph{Enflo type}~\cite{Enflo-type1,Enflo-type2,Pisier-type,IvHV}, scaled Enflo type~\cite{MN-scaled-enflo,GN-sacled-enflo}
and \emph{Markov type}~\cite{Ball,NPSS} were developed.
Similarly, \emph{metric Markov cotype}~\cite{Ball,MN-barycentric},  \emph{metric cotype}~\cite{MN-cotype,GMN-cotype,EMN}
and \emph{sharp metric cotype}~\cite{MN-cotype,EMN}
were developed as nonlinear counterparts for cotype.
See the cited literature for definitions.

The possibility of a nonlinear version of Kwapie\'n theorem
was first raised in~\cite{MN-cotype} and explored in~\cite{DLP}.
The straightforward approach --- employing metric cotype~$2$, Enflo type~$2$, and bi-Lipschitz embedding into Hilbert space --- is false, as noted in~\cite[\S~1]{DLP}.
Ding et al.~\cite{DLP} proposed easing the notion of embedding into Hilbert space by using weak bi-Lipschitz embedding
(see \autoref{rem:embedding-notions}).
They asked whether a metric space that possesses both metric cotype~$2$ and Markov type~$2$ necessarily admits a weak bi-Lipschitz embedding into Hilbert space.
Interpreted literally, Ding et al.'s question is easy to disprove as follows.
\begin{inparaenum}[(a)]
\item Weak bi-Lipschitz embedding into Hilbert space implies a coarse embedding into Hilbert space; see
\autoref{rem:embedding-notions}.
\item Many metric spaces $\mathcal N$ are known to be coarsely non-embeddable in Hilbert space; see~\cite[\S~2]{EMN}.
\item It follows from the definitions of metric cotype~$2$ and Markov type~$2$ that the square root of any metric has both.
\item Coarse embedding is invariant for taking the square roots of the distances.
\item Therefore,
the square roots of these spaces $\mathcal N$ have metric cotype~$2$ and Markov type~$2$, but they do not embed coarsely in Hilbert space.
\end{inparaenum}

However, the spirit of Ding et.~al.'s question
is more open-ended: Are there type-like and cotype-like conditions
that, together, force some nontrivial embedding in Hilbert space?
In particular, the counterexample above does not cover \emph{sharp} metric cotype~$2$, since the square root of an
arbitrary metric space does not necessarily have \emph{sharp} metric cotype~$2$~\cite{ANN}.
However, CAT(0) spaces have sharp metric cotype~$2$~\cite{EMN},
and Alexandrov spaces of nonnegative curvature have Markov type~$2$~\cite{Ohta} and Markov $2$-convexity~\cite{AN-markov-conv}.
We therefore have the following corollary of \autoref{thm:no-Kwapien}.

\begin{corollary}
The metric space $\sqrt{\smash[a]{\mathcal E}}$ from \autoref{thm:no-Kwapien} has Markov type\/~$2$, \emph{sharp} metric cotype\/~$2$, and Markov\/ $2$-convexity.
However, it is not coarsely embedded on average in Hilbert space.
\end{corollary}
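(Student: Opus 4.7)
The plan is to combine \autoref{thm:no-Kwapien} with the fact that Markov type\/~$2$, sharp metric cotype\/~$2$, and Markov $2$-convexity are bi-Lipschitz invariants of metric spaces, together with known results asserting that each of these properties is enjoyed by one of the two ``host'' spaces into which $\SE$ bi-Lipschitzly embeds. The second assertion of the corollary---coarse non-embeddability on the average in Hilbert space---is nothing but the corresponding conclusion of \autoref{thm:no-Kwapien}, so no further work is needed for it.

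For the positive assertions, I would first fix bi-Lipschitz embeddings $f : \SE \to \mathcal{X}$ into a CAT$(0)$ space $\mathcal{X}$ and $g : \SE \to \mathcal{Y}$ into an Alexandrov space $\mathcal{Y}$ of nonnegative curvature, both supplied by \autoref{thm:no-Kwapien}, and denote their distortions by $D_f$ and $D_g$. Sharp metric cotype\/~$2$ of $\mathcal{X}$ is provided by~\cite{EMN}, Markov type\/~$2$ of $\mathcal{Y}$ by~\cite{Ohta}, and Markov $2$-convexity of $\mathcal{Y}$ by~\cite{AN-markov-conv}. It then suffices to verify that each of these three properties transfers from a target space to any space that admits a bi-Lipschitz embedding into it.

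For Markov type\/~$2$ and Markov $2$-convexity, this is routine: both are defined by inequalities involving only pairwise distances (averaged along a Markov chain), so pushing a chain on $\SE$ forward via $g$, applying the inequality in $\mathcal{Y}$, and pulling back using the bi-Lipschitz estimates for $g$ inflates the relevant constants by at most a factor controlled by $D_g^2$. The transfer of sharp metric cotype\/~$2$ from $\mathcal{X}$ to $\SE$ through $f$ proceeds analogously: given $n \in \N$, take the integer $m$ with $m \lesssim n^{1/2}$ guaranteed by the sharp metric cotype\/~$2$ of $\mathcal{X}$, and apply the cotype\/~$2$ inequality in $\mathcal{X}$ to $f \circ \phi$ for an arbitrary $\phi : \mathbb{Z}_{2m}^n \to \SE$; the $\mathcal{X}$-distances on both sides of that inequality are within factor $D_f$ of the corresponding $\SE$-distances, so one loses only a factor $D_f^2$ in the cotype constant. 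Crucially, the value of $m$ is unchanged, so the sharpness bound $m \lesssim n^{1/2}$ is preserved.

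The only mild obstacle is bookkeeping in the sharp metric cotype transfer, namely, verifying that the dependence on $n$ of the integer $m$ does not degrade under bi-Lipschitz composition and that the cotype constant can be tracked explicitly in terms of $D_f$ and the corresponding constant for $\mathcal{X}$. Since a bi-Lipschitz embedding merely rescales pairwise distances by a bounded factor, this is a mechanical check; the care needed is only in recording the exact constants, as ``sharpness'' is precisely the feature that made the notion informative here and what distinguishes this counterexample from the naive one discussed just above the statement.
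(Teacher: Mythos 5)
Your proposal is correct and follows exactly the route the paper intends: the corollary is deduced from \autoref{thm:no-Kwapien} by citing sharp metric cotype~$2$ of CAT$(0)$ spaces~\cite{EMN}, Markov type~$2$~\cite{Ohta} and Markov $2$-convexity~\cite{AN-markov-conv} of nonnegatively curved Alexandrov spaces, and the (standard) bi-Lipschitz invariance of all three properties, with the non-embeddability statement being part of \autoref{thm:no-Kwapien} itself. Your added remark that the parameter $m$ in the sharp metric cotype inequality is unchanged under a bi-Lipschitz embedding, so that sharpness is preserved, is precisely the point the paper relies on implicitly.
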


$\SE$ is not a geodesic space.
In~\cite{EMN-diamond-convexity}
the authors provide a geodesic counterexample for coarse Kwapia\'n-like formulations:
The Wasserstein space $W_2(\mathbb R^3)$ is a \emph{geodesic space} with Markov type~$2$, metric cotype~$2$ (but not \emph{sharp} metric cotype) and Markov $2$-convexity.
Nevertheless, it is not coarsely embeddable on average in Hilbert space, since it admits bi-Lipschitz embedding of $\SE$.
This naturally leads to the following question raised in~\cite{EMN-diamond-convexity}:
\begin{question}
Does there exist a \emph{geodesic} metric space with both
\emph{sharp} metric cotype~$2$ and Enflo type~$2$ that does not embed coarsely on average in Hilbert space?
\end{question}

\subsection{Matou\v{s}ek extrapolation for Hadamard spaces}
\label{sec:intro-extrapolation-hadamard}

Expander graphs are families of graphs with some expansion properties that are useful in branches of mathematics such as combinatorics, geometric group theory, probability, metric geometry, computer science, and more.
Good introductions can be found in~\cite{HLW,Kowalski}.
From a metric geometry perspective,
they are defined by the Poincar\'e inequalities for expanders:
\begin{definition}[Poincar\'e inequality for expanders w.r.t. $X$]
\label{def:X-PI}
Fix $p\in(0,\infty)$, and a metric space $(X,d_X)$.
An undirected regular graph $G=(V,E)$ is said to satisfy the $p$-Poincar\'e inequality (for expanders) with respect to $(X,d_X)$ and constant $\gamma>0$ if for every $f: V\to X$,
\begin{equation} \label{eq:p-poincare-metric}
\frac{1}{|V|^2}\sum_{u,v\in V} d_X(f(u),f(v))^p \le \frac{\gamma}{|E|} \sum_{(u,v)\in E} d_X(f(u),f(v))^p.
\end{equation}
For a family $\mathcal G$ of undirected regular graphs and a family $\mathcal X$ of metric spaces, we define by $\gamma_p(\mathcal G, \mathcal X)\in[0,\infty]$ the infimal $\gamma$
satisfying~\eqref{eq:p-poincare-metric} for every $G\in\mathcal G$, and $X\in\mathcal X$.
An infinite family of constant degree graphs $\mathcal E$ satisfying $\gamma_p(\mathcal E,\mathcal X)<\infty$
is called
\emph{$\mathcal X$-expander with power $p$}.
When $p=2$, we sometimes drop the mention of the power.
\end{definition}

Classical expanders are usually defined by a positive constant lower bound on edge expansion or a positive constant lower bound on the spectral gap $1-\lambda_2(G)$,
where $\lambda_2(G)$ is the second eigenvalue of the (symmetric) Markov operator associated with $G$.
It is folklore and straightforward that $1/\gamma_2(G,\mathbb R)$ is the spectral gap of $G$, and $1/\gamma_p(G,\{0,1\})$ is the normalized edge expansion%
\footnote{More precisely, for $G=(V,E)$,
$1/\gamma_p(G,\{0,1\})=
\min_{\emptyset\ne S\subset V} \frac{n\cdot |E(S,V\setminus S)|}{d \cdot|S|\cdot |V\setminus S|}$.
This quantity is sometimes called \emph{the conductance of\/ $G$}. It is equivalent to the normalized edge expansion up to a factor of $2$.}
of $G$.
A variant of the Cheeger inequality (see~\cite[§~1.4]{HLW} or~\eqref{eq:mat-extrapolation} below)
is equivalent to
\begin{equation*} 
    \gamma_2(G,\mathbb R)\geq \gamma_p(G,\{0,1\}) \geq \sqrt{\gamma_2(G,\mathbb R)/8}.
\end{equation*}
Thus, using the notation of \autoref{def:X-PI}, classical expanders are both $\mathbb R$-expanders and $\{0,1\}$-expanders.
It is natural to ask how the geometry $\mathcal X$ and the power $p$ affect the notion of expansion.
This is an active area of research with no adequate survey as far as we know, but pointers to the literature can be found
in~\cite[\S 2.3 -- \S 2.6]{EMN}.
A straightforward observation is that
Hilbert expanders are equivalent to classical expanders.
Furthermore,
$\gamma_2(G,\mathbb R)=\gamma_2(G,\mathcal H)$ for any regular graph $G$ and Hilbert space $\mathcal H$.
Moreover, if $X$ is a metric space with more than one point, then $\gamma_p(G,X)\geq \gamma_p(G,\{0,1\})$, and thus any $X$-expander with power $p>0$ is necessarily also a classical expander.
The reverse is not true: There exists a metric space $X$, and two classical expanders $\mathcal E_1$ and $\mathcal E_2$ such that $\mathcal E_1$ is an ${X}$-expander, while $\mathcal E_2$ is not~\cite{MN-expanders2}.
In the last two decades, this subject has been studied extensively~\cites{Ozawa,Pisier-interpolation,Lafforgue-type-expanders,MN-superexpanders,MN-expanders2,Naor-comparison}.

Matou\v{s}ek~\cite{Mat-extrapolation}, who was the first to consider the Poincar\'e inequalities for expanders,
proved that for any $p\in[1,\infty)$ and a family of regular graphs $\mathcal E$,
$\gamma_p(\mathcal E, \{0,1\})<\infty$ is equivalent to
$\gamma_p(\mathcal E, \mathbb R)<\infty$.
Quantitatively, his argument gives, for any $p,q\in [1,\infty)$ and a regular graph $G$,
\begin{equation}\label{eq:mat-extrapolation}
\gamma_p(G, \mathbb R)\lesssim_{p,q} \gamma_q(G,\mathbb R)^{\max\{1,p/q\}}.
\end{equation}
(Inequality~\eqref{eq:mat-extrapolation} for general $q\ge 1$ first appeared explicitly in~\cite[Lemma~5.5]{BLMN}.)
de Laat and de la Salle~\cite{Laat-Salle} extended~\eqref{eq:mat-extrapolation} to arbitrary Banach spaces:
For any Banach space $X$, any $p,q\in[1,\infty)$ and a regular graph $G$,
\begin{equation}\label{eq:Laat-Salle--extrapolation}
\gamma_p(G, X)\lesssim_{p,q} \gamma_q(G,X)^{\max\{1,p/q\}}.
\end{equation}

(For the best known implicit constant factors known in~\eqref{eq:Laat-Salle--extrapolation}, see \cite[Eq.~(131)]{Naor-avg-john}.)
In~\autoref{sec:mat-extrapolation} we adapt
de Laat and de la Salle's argument for Hadamard spaces.

\begin{theorem} \label{thm:extrapolation-cat0}
Let ``Hadamard'' be the class of all Hadamard spaces, and
fix $p,q\in(0,\infty)$. Then for every regular graph $G$,
\begin{equation} \label{eq:extrapolation-cat0}
\gamma_p(G,\mathrm{Hadamard}) \le \begin{cases}
4^q \gamma_q(G,\mathrm{Hadamard})  &0< p \le q ,\\
C_{p,q}\cdot  \gamma_q(G,\mathrm{Hadamard})^{p/q}
& 1\le q\le p.
\end{cases}
\end{equation}
\end{theorem}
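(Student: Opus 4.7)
My plan is to split the argument according to the two cases in~\eqref{eq:extrapolation-cat0} and to handle them by essentially different methods. The first case ($0 < p \le q$) will follow in a few lines from the rigidity theorem, Theorem~\ref{thm:cat(0)-transform-closure}, while the second case ($1 \le q \le p$) will require adapting the Banach-space extrapolation argument of de~Laat and de~la~Salle~\cite{Laat-Salle} to Hadamard spaces.

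For the case $0 < p \le q$, I would fix a Hadamard space $(X, d_X)$ and apply Theorem~\ref{thm:cat(0)-transform-closure} to the metric transform $\varphi(t) = t^{p/q}$, which is concave since $p/q \in (0, 1]$. This produces a Hadamard space $(Y, d_Y)$ and an embedding $\iota:(X, d_X^{p/q}) \hookrightarrow (Y, d_Y)$ of distortion at most $\pi\sqrt{3/2} < 4$. For any $f : V \to X$, apply the $q$-Poincar\'e inequality for $G$ in $Y$ to $\tilde f := \iota \circ f$. Since $d_Y(\tilde f(u), \tilde f(v))^q$ agrees with $d_X(f(u), f(v))^p$ up to a multiplicative factor of at most $(\pi\sqrt{3/2})^q < 4^q$, translating the resulting inequality back to $X$ yields the $p$-Poincar\'e inequality for $f$ with constant $4^q \gamma_q(G, Y) \le 4^q \gamma_q(G, \mathrm{Hadamard})$. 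Taking the supremum over Hadamard $X$ then gives the first line of~\eqref{eq:extrapolation-cat0}.

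For the case $1 \le q \le p$, the same trick fails because $t^{p/q}$ is convex rather than concave, so one cannot use Theorem~\ref{thm:cat(0)-transform-closure} as a black box. I would instead iterate the $q$-Poincar\'e inequality along paths in $G$, in the spirit of~\cite{Mat-extrapolation, Laat-Salle}. The key idea is that for an integer $t$ (to be chosen), the $t$-step random walk on $G$ approaches equilibrium at a rate controlled by $\gamma_q$, while the triangle inequality at exponent $q \ge 1$ (which is precisely where the restriction $q \ge 1$ enters) lets one dominate $q$-th powers of distances along length-$t$ paths by telescoping sums of edge contributions. Feeding this back into the $q$-Poincar\'e inequality and optimizing $t$ as a power of $\gamma_q$ determined by the ratio $p/q$ should yield $\gamma_p \lesssim_{p, q} \gamma_q^{p/q}$.

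The main obstacle will be executing the second step in the Hadamard setting: the proof in~\cite{Laat-Salle} leans on Banach-space tools (translation, duality, operator norms applied to $L^q(V; X)$) that are unavailable for a $\mathrm{CAT}(0)$ target. I expect that these must be replaced by their nonlinear substitutes --- most importantly barycenters, the quadratic $\mathrm{CAT}(0)$ inequality~\eqref{eq:def CTA0}, and the barycentric martingale estimates already exploited in~\cite{MN-barycentric, MN-expanders2}. A complicating point is that the $\mathrm{CAT}(0)$ condition is intrinsically quadratic, so handling arbitrary $q \in [1, \infty)$ rather than just $q = 2$ may require combining the iteration with further applications of Theorem~\ref{thm:cat(0)-transform-closure} to reduce, within the proof itself, to the natural exponent $q = 2$ before the extrapolation machinery is run.
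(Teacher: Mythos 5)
Your treatment of the case $0<p\le q$ is correct and is exactly the paper's argument (Proposition~\ref{prop:extrapolation-down}): apply Theorem~\ref{thm:cat(0)-transform-closure} to the concave transform $t\mapsto t^{p/q}$, pull the $q$-Poincar\'e inequality back through an embedding of distortion $\pi\sqrt{3/2}<4$, and absorb the distortion into the factor $4^q$.

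For the case $1\le q\le p$ there is a genuine gap. The mechanism you propose --- running the random walk for $t$ steps, telescoping $q$-th powers of distances along length-$t$ paths via the triangle inequality, and optimizing $t$ as a power of $\gamma_q$ --- is not the de~Laat--de~la~Salle argument and it is not clear how it would produce the exponent $p/q$; that exponent does not come from a walk length but from the H\"older modulus of an inverse Mazur map. The paper's actual proof (Proposition~\ref{prop:extrapolation-2-upward}) proceeds as follows: normalize $f:V\to X$ so that the average $p$-th power of distances is $1$, pass to the tangent cone $Y=T_bX$ at the barycenter $b=\mathcal{B}(f(V))$ via the $1$-Lipschitz logarithm map $\Log_b$ (inequality~\eqref{eq:busemann}), and exploit that $Y$ is a \emph{Euclidean cone} which is again CAT(0). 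On $L_p$-spaces valued in a Euclidean cone one can define a Mazur map $\psi_{p,q}$ by rescaling only the radial coordinate, $(s,x)\mapsto(\|f\|_p^{1-p/q}s^{p/q},x)$, and Theorem~\ref{thm:mazur-cone} gives the two-sided Lipschitz/H\"older estimates that replace the Banach-space Mazur map bounds of~\cite{BL}. Applying the $q$-Poincar\'e inequality to $g=\psi_{p,q}(\Log_b f)$ in $Y$ and inverting with $\psi_{q,p}$ (whose H\"older exponent $q/p$ is where $\gamma_q^{p/q}$ appears) yields the claim; Proposition~\ref{prop:cusp-barycenter} is needed to know that $\Log_b f$ is centered at the cusp so that the normalization~\eqref{eq:mazur-constraints} can be reused after pushing forward. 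None of this is in your sketch: you correctly name barycenters as a likely ingredient, but the decisive device --- the cone structure of $T_bX$, which is what makes a Mazur map definable at all on a CAT(0) target --- is missing, and your fallback of "reducing to $q=2$ inside the proof via Theorem~\ref{thm:cat(0)-transform-closure}" cannot work in this range because $t^{p/q}$ is convex, which is the very obstruction you identified at the outset. Your worry that the quadratic nature of CAT(0) forces a reduction to $q=2$ is also unfounded: the argument runs directly for all $1\le q\le p$, the exponent $2$ entering only through the definition of the barycenter and Jensen's inequality as in~\eqref{eq:jensen-cat0}.
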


An expander with respect to all Hadamard spaces is called \emph{Hadamard-expander}.
It is currently unknown whether Hadamard expanders exist to begin with.
But if they do, then the following corollary of \autoref{thm:extrapolation-cat0} is of interest.

\begin{corollary}\label{cor:p-hadamard expander}
If there exists a Hadamard expander, then it satisfies the $p$-Poincar\'e inequality with respect to any Hadamard space and any $p\in(0,\infty)$.
\end{corollary}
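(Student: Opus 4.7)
The plan is to deduce the corollary directly from \autoref{thm:extrapolation-cat0} by choosing $q=2$. First I would unpack the convention: per the paragraph following \autoref{def:X-PI}, a Hadamard expander is a family $\mathcal{E}$ of constant-degree regular graphs with $\gamma_2(\mathcal{E},\mathrm{Hadamard})<\infty$, i.e.\ there exists a constant $C>0$ such that $\gamma_2(G,\mathrm{Hadamard})\leq C$ for every $G\in\mathcal{E}$. The goal is to promote this uniform bound at exponent $2$ to a uniform bound at every exponent $p\in(0,\infty)$.

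Next I would fix $p\in(0,\infty)$ and an arbitrary $G\in\mathcal{E}$, and split according to whether $p\leq 2$ or $p\geq 2$. In the first range $0<p\leq 2$, the first branch of \autoref{thm:extrapolation-cat0} (with $q=2$) yields
\[
\gamma_p(G,\mathrm{Hadamard})\leq 4^{2}\,\gamma_2(G,\mathrm{Hadamard})\leq 16\,C.
\]
In the second range $2\leq p$, the hypothesis $1\leq q\leq p$ of the second branch is met with $q=2$, giving
\[
\gamma_p(G,\mathrm{Hadamard})\leq C_{p,2}\,\gamma_2(G,\mathrm{Hadamard})^{p/2}\leq C_{p,2}\,C^{p/2}.
\]
In both ranges the right-hand side depends only on $p$ and on the Hadamard-expander constant $C$, not on the particular graph $G\in\mathcal{E}$.

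Taking the supremum over $G\in\mathcal{E}$ thus gives $\gamma_p(\mathcal{E},\mathrm{Hadamard})<\infty$, which is exactly the assertion that the $p$-Poincar\'e inequality holds with a uniform constant for every graph in $\mathcal{E}$ and every Hadamard target, in the sense of \autoref{def:X-PI}. Since $p\in(0,\infty)$ was arbitrary, the corollary follows. There is no genuine obstacle in this deduction: the analytic work is entirely concentrated in the Matou\v{s}ek-type extrapolation provided by \autoref{thm:extrapolation-cat0}, and the corollary merely specializes that inequality to the base case $p=q=2$ which is, by assumption, uniformly controlled along $\mathcal{E}$.
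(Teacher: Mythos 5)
Your proposal is correct and is exactly the intended deduction: the paper states the corollary as an immediate consequence of \autoref{thm:extrapolation-cat0}, obtained precisely by specializing to $q=2$ and splitting into the ranges $0<p\le 2$ and $p\ge 2$, with the uniform Hadamard-expander bound at exponent $2$ supplying the base case. Nothing is missing.
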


Gromov~\cite{Gromov-rand-groups} and Kondo~\cite{Kondo} precluded the possibility that every (classical) expander is a Hadamard expander by constructing a specific expander $\mathcal{E}$ and a
specific Hadamard space $\mathcal{X}$ such that $\mathcal{E}$ admits a bi-Lipschitz embedding in $\mathcal{X}$.
Nonetheless, their observation \emph{does not rule out} the existence of Hadamard expanders, which remains a compelling open problem.
Indeed, there exists a family of expander graphs that are expanders with respect to "most" Gromov-Kondo Hadamard spaces, in a concrete technical sense~\cite{MN-expanders2}.
Consequently, we reiterate a known open question (see~\cite{EMN}): 

\begin{question}
Does a Hadamard expander exist?
\end{question}

It would also be helpful to have a more granular, de Latt--de la Salle type extrapolation, especially if Hadamard expanders do not exist.
\begin{question}
Is it true that
for any CAT(0) space $X$, any $p,q\in[1,\infty)$ and a regular graph $G$,
\begin{math}
\gamma_p(G, X)\leq f_{p,q}( \gamma_q(G,X))
\end{math}
for some monotonous non-decreasing functions $f_{p,q}:[1,\infty)\to [1,\infty)$?
\end{question}

\subsection{Expanders with respect to random regular graphs}
\label{sec:intro-RRG}

The utilization of \autoref{thm:extrapolation-cat0} for Hadamard expanders (\autoref{cor:p-hadamard expander}) is currently hypothetical. 
However, analogous reasoning derives Poincar\'e inequalities for expanders with respect to the random regular graphs, as outlined below.
Define $\mathcal G_{N,d}$ as the uniform probability space over the collection of unlabeled simple $N$-vertex $d$-regular graphs.
The existence of expanders with respect to random regular graphs was established in~\cite{MN-expanders2}:
\begin{theorem}[{\cite[Theorem~1.2]{MN-expanders2}}]
\label{thm:MN-rrg-expander-2}
There exists an infinite family $\mathcal{E}$ of\/ $3$-regular graphs, and $\Gamma>0$ such that for any $d\ge 3$,
\(
\lim_{N\to \infty}\Pr_{\RRG\sim \mathcal G_{N,d}}
\left[\gamma_2(\mathcal{E},\RRG)\le \Gamma\right]=1.
\)
\end{theorem}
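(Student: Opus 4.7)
The plan is to construct $\mathcal{E}$ as the output of an iterated zigzag product, following the metric-space-valued adaptation of the Reingold--Vadhan--Wigderson construction that was developed in~\cite{MN-superexpanders, MN-barycentric}. The goal is not to build expanders with respect to one fixed target space, but to build a single $3$-regular family $\mathcal{E}$ whose nonlinear spectral gap is controlled \emph{simultaneously} against the shortest-path metric of every sufficiently large random $d$-regular graph. Thus $\mathcal{E}$ is chosen once and for all, and the probabilistic statement is over $\RRG$.

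The first step is to isolate an auxiliary family $\mathcal{X}$ of metric spaces with two properties: (a)~with probability tending to $1$ as $N\to\infty$, a graph $\RRG\sim\mathcal G_{N,d}$ admits (up to a universal constant-distortion relation that preserves $\gamma_2$ up to constants) an inclusion into some member of $\mathcal{X}$, and (b)~$\mathcal{X}$ is contained in a class of Hadamard spaces (or at least sits inside such a class in a way that preserves the martingale/barycentric machinery of~\cite{MN-barycentric}). The key structural input about $\RRG$ is that its short cycles are sparse with high probability: up to logarithmic radius a typical ball is tree-like, and trees embed isometrically into many Hadamard targets. One must therefore marry the local tree structure of $\RRG$ with a global scheme that lets the quadratic Pisier-type inequality fire.

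Next, I would run the iterated zigzag construction to produce $\mathcal{E}$. Start from a base graph $\mathsf H$ of sufficiently high (metric) spectral gap against $\mathcal{X}$, then alternate graph squaring with zigzagging against $\mathsf H$. The heart of the argument is the quantitative composition lemma for nonlinear spectral gaps from~\cite{MN-superexpanders}: if $\gamma_2(\mathsf H,\mathcal{X})$ is small enough, then the zigzag $\mathsf G\oz \mathsf H$ inherits a nonlinear spectral gap that combines those of $\mathsf G$ and $\mathsf H$ with only a universal-constant loss, and the squaring step reduces $\gamma_2$ by a constant factor. The martingale cotype-$2$ inequality on Hadamard spaces is the mechanism by which these combinatorial operations survive metric-valued testing. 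Iterating yields a sequence of constant-degree graphs whose $\gamma_2$ with respect to every $X\in\mathcal{X}$ is bounded by a universal constant $\Gamma$.

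To close, one combines the two pieces: for each fixed $d\ge 3$, with probability $\to 1$ as $N\to\infty$, $\RRG$ relates to some $X\in\mathcal{X}$ in a way that transfers the uniform Poincar\'e inequality against $\mathcal{X}$ into $\gamma_2(\mathcal{E},\RRG)\le\Gamma$. The main obstacle is the step identifying $\mathcal{X}$: the shortest-path metric of a random regular graph is globally very far from any CAT(0) structure (for example, it contains many odd cycles and fails parallelogram-type relations at macroscopic scale), so one cannot embed $\RRG$ directly into a Hadamard space while preserving distances. The circumvention is to exploit local tree-likeness and replace a literal embedding by a coarse/averaging argument that is enough to pull $\gamma_2$ back from $\mathcal{X}$ to $\RRG$; making that argument quantitative enough to give a universal $\Gamma$ independent of both $d$ and $N$ is the delicate point.
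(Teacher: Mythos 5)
Your outline matches the architecture of the actual proof in~\cite{MN-expanders2} (which this paper only cites; the construction is the zigzag iteration of~\cite{MN-superexpanders,MN-barycentric} run against an auxiliary CAT(0)-type class, exactly as you say). However, the step you flag as "the delicate point" --- how to relate $\RRG$ to a member of $\mathcal{X}$ --- is left as a vague "coarse/averaging argument," and that is precisely where the real content lies; as written, your proposal does not close.

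The actual mechanism is concrete and has three ingredients you are missing. First, with probability $1-o_N(1)$ one can delete a small set of edges from $\RRG$ so that the remaining graph $\RRG'$ has girth comparable to its diameter ($\asymp \log_d N$); the rescaled one-dimensional simplicial complex $(\Sigma(\RRG'),\sigma_\RRG\, d_{\Sigma(\RRG')})$ with $\sigma_\RRG=2\pi/\girth(\RRG')$ is then bi-Lipschitz to a CAT(1) space (the Gromov--Kondo observation). Second, one does \emph{not} embed $\RRG$ into a Hadamard space directly; one embeds its \emph{truncation} at scale $\pi$ into the \emph{Euclidean cone} over that CAT(1) space, which is CAT(0) by Berestovski{\u\i}'s theorem, and because $\diam(\RRG)\asymp 1/\sigma_\RRG$ with high probability the truncation loses only a constant factor, so $\gamma_2$ transfers. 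Third, the deleted edges produce a leftover piece $A_1\subset\Sigma(\RRG)$ whose cone embeds in $\cone(L_1)$ with $O(1)$ distortion; since every classical expander is an $L_1$-expander (and this extends to $\cone(L_1)$), a gluing lemma for Poincar\'e inequalities over the decomposition $\Sigma(\RRG)=A_1\cup A_2$ (with $d(A_1\setminus A_2,A_2\setminus A_1)\gtrsim 1/\sigma_\RRG$) merges the two bounds into $\gamma_2(\mathcal E,\RRG)\le\Gamma$. Without the cone/truncation device and the two-piece decomposition, the transfer from $\mathcal{X}$ back to $\RRG$ cannot be made quantitative. A smaller point: in the metric-valued zigzag iteration the gap is decreased not by graph squaring but by Ces\`aro averaging, and it is there (via the barycentric/martingale cotype~$2$ inequality on Hadamard spaces) that the CAT(0) structure is actually used.
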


In the following theorem, proved in \autoref{sec:extrapolation-rrg}  we ``extrapolate" the power to the range $(0,\infty)$:

\begin{theorem}
\label{thm:MN-rrg-expander-p}
There exists an infinite family $\mathcal{E}$ of\/ $3$-regular graphs with the following property: For every $q\in\{2,3,4,\ldots\}$ there exists $\Gamma_q\in(0,\infty)$ such that
for any integer $d\ge 3$, 
\[
\lim_{N\to \infty}\Pr_{\RRG\sim \mathcal G_{N,d}}
\left[\sup_\varphi \gamma_q\bigl (\mathcal{E},\bigr (\RRG,\varphi\circ d_{\RRG}\bigr )\bigr)\le \Gamma_q\right]=1,
\]
where the supremum is over all metric transforms $\varphi:[0,\infty) \to [0,\infty)$.
\end{theorem}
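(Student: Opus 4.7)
The plan is to combine three ingredients: \autoref{thm:MN-rrg-expander-2} (the $q=2$ case without metric transforms), the rigidity theorem \autoref{thm:cat(0)-transform-closure}, and the extrapolation inequality \autoref{thm:extrapolation-cat0}. First I would establish the $q=2$ Poincar\'e inequality uniformly over all metric transforms $\varphi$ by passing through a single universal Hadamard target space, and then I would extrapolate in $q$.

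\textbf{Step 1 (bi-Lipschitz embedding of $(\RRG,\varphi\circ d_\RRG)$ into a Hadamard space).} Inspecting the proof of \autoref{thm:MN-rrg-expander-2} in~\cite{MN-expanders2}, the family $\mathcal E$ there is constructed so that for a specific Hadamard space $\mathcal Y$ (the continuous ambient space arising from the iterated zigzag, cone, and barycenter construction) one has $\gamma_2(\mathcal E,\mathcal Y)\le \Gamma$, while with high probability $\RRG$ embeds bi-Lipschitzly into $\mathcal Y$ with a universal distortion $D$. For any metric transform $\varphi$, \autoref{thm:cat(0)-transform-closure} supplies a Hadamard space $H_\varphi$ together with a bi-Lipschitz embedding $(\mathcal Y,\varphi\circ d_\mathcal Y)\hookrightarrow H_\varphi$ of universal distortion at most $\pi\sqrt{3/2}$. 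The elementary inequality $\varphi(Dt)\le D\varphi(t)$ for $D\ge 1$ (which follows from concavity and $\varphi(0)=0$) implies that concave nondecreasing $\varphi$ preserves bi-Lipschitz distortion, so composing with the embedding $\RRG\hookrightarrow \mathcal Y$ yields $(\RRG,\varphi\circ d_\RRG)\hookrightarrow H_\varphi$ with distortion $O(D)$.

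\textbf{Step 2 (uniform $q=2$ inequality).} To bound $\gamma_2(\mathcal E,H_\varphi)$ uniformly in $\varphi$, I would form the Pythagorean ($\ell_2$) sum $\mathbb H=\bigoplus_{\varphi\in\mathcal F}H_\varphi$ over a suitable countable family $\mathcal F$ of metric transforms; since the CAT(0) inequality~\eqref{eq:def CTA0} passes to $\ell_2$-products coordinate-wise, $\mathbb H$ is itself Hadamard, and each $H_\varphi$ embeds isometrically in $\mathbb H$ via a single factor. Re-running the MN-expanders2 iteration with target $\mathbb H$ (which is still CAT(0), so the nonlinear martingale inequalities of~\cite{MN-barycentric} apply with the same universal constants) yields $\gamma_2(\mathcal E,\mathbb H)\le \Gamma'$. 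Combined with Step 1, this gives
\begin{equation*}
\gamma_2\bigl(\mathcal E,(\RRG,\varphi\circ d_\RRG)\bigr)\lesssim D^2\Gamma' \qquad \text{uniformly in } \varphi.
\end{equation*}
For Step 3 (extrapolation in $q$), for each integer $q\ge 2$, \autoref{thm:extrapolation-cat0} applied with $p=q$ to the Hadamard space $H_\varphi$ gives $\gamma_q(\mathcal E,H_\varphi)\le C_{q,2}\,\gamma_2(\mathcal E,H_\varphi)^{q/2}\le C_{q,2}(\Gamma')^{q/2}$. The bi-Lipschitz embedding from Step 1 then transfers this $q$-Poincar\'e inequality back to $(\RRG,\varphi\circ d_\RRG)$ at the cost of a factor $D^{2q}$, yielding the desired $\Gamma_q$ independent of $\varphi$ and of $\RRG$.

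\textbf{Main obstacle.} The crux is Step 2: verifying that the MN14/MN-expanders2 construction of $\mathcal E$, when the CAT(0) target is taken to be the universal $\mathbb H$ rather than the specific $\mathcal Y$ originally used, still produces a uniform $\gamma_2$ bound. The underlying nonlinear martingale inequalities of~\cite{MN-barycentric} have universal constants across all CAT(0) spaces, so one expects the zigzag iteration to apply verbatim to $\mathbb H$; making this rigorous requires a careful examination of each step of the iteration in~\cite{MN-superexpanders,MN-expanders2} to extract the dependence of the Poincar\'e constant on the target. A secondary technical issue is the choice of the countable family $\mathcal F$: either the construction is intrinsically uniform over all CAT(0) targets (in which case $\mathcal E$ is, in effect, the Hadamard expander delivered by the MN method), or one must approximate arbitrary metric transforms by a countable dense subfamily to reduce to such an $\mathbb H$.
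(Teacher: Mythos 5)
There is a genuine gap, and it sits at the foundation of your Step 1. You assume that the proof of \autoref{thm:MN-rrg-expander-2} produces a single Hadamard space $\mathcal Y$ into which $\RRG$ embeds bi-Lipschitzly with universal distortion $D$. It does not, and no such embedding is available: what~\cite{MN-expanders2} (and \autoref{prop:rrg-structure} here) actually provides is that $\cone(\Sigma(\RRG),\sigma_\RRG d_{\Sigma(\RRG)})$ is covered by two overlapping pieces, $\cone(A_2,\cdot)$ which embeds in a Hadamard space (because $A_2$ lives in an approximately high-girth subgraph) and $\cone(A_1,\cdot)$ which embeds only in $L_1$ (it contains the short cycles of $\RRG$), and the Poincar\'e inequalities for the two pieces are merged via \autoref{lem:gamma-2-union}. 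A constant-distortion embedding of all of $\RRG$ into a Hadamard space on which $\mathcal E$ is an expander is essentially the open Hadamard-expander territory, not an input you can quote. Consequently your $H_\varphi$ and the Pythagorean sum $\mathbb H$ of Step 2 never get off the ground. The paper circumvents this by applying \autoref{prop:transform-in-producr-cones} to $\RRG$ \emph{itself}: $(\RRG,\varphi\circ d_\RRG)$ embeds with distortion $O(1)$ into an $\ell_q$-product of cones $\cone(\beta_i\RRG)$ over rescalings with $\beta_i\gtrsim\sigma_\RRG$, and the real work (\autoref{thm:poincare-cone}) is to show that $\mathcal E$ satisfies the $q$-Poincar\'e inequality uniformly over all such cones, each of which again requires the $A_1/A_2$ splitting.

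Your Step 3 has a second, independent problem: \autoref{thm:extrapolation-cat0} bounds $\gamma_p(G,\mathrm{Hadamard})$ by $\gamma_q(G,\mathrm{Hadamard})^{p/q}$, i.e., it extrapolates the supremum over the \emph{entire class} of Hadamard spaces, because its proof replaces the target $X$ by tangent cones $T_bX$, which are different Hadamard spaces. The fixed-target version $\gamma_p(G,X)\le f_{p,q}(\gamma_q(G,X))$ is explicitly posed as an open question in the paper, so you cannot deduce $\gamma_q(\mathcal E,H_\varphi)\lesssim\gamma_2(\mathcal E,H_\varphi)^{q/2}$ (nor $\gamma_q(\mathcal E,\mathbb H)\lesssim\gamma_2(\mathcal E,\mathbb H)^{q/2}$) unless you first show $\gamma_2(\mathcal E,\mathrm{Hadamard})<\infty$, which is open. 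The paper instead builds the exponent $q$ directly into the zigzag iteration, replacing the $2$-barycentric inequality by the $q$-barycentric property of CAT(0) spaces (\autoref{prop:p-barycentric-CAT0}) to get spectral calculus with power $q$, and then weaves the resulting per-$q$ expanders into a single family via the diagonalization step of \autoref{lem:zigzag main iteration}; the range $p<2$ is then reached through the metric transform $\varphi(t)=t^{p/q}$ rather than through downward extrapolation of a fixed target.
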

By substituting $q=\max\{2,\lceil p\rceil\}$ and $\varphi(t)=t^{p/q}$ in \autoref{thm:MN-rrg-expander-p} for a given $p\in(0,\infty)$, 
we conclude:
\begin{corollary} \label{cor:rrg-expander-p}
There exists an infinite family $\mathcal{E}$ of\/ $3$-regular graphs with the following property: For every $p\in(0,\infty)$ there exists $\Gamma_p\in(0,\infty)$ such that
for any integer $d\ge 3$, 
\[
\lim_{N\to \infty}\Pr_{\RRG\sim \mathcal G_{N,d}}
\left [\gamma_p (\mathcal{E},\RRG)\le \Gamma_p\right]=1,
\]
\end{corollary}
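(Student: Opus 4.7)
The plan is to reduce the corollary to \autoref{thm:MN-rrg-expander-p} by a direct substitution, exactly as the sentence preceding the corollary already hints. Given $p\in(0,\infty)$, I would set $q\eqdef\max\{2,\lceil p\rceil\}\in\{2,3,4,\ldots\}$ and choose the metric transform $\varphi(t)\eqdef t^{p/q}$.

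The first step is to verify that $\varphi$ qualifies as a metric transform in the sense of \autoref{def:metric-transform}. Since $0<p/q\le 1$, the power function $t\mapsto t^{p/q}$ is nondecreasing and concave on $[0,\infty)$ and vanishes at $0$; in particular, $\varphi\circ d_{\RRG}$ is a metric on the vertex set of the random regular graph $\RRG$.

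Next I would apply \autoref{thm:MN-rrg-expander-p} to get an infinite family $\mathcal{E}$ of $3$-regular graphs (independent of $p$) and a constant $\Gamma_q$ such that, for every integer $d\ge 3$,
$$
\lim_{N\to\infty}\Pr_{\RRG\sim \mathcal G_{N,d}}\Bigl[\,\gamma_q\bigl(\mathcal{E},(\RRG,\varphi\circ d_{\RRG})\bigr)\le \Gamma_q\,\Bigr]=1.
$$
The decisive (and purely cosmetic) observation is that the $q$-Poincar\'e inequality with respect to the transformed metric is literally the $p$-Poincar\'e inequality with respect to $d_{\RRG}$: unfolding the definition in \eqref{eq:p-poincare-metric} with power $q$ and distance $\varphi\circ d_{\RRG}=d_{\RRG}^{p/q}$ gives $(d_{\RRG}^{p/q})^q=d_{\RRG}^p$ on both the left-hand and right-hand sides. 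Therefore
$$
\gamma_p(\mathcal{E},\RRG)=\gamma_q\bigl(\mathcal{E},(\RRG,\varphi\circ d_{\RRG})\bigr)\le \Gamma_q
$$
with probability tending to $1$, and setting $\Gamma_p\eqdef\Gamma_{\max\{2,\lceil p\rceil\}}$ finishes the argument.

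There is essentially no obstacle at this step: all the substance is packed into \autoref{thm:MN-rrg-expander-p}, whose punch line is precisely that the \emph{same} family $\mathcal{E}$ handles \emph{all} metric transforms of $\RRG$ simultaneously at the reference power $q$. Consequently, a single deterministic family of constant-degree graphs serves as an expander with respect to typical $d$-regular graphs for every exponent $p\in(0,\infty)$ at once, which is the content of the corollary.
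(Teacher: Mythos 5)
Your proposal is correct and is precisely the paper's own argument: the sentence preceding the corollary states the substitution $q=\max\{2,\lceil p\rceil\}$, $\varphi(t)=t^{p/q}$ into \autoref{thm:MN-rrg-expander-p}, and the identity $(d_{\RRG}^{p/q})^q=d_{\RRG}^p$ turns the $q$-Poincar\'e inequality for the transformed metric into the $p$-Poincar\'e inequality for $d_{\RRG}$. Your added verification that $t\mapsto t^{p/q}$ is a metric transform (since $0<p/q\le 1$) is the only detail the paper leaves implicit, and it is checked correctly.
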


The extension of \autoref{thm:MN-rrg-expander-2} to
\autoref{thm:MN-rrg-expander-p} is achieved using three different ingredients. The first is a generalization of the construction that proves that \autoref{thm:MN-rrg-expander-2} works with power greater than $2$. This generalization is relatively straightforward, as it involves generalizing all inequalities with power $2$ to power $p\in[1,\infty)$,
except for the Pisier martingale cotype $p$,
which is false in the range $p\in[1,2)$.
Generalizing it to $p\in(2,\infty)$ involves a generalization of a barycentric property of CAT(0) to power $p>2$ (\autoref{prop:p-barycentric-CAT0}) by adapting an argument from~\cite{MN-superexpanders} written in the context of uniformly convex Banach spaces and is similar to~\cite{Gietl25}. 
All are based on an argument from~\cite{Ball}.
The second crucial ingredient is generalizing the construction for $p\in(0,2)$. As the martingale cotype $p$ is false in this range, this is achieved by extending the construction to work not only with the metric of random regular graphs but also with its truncations. 
This task is helped by the fact that the
proof of \autoref{thm:MN-rrg-expander-2} uses
embeddings of random regular graphs in Euclidean cones over
spaces that are ``almost" CAT(1).
We will exploit this special structure to apply arguments similar to those used in the proof of \autoref{thm:extrapolation-cat0}, when $p<q$.
The third ingredient is weaving the constructions of different expanders for different powers into one expander, which is based on a similar task performed in~\cite{MN-superexpanders} in a somewhat different context.

Independently of our paper, Atschuler et.~al.~\cite{ADTT} proved
a theorem very similar to \autoref{cor:rrg-expander-p} (for $p\in[1,\infty)$), using a very different method: they solved a question of J.~Kleinberg (see~\cite{MN-expanders2}) who asked whether for two stochastically independent
random regular graphs $\mathbb X$ and $\mathbb Y$ 
(of possibly different size and degree), $\mathbb Y$ is an $\mathbb X$-expander asymptotically almost surely.

\subsection{Estimation of the average distance in random regular graphs}
\label{sec:intro:avg-distance-approximator}
As part of a program to develop fast algorithms for geometric computations involving distances, 
Indyk~\cite{Indyk-sublinear} systematically investigated  fast algorithms for approximating the average distance%
\footnote{
We use the common notations $[n]=\n$ and $\tbinom{[n]}{2}=\{\{a,b\}\subset [n]:\ a\neq b\}$ for each $n\in \N$.}  
\begin{equation}\label{average dist}
\frac{1}{n^2}\sum_{(i,j)\in [n]^2} d_{\MM}(x_i,x_j)
\end{equation} 
among $n$ points $x_1,\ldots,x_n\in \MM$ in a given metric space $(\MM,d_{\MM})$.  
As there are $n^2$ summands in~\eqref{average dist}, such an algorithm is considered sublinear if it performs $o(n^2)$ distance queries. 
Obtaining a finite approximation factor entails making at least $n-1$ distance queries since otherwise the graph whose edges are the pairs of points that were queried would be disconnected,  so by varying the distances between its connected components one can obtain two markedly different metric spaces that the algorithm cannot distinguish.

Here, we will study deterministic algorithms for the above question. 
It is beneficial to first ground the discussion (and set the notation/terminology) by recalling the following formal definition of the computational model that will be treated herein. 
Its generality entails that the lower bound that we will obtain is quite a strong statement, while the algorithm that we will design  belongs to the least expressive aspect of this framework, namely, it will be non-adaptive, and, in fact, it will be a ``universal approximator'' per terminology from~\cite{BGS} that we will soon recall.

\begin{definition} \label{def:adapt-approximator} 
Let $\mathcal{F}$ be a family of metric spaces. For $n,m\in \N$ and $\alpha\ge 1$, a deterministic $\alpha$-approximation algorithm for average distance in $\mathcal{F}$  which makes $m$ distance queries on size $n$ inputs consists of functions
\begin{equation}\label{eq:functions for the game}
\Big\{\mathsf{Pair}_i:\Big(\tbinom{[n]}{2}\times [0,\infty)\Big)^{i-1}  \to
\tbinom{[n]}{2}\Big\}_{i=1}^m\qquad\mathrm{and}\qquad \estimate: \Big(\tbinom{[n]}{2}\times [0,\infty)\Big)^m\to [0,\infty),
\end{equation}
where $\pair_1$ is understood to be a fixed element $\pair_1=\{a_1,b_1\}\in \tbinom{[n]}{2}$.

Given a metric space $(\MM,d_{\MM})\in \cF$ and $x_1,\ldots,x_n\in \MM$, define  $\{a_1,b_1\},\{a_2,b_2\},\ldots,\{a_m,b_m\}\in \tbinom{[n]}{2}$ inductively (recalling that $\{a_1,b_1\}$ has already been fixed above) by setting for every $i\in [m-1]$:
\begin{equation}\label{eq:approximation game}
\{a_{i},b_{i}\}\eqdef \pair_{i} \Big(\big(\{a_1,b_1\},d_{\MM}(x_{a_1},x_{b_1})\big), \big(\{a_2,b_2\},d_{\MM}(x_{a_2},x_{b_2})\big),\ldots, \big(\{a_{i-1},b_{i-1}\},d_{\MM}(x_{a_{i-1}},x_{b_{i-1}})\big)\Big).
\end{equation}
We then require that%
\footnote{If the denominator in~\eqref{eq:alpha approx def} vanishes (i.e., when $x_1=\ldots=x_n$),  then we require that also the numerator in~\eqref{eq:alpha approx def} vanishes.}
\begin{equation}\label{eq:alpha approx def}
1\le \frac{\estimate \Big(\big(\{a_1,b_1\},d_{\MM}(x_{a_1},x_{b_1})\big),\ldots, \big(\{a_{m},b_{m}\},d_{\MM}(x_{a_{m}},x_{b_{m}})\big)\Big)}{\frac{1}{n^2}\sum_{(i,j)\in [n]^2} d_{\MM}(x_i,x_j)} \le \alpha.
\end{equation}

A \emph{non-adaptive}  deterministic $\alpha$-approximation algorithm for average distance in $\mathcal{F}$  which makes $m$ distance queries on size $n$ inputs is the restrictive case of the above setup in which $\pair_1,\ldots,\pair_m$  are constant functions, so their images are, respectively, $m$ pairs of indices $\{a_1,b_1\},\ldots,\{a_m,b_m\}\in \tbinom{[n]}{2}$. The output is then a function of the $d_{\MM}$-distances   between the pairs $\{x_{a_1},x_{b_1}\},\ldots, \{x_{a_m},x_{b_m}\}\subset \{x_1,\ldots,x_n\}$ of points whose indices  were decided upfront and not adapted to the specific input $x_1,\ldots,x_n\in \MM$.

When we discuss $\alpha$-approximation algorithms for average distance which make $m$ distance queries on inputs of size $n$ without specifying the underlying family $\cF$, we will tacitly mean that $\cF$ consists of all metric spaces; as any finite metric space is isometric to a subset of $\ell_\infty$, we may take in this case $\cF=\{\ell_\infty\}$. 
\end{definition}
The standard interpretation of~\eqref{eq:approximation game} is that the algorithm performs the following ``approximation game'' in $m$ rounds. 
Starting with querying the $d_{\MM}$-distance between $x_{a_1}$ and $x_{b_1}$, based only on the answer that it receives, the algorithm computes a new pair of points $x_{a_2},x_{b_2}$ and queries the $d_{\MM}$-distance between them. 
In round $i+1$,   the algorithm singles out a new pair of points $x_{a_{i+1}},x_{b_{i+1}}$ and queries their $d_{\MM}$-distance,  where that pair is a function of only  the (ordered) sequence of pairs that the algorithm queried and their $d_{\MM}$-distances (the responses to the queries) in the preceding $i$ rounds. 
After $m$ rounds, the algorithm outputs an estimate for the average distance, which is required to be   a function of only the (ordered) sequence  of the pairs that it queried   and the  responses to those queries in the entirety of this procedure. 

Even though the new algorithmic results that are obtained herein  concern only deterministic algorithms, we wish to mention randomized algorithms when referring to previous results in the literature. 
Thus, in the context of \autoref{def:adapt-approximator}, a randomized $\alpha$-approximation algorithm for average distance in $\mathcal{F}$  which makes $m$ distance queries on size $n$ inputs is the same as in \autoref{def:adapt-approximator} except that the functions in~\eqref{eq:functions for the game} have an additional variable $\omega$ from some probability space $(\Omega,\mathbb{P})$, and one requires that the approximation guarantee~\eqref{eq:alpha approx def} holds with probability at least, say,  $2/3$. In the non-adaptive special case, $\pair_1,\ldots,\pair_m$ depend only on $\omega$. 
Also in this context, $\cF$ is understood to be all metric spaces if it is not specified. 


Indyk considered the simple randomized non-adaptive algorithm that is obtained by averaging over a uniformly random sampling of $O(n/\e^{7/2})$ pairs for some $0<\e<1$. 
He proved that this straightforward procedure yields an $1+\e$ factor approximation.
Barhum, Goldreich, and Shraibman~\cite{BGS} improved Indyk's analysis to
$O(n/\e^2)$ queries. Goldreich and Ron~\cite{GR06,GR-avg} studied a restricted version of the above problem in which one wishes to approximate
the average distance among \emph{all} the vertices of a given \emph{unweighted connected} graph on $n$ vertices, showing that averaging over a uniformly random sampling of $O(\sqrt{n}/\e^{2})$ pairs of vertices yields a $(1+\e)$-approximation algorithm. 

Deterministic algorithms for average distance estimation were broached in~\cite{BGS}, where an especially simple subclass of non-adaptive algorithms, called
\emph{universal approximators}, was studied. Per Definition~6 in~\cite{BGS}, given $\alpha\ge 1$, an $\alpha$-universal approximator in a family $\cF$ of metric spaces of size $m$ for inputs of size $n$ consists of $m$ unordered pairs of indices $\{a_1,b_1\},\ldots,\{a_m,b_m\}\in \tbinom{[n]}{2}$ and a scaling factor $\sigma>0$ such that for every metric space $(\MM,d_{\MM})\in \cF$ and every $x_1,\ldots,x_n\in \MM$  the output of the algorithm is 
\begin{equation}
\label{eq:UA}
\estimate_{\mathrm{UA}}\big(d_{\MM}(x_{a_1},x_{b_1}),\ldots, d_{\MM}(x_{a_m},x_{b_m})\big)\eqdef \sigma\sum_{\ell=1}^m d_{\MM}(x_{a_\ell},x_{b_\ell}),
\end{equation}
 and~\eqref{eq:alpha approx def} holds, that is, $n^{-2}\sum_{(i,j)\in [n]^2}d_{\MM}(x_i,x_j)\le \sigma\sum_{\ell=1}^md_{\MM}(x_{a_\ell},x_{b_\ell})\le \alpha n^{-2}\sum_{(i,j)\in [n]^2}d_{\MM}(x_i,x_j)$ for every metric space $(\MM,d_{\MM})\in\cF$ and every $x_1,\ldots,x_n\in \MM$. As before, if $\cF$ is not mentioned, then it will be assumed tacitly to be all the possible metric spaces.   The following theorem was proved in~\cite{BGS}:

\begin{theorem}[{\cite{BGS}}]
\label{thm:BGS-approximator}
For any $k\in\{2,3,4,\ldots\}$, there exists a  
$(2k)$-universal approximator algorithm that uses at most
$kn^{(k+1)/k}/2$ queries.    
\end{theorem}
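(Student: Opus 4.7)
The plan is to identify a graph $G=([n],E)$ with $|E|\le kn^{1+1/k}/2$ that is approximately $n^{1/k}$-regular, has diameter at most $k$, and admits a balanced shortest-path routing scheme. We then use the unordered edges $E$ as the query pairs with scaling factor $\sigma=k/|E|$, so that the estimator $\sigma\sum_{\{u,v\}\in E}d_{\MM}(x_u,x_v)$ approximates the true average distance $\bar d\eqdef n^{-2}\sum_{(i,j)\in[n]^2}d_{\MM}(x_i,x_j)$ to within a factor of $2k$.

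The first task is to obtain such a graph $G$ together with a family of walks $\{P_{i,j}\}_{(i,j)\in[n]^2}$ in $G$ with $|P_{i,j}|\le k$ and a balanced load,
\[
\max_{e\in E}c(e)\lesssim \frac{kn^2}{|E|},\qquad\text{where}\qquad c(e)\eqdef\bigl|\bigl\{(i,j)\in[n]^2:\ e\in P_{i,j}\bigr\}\bigr|.
\]
The existence of such $G$ can be established either by an explicit algebraic construction (e.g., a Cayley graph of an abelian group with a symmetric generating set of size $\asymp n^{1/k}$, whose vertex-transitivity produces balanced shortest-path routing automatically) or via a probabilistic argument (random $n^{1/k}$-regular graphs have diameter at most $k$ with high probability, and a uniformly random choice of shortest path for each pair concentrates the load around its expected value $kn^2/|E|$).

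Given $G$, the lower inequality of~\eqref{eq:UA}, namely $\bar d\le \sigma\sum_{e\in E}d_{\MM}(x_e)$, follows from the triangle inequality along the walks: $d_{\MM}(x_i,x_j)\le\sum_{e\in P_{i,j}}d_{\MM}(x_e)$; summing over ordered pairs and regrouping by edges gives $n^2\bar d\le \sum_{e\in E}c(e)d_{\MM}(x_e)\le (kn^2/|E|)\sum_{e\in E}d_{\MM}(x_e)$. For the upper inequality $\sigma\sum_{e\in E}d_{\MM}(x_e)\le 2k\bar d$, I would apply the triangle inequality $d_{\MM}(x_u,x_v)\le d_{\MM}(x_u,x_i)+d_{\MM}(x_i,x_v)$ for each edge $\{u,v\}\in E$ and every $i\in[n]$; averaging over $i$, summing over edges, and collecting terms by vertex (using that $G$ is regular of degree $2|E|/n$) yields
\[
\sum_{e\in E}d_{\MM}(x_e)\le \frac{1}{n}\sum_{i\in[n]}\sum_{v\in[n]}\deg_G(v) d_{\MM}(x_v,x_i) = \frac{2|E|}{n^2}\sum_{(i,v)\in[n]^2} d_{\MM}(x_v,x_i)=2|E|\,\bar d.
\]
Multiplying by $\sigma=k/|E|$ completes the argument.

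The main obstacle is securing all three properties of $G$ simultaneously: sparsity ($|E|\le kn^{1+1/k}/2$), small diameter ($\le k$), and a routing scheme with load at most a constant multiple of the average. Each property individually is classical, but combining them with the right numerical constants (in particular the exact bound $kn^{1+1/k}/2$ and the multiplicative constant $2k$ rather than some larger $O(k)$) requires a careful choice of construction; the algebraic route via abelian Cayley graphs delivers the load balance effortlessly by symmetry but constrains the possible edge counts, whereas the probabilistic route is more flexible but requires a concentration argument for the routing.
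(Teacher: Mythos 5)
The paper does not prove this statement; it is quoted verbatim from \cite{BGS}, so there is no internal proof to compare against. That said, your two-sided argument is exactly the mechanism behind the BGS bound, and your upper inequality (triangle inequality through a uniformly chosen intermediate vertex, then regrouping by degree) is literally the same computation this paper carries out for the left inequality of \eqref{eq:def-UA} in its proof of \autoref{thm:u-approxer}. The lower inequality via routing paths of length at most $k$ and bounding the maximum edge load is also the right idea, and the two inequalities do combine to give precisely the factor $2k$ with $\sigma=k/|E|$, \emph{provided} $G$ is exactly regular and $\max_e c(e)\le kn^2/|E|$, i.e.\ the load is balanced to within a factor of $1$ of the worst-case average, not merely $\lesssim$.

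The one substantive gap is that you never exhibit $G$, and of your two suggested routes only the algebraic one actually delivers the stated constants. The probabilistic route fails at the last step: a concentration argument for random shortest-path routing gives $\max_e c(e)=O(kn^2/|E|)$ with an unspecified constant, which degrades the guarantee to $O(k)$ rather than $2k$. The construction that works (and is essentially what \cite{BGS} use) is the Hamming graph: take $m=\lceil n^{1/k}\rceil$, identify $[n]$ with a subset of $[m]^k$, and let $E$ be the pairs differing in one coordinate, so that $|E|\le k(m-1)m^k/2\le(1+o(1))kn^{(k+1)/k}/2$, the diameter is $k$, and coordinate-by-coordinate routing (fix coordinates in the order $1,\dots,k$) gives $c(e)=2m^{k-1}$ for \emph{every} edge, which is below $kn^2/|E|=2n/(m-1)$. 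With that construction made explicit, your argument is complete; without it, the claim that all three properties coexist with the exact constants is precisely the content of the theorem and cannot be left to a plausibility argument.
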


In particular, \autoref{thm:BGS-approximator} shows that for every fixed integer $k\ge 2$ there is a (nonadaptive) deterministic $(2k)$-approximation algorithm for the average distance that makes $O(n^{1+1/k})$ distance queries on inputs of size $n$. 
\autoref{thm:impossibility-DA} below, which is one of the main results of the present work, provides a matching impossibility statement.%
\footnote{Besides proving~\autoref{thm:BGS-approximator},  
a (non-matching) lower bound on the size of \emph{universal approximators} was proved in~\cite{BGS}. 
Ruling out any algorithm whatsoever (per \autoref{def:adapt-approximator}) rather than only universal approximators is conceptually an entirely different matter. 
Indeed, the concrete adversarial metric/gadget that is used in~\cite{BGS} fails to fool even some non-adaptive algorithms.}
 
\begin{theorem} \label{thm:impossibility-DA}
Assume $A$ is a deterministic adaptive $\alpha$-approximation algorithm of the average distance for all finite metric spaces. 
If\/ $A$ uses $o\bigl(n^{\frac{k+1}{k}}\bigr)$ queries on $n$-point sets for some fixed $k\in\mathbb N$, 
then $\alpha\geq 2(k+1)$.
\end{theorem}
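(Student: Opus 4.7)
The plan is to refute any hypothetical $\alpha$-approximation algorithm $A$ with $\alpha<2(k+1)$ making $m=o(n^{(k+1)/k})$ queries, by constructing two metrics $\od,\ud$ on $[n]$ of diameter at most $k+1$ that are indistinguishable to $A$ (they coincide on every queried pair) but whose $n^{-2}$-averages tend to $k+1$ and $\tfrac12$, respectively. A deterministic algorithm is a function only of its queried pairs and their responses, so $A$ must return the same estimate on the two inputs; applying~\eqref{EA:eq:alpha approx def} to each and letting $n\to\infty$ then forces $\alpha\ge 2(k+1)$. This is exactly the reduction captured in \autoref{EA:thm:det-adaptive-lb}, whose five conclusions I would establish in turn.

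The metric $\od$ is built by induction on the adaptive history of $A$. Given already-queried edges $\sfE_i$ and a partial integer-valued metric $d_i$ on $[n]$ consistent with all responses, form the weighted graph $\sfG_i=([n],\sfE_i,d_i)$, feed $(d_i(a_j,b_j))_{j\le i}$ to $\pair_{i+1}$ to obtain $\{a_{i+1},b_{i+1}\}$, and extend $d_i$ to $d_{i+1}$ via the recursion~\eqref{eq:def di+1}. The inner clamp at $\max\{h_i(a),h_i(b)\}+1$, where $h_i(x)$ is the thresholded logarithm of $\deg_{\sfG_i}(x)$ from~\eqref{eq:def hi}, prevents the adversary from being forced to reveal long distances at low-degree vertices, while the outer maximum is the smallest choice consistent with the triangle inequality against previously declared edges. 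A routine induction shows $d_{i+1}$ extends uniquely to a metric on $[n]$ bounded by $k+1$ and that earlier responses are preserved; set $\od\eqdef d_m$. The companion metric $\ud$ is defined implicitly as a minimizer of the linear objective $\sum_{(x,y)\in[n]^2}d(x,y)$ over the compact feasibility region~\eqref{eq: constraints}: pseudometrics $d\le\od$ that agree with $\od$ on $\sfE$ and are pinned from below by $\min\{\od(x,y),\max\{h(x),h(y)\}+\tfrac12\}$. The fifth constraint forces $\ud(x,y)\ge\tfrac12$ for $x\ne y$, so $\ud$ is a genuine metric, and conclusions (I)--(III) of \autoref{EA:thm:det-adaptive-lb} are automatic from the construction.

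Two structural lemmas carry the quantitative content. The first is the ball-growth estimate \autoref{lem:ball growth}: for $r\le k-1$, $|B_\sfG(x,r)|\lesssim\sqrt{m}/n^{(k-2r+1)/(2k)}$, and for $x$ of degree below threshold, also $|B_\sfG(x,k)|\lesssim n^{1/2-1/(2k)}\sqrt{m}$. The plan is to prove this by induction on $r$, exploiting the clamp in~\eqref{eq:def di+1}: a vertex of degree below the $h_i$-threshold simply cannot accumulate too many radius-$r$ neighbors without violating the clamp on some new edge weight. From this, conclusion~\autoref{EA:it:(4)} follows by a counting argument: bounding $|H|\lesssim\sqrt{m}/n^{(k-1)/(2k)}$ via $\sum_x\deg_\sfG(x)=2m$ and combining with the $r=k$ bound gives $|X|\lesssim n^{3/2-1/(2k)}\sqrt{m}=o(n^2)$ under the query budget.

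The main obstacle is the variational \autoref{lem:variational}: outside the candidate set $Y$ of~\eqref{eq:def Y}, $\ud(x,y)\in\{0,\tfrac12\}$. Here the implicit optimization definition of $\ud$ must be exploited through an extremality argument. The plan is by contradiction: choose a distinct pair $(x,y)\notin Y$ with $\ud(x,y)>\tfrac12$ maximal among such pairs. Since $(x,y)\notin U$, the definition of $U$ in~\eqref{def UW} forces $h(x)=h(y)=0$, so the fifth-constraint lower bound at $(x,y)$ is the strict value $\tfrac12$; since $\{x,y\}\notin\sfE$, the fourth constraint is inactive. Minimality of $\ud$ then forces some other constraint to be tight in a way that blocks any decrease; extending $(x,y)$ to a discrete $\ud$-geodesic $(u,\ldots,x,y,\ldots,v)$ maximal under inclusion rules out the triangle inequality as that tight constraint, leaving only $\{u,v\}\in\sfE$ (the fourth constraint) or the lower half of the fifth constraint at the endpoints, each of which is ruled out by a short case analysis. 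Engineering the right geodesic extension and case split is where the real work lies, and this is the step I expect to be most delicate. With \autoref{lem:variational} in hand, $|Y|=o(n^2)$ follows from~\eqref{eq:less than 1/2}, which identifies $B_\ud(x,h(x))$ with $B_\sfG(x,h(x))$; applying the first part of \autoref{lem:ball growth} together with $\sum_x\deg_\sfG(x)=2m$ closes the count, and combining with \autoref{EA:it:(4)} and the reduction yields $\alpha\ge 2(k+1)$.
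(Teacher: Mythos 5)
Your proposal is correct and follows essentially the same route as the paper: the adaptive adversary built via the clamped recursion~\eqref{eq:def di+1} with the degree thresholds $h_i$, the truncated shortest-path metric $\od$, the linear-programming minimizer $\ud$ subject to~\eqref{eq: constraints}, the ball-growth estimates of \autoref{lem:ball growth}, and the extremal/maximal-geodesic case analysis of \autoref{lem:variational}. The paper's actual argument for the ball bound counts weighted paths via compositions of $r$ rather than a direct induction on the radius, but this is a bookkeeping difference, not a different approach.
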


By combining \autoref{thm:BGS-approximator}  and \autoref{thm:impossibility-DA}, we get the following description of the approximation landscape for the average distance. Any algorithm for the average distance whose approximation guarantee is less than $4$ must query $\Omega(n^2)$ distances; any such algorithm whose approximation guarantee is less than $6$ must query $\Omega(n^{3/2})$ distances; any such algorithm whose approximation guarantee is less than $8$ must query $\Omega(n^{4/3})$ distances, and so forth. Furthermore, there are algorithms attaining those parameters. 
This is depicted in the chart on \autoref{fig:UB-Lb-chart}.

\begin{figure}[htb]
 \begin{center}
\includegraphics[width=0.5\textwidth]{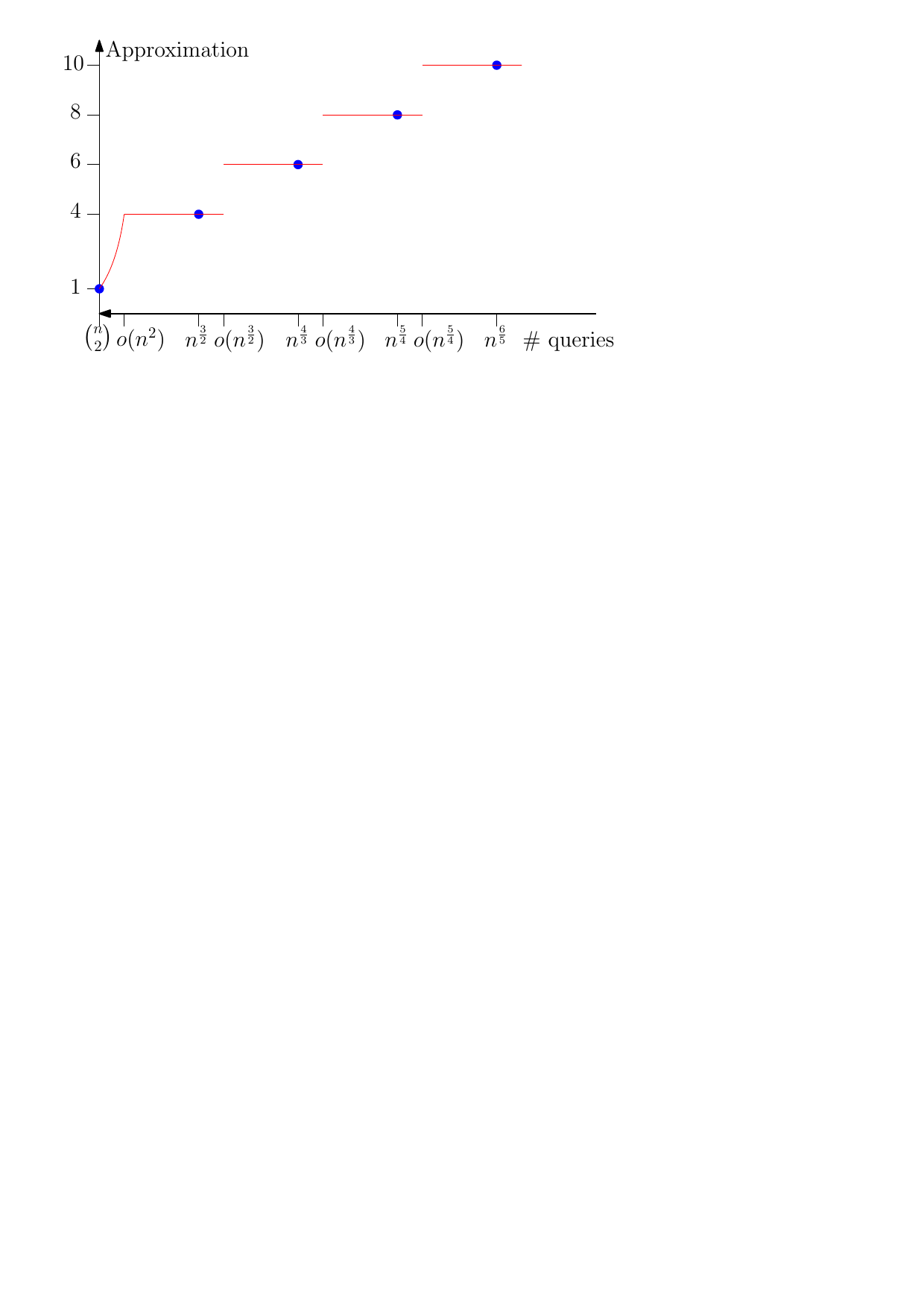}
 \end{center}
 \caption{The universal approximators of \autoref{thm:BGS-approximator} in blue dots vs. the lower bounds of \autoref{thm:impossibility-DA} and \autoref{prop:UA-lb14} in red lines.
 Note that 
 the chart's horizontal axis 
 decreases as one moves from left to right, and it introduces a macroscopic gap between $O(n^{1+1/k})$ queries and $o(n^{1+1/k})$ queries.
 }
\label{fig:UB-Lb-chart}
\end{figure}

In particular, deterministic $O(1)$-approximation algorithms for general $n$-point spaces that use $O(n)$ queries are impossible.
However, 
Barhum et.~al. proved the existence of a universal $O(1)$-approximator using $O(n)$ queries with respect to subsets of Hilbert space.%
\footnote{More precisely, $(1+\e)$-approximator using $O(n/\e^2)$ queries.
We are interested in some constant approximation and will not try to reach an approximation close to $1$.}
Their construction is based on constant-degree expanders.
In this paper, we study deterministic $O(1)$-approximation algorithms using $O(n)$ queries for subsets of the shortest path metric of constant degree graphs.
First, we observe that this is impossible for arbitrary $d$-regular graphs.
\begin{proposition}\label{prop:reg-universal}
For any $\e>0$, any $d\in\{3,4,\ldots\}$, and any finite metric space $X$, 
there exists a $d$-regular graph $G$ such that $c_G(X)\le 1+\e$.
\end{proposition}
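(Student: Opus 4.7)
The plan is to construct $G$ explicitly so that its shortest-path metric, restricted to a distinguished copy of $X$, approximates $d_X$ up to uniform rescaling. The construction proceeds in three stages: discretize the metric, realize it by an auxiliary graph $H$ of bounded degree, and $d$-regularize $H$ by pasting pendants that cannot create shortcuts. The cases $|X|\le 2$ being trivial, assume $n\eqdef|X|\ge 3$.

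First, pick a large integer $M$, to be fixed at the end in terms of $\varepsilon$, $n$, and $\delta\eqdef\min_{i\ne j}d_X(x_i,x_j)$, and set $w_{ij}\eqdef \lceil Md_X(x_i,x_j)\rceil$ for $i\ne j$. Monotonicity of $\lceil\cdot\rceil$ makes $w$ an integer-valued metric on $[n]$ with $Md_X(x_i,x_j)\le w_{ij}\le Md_X(x_i,x_j)+1$. Build an auxiliary graph $H$ as follows. To each $x_i$ associate a cluster $C_i$, taken to be a cycle on $2(n-1)$ vertices; fix in $C_i$ a marker $\hat x_i$ together with $n-1$ pairwise non-adjacent port vertices $\{v_{i,j}\}_{j\ne i}$. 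For each unordered pair $\{i,j\}$ attach a fresh path of length $w_{ij}$ joining $v_{i,j}$ to $v_{j,i}$ (with $w_{ij}-1$ new internal vertices of degree $2$). Every vertex of $H$ then has degree $2$ or $3$, hence at most $d$.

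To obtain $G$, attach at each $v\in V(H)$ a pendant graph $P_v$ whose root is $v$, with $\deg_{P_v}(v)=d-\deg_H(v)$ and all other vertices of $P_v$ of degree $d$. Crucially, each $P_v$ meets the rest of $G$ only through the cut vertex $v$, so no pendant can lie on a $(\hat x_i,\hat x_j)$-geodesic and the geometry between markers is governed entirely by $H$. Setting $f(x_i)\eqdef\hat x_i$, the direct route through $v_{i,j}$ and $v_{j,i}$ gives $d_G(\hat x_i,\hat x_j)\le(n-1)+w_{ij}+(n-1)$, while any $(\hat x_i,\hat x_j)$-walk in $G$ projects to a sequence of clusters $C_{i=k_0},C_{k_1},\ldots,C_{k_r=j}$ whose external-path portion has length at least $\sum_{\ell=1}^{r}w_{k_{\ell-1}k_\ell}\ge w_{ij}$ by the triangle inequality, so $d_G(\hat x_i,\hat x_j)\ge w_{ij}$. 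Combining with $|w_{ij}-Md_X(x_i,x_j)|\le 1$ and rescaling by $1/M$ yields distortion at most $1+\frac{2n-1}{M\delta}$, which is $\le 1+\varepsilon$ once $M\ge (2n-1)/(\varepsilon\delta)$.

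The main obstacle will be the pendant construction: producing, for each prescribed root degree $k\in\{0,\ldots,d-1\}$, a finite connected graph with one root of degree $k$ and all other vertices of degree $d$. This is a purely combinatorial task, but not entirely trivial because of a handshake parity issue: when $d$ is even and $k$ is odd, a single-rooted pendant does not exist. The standard workaround is to pair the affected vertices of $H$ and attach a shared two-rooted gadget (whose roots have odd degree $k$, hence compatible with handshake) rather than two individual pendants. Once this lemma is in place, the rest of the argument is a straightforward verification, explaining why the paper describes the proof of \autoref{prop:reg-universal} as detailed but straightforward.
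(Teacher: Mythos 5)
Your overall architecture (discretize the metric, realize it by a bounded-degree subdivided graph with per-point clusters, then boost all degrees to exactly $d$) is essentially the same as the paper's, and your distance analysis for the cluster-and-connector graph $H$ is correct: the lower bound via projecting a walk to a cluster sequence and invoking the triangle inequality for $w$, and the upper bound $w_{ij}+2(n-1)$, together give distortion $1+\tfrac{2n-1}{M\delta}$ as you claim. The paper instead subdivides the weighted complete graph and replaces each high-degree vertex by a cycle of length equal to its degree, absorbing the resulting $O(|X|)$ additive error by an extra scaling factor $\lceil 3|X|/\e\rceil$; your port-cycle clusters achieve the same effect. The difference that matters is the regularization step.

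There is a genuine gap in your final stage. Your justification that pendants cannot shorten distances rests entirely on each $P_v$ meeting the rest of $G$ only through the single cut vertex $v$. But your workaround for the handshake obstruction --- pairing two vertices $u\neq v$ of $H$ and attaching one shared two-rooted gadget --- destroys exactly this property: the gadget now provides a route from $u$ to $v$ through its interior, and if the gadget is of bounded size while $d_H(u,v)$ is of order $M\delta$ (which happens for ports in different clusters, and when $d$ is even \emph{every} port is parity-affected since it has degree $3$ in $H$ and needs a root of odd degree $d-3$), this collapses the metric and the lower bound $d_G(\hat x_i,\hat x_j)\ge w_{ij}$ fails. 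To repair it you must either pair only vertices at bounded $H$-distance from each other (possible, but requires an argument that such a pairing exists --- note the set of parity-affected vertices does have even cardinality by handshake in $H$, which you also do not verify), or build gadgets whose two roots are at internal distance exceeding $\diam(H)$ (nontrivial, since long internal paths reintroduce degree-$2$ vertices). The paper sidesteps all of this: it makes the graph $3$-regular by taking \emph{two isometric copies} and joining each degree-$2$ vertex to its twin, and then makes it $d$-regular by taking $d-2$ copies and joining corresponding vertices across copies; since every added edge connects corresponding vertices of isometric copies, any path using such edges projects to a walk of no smaller length in a single copy, so no shortcut can arise and no parity case analysis is needed.
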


In particular, it follows from \autoref{thm:impossibility-DA} that for any $d\in\{3,4,\ldots\}$, 
there are no deterministic $O(1)$-approximation algorithms
for the average distance in $n$-point subsets of $d$-regular graphs that use $O(n)$ queries.
Using the expander from \autoref{cor:rrg-expander-p}, we bypass this impossibility result by providing a universal $O(1)$-approximator for subsets of \emph{typical} constant-degree graphs, solving an open question from~\cite[Remark~2.3]{MN-expanders2}.
Formally, denote by $\DA(\mathcal E,X)$ the infimal $\alpha$ such
that $\mathcal E$ is a deterministic $\alpha$-approximation algorithm of the average distance for finite subsets of $X$. 
We prove the following.

\begin{theorem} \label{thm:u-approxer}
There exists a universal approximator $\mathcal{E}$ that uses $O(n)$ queries for $n$-point subsets
and $\alpha \in [1,\infty)$ such that
for every $d\in \{3,4,5,\ldots\}$,
\[
\lim_{N\to \infty}\;\Pr_{\RRG\sim \mathcal G_{N,d}}
\left[\DA(\mathcal E,\RRG)\leq \alpha)\right]=1.
\]
Here $\mathcal G_{N,d}$ is a uniform distribution on the $N$-vertex $d$-regular simple graphs.
\end{theorem}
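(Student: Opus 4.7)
The plan is to apply the universal-approximator paradigm of Barhum--Goldreich--Shraibman (\autoref{thm:BGS-approximator}), substituting for the underlying expanding backbone the $1$-Poincar\'e expander with respect to random regular graphs that is supplied by \autoref{cor:rrg-expander-p}. Specifically, invoke \autoref{cor:rrg-expander-p} with $p=1$ to obtain an infinite family $\mathcal{E}_0$ of $3$-regular graphs and a constant $\Gamma_1<\infty$ such that, for every integer $d\ge 3$,
\[
\lim_{N\to\infty}\Pr_{\RRG\sim \mathcal G_{N,d}}\Bigl[\forall\, G\in\mathcal{E}_0,\ \gamma_1(G,\RRG)\le \Gamma_1\Bigr]=1.
\]
A standard ``vertex-blowup'' construction (replacing each vertex of a member of $\mathcal{E}_0$ by an equal-sized group and rewiring edges to preserve bounded degree) yields, for every $n\in\N$, an $n$-vertex bounded-degree graph $G_n$ that inherits the $1$-Poincar\'e inequality with respect to $\RRG$ with some larger but still universal constant $\Gamma'<\infty$. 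The universal approximator $\mathcal{E}$ on inputs of size $n$ is then defined by taking its query pairs to be the edges of $G_n$ under some fixed labeling $V(G_n)=[n]$, with scaling factor $\sigma_n=\Gamma'/|E(G_n)|$; since $G_n$ has bounded degree this uses only $m_n=O(n)$ queries.

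To verify the approximation guarantee, take any input $x_1,\ldots,x_n\in V(\RRG)$ and consider the function $f:V(G_n)\to V(\RRG)$ given by $f(v_i)\eqdef x_i$. On the event $\gamma_1(G_n,\RRG)\le \Gamma'$, the $1$-Poincar\'e inequality immediately yields
\[
\frac{1}{n^2}\sum_{(i,j)\in[n]^2} d_\RRG(x_i,x_j) \le \frac{\Gamma'}{|E(G_n)|}\sum_{\{u,v\}\in E(G_n)} d_\RRG\bigl(f(u),f(v)\bigr).
\]
Meanwhile, summing the triangle inequality $d_\RRG(f(u),f(v))\le d_\RRG(f(u),f(w))+d_\RRG(f(w),f(v))$ over $w\in V(G_n)$ and then over the oriented edges of $G_n$, and using the fact that $G_n$ is regular, yields the reverse bound
\[
\frac{1}{|E(G_n)|}\sum_{\{u,v\}\in E(G_n)} d_\RRG\bigl(f(u),f(v)\bigr)\le \frac{2}{n^2}\sum_{(i,j)\in[n]^2} d_\RRG(x_i,x_j).
\]
These two inequalities together show that the estimate $\sigma_n\sum_\ell d_\RRG(x_{a_\ell},x_{b_\ell})$ lies between $1$ and $\alpha\eqdef 2\Gamma'$ times the true average distance, on the event that the $1$-Poincar\'e inequality holds for $G_n$. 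The uniform quantifier ``$\forall\,G\in\mathcal{E}_0$'' in the probabilistic statement above ensures that this event holds for all blow-ups $G_n$ simultaneously with probability tending to $1$ as $N\to\infty$, which gives the conclusion.

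The main obstacle is upstream of this deduction: it is the construction of the expander family $\mathcal{E}_0$ via \autoref{cor:rrg-expander-p}, which itself rests on \autoref{thm:cat(0)-transform-closure}, on the Hadamard-extrapolation principle (\autoref{thm:extrapolation-cat0}), and on the adapted zig-zag analysis of~\cite{MN-expanders2}. Once $\mathcal{E}_0$ is available, \autoref{thm:u-approxer} follows from the elementary combinatorial argument above. The only residual technicality is the vertex-size mismatch between the members of $\mathcal{E}_0$ and the desired size $n$, which is handled by a blowup/padding argument that preserves both bounded degree and the Poincar\'e constant up to absolute factors.
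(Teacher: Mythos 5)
Your overall skeleton is the same as the paper's: use the $p=1$ nonlinear expander with respect to random regular graphs, take its edge set as the (non-adaptive) query set, get the upper bound on the true average from the $1$-Poincar\'e inequality and the lower bound from averaging the triangle inequality over all midpoints and edges of a regular graph. That part of your verification is correct. However, there are two genuine gaps in how you bridge the size mismatch between the expander family and an arbitrary input size $n$.

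First, the ``vertex-blowup'' step does not work as stated. If you replace each vertex of a $3$-regular expander by a group of size $t$ and rewire each edge so as to \emph{preserve bounded degree}, you must use (near-)matchings between the groups; but such a graph need not satisfy any Poincar\'e inequality at all --- e.g.\ if every matching is the identity under a common labelling of the groups, the blowup is just $t$ disjoint copies of the original graph, and a function that is constant on each copy violates \eqref{eq:p-poincare-metric} badly. The blowup that \emph{does} preserve expansion (complete bipartite graphs between groups) has degree $3t$ and hence $\Theta(nt)$ edges, which is not $O(n)$ unless $t=O(1)$. Either way the construction fails unless the expander family already contains a graph of size $\Theta(n)$, which brings us to the second gap: \autoref{cor:rrg-expander-p} only provides \emph{some} infinite family, with no control on the gaps between consecutive sizes, so for a given $n$ the nearest member may be enormously larger than $n$. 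This is exactly why the paper proves the refinement \autoref{cor:rrg-expander-p-exp-growth} (via the $\mathcal E_q$ of \autoref{lem:X_delta-expander} satisfying \eqref{eq:E_q bound}), which guarantees the sizes form a geometric progression $a_1b_1^j$; one then takes the member $G_j$ with $a_1b_1^{j-1}<n\le a_1b_1^j$ and, rather than blowing up, maps it \emph{down} onto the $n$ input points by a balanced surjection (preimage sizes differing by at most one), which preserves both sides of \eqref{eq:p-poincare-metric} up to a factor of $2$ and keeps the query count at $\tfrac32|V_{G_j}|=O(n)$. Replacing your blowup by this surjection, and citing the exponential-growth version of the expander construction, closes the argument.
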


\autoref{thm:u-approxer} is a straightforward corollary of
\autoref{cor:rrg-expander-p} with $p=1$.
Its proof is given in \autoref{sec:UA-proofs}, along with a proof of \autoref{thm:impossibility-DA} and \autoref{prop:reg-universal}.

The expander of~\cite{ADTT} mentioned in \autoref{sec:intro-RRG}
gives a different proof of \autoref{thm:u-approxer}. 
However, since the approximator of~\cite{ADTT} is constructed stochastically,
it can be considered a deterministic approximator only in a nonuniform model of computation and hence it is not a deterministic algorithm. 
In contrast, our approximator is based on the zigzag iteration, which can be implemented by a deterministic algorithm in a straightforward way.

\subsection{Distortion growth of embedding into 
\texorpdfstring{$\ell_2$}{l\_2} and \texorpdfstring{$\ell_1$}{l\_1}}
\label{sec:intro-growth-rate}

Motivated by applications of bi-Lipschitz embeddings of finite metric spaces to Computer Science, there is an interest in the growth rate of the distortion of embedding $k$-point subsets.
Let $c_Y(S)$ be the distortion of embedding $S$ in $Y$ as in~\eqref{eq:def:c_Y(X)}, and denote for infinite metric spaces $X$ and $Y$,
\begin{equation*}
D_k(X\embed Y)=\sup \bigl\{c_Y(S):\ {{S\subseteq X,\ |S|\le k}}
\bigr\}.
\end{equation*}
(In the literature, ``$D_k(X\embed Y)$'' is sometimes denoted ``$c_Y^k(X)$''.)

Fundamental results in the embedding theory of finite metric spaces can be described using this notation:
\begin{inparaenum}[(i)]
\item Bourgain's embedding theorem~\cite{Bourgain-embed} and its matching lower bound~\cite{LLR} amount to $D_k(\ell_\infty \embed \ell_1)\asymp D_k(\ell_\infty \embed \ell_2)\asymp \log k$.
\item Enflo's work~\cites{Enflo-type1}{Enflo2} together with~\cite{CNR} (improving on \cite{ALN}) gives
$  D_k(\ell_1\embed \ell_2)\asymp  \sqrt{\log k} $.
\item Bourgain's~\cite{Bourgain-superreflexivity} and Matou\v{s}ek's~\cite{Mat-trees} work on embedding of tree metrics gives $D_k(\text{Trees} \embed\ell_2)\asymp \sqrt{\log \log k}$.
\end{inparaenum}

Those results naturally lead to the study of the possible asymptotics of $D_k(X \embed Y)$ when either $X$ or $Y$ is a fixed ``classical" space, and the other is a general space.
In this paper, we address the possible asymptotic behaviors of $D_k(X \embed \ell_1)$ and $D_k(X\embed \ell_2)$ 
(the universality of $\ell_\infty$ means that $D_k(X\embed \ell_\infty)=1$.)
The behavior of $D_k(X\embed\ell_1)$ is an old folklore question.
The following theorem answers this question.
\begin{theorem} \label{thm:no-embedding-dichotomy-ell_1}
For any metric transform $\varphi:[0,\infty) \to [0,\infty)$ there exists a Hadamard space $X_\varphi$ for which $D_k(X_\varphi\embed \ell_1)\asymp \varphi(\log (k+1))/\varphi(1)$.
\end{theorem}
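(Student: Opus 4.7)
The plan is to obtain $X_\varphi$ by applying \autoref{thm:cat(0)-transform-closure} to a carefully chosen base Hadamard space $H$ that already saturates the Bourgain bound on its $k$-point subsets. Concretely, I would first construct a Hadamard space $H$ containing uniformly bi-Lipschitz copies of a sequence of constant-degree high-girth expander graphs $\{G_k\}_{k\geq 3}$ with $|V(G_k)|\asymp k$; such an $H$ is available by Gromov--Kondo type machinery (\cite{Gromov-rand-groups,Kondo}, and analogous constructions in \cite{MN-expanders2,ANN}), and it satisfies $\cc_{\ell_1}(V(G_k),d_H)\asymp\log k$ (the lower bound from the $\ell_1$-Poincar\'e inequality for expanders, the upper bound from Bourgain's theorem). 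Having fixed $H$, I set $X_\varphi$ to be a Hadamard space into which $(H,\varphi\circ d_H)$ embeds with distortion at most $\pi\sqrt{3/2}$, as guaranteed by \autoref{thm:cat(0)-transform-closure}.

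\textbf{Lower bound.} For $D_k(X_\varphi\embed\ell_1)\gtrsim \varphi(\log(k+1))/\varphi(1)$, I would test against the $k$-point subset $V(G_k)$, viewed inside $X_\varphi$ via the composition $V(G_k)\subset H\to(H,\varphi\circ d_H)\hookrightarrow X_\varphi$. Bi-Lipschitz equivalence of $d_H$ and $d_{G_k}$ on $V(G_k)$ together with concavity of $\varphi$ implies the induced metric is comparable to $\varphi\circ d_{G_k}$, which has edge value $\varphi(1)$ and average pairwise value $\asymp \varphi(\log k)$ by the concentration of distances in constant-degree expanders. Given an embedding $f$ into $\ell_1$ with distortion $D$, feeding these estimates into the $1$-Poincar\'e inequality of \autoref{def:X-PI} for $G_k$ with respect to $\ell_1$ causes the normalizing scale factor to cancel and yields $D\gtrsim\varphi(\log k)/\varphi(1)$.

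\textbf{Upper bound.} For $D_k(X_\varphi\embed\ell_1)\lesssim\varphi(\log(k+1))/\varphi(1)$, every $k$-point subset $S\subset X_\varphi$ is, up to the absolute distortion supplied by \autoref{thm:cat(0)-transform-closure}, of the form $(T,\varphi\circ d_H)$ for some $T\subset H$ with $|T|=k$. Two a~priori bounds compete: Bourgain's embedding of $(T,\varphi\circ d_H)$ gives distortion $O(\log k)$, while the trivial ``equidistant'' embedding (map $T$ to a scaled standard simplex in $\ell_1^{k}$) gives distortion $\varphi(\mathrm{diam}_{d_H}T)/\varphi(\min_{x\neq y\in T}d_H(x,y))$. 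When $T$ has $d_H$-diameter to minimum-distance ratio $O(\log k)$, the equidistant bound already yields the desired $\lesssim \varphi(\log k)/\varphi(1)$ by subadditivity of $\varphi$.

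\textbf{Main obstacle.} The hard part will be to control the upper bound uniformly across $k$-point subsets $T$ whose $d_H$-scale ratio is much larger than $\log k$: here Bourgain's $O(\log k)$ bound exceeds $\varphi(\log k)/\varphi(1)$ whenever $\varphi$ is strictly sublinear, and the equidistant bound is too crude. To close this gap I would exploit the Brudny\u{\i}--Krugljak representation $\varphi(t)\asymp\alpha_0 t+\sum_{j\geq 1}\min\{\alpha_j t,\beta_j\}$ (as used in the proof of \autoref{EA:prop:transform-in-producr-cones}) to decompose the target metric as an $\ell_1$-sum of a linear part and bounded truncations $\min\{\alpha_j d_H,\beta_j\}$, embed each truncated piece into $\ell_1$ using Bourgain while invoking that the truncation bounds the effective scale ratio by $\beta_j/(\alpha_j\min d_H)$, and concatenate the resulting $\ell_1$-embeddings. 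The key technical point will be balancing how many scales $j$ contribute nontrivially to a given $k$-point subset against the bound $\varphi(\log k)/\varphi(1)$ on the aggregated distortion, after optimizing over the Brudny\u{\i}--Krugljak coefficients $(\alpha_j,\beta_j)$ associated to the specific $\varphi$ at hand.
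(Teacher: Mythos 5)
Your lower bound is essentially the paper's (a $1$-Poincar\'e inequality for a high-girth expander, applied to a metric whose edge values are $\asymp\varphi(1)$ and whose typical values are $\asymp\varphi(\log k)$), but your construction and your upper bound contain a genuine gap, and the fix you sketch cannot close it. The paper does \emph{not} obtain $X_\varphi$ by applying \autoref{thm:cat(0)-transform-closure} to $(H,\varphi\circ d_H)$. It never puts $\varphi\circ d$ into the space at all: $X_\varphi$ is built by hand as an $\ell_1$-union of Euclidean cones over the simplicial complexes $\Sigma(G_n)$ rescaled by $2\pi/\varphi(\girth(G_n))$ (see~\eqref{eq:defX-O-product}), so the metric that each block actually realizes (up to $O(1)$ distortion) is the \emph{truncation} $\min\{d_{\Sigma(G_n)},\varphi(\girth(G_n))\}$. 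This changes nothing for the lower bound, since most pairs in $G_n$ are at distance $\ge\girth(G_n)/4$, but it is exactly what makes the upper bound tractable. In your version, $X_\varphi$ contains $(V(G_N),\varphi\circ d_{G_N})$ for all $N$, so a $k$-point subset can sit inside a graph with $N\ggg k$ and have unbounded aspect ratio; you must then beat Bourgain's $O(\log k)$ for every such subset, which is precisely the obstacle you flag.

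Your proposed remedy --- Brudny\u{\i}--Krugljak decomposition into truncations followed by Bourgain on each truncated piece --- does not produce the bound $\varphi(\log k)/\varphi(1)$. An $\ell_1$-concatenation of embeddings of the summands $\min\{\alpha_j d,\beta_j\}$ inherits (at best) the maximum of the per-piece distortions, and Bourgain gives $O(\log k)$ on a $k$-point truncated metric regardless of how the truncation limits the aspect ratio; truncation does not improve Bourgain. So the aggregate stays at $O(\log k)$, which exceeds $\varphi(\log k)/\varphi(1)$ for every strictly sublinear $\varphi$. The ingredient you are missing is structural, not a decomposition of $\varphi$: the graphs are chosen from the class $L_K^{n,d}$ of \autoref{def:L}, for which every subset $S$ with $|S|\le\sqrt{n}$ admits a distribution over spanning trees of an induced subgraph in which each edge survives in all but an $O(d_{\Sigma}(x,y)/\log n)$ fraction of the trees (property~\eqref{eq:almost-cover-tree}). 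This yields $c_1(S,\min\{d_{\Sigma(G)},\alpha\})\lesssim_d 1$ --- \emph{constant} distortion, see~\eqref{eq:ALNRRV-L1-embedding} --- which trivially beats $\varphi(\log(k+1))/\varphi(1)$ for small subsets; for subsets with $k\ge\sqrt{n}$ the diameter is $O(\log n)=O(\log k)$ and a one-coordinate-per-vertex embedding of the truncated metric already gives distortion $O(\varphi(\girth(G)))\lesssim\varphi(\log k)$. Without both the built-in truncation at scale $\varphi(\girth(G_n))$ and the constant-distortion $L_1$-embeddability of small subsets of these sparse high-girth graphs, the upper bound does not go through.
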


For $D_k(X\embed \ell_2)$ we have a similar but slightly weaker result.

\begin{theorem} \label{thm:no-embedding-dichotomy-ell_2}
For any metric transform $\varphi:[0,\infty) \to [0,\infty)$ satisfying $\varphi(t)\gtrsim \varphi(1) \sqrt{\log t}$ for any $t\geq 2$, there exists a Hadamard space $X_\varphi$ for which $D_k(X_\varphi\embed \ell_2)\asymp \varphi(\log (k+1))/\varphi(1)$.
\end{theorem}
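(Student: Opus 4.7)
The construction of $X_\varphi$ proceeds in two steps. Let $Y$ be a Hadamard space that admits a bi-Lipschitz embedding of a family $\{G_n\}_{n\in \mathbb N}$ of constant-degree expander graphs with $|V(G_n)|=n$, so that $D_k(Y\embed \ell_2)\asymp \log k$; such $Y$ exists via the Gromov--Kondo construction~\cite{Gromov-rand-groups,Kondo} or the expander-containing Hadamard spaces of~\cite{MN-expanders2}. Apply \autoref{thm:cat(0)-transform-closure} to $(Y,\varphi\circ d_Y)$ to obtain a Hadamard space $X_\varphi$ that contains an $O(1)$-bi-Lipschitz copy of $(Y,\varphi\circ d_Y)$. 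This is the candidate witnessing the theorem.

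The lower bound comes from the expander subset $G_k\subseteq Y$, viewed as a $k$-subset of $X_\varphi$ with the induced metric $\asymp \varphi\circ d_Y|_{G_k}$. Since $G_k$ satisfies the $2$-Poincar\'e inequality for $\ell_2$-valued maps with bounded constant (per \autoref{def:X-PI}), any $D$-distortion embedding $f\colon(G_k,\varphi\circ d_Y|_{G_k})\to \ell_2$ must obey
\[
\frac{\varphi(\log k)^2}{D^2}\;\lesssim\; \frac{1}{k^2}\sum_{u,v\in G_k}\|f(u)-f(v)\|^2 \;\lesssim\; \frac{1}{|E(G_k)|}\sum_{(u,v)\in E(G_k)}\|f(u)-f(v)\|^2\;\lesssim\; \varphi(1)^2,
\]
using that typical pairs in $G_k$ are at $d_Y$-distance $\asymp \log k$ while edges have $d_Y$-length $\asymp 1$. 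This gives $D\gtrsim \varphi(\log k)/\varphi(1)$, as desired.

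For the upper bound, the plan is to invoke the decomposition~\eqref{EA:eq:l2-transform-approximation-intro}, writing $\varphi(t)^2\asymp \alpha_0 t^2+\sum_{j\ge 1}\alpha_j\min(t,\tau_j)^2$ with $\tau_j=\sqrt{\beta_j/\alpha_j}$, and embed any $k$-subset $S\subseteq X_\varphi$ into $\ell_2$ as a weighted direct sum, one component per scale. The linear piece (present only when $\varphi$ is asymptotically linear, in which case $\varphi(\log k)/\varphi(1)\asymp \log k$ and Bourgain already suffices) is handled by an $\ell_2$-embedding of $(S,d_Y|_S)$. For each truncated piece $(S,\min(d_Y,\tau_j))$ the plan is to use the $L_1$-rigidity of truncations together with the isometric inclusion $\sqrt{L_1}\embed L_2$, producing a Hilbertian embedding of distortion controlled by $\varphi(\tau_j)/\varphi(1)$ rather than by the generic Bourgain bound $\log k$. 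The hypothesis $\varphi(t)\gtrsim \varphi(1)\sqrt{\log t}$, equivalently $\varphi(\log k)/\varphi(1)\gtrsim \sqrt{\log\log k}$, is what permits these per-scale contributions to telescope correctly: it guarantees that the $\sqrt{\log k}$-type factor coming from the $L_1\to L_2$ half-snowflake is dominated by $\varphi(\log k)/\varphi(1)$ at every scale.

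The main obstacle is precisely this upper bound. A naive direct-sum composition in which every scale is embedded via Bourgain returns only $\log k$, which is too weak whenever $\varphi$ is strictly sublinear; so the scale-by-scale weights must be chosen to match the coefficients $(\alpha_j,\beta_j)$ and to track the actual diameter-to-min-distance ratio of $S$ in $Y$. The most delicate step will be an arbitrary $k$-subset $S\subseteq X_\varphi$ whose $Y$-diameter significantly exceeds $\log k$, where many scales $\tau_j$ simultaneously contribute and the aggregate of the per-scale distortions must still be bounded by $\varphi(\log k)/\varphi(1)$; this bookkeeping is where the $\sqrt{\log t}$ hypothesis plays its critical quantitative role and is the step I expect to absorb most of the technical work.
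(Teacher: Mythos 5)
Your lower bound is essentially the paper's (a $p$-Poincar\'e inequality applied to the expander with the transformed/truncated metric), but your construction and your upper bound diverge from the paper's in a way that creates a genuine gap. The paper does \emph{not} place $(Y,\varphi\circ d_Y)$ for a single universal expander-containing space $Y$ inside $X_\varphi$; it takes an $\ell_1$-union over $n$ of Euclidean cones over $\Sigma(G_n)$ rescaled by $2\pi/\varphi(\girth(G_n))$, so that each graph contributes only a \emph{single truncation of its metric at the size-matched scale} $\varphi(\girth(G_n))\asymp\varphi(\log n)$. Your space instead contains $(G_n,\varphi\circ d_{G_n})$ for every $n$ simultaneously, so the upper bound must handle a $k$-point subset $S$ of an $n$-vertex expander with $n\gg k$, where the relevant truncation scales $\tau_j$ run all the way up to $\log n\gg\log k$.

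This is exactly where your plan breaks. For each truncated piece $(S,\min\{d,\tau_j\})$ you propose to use ``$L_1$-rigidity of truncations together with $\sqrt{L_1}\hookrightarrow L_2$.'' But $L_1$-rigidity of truncations only transports an \emph{existing} $O(1)$-distortion $L_1$-embedding of $(S,d)$ to its truncation; it does not produce one, and $k$-point subsets of expanders do not embed in $L_1$ with $O(1)$ distortion --- that is precisely the content of your own lower bound. The paper circumvents this with a non-generic structural input (\autoref{def:L}, \autoref{lem:L_K^{n,d}}, \autoref{lem:subset-L2-embedding}): any subset of size at most $\sqrt{n}$ of the graphs $L_K^{n,d}$ sits inside a $(1+O(1/\log n))$-sparse induced subgraph admitting a distribution over spanning trees that covers each edge with probability $1-O(d_{\Sigma(H)}(x,y)/\log n)$ as in \eqref{eq:almost-cover-tree}; this yields $c_1(S,\min\{d,\alpha\})\lesssim 1$ and, via Matou\v{s}ek's tree embedding, $c_2(S,\min\{d,\alpha\})\lesssim\max\{\sqrt{\log\log|S|},\,\alpha/\sqrt{\log n}\}$, after which the hypothesis $\varphi(t)\gtrsim\varphi(1)\sqrt{\log t}$ closes the estimate (subsets with $k\ge\sqrt n$ are handled by a trivial coordinate embedding of distortion $\lesssim\varphi(\log n)\lesssim\varphi(\log k)$). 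Without this local tree-approximability --- or some substitute --- your per-scale distortions cannot be pushed below Bourgain's $\log k$ for the large scales, and since a Pythagorean direct sum of per-scale embeddings incurs distortion $\max_j D_j$, the aggregate bound $\varphi(\log k)/\varphi(1)$ is unreachable for strictly sublinear $\varphi$. (A secondary point: taking $X_\varphi$ to be the ambient CAT(0) space from \autoref{thm:cat(0)-transform-closure} also forces you to control arbitrary $k$-subsets of the full product of cones, not only those inside the copy of $(Y,\varphi\circ d_Y)$.)
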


To our knowledge, the only asymptotics of $D_k(X\embed \ell_2)$ previously known when $X$ is a geodesic space were $\log^\theta k$, for $\theta\in\{0,\tfrac12,1\}$.
The above theorem implies, in particular, that for every $\theta\in[0,1]$ there exists a Hadamard space $X_\theta$ for which $D_k(X_\theta,\ell_2)\asymp \log^\theta (k+1)$.
\autoref{thm:no-embedding-dichotomy-ell_2} is stated for embedding in $\ell_2$, but its proof is easily generalized to
\( D_k(X_\varphi\embed \ell_p)\asymp_p \varphi(\log (k+1))/\varphi(1)\) for any fixed $p\in[1,\infty)$, 
see \autoref{rem:D(X,ell_p)}.
\autoref{thm:no-embedding-dichotomy-ell_2} also raises natural questions about the asymptotics of $D_k(X\embed \ell_2)$ when $X$ is a Banach space.
These questions are open beyond $X\in \{ \ell_1, \ell_2, \ell_\infty\}$. For $p\in(2,\infty)$, it was recently proved
in~\cite{NR25} that $D_k(\ell_p\embed \ell_2)=O(\sqrt{\log n})$.
We ask:

\begin{question}
What is $D_k(\ell_p\embed \ell_2)$ for $p\in(1,2)\cup (2,\infty)$?
Is $D_k(\ell_p \embed \ell_2)\asymp (\log k)^{\frac1p-\frac12}$, when $p\in[1,2]$?
\end{question}

\begin{question}
What are the possible asymptotics of $D_k(X\embed \ell_2)$ when $X$ is a Banach space?
\end{question}

For the interested reader, we mention that complementary problems have received more attention in the literature:
\begin{inparaenum}[(i)]
\item $D_k(\ell_\infty\embed Y)$ was studied in \cites{Bourgain-embed}{LLR}{MN-cotype}.
\item $D_k(\ell_1\embed Y)$ was studied in \cites{BMW}{Enflo-type1}{Enflo2}{ALN}{CNR}.
\item $D_k(\mathrm{Trees}\embed Y)$ was studied in \cites{Bourgain-superreflexivity}{Mat-trees}{LNP-markov-convexity}{MN-markov-convexity}.
\item $D_k(\mathbb R\embed Y)$ was studied in~\cites{Mat-Ramsey}{MN-markov-convexity}{MN-note-dichotomies}.
\item When $Y$ is an infinite-dimensional normed space, it follows from the Dvoretzky theorem that $D_k(\ell_2 \embed Y)=1$.
\end{inparaenum}
For the current understanding of these quantities, the reader is directed to the cited literature
and the mini-survey~\cite{Mendel-dichotomies}.
However, much is still not understood. 
See, for example,~\cite[Question~5]{MN-note-dichotomies}.

\section{Euclidean cones, metric transforms, and embeddings}
\label{sec:transform-cone}

Recall \autoref{def:metric-transform} of the metric transforms. 
Truncations $\varphi(t)=\min\{\tau,t\}$ are a special type of metric transforms, but are sufficiently expressive to approximate general metric transforms:
It was proved in~\cite{BK} that every metric transform is equivalent to a conic combination of truncations and a linear function.
Here we state a special case of \cite[Proposition~3.2.6]{BK}.
\begin{proposition} \label{prop:transform-approx}
    Let $\varphi:[0,\infty) \to [0,\infty)$ be a metric transform.
    Then there exist
     $\alpha_n,\beta_n\ge0$ for $n\ge 0$
     such that
    the function
    \(
\hat \varphi(t)=\alpha_0 t+ \sum_{n=1}^\infty \min\{\alpha_n t , \beta_n\}
    \)
    satisfies $\varphi(t)/2\le \hat \varphi(t)\le 3\varphi(t)$, for every $t\in[0,\infty)$.
    Furthermore, $\lim_{t\to \infty} \frac{\varphi(t)}{t}>0$ if and only if\/ $\alpha_0>0$.
\end{proposition}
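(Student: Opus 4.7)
The plan is to first obtain a continuous integral representation of $\varphi$ as a superposition of ``pointed'' truncations plus a linear part, and then quantize that representation using a dyadic partition of the parameter space.

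Since $\varphi$ is concave, nondecreasing, and vanishes at $0$, its right-derivative $\varphi'_+$ is nonnegative and nonincreasing on $(0,\infty)$, with a well-defined limit $\alpha_0\eqdef\lim_{s\to\infty}\varphi'_+(s)\ge 0$ (this will be the coefficient of the linear part). Defining the Borel measure $\mu$ on $(0,\infty)$ by $\mu((s,\infty))\eqdef\varphi'_+(s)-\alpha_0$ for every $s>0$, and applying Fubini to $\varphi(t)=\int_0^t\varphi'_+(s)\,ds=\int_0^t\bigl(\alpha_0+\mu((s,\infty))\bigr)\,ds$, yields the integral representation
\[
\varphi(t)=\alpha_0 t+\int_{(0,\infty)}\min\{t,r\}\,d\mu(r),
\]
in which the continuous ``truncations'' $r\mapsto\min\{t,r\}$ already appear as the kernel.

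The next step is a dyadic discretization. I would partition $(0,\infty)=\bigsqcup_{k\in\Z}(2^{k-1},2^k]$ and set $m_k\eqdef\mu((2^{k-1},2^k])$. For $r\in(2^{k-1},2^k]$ the kernel $\min\{t,r\}$ is sandwiched, uniformly in $t\ge 0$, between $\min\{t,2^{k-1}\}$ and $\min\{t,2^k\}$, and these two bounds differ by a factor of at most $2$ (a one-line case analysis on whether $t$ lies below $2^{k-1}$, between $2^{k-1}$ and $2^k$, or above $2^k$). Integrating this sandwich against $\mu$ restricted to each dyadic block, and using the identity $m_k\min\{t,2^{k-1}\}=\min\{m_k t,\,2^{k-1}m_k\}$, gives, upon enumerating $\{k\in\Z:m_k>0\}$ as $k_1,k_2,\ldots$ and setting $\alpha_n\eqdef m_{k_n}$, $\beta_n\eqdef 2^{k_n-1}m_{k_n}$, the clean estimate $\hat\varphi(t)\le\varphi(t)\le 2\hat\varphi(t)$, where $\hat\varphi(t)=\alpha_0 t+\sum_{n\ge 1}\min\{\alpha_n t,\beta_n\}$. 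This implies the required $\varphi(t)/2\le\hat\varphi(t)\le 3\varphi(t)$.

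The final ``furthermore'' follows at once from the representation: for $t\ge 1$, $\varphi(t)/t=\alpha_0+\int_{(0,\infty)}\min\{1,r/t\}\,d\mu(r)$, and the integrand tends pointwise to $0$ as $t\to\infty$ while being dominated by the $\mu$-integrable function $r\mapsto\min\{1,r\}$ (whose integrability is forced by $\varphi(1)<\infty$); dominated convergence then yields $\lim_{t\to\infty}\varphi(t)/t=\alpha_0$. The only genuine obstacle is bookkeeping around the convergence of $\sum_n\min\{\alpha_n t,\beta_n\}$ for each fixed $t$, which must coincide with the finite quantity $\int\min\{t,r\}\,d\mu(r)=\varphi(t)-\alpha_0 t<\infty$. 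Everything else is elementary calculus on concave functions, combined with the dyadic sandwiching observation.
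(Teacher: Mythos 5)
Your argument is correct, and it is worth noting that the paper does not actually prove this proposition: it is quoted as a special case of Brudny\u{\i}--Krugljak \cite{BK}, so what you have written is a self-contained proof of a cited black box. The route you take --- the representation $\varphi(t)=\alpha_0 t+\int_{(0,\infty)}\min\{t,r\}\,d\mu(r)$, where $\mu$ is the measure whose tail function is $\varphi'_+-\alpha_0$, followed by a dyadic discretization in the $r$-variable --- is the standard one, and the bookkeeping is sound: the sandwich $\min\{t,2^{k-1}\}\le\min\{t,r\}\le 2\min\{t,2^{k-1}\}$ on each dyadic block yields $\hat\varphi\le\varphi\le 2\hat\varphi$, which is even slightly sharper than the constants $(\tfrac12,3)$ demanded, and the dominated convergence argument for the ``furthermore'' (using $\int\min\{1,r\}\,d\mu(r)=\varphi(1)-\alpha_0<\infty$ as the dominating bound) is clean. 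The proof in \cite{BK} discretizes adaptively at points where $\varphi$ or its derivative changes by a fixed factor rather than on a fixed dyadic grid; your version is, if anything, simpler and makes the identification $\alpha_0=\lim_{t\to\infty}\varphi(t)/t$ transparent.

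One caveat deserves a sentence. The step $\varphi(t)=\int_0^t\varphi'_+(s)\,ds$ requires $\varphi$ to be continuous at $0$, which does not follow from ``nondecreasing, concave, $\varphi(0)=0$'': the function $\varphi=\mathbf{1}_{(0,\infty)}$ satisfies \autoref{def:metric-transform}. For such a $\varphi$ the proposition itself is false, since any admissible $\hat\varphi$ has $\hat\varphi(0^+)=0$ by dominated convergence (the terms $\min\{\alpha_n t,\beta_n\}$ are dominated by $\min\{\alpha_n,\beta_n\}$ for $t\le 1$, and these sum to at most $3\varphi(1)<\infty$). So $\varphi(0^+)=0$ is an implicit hypothesis of the statement rather than a gap in your argument, but you should record it where you invoke the fundamental theorem of calculus.
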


The case $q=2$ of the following proposition is proved, for example, in~\cite[Remark~5.4]{MN-quotients}.
The same proof extends to $q\geq 1$ in a straightforward way.

\begin{proposition} \label{prop:phi(sqrt(t))^2}
For any $q\in[1,\infty)$ and any metric transform $\varphi(t)$, $\omega(t)=\varphi\bigl({t}^{1/q}\,\bigr)^q$
is also a metric transform.
\end{proposition}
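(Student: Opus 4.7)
The plan is to verify the three defining properties of a metric transform for $\omega(t) = \varphi(t^{1/q})^q$ in turn: $\omega(0)=0$, $\omega$ nondecreasing, and $\omega$ concave. The first two are immediate, since $t\mapsto t^{1/q}$ and $x\mapsto x^q$ are nondecreasing on $[0,\infty)$ and vanish at $0$, so they compose transparently with the corresponding properties of $\varphi$. All the work is in concavity.

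My approach to concavity is to compute the one-sided derivative of $\omega$ and show it is nonincreasing, using the fact that the right derivative $\varphi'_+$ of the concave function $\varphi$ exists everywhere on $(0,\infty)$ and is nonincreasing. For $t>0$, setting $u=t^{1/q}$, the chain rule (applicable to one-sided derivatives because the inner function $t\mapsto t^{1/q}$ is smooth with positive derivative) yields
\[
\omega'_+(t) \;=\; q\varphi(u)^{q-1}\,\varphi'_+(u)\cdot\tfrac{1}{q}u^{1-q} \;=\; \varphi'_+(u)\left(\frac{\varphi(u)}{u}\right)^{q-1}.
\]
The key observation is that this is a product of two nonnegative and nonincreasing functions of $u$: $\varphi'_+$ is nonincreasing by concavity of $\varphi$, while $\varphi(u)/u$ is nonincreasing on $(0,\infty)$ because $\varphi$ is concave with $\varphi(0)=0$ (the secant slope from the origin is nonincreasing for a concave function vanishing there). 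Since $q-1\ge 0$, raising $\varphi(u)/u$ to the power $q-1$ preserves this monotonicity. Hence $\omega'_+$ is nonincreasing on $(0,\infty)$, so $\omega$ is concave there. To extend concavity to the endpoint $0$, it suffices to observe $\omega(t)/t=(\varphi(u)/u)^q$ is also nonincreasing, which supplies the remaining concavity inequality at $s=0$.

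The main obstacle, and the reason a direct approach fails, is that the natural attempt — use concavity of $u\mapsto u^{1/q}$ to bound the $L^q$ mean by the arithmetic mean, then apply concavity of $\varphi$ — produces an inequality of the form $\varphi((\lambda s+(1-\lambda)t)^{1/q})\ge \lambda\varphi(s^{1/q})+(1-\lambda)\varphi(t^{1/q})$, and then raising to the $q$-th power moves the inequality in the \emph{wrong} direction, since $x\mapsto x^q$ is convex for $q\ge 1$. The derivative-monotonicity argument sidesteps this by noting that concavity of $\varphi$ yields \emph{two} distinct monotonicities, of $\varphi'_+(u)$ and of $\varphi(u)/u$, and these combine multiplicatively through the exponent $q-1\ge 0$ in exactly the right way. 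For rigor, if one wishes to avoid one-sided derivatives, it suffices to approximate $\varphi$ by smooth concave metric transforms (for which $\omega$ is then concave by the above) and pass to the pointwise limit, using that pointwise limits of concave functions are concave.
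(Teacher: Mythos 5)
Your proof is correct and is essentially the argument that the paper invokes by reference (the $q=2$ case in \cite[Remark~5.4]{MN-quotients} and its straightforward extension): with $u=t^{1/q}$ one computes $\omega'_+(t)=\varphi'_+(u)\bigl(\varphi(u)/u\bigr)^{q-1}$ and observes that this is a product of nonnegative, nonincreasing functions of $u$, whence $\omega$ is concave on $(0,\infty)$. Your two auxiliary points --- handling the endpoint $0$ via the monotonicity of $\omega(t)/t$, and the explanation of why the naive ``compose two concavities'' attempt fails because $x\mapsto x^q$ is convex --- are both accurate.
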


Combining the above, we deduce the following useful representation for metric transforms.

\begin{corollary}
\label{cor:l2-transform-approximation}
Let $\varphi$ be a metric transform and $q\in[1,\infty)$.
Then, there exist $\alpha_n,\beta_n\ge0$ for $n\ge 0$
such that we have
\begin{equation}
\label{eq:l2-transform-approximation}
\forall \ t\in[0,\infty),\qquad     \tfrac13 \Bigl(\alpha_0 t^q +\sum_{n\ge 1} \min\{\alpha_n t^q,\beta_n\} \Bigr)
\le
     \varphi(t)^q\
     \le
     2 \Bigl(\alpha_0 t^q +\sum_{n\ge 1} \min\{\alpha_n t^q,\beta_n\} \Bigr).
\end{equation}
\end{corollary}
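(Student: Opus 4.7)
The plan is to deduce this corollary directly from Proposition~\ref{prop:transform-approx} and Proposition~\ref{prop:phi(sqrt(t))^2} via a simple change of variable. The guiding observation is that the exponent $q$ appearing on both sides of \eqref{eq:l2-transform-approximation} can be absorbed into a new metric transform, to which the additive-truncation representation of~\cite{BK} (Proposition~\ref{prop:transform-approx}) applies.

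First I would introduce the auxiliary function
\[
\omega(t)\eqdef \varphi\bigl(t^{1/q}\bigr)^q,\qquad t\in[0,\infty).
\]
By Proposition~\ref{prop:phi(sqrt(t))^2}, $\omega$ is itself a metric transform. Applying Proposition~\ref{prop:transform-approx} to $\omega$ yields nonnegative coefficients $\alpha_0,\{\alpha_n\}_{n\ge1},\{\beta_n\}_{n\ge1}$ such that the function
\[
\hat{\omega}(s)\eqdef \alpha_0 s+\sum_{n\ge1}\min\{\alpha_n s,\beta_n\}
\]
satisfies $\tfrac12\omega(s)\le\hat{\omega}(s)\le 3\omega(s)$ for every $s\ge 0$.

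The final step is to substitute $s=t^q$. By construction $\omega(t^q)=\varphi(t)^q$, so the two-sided inequality $\tfrac12\omega(s)\le\hat\omega(s)\le 3\omega(s)$ becomes
\[
\tfrac12\varphi(t)^q\ \le\ \alpha_0 t^q+\sum_{n\ge1}\min\{\alpha_n t^q,\beta_n\}\ \le\ 3\varphi(t)^q,
\]
which is equivalent to \eqref{eq:l2-transform-approximation} after rearranging (the lower bound on $\varphi(t)^q$ comes from dividing the right half by $3$, and the upper bound from multiplying the left half by $2$).

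There is essentially no obstacle here: the entire content of the corollary is the observation that Proposition~\ref{prop:phi(sqrt(t))^2} lets one transfer Brudny\u{\i}--Krugljak's linear-in-$t$ representation of metric transforms (Proposition~\ref{prop:transform-approx}) to a representation that is linear in $t^q$. The only thing worth verifying carefully is the bookkeeping of the multiplicative constants $\tfrac13$ and $2$, which follows from the $\tfrac12$ and $3$ of Proposition~\ref{prop:transform-approx} by simple division.
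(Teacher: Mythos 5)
Your proposal is correct and is exactly the argument the paper intends: define $\omega(t)=\varphi(t^{1/q})^q$, invoke \autoref{prop:phi(sqrt(t))^2} to see it is a metric transform, apply \autoref{prop:transform-approx} to $\omega$, and substitute $s=t^q$; the constants $\tfrac13$ and $2$ fall out of the $\tfrac12$ and $3$ precisely as you describe. No gaps.
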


\emph{Euclidean cones} is the basic operation in the proof of \autoref{thm:cat(0)-transform-closure}: 
It is closed for CAT(0) spaces (as we shall see in \autoref{sec:CAT(k)}) and approximates the truncation operation, as we shall see next.%
\footnote{Beyond truncation, the structure of Euclidean cones will be useful in the proofs of \autoref{thm:extrapolation-cat0}, \autoref{lem:gamma-2-union}, and the embedding of high-girth graphs in CAT(0) spaces~\cites{Gromov-rand-groups,Kondo}, which is used in~\autoref{sec:extrapolation-rrg} and in~\autoref{sec:no-dichotomy}.}

\begin{definition}[Euclidean cones]
\label{def:cone}
    Let $(X,d_X)$ be a metric space. The Euclidean cone over $X$, denoted $\cone(X,d_X)=\cone(X)$, is
defined to be the completion of $(0,\infty)\times X$ under the
metric
\begin{equation}\label{eq:def cone}
d_{\cone(X)}\big((s,x),(t,y)\big)^2= {s^2+t^2-2st\cos\left(\min\{\pi,d_X(x,y)\}\right)}.
\end{equation}
Observe that~\eqref{eq:def cone} can be rewritten as follows
\begin{equation}\label{eq:1-cos identity}
d_{\cone(X)}\big((s,x),(t,y)\big)^2 = {(s-t)^2+2st\left(1-\cos\left(\min\{\pi,d_X(x,y)\}\right)\right)}.
\end{equation}
For $y=(s,x)\in \cone(X)$, $s$ is called the \emph{absolute
value} of $y$ and is denoted $s=|y|$. $x$ is called the
\emph{argument} of $y$ and is denoted $x=\arg(y)$.
\end{definition}

This definition is due to~\cite{Ber-Cone}; see also the exposition
in~\cite{ABN86} and~\cite[Chapter~I.5]{BH99}.
In particular, the fact
that $d_{\cone(X)}$ satisfies the triangle inequality is proved
in~\cite[Proposition~I.5.9]{BH99}. When $X$ itself
is complete, the completion of $(0,\infty)\times X$ according to the metric
defined in~\eqref{eq:def cone} amounts to extending the
definition~\eqref{eq:def cone} to $[0,\infty)\times X$ and
taking the quotient of $\{0\}\times X$.
That is, adding an
additional point to $(0,\infty)\times X$ as the ``cusp" of the
cone.
The results discussed below pass immediately from a
space to its completion, so we will mostly ignore this one-point
completion in the ensuing arguments.

\begin{proposition} \label{prop:cone-truncation}
Let $(X,d_X)$ be an arbitrary metric space.
Then
\begin{equation} \label{eq:truncation in cone}
c_{\cone(X)}(X,\min\{\pi,d_X\})\le \pi/2.
\end{equation}
\end{proposition}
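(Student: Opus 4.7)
My plan is to use the natural ``radius-one'' embedding $f\colon X\to \cone(X)$ defined by $f(x)\eqdef (1,x)$, and to reduce the required distortion estimate to an elementary one-variable inequality. First I would apply the form~\eqref{eq:1-cos identity} of the cone metric: for any $x,y\in X$, setting $\theta\eqdef \min\{\pi,d_X(x,y)\}\in[0,\pi]$, one has
\[
d_{\cone(X)}\bigl((1,x),(1,y)\bigr)^2 \;=\; 2\bigl(1-\cos\theta\bigr) \;=\; 4\sin^2(\theta/2),
\]
by the half-angle identity; taking square roots yields $d_{\cone(X)}(f(x),f(y))=2\sin(\theta/2)$.

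Next I would take the scaling factor $\pi/2$ in \autoref{def:biLipschitz-embedding} and verify the two inequalities
\[
\theta \;\le\; \tfrac{\pi}{2}\cdot 2\sin(\theta/2) \;\le\; \tfrac{\pi}{2}\cdot \theta
\qquad \text{for every } \theta\in[0,\pi].
\]
The upper inequality is immediate from $\sin u\le u$ applied with $u=\theta/2$. The lower inequality is Jordan's inequality $\sin u\ge (2/\pi)u$ on $[0,\pi/2]$, applied again with $u=\theta/2$; equivalently, the map $u\mapsto u/\sin u$ is nondecreasing on $(0,\pi/2]$ and attains its maximum $\pi/2$ at $u=\pi/2$, so the extremal case is $d_X(x,y)\ge \pi$, where $\theta=\pi$ and $2\sin(\theta/2)=2$, producing equality $\pi = \tfrac{\pi}{2}\cdot 2$. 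Together these give distortion at most $\pi/2$ with scaling $s=\pi/2$, as required.

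There is no substantive obstacle here; the whole argument is routine one-variable calculus once the embedding $x\mapsto(1,x)$ has been chosen. The only thing to keep track of is that the truncation at $\pi$ that appears inside the definition~\eqref{eq:def cone} of the cone metric matches exactly the truncation $\min\{\pi,d_X\}$ on the domain side, which is what allows the same $\theta$ to parametrize both distances. The sharpness of the constant $\pi/2$ (attained in the limit $\theta\uparrow \pi$) is not essential for the downstream use of this lemma through \autoref{cor:l2-transform-approximation}---what matters is only that the distortion is a universal constant independent of $X$, so that metric transforms of $\CAT(0)$ spaces can be absorbed coordinatewise into a Pythagorean product of Euclidean cones.
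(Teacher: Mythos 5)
Your proof is correct and is essentially the paper's argument: the same embedding $x\mapsto(1,x)$, with the same elementary calculus bounds (the paper phrases them as $1-\frac{4}{\pi^2}\cdot\frac{\theta^2}{2}\ge\cos\theta\ge 1-\frac{\theta^2}{2}$, which are exactly your Jordan-type inequalities $\frac{2}{\pi}u\le\sin u\le u$ after the half-angle substitution). No differences worth noting.
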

In other words, the Euclidean cone over a space $X$ contains bi-Lipschitzly a truncation of $X$.
\begin{proof}
Consider the embedding $f:X\to \cone(X)$ given by
$f(x)=(1,x)$.
Then, by~\eqref{eq:1-cos identity},
\[
d_{\cone(X)}(f(x),f(y))=\sqrt{2(1-\cos(\min\{\pi, d_X(x,y)\})} .
\]
By basic calculus, for $\theta\in[0,\pi]$ we have
\[
1- \frac{4}{\pi^2} \cdot \frac{\theta^2}{2} \ge \cos \theta \ge 1-\frac{\theta^2}{2}.
\]
So, for $\theta= \min\{\pi, d_X(x,y)\}$ we have
\begin{equation}\label{eq:cone-approx-homogene}
\frac{2}{\pi}\cdot \theta \le \sqrt{2(1-\cos \theta)}\le \theta.
\qedhere
\end{equation}
\end{proof}

\begin{lemma}[{\cite{MN-expanders2}}]\label{lem:cone(Lp) in Lp}
For any metric space $X$ and $p\in\{1,2\}$,
\begin{equation} \label{eq:cone(Lp) in Lp}
c_p(\cone(X))\lesssim c_p(X).
\end{equation}
\end{lemma}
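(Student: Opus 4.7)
The plan is to reduce the embedding of $\cone(X)$ into $L_p$ to first embedding the truncated base $(X,\min\{\pi,d_X\})$ into a ``normalized'' subset of $L_p$---the unit sphere when $p=2$, the probability simplex when $p=1$---and then extending radially via $F(s,x)=sh(x)$ (sending the cusp to $0$). The initial truncation step exploits the rigidity of $L_p$ under metric transforms, cited in the introduction as \cite[Remark~5.4]{MN-quotients} for $p=2$ and \cite[Remark~5.5]{MN-expanders2} for $p=1$: composing an optimal embedding of $X$ into $L_p$ with the transform $t\mapsto\min\{\pi,t\}$ produces $\tilde h:X\to L_p$ satisfying $\|\tilde h(x)-\tilde h(y)\|_p\asymp\min\{\pi,d_X(x,y)\}$ at distortion $O(c_p(X))$; after translating a basepoint to $0$, we may assume $\sup_x\|\tilde h(x)\|_p\le\pi$.

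Next, I would spherize $\tilde h$. For $p=2$, Schoenberg's principle says $e^{-\lambda\|v\|_2^2}$ is positive definite on $L_2$, so the pullback kernel $K(x,y)=e^{-\|\tilde h(x)-\tilde h(y)\|_2^2/(2\pi^2)}$ is positive definite on $X$, and its Kolmogorov factorization yields $h:X\to L_2$ with $\|h(x)\|_2=1$ and (via $1-e^{-t}\asymp t$ on $[0,\tfrac12]$) with $\|h(x)-h(y)\|_2\asymp\|\tilde h(x)-\tilde h(y)\|_2/\pi$, preserving distortion $O(c_2(X))$. For $p=1$, I decompose $\tilde h=\tilde h_+-\tilde h_-$ pointwise and adjoin a balancing mass $(2\pi-\|\tilde h(x)\|_1)\delta_*$, normalizing so that $h(x)\in L_1(\Omega\sqcup\Omega\sqcup\{*\})$ is a probability density. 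The elementary pointwise identity $\|(a)_+-(b)_+\|_1+\|(a)_--(b)_-\|_1=\|a-b\|_1$ ensures that this augmentation preserves pairwise distances up to a factor of $2$, so $\|h(x)-h(y)\|_1\asymp\min\{\pi,d_X(x,y)\}$ with distortion $O(c_1(X))$.

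With $F(s,x)=sh(x)$, the core of the argument rests on two identities. For $p=2$, direct expansion using $\|h(\cdot)\|_2=1$ gives
\[
\|sh(x)-th(y)\|_2^2=(s-t)^2+st\|h(x)-h(y)\|_2^2,
\]
and for $p=1$, when $h(x),h(y)$ are probability densities, the pointwise inequality $(th(y)-sh(x))_+\ge\min(s,t)(h(y)-h(x))_+$ for $s\ge t$, combined with $\|sh(x)-th(y)\|_1\ge|s-t|$, yields
\[
\|sh(x)-th(y)\|_1=|s-t|+\min(s,t)\|h(x)-h(y)\|_1.
\]
Pairing either identity with the standard estimate $d_{\cone(X)}((s,x),(t,y))\asymp|s-t|+\sqrt{st}\cdot\min\{\pi,d_X(x,y)\}$ (which follows from $2\sin(\theta/2)\asymp\theta$ on $[0,\pi]$, exactly as in \autoref{prop:cone-truncation}), the $p=2$ case matches the two $\asymp$ expressions term by term.

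The main obstacle is the $p=1$ case, since $L_1$ lacks inner-product structure and the identity above produces a $\min(s,t)$ factor where the cone distance has $\sqrt{st}$. This is resolved by a short dichotomy: if $\max(s,t)\le 2\min(s,t)$ then $\sqrt{st}\asymp\min(s,t)$ and the terms match; otherwise $|s-t|\ge\min(s,t)$ forces $\sqrt{st}\le\sqrt{2}\,|s-t|$, so the $|s-t|$ coordinate alone dominates both expressions. Either way $\|F(s,x)-F(t,y)\|_p\asymp d_{\cone(X)}((s,x),(t,y))$ with overall distortion $O(c_p(X))$, giving $c_p(\cone(X))\lesssim c_p(X)$ as claimed.
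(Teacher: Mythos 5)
The paper does not actually prove this lemma---it defers to \cite{MN-expanders2} (Corollary~3.8 for $p=1$, Proposition~5.6 for $p=2$)---so there is no in-paper argument to compare against. Your strategy (truncate $d_X$ at $\pi$ via the rigidity of $L_p$ under metric transforms, renormalize the image onto the unit sphere of $L_2$, resp.\ the probability simplex of $L_1$, then extend radially by $F(s,x)=sh(x)$ with the cusp going to $0$) is the natural route and is close in spirit to the cited proofs. The $p=2$ case is complete, modulo the harmless point that the exponent in your Gaussian kernel ranges over $[0,2]$ rather than $[0,\tfrac12]$ when $\sup_x\|\tilde h(x)\|_2\le\pi$; this changes nothing since $1-e^{-t}\asymp t$ on any bounded interval.

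The $p=1$ case, however, rests on a false identity. The pointwise inequality $(th(y)-sh(x))_+\ge \min(s,t)(h(y)-h(x))_+$ for $s\ge t$ is reversed: writing $u=h(x)\ge 0$, $v=h(y)$, one has $tv-su=t(v-u)-(s-t)u\le t(v-u)$, hence $(tv-su)_+\le t(v-u)_+$. Accordingly, the asserted identity $\|sh(x)-th(y)\|_1=|s-t|+\min(s,t)\|h(x)-h(y)\|_1$ fails: for $h(x)=(\tfrac12,\tfrac12)$, $h(y)=(1,0)$, $s=2$, $t=1$, the left-hand side equals $1$ while the right-hand side equals $2$. Fortunately, what your argument actually needs is only the two-sided equivalence $\|su-tv\|_1\asymp |s-t|+\min(s,t)\|u-v\|_1$ for probability densities $u,v$, and this holds by the triangle inequality alone: for $s\ge t$,
\[
\|su-tv\|_1\le \|su-tu\|_1+\|tu-tv\|_1=|s-t|+t\|u-v\|_1,
\]
while $\|su-tv\|_1\ge \bigl|\textstyle\int(su-tv)\bigr|=|s-t|$ and $t\|u-v\|_1\le \|tu-su\|_1+\|su-tv\|_1\le 2\|su-tv\|_1$, so $|s-t|+t\|u-v\|_1\le 3\|su-tv\|_1$. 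With the identity replaced by this equivalence, your $\min(s,t)$-versus-$\sqrt{st}$ dichotomy (which correctly exploits the boundedness $\|h(x)-h(y)\|_1\le 2$ and $\min\{\pi,d_X(x,y)\}\le\pi$) goes through unchanged, and the proof closes with distortion $O(c_1(X))$.
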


The case $p=1$ is proved in
\cite[Corollary~3.8]{MN-expanders2}, while
the case $p=2$ is proved in~\cite[Proposition~5.6]{MN-expanders2}.

\begin{remark}
The statement of \autoref{lem:cone(Lp) in Lp} is false for $p\in(2,\infty)$, and its status is open for $p\in(1,2)$.
This follows from \cite[Lemma~5.9, and Remark~5.12]{MN-quotients} and an equivalence between cone embeddability and truncation embeddability, which we omit here. 
\end{remark}

\begin{corollary} \label{cor:blow-out-cone-in-cone}
Let $(X,d_X)$ be an arbitrary metric space, $p\in\{1,2\}$ and $\delta \ge 1$.
Then
\begin{equation} \label{eq:blow-out-cone-in-cone}
c_p(\cone(X,\delta d_X))\lesssim c_p(\cone(X,d_X)).
\end{equation}
\end{corollary}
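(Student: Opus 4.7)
The plan is to exhibit a bi-Lipschitz embedding of $\cone(X,\delta d_X)$ into the iterated cone $\cone(\cone(X,d_X))$ with distortion bounded by a universal constant, and then apply \autoref{lem:cone(Lp) in Lp} to the ``inner'' space $\cone(X,d_X)$. Concretely, I would define
\[
\phi:\cone(X,\delta d_X)\longrightarrow \cone\bigl(\cone(X,d_X)\bigr),\qquad \phi(s,x)\eqdef \bigl(s,(\delta,x)\bigr).
\]
The source distance is $d_1((s,x),(t,y))^2=s^2+t^2-2st\cos\theta_1$ with $\theta_1=\min\{\pi,\delta u\}$, where $u\eqdef\min\{\pi,d_X(x,y)\}$. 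On the target side, \eqref{eq:1-cos identity} gives $d_{\cone(X,d_X)}((\delta,x),(\delta,y))=2\delta\sin(u/2)$, so the image distance is $d_2(\phi(s,x),\phi(t,y))^2=s^2+t^2-2st\cos\theta_2$ with $\theta_2=\min\{\pi,2\delta\sin(u/2)\}$.

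The heart of the argument is to compare $\theta_1$ and $\theta_2$ using only two elementary facts about sine: for $r\in[0,\pi/2]$,
\[
\frac{2}{\pi}\,r\le \sin r\le r.
\]
The upper bound yields $2\delta\sin(u/2)\le \delta u$, hence $\theta_2\le\theta_1$; the lower bound (applied to $r=u/2$, valid since $u\le\pi$) yields $2\delta\sin(u/2)\ge (2/\pi)\delta u$, and a short case split on whether $\delta u\le\pi$ gives $\theta_2\ge (2/\pi)\theta_1$ (using $(2/\pi)\theta_1\le 2<\pi$ so that the outer truncation does not interfere). Applying the same two-sided estimate once more, this time to $r=\theta_i/2\in[0,\pi/2]$, together with $1-\cos\theta=2\sin^2(\theta/2)$, produces
\[
\frac{16}{\pi^4}\le \frac{1-\cos\theta_2}{1-\cos\theta_1}\le 1.
\]
Plugging this into $d_i^2=(s-t)^2+2st(1-\cos\theta_i)$ gives $d_2\le d_1\le (\pi^2/4)\,d_2$, so $\phi$ is bi-Lipschitz with distortion at most $\pi^2/4$.

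To finish, I would invoke \autoref{lem:cone(Lp) in Lp} with the base space $Y\eqdef\cone(X,d_X)$ to conclude $c_p(\cone(Y))\lesssim c_p(Y)$ for $p\in\{1,2\}$, and then compose:
\[
c_p(\cone(X,\delta d_X))\ \le\ \frac{\pi^2}{4}\cdot c_p\bigl(\cone(\cone(X,d_X))\bigr)\ \lesssim\ c_p(\cone(X,d_X)),
\]
which is the desired estimate~\eqref{eq:blow-out-cone-in-cone}. The only step that requires thought is the distortion bound for $\phi$; once one realizes that iterating the Euclidean cone construction replaces the base distance $u$ by the chord $2\sin(u/2)$, the proof reduces to the elementary bilateral estimate of $\sin$ on $[0,\pi/2]$. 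The hypothesis $\delta\ge 1$ is used only to ensure that the capped scalings interact correctly (so that e.g. $\theta_1=\pi$ forces $\theta_2\ge 2$), which keeps the ratios under control uniformly in $\delta$.
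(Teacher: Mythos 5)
Your proof is correct, but it follows a genuinely different route from the paper's. The paper stays ``one cone deep'': it first notes that $\cone(X,\delta d_X)=\cone(X,\delta\min\{d_X,\pi\})$ (this is where $\delta\ge 1$ enters), applies \autoref{lem:cone(Lp) in Lp} to the rescaled truncated base to get $c_p(\cone(X,\delta d_X))\lesssim c_p(X,\min\{d_X,\pi\})$ after using scale-invariance of distortion, and then uses \autoref{prop:cone-truncation} to embed $(X,\min\{d_X,\pi\})$ back into $\cone(X,d_X)$ with distortion $\pi/2$. You instead go ``one cone up'': the explicit map $(s,x)\mapsto(s,(\delta,x))$ into the iterated cone $\cone(\cone(X,d_X))$, whose distortion you correctly bound by $\pi^2/4$ via the two-sided Jordan estimate for $\sin$ on $[0,\pi/2]$ (your case analysis of the two truncations, including the observation that $\theta_1=\pi$ forces $2\delta\sin(u/2)>2$, is sound, and the bi-Lipschitz map extends to completions as needed), followed by a single application of \autoref{lem:cone(Lp) in Lp} to $Y=\cone(X,d_X)$. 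What your version buys is the stronger, purely geometric statement $c_{\cone(\cone(X,d_X))}(\cone(X,\delta d_X))\le\pi^2/4$ uniformly in $\delta\ge 1$, which isolates the only property of $L_p$ being used and would transfer to any target class closed under subsets and Euclidean cones (e.g., via Berestovski{\u\i}'s theorem, to Hadamard targets); what the paper's version buys is brevity, since it is a three-line composition of two already-established facts with no new metric computation. Both arguments use $\delta\ge 1$ for the same reason, namely to control the interaction of the rescaling with the $\pi$-truncation inside the cone metric.
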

\begin{proof}
Since $\delta\ge 1$, we have
\begin{equation}\label{eq:stuiped}
\cone(X,\delta d_X)
  =
\cone \bigl(X,\min \{\delta d_X,\pi\}\bigr)
= \cone (X,\delta\min\{d_X,\pi\}).
\end{equation}
Applying the above, together with~\eqref{eq:truncation in cone} and~\eqref{eq:cone(Lp) in Lp}, we have
\begin{align*}
c_p(\cone(X,\delta d_X))
&\stackrel{\eqref{eq:stuiped}}{=}
c_p \bigl(\cone (X,\delta \min\{d_X,\pi\})\bigr) \\
&\stackrel{\eqref{eq:cone(Lp) in Lp}}{\lesssim}
c_p(X,\delta \min\{ d_X,\pi\})
\\ & = c_p(X, \min\{d_X,\pi\}) \displaybreak[0]
\\ & \leq c_{\cone(X,  d_X)}(X, \min\{d_X,\pi\}) \cdot c_p(\cone(X,  d_X))
\\& \stackrel{\eqref{eq:truncation in cone}}{\leq}
\tfrac \pi 2  c_p(\cone(X,  d_X)). \qedhere
\end{align*}
\end{proof}

We also recall the notions of product and union of pointed metric spaces.

\begin{definition}[$\ell_p$ product] \label{def:orthogonal-product}
Let $((X_n,d_n), x_n)_{n\in\mathbb N}$ be a sequence of metric spaces with
specified points $x_n\in X_n$, and $p\geq 1$.
The \emph{(pointed) $\ell_p$ product $X$ of the spaces $(X_n)_n$} is defined as follows
\[ X=\Bigl(\prod_n (X_n,x_n)\Bigr)_p=\Bigl \{ (y_n)_n\in  \prod_n X_n:\ \sum_n d_n(x_n,y_n)^p<\infty \Bigr\}\]
with the metric $d_X:X\times X\to[0,\infty)$, defined as
\(d_X\bigl ((y_n)_n,(z_n)_n\bigr)^p=\sum_{n=1}^\infty d_n(y_n,z_n)^p.\)
When $\bigl(\prod _n (X_n,x_n)\bigr)_p=\prod_n X_n$ as sets 
(e.g., when the product is finite) we will neglect to mention $\{x_n\}_n$.
The $\ell_2$ product is also called \emph{the orthogonal product}.
\end{definition}

\begin{definition}[$\ell_p$-union] \label{def:orthogonal-union}
Let $((X_n,d_n), x_n)_{n\in \Lambda}$ where $\Lambda$  is an arbitrary index set and $X_n$ are pairwise disjoint metric spaces with
specified points $x_n\in X_n$.
Let
\[X=\Bigl(\biguplus_n (X_n,x_n)\Bigr)_p= \Bigl(\biguplus_n X_n\Bigr) / \{x_n\}_n .\]
That is, we take the disjoint union and identify all the specified points as one point.
Define the metric $d_X:X\times X\to[0,\infty)$ as
\[d_X(y,z)^p=
\begin{cases}
d_n(y,z)^p & y,z\in X_n\\
d_n(y,x_n)^p+ d_m(z,x_m)^p & y\in X_n,\ z\in X_m,\ m\ne n.
\end{cases}
\]
$(X,d_X)$ is a metric space called the (pointed) \emph{$\ell_p$-union} of $(X_n)_n$.
\end{definition}

\begin{lemma} \label{lem:ell1-union}
Let $p\in[1,\infty)$, and let $((X_n,d_n),x_n)_{n\in\Lambda}$
a sequence of pointed metric spaces 
Then,
\begin{compactenum}[(i)]
\item $c_q\bigl(\bigl(\biguplus_n(X_n,x_n)\bigr)_p\bigr)\leq 2^{|1/q-1/p|}\sup_n c_q(X_n)$.
\item Suppose $\bigcup_n X_n\subseteq Y$  and $d_Y(X_m,X_n)\geq \pi$ for every $m,n\in\Lambda$, $m\neq n$. Denote $o_n$ the cusp of\/ $\cone(X_n)$. 
Then the canonical bijection between
$\cone(\bigcup_n X_n)$ and
$\bigl(\biguplus_n (\cone(X_n),o_n)\bigr)_1$ is an isometry.
\item If\/ $p=1$ and $\{X_n\}$ are Hadamard spaces,%
\footnote{Hadamard spaces are defined in\autoref{def:CAT(0)}.} then so is $X$.
\end{compactenum}
\end{lemma}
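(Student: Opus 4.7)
The plan is to handle the three parts in order; (iii) is the substantial one.

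For (i), fix any $D > \sup_n c_q(X_n)$ and, for each $n\in\Lambda$, pick an embedding $f_n : X_n \to L_q$ of distortion at most $D$; translating, we may assume $f_n(x_n) = 0$. Place the $f_n$ on pairwise disjoint supports inside a single $L_q$ (the $\ell_q$-direct-sum construction), yielding a map $F : \bigl(\biguplus_n(X_n,x_n)\bigr)_p \to L_q$. Within each $X_n$, $F$ coincides with $f_n$ and inherits distortion at most $D$. For $y \in X_n$, $z \in X_m$ with $n \neq m$, disjointness of supports gives
\[
\|F(y)-F(z)\|_q = \bigl(\|f_n(y)\|_q^q + \|f_m(z)\|_q^q\bigr)^{1/q},
\]
while the source distance is $\bigl(d_n(y,x_n)^p + d_m(z,x_m)^p\bigr)^{1/p}$. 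The pairs $\|f_n(y)\|_q$ and $d_n(y,x_n)$ are comparable within a factor of $D$, and the switch from a two-term $\ell_q$-sum to the corresponding $\ell_p$-sum costs a factor $2^{|1/q - 1/p|}$ by the standard two-dimensional $\ell_p$-vs-$\ell_q$ inequality. Letting $D \downarrow \sup_n c_q(X_n)$ concludes the argument.

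For (ii), unwind the cone definition \eqref{eq:def cone}. The canonical bijection identifies the cusp of $\cone(\bigcup_n X_n)$ with the common basepoint of the $\ell_1$-union of the cusps $\{o_n\}_n$, and sends $(s,x)$ with $x \in X_n$ to its copy in $\cone(X_n)$. For $x, y \in X_n$ the restricted metric agrees with $d_n$, so the two cone metrics coincide tautologically. For $x \in X_n$, $y \in X_m$ with $n \neq m$, the hypothesis $d_Y(X_n, X_m) \geq \pi$ forces $\min\{\pi, d_Y(x,y)\} = \pi$, whence \eqref{eq:def cone} yields $d_{\cone(\bigcup_n X_n)}((s,x),(t,y))^2 = s^2 + t^2 + 2st = (s+t)^2$. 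This matches the $\ell_1$-union distance $d_{\cone(X_n)}((s,x), o_n) + d_{\cone(X_m)}((t,y), o_m) = s+t$, since $d_{\cone(X_n)}((s,x), o_n) = s$.

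For (iii), invoke Reshetnyak's gluing theorem \cite[Theorem~II.11.1]{BH99}: gluing two Hadamard spaces along a common complete convex subspace yields a Hadamard space. Any single point satisfies these hypotheses, so wedging two Hadamard spaces at their basepoints is Hadamard; iterating produces every \emph{finite} $\ell_1$-wedge at a common point. For the general (possibly uncountable) index set $\Lambda$, the key is that $\CAT(0)$ is a finite-point condition. Any pair of points in $X$ lies in the finite subwedge of at most two pieces, which is Hadamard and hence geodesic, and that geodesic remains a geodesic in $X$; so $X$ is geodesic. Any geodesic triangle in $X$ involves at most three pieces, hence lies in a finite Hadamard subwedge, so the $\CAT(0)$ inequality transfers. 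Completeness: a Cauchy sequence that stays in finitely many pieces converges in the corresponding finite (hence complete) subwedge, while one visiting infinitely many pieces is forced to converge to the common basepoint, since every hop between distinct pieces passes through that point.

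The main obstacle is (iii). Reshetnyak's theorem dispatches the finite case cleanly, so the remaining task is the passage to arbitrary $\Lambda$; this reduces to the observation that distances, geodesics, and $\CAT(0)$-triangles in $X$ are all witnessed inside finite subwedges, so no new information is introduced beyond the finite situation.
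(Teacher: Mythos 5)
Your proposal is correct, and parts (i) and (ii) coincide with the paper's argument: the paper likewise builds the embedding for (i) by placing basepoint-normalized embeddings $\phi_n:X_n\to L_q$ on disjoint supports and absorbing the two-term $\ell_p$-versus-$\ell_q$ discrepancy into the factor $2^{|1/q-1/p|}$, and it dismisses (ii) as ``immediate from the definitions'' (your computation $s^2+t^2+2st=(s+t)^2$ is exactly the verification being waved at). The only real divergence is in (iii): the paper proves it in one line by citing the \emph{family} version of the gluing theorem, \cite[Theorem~II.11.3]{BH99}, which directly states that gluing an arbitrary family of $\CAT(0)$ spaces along a common complete convex subset (here a point) is $\CAT(0)$; you instead rederive that special case from the two-space Reshetnyak theorem \cite[Theorem~II.11.1]{BH99} via a finite-witness argument. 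Your route is sound and self-contained, at the cost of having to check (as you do) that geodesics, comparison triangles, and Cauchy sequences are all captured inside finite subwedges; the one step you assert without proof --- that a geodesic in $X$ joining points of $X_n\cup X_m$ cannot leave $X_n\cup X_m$ --- is true and easy (any excursion into a third piece $X_k$ must pass through the basepoint twice and strictly increases length), but it is the load-bearing fact behind ``any geodesic triangle involves at most three pieces,'' so it deserves the one line it takes. The citation route buys brevity; your route buys independence from the family version of the theorem.
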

\begin{proof}
Suppose without loss of generality $\phi_n:X_n\to L_p$ is non-contractive, $L$-Lipschitz embedding such that $\phi(x_n)=0$. 
Then%
\footnote{By $\ell_p(\Lambda,L_p)$ we mean the set of functions 
$f:\Lambda\to L_p$ that vanish on all $\Lambda$ except a countable subset, and $\sum_{\lambda\in\Lambda} |f(\lambda)|^p<\infty$.} 
$\Phi:X\to \ell_p(\Lambda,L_p)$, defined as
$\Phi(x)_n= \phi_n(x)$  if  $x\in X_n$ and $\Phi(x)_n=0$ otherwise,
is clearly $2^{|1/q-1/p|}L$-bi-Lipschitz.

The second claim is immediate from the definitions.

The $\ell_1$ union is a special case of the gluing operation, see \cite[{\S}I.5,{\S}II.11]{BH99}.
In particular, if $\{X_n\}_{n\in\Lambda}$ are Hadamard spaces, then by
\cite[Theorem~II.11.3]{BH99}, so is $X$.
\end{proof}

\begin{proposition} \label{prop:ell2-union-sum}
Let $(Y_n)_n$ be a sequence of metric spaces, $X_n\subseteq Y_n$ be a sequence of subsets, and $x_n\in X_n$.
%
%
Then, for every $n\in \mathbb N$,
\[
X_n\hookrightarrow \Bigl(\biguplus_m (X_m,x_m)\Bigr)_p
\hookrightarrow \Bigl(\prod_m (X_m,x_m)\Bigr)_p
\hookrightarrow \Bigl(\prod_m (Y_m,x_m)\Bigr)_p,
\]
where here ``$\hookrightarrow$'' stands for the canonical embedding.
\end{proposition}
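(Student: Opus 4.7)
The plan is to verify that each of the three canonical embeddings is an isometry by directly unpacking Definitions~\ref{def:orthogonal-product} and~\ref{def:orthogonal-union}; no nontrivial inequality is needed, and the entire proposition is a bookkeeping statement intended to fix the meaning of ``canonical embedding'' for later use.

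For $X_n\hookrightarrow \bigl(\biguplus_m(X_m,x_m)\bigr)_p$, I would send a point $x\in X_n$ to its image in the disjoint-union quotient. Any two such images lie in the same summand, so the first case of the definition of the union metric applies and yields distance $d_n(x,y)$, giving isometry on the nose.

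For $\bigl(\biguplus_m(X_m,x_m)\bigr)_p\hookrightarrow \bigl(\prod_m(X_m,x_m)\bigr)_p$, I would send (the class of) $x\in X_n$ to the tuple $\iota(x)$ whose $n$-th coordinate is $x$ and whose $m$-th coordinate is $x_m$ for $m\neq n$; the common basepoint of the union then maps to $(x_m)_m$ itself. The tuple $\iota(x)$ clearly lies in the $\ell_p$-product because only finitely many of its coordinates differ from the basepoint. Two cases must then be checked: if $x,y\in X_n$, then $\iota(x)$ and $\iota(y)$ differ only in coordinate $n$, so the product distance equals $d_n(x,y)$; if $x\in X_n$ and $y\in X_m$ with $n\neq m$, they differ exactly in coordinates $n$ and $m$, giving a $p$-th power distance of $d_n(x,x_n)^p+d_m(y,x_m)^p$, which matches the second case in the definition of the $\ell_p$-union metric.

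For $\bigl(\prod_m(X_m,x_m)\bigr)_p\hookrightarrow \bigl(\prod_m(Y_m,x_m)\bigr)_p$, coordinate-wise inclusion is immediate because each $X_m\subseteq Y_m$ is an isometric embedding and the product metric is computed coordinate-wise from the same underlying distances; in particular the summability condition $\sum_m d_m(x_m,z_m)^p<\infty$ transfers unchanged. There is no real obstacle in the proof; the only mild subtlety is keeping the basepoints consistent throughout.
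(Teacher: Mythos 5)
Your proof is correct and matches the approach the paper intends (the authors omit the proof as straightforward, and their sketch for the middle embedding is exactly your map sending $x\in X_n$ to the tuple equal to $x$ in coordinate $n$ and to the basepoint $x_m$ elsewhere). The case analysis you give for the union-into-product step and the basepoint consistency check are precisely the points that need verifying, and they are handled correctly.
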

The proof of \autoref{prop:ell2-union-sum} is straightforward and therefore omitted.

\begin{proposition}[Metric transforms embed in weighted orthogonal product of Euclidean cones]
\label{prop:transform-in-producr-cones}
Let $(X,d)$ be an arbitrary metric space and $\varphi:[0,\infty)\to [0,\infty)$ be an arbitrary metric transform, and $q\in[2,\infty)$.
Then $(X,\varphi\circ d)$ can be embedded with distortion at most $6^{1/q}\pi/2$ in a (weighted) $\ell_q$ product of $X$ and Euclidean cones over rescalings of $(X,d)$.
\end{proposition}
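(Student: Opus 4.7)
The plan is to combine the representation of metric transforms from Corollary \ref{cor:l2-transform-approximation} with the basic cone-truncation embedding from Proposition \ref{prop:cone-truncation}, one truncation per coordinate of an $\ell_q$ product.

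First, I would invoke Corollary \ref{cor:l2-transform-approximation} at the given $q\geq 2$ to produce coefficients $\alpha_0\geq 0$ and $\{\alpha_n,\beta_n\}_{n\geq 1}\subset[0,\infty)$ with
\[
\tfrac13\Bigl(\alpha_0 t^q+\sum_{n\geq 1}\min\{\alpha_n t^q,\beta_n\}\Bigr)\le \varphi(t)^q\le 2\Bigl(\alpha_0 t^q+\sum_{n\geq 1}\min\{\alpha_n t^q,\beta_n\}\Bigr),\qquad \forall t\in[0,\infty).
\]
Discarding indices with $\alpha_n\beta_n=0$ (they contribute nothing), set $\lambda_n\eqdef(\alpha_n/\beta_n)^{1/q}$. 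The target space $Y$ will be the (pointed) $\ell_q$ product of $(X,\alpha_0^{1/q} d)$ together with the Euclidean cones $\cone(X,\pi\lambda_n d)$, where the $n$-th cone factor is given the weight $\beta_n^{1/q}/\pi$; explicitly, if $x_0\in X$ is a fixed base point and we choose the base point $(1,x_0)$ in each cone, then
\[
d_Y\bigl(F(x),F(y)\bigr)^q\eqdef \alpha_0 d(x,y)^q+\sum_{n\geq 1}\frac{\beta_n}{\pi^q}\, d_{\cone(X,\pi\lambda_n d)}\bigl(f_n(x),f_n(y)\bigr)^q,\qquad F(x)\eqdef\bigl(x,(1,x),(1,x),\ldots\bigr),
\]
where $f_n(x)\eqdef(1,x)$. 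This factor-by-factor structure is exactly a weighted $\ell_q$ product of $X$ and Euclidean cones over rescalings of $(X,d)$, matching the statement.

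Next, I would verify that $F$ lands in $Y$. By Proposition \ref{prop:cone-truncation} (more precisely the elementary inequality \eqref{eq:cone-approx-homogene}), applying the cone identity \eqref{eq:1-cos identity} to $f_n$ with $\theta=\min\{\pi,\pi\lambda_n d(x,y)\}$ yields
\[
2\min\{1,\lambda_n d(x,y)\}\le d_{\cone(X,\pi\lambda_n d)}\bigl(f_n(x),f_n(y)\bigr)\le \pi\min\{1,\lambda_n d(x,y)\}.
\]
Raising to the $q$-th power and multiplying by $\beta_n/\pi^q$, and using $\beta_n\lambda_n^q=\alpha_n$, the $n$-th summand in $d_Y(F(x),F(y))^q$ lies between $(2/\pi)^q\min\{\alpha_n d(x,y)^q,\beta_n\}$ and $\min\{\alpha_n d(x,y)^q,\beta_n\}$. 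Summability (hence well-definedness of $F$) follows upon taking $y=x_0$ and invoking $\varphi(d(x,x_0))<\infty$ via the upper estimate in Step~1.

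Finally, summing the pointwise bounds in Step~2 over $n$ and adding the first-coordinate contribution $\alpha_0 d(x,y)^q$ gives
\[
(2/\pi)^q\Bigl(\alpha_0 d(x,y)^q+\sum_{n\geq 1}\min\{\alpha_n d(x,y)^q,\beta_n\}\Bigr)\le d_Y(F(x),F(y))^q\le \alpha_0 d(x,y)^q+\sum_{n\geq 1}\min\{\alpha_n d(x,y)^q,\beta_n\},
\]
which, combined with the two-sided bound from Step~1, yields
$\tfrac{2}{\pi}\cdot 2^{-1/q}\,\varphi(d(x,y))\le d_Y(F(x),F(y))\le 3^{1/q}\,\varphi(d(x,y))$. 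The distortion is then at most $3^{1/q}/\bigl((2/\pi)2^{-1/q}\bigr)=6^{1/q}\pi/2$, as claimed. The only mild obstacle is bookkeeping: one must carefully choose the rescalings ($\pi\lambda_n$ inside the cone and $\beta_n^{1/q}/\pi$ as the outer weight) so that the two-sided cone estimate aligns with the Brudny\u\i--Krugljak decomposition on the nose; once these choices are made, every inequality is routine.
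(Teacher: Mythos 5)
Your proposal is correct and follows essentially the same route as the paper's proof: the Brudny\u{\i}--Krugljak decomposition applied to $\omega(t)=\varphi(t^{1/q})^q$ (i.e., \autoref{cor:l2-transform-approximation}), one cone factor $\cone(X,\pi(\alpha_n/\beta_n)^{1/q}d)$ per truncation with outer weight $\beta_n^{1/q}/\pi$, the diagonal map $x\mapsto(x,(1,x),(1,x),\ldots)$, and the identical distortion bookkeeping yielding $6^{1/q}\pi/2$. No gaps.
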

\begin{proof}
Let $\omega(t)=\varphi\bigl ({t}^{1/q}\bigr )^q$.
By \autoref{prop:phi(sqrt(t))^2}, $\omega$ is also a metric transform, and therefore, by \autoref{prop:transform-approx} there exist $\alpha_n,\beta_n\ge 0$, for $n\ge 0$
for which
\begin{equation} \label{eq:approx-hat}
\omega(t)/2\le  \hat \omega(t)=\alpha t+ \sum_n \min\{\alpha_n t,\beta_n\}
\le 3 \omega(t).
\end{equation}

We may assume without loss of generality that each $\beta_n>0$.
It follows from \autoref{prop:cone-truncation} that
the metric
$\min\bigl\{({\alpha_n}/{\beta_n})^{1/q}\, \pi d ,\pi\bigr\}$
embeds in $(Y_n,\hat\rho_n)=\cone\bigl(X,({{\alpha_n}/{\beta_n}})^{1/q}\, \pi d\bigr)$ with a distortion at most $\pi/2$.
So for $n\ge 1$, the metric
$\min\bigl\{{{\alpha_n}}^{1/q}  d ,{\beta_n}^{1/q}\bigr\}$ admits bi-Lipschitz embedding  in $(Y_n,\hat\rho_n)$
in which for every $x,y\in X$
\begin{equation} \label{eq:rho-approximation}
\tfrac{2}{\pi} \cdot \min\bigl\{ \alpha_n^{1/q}\,d_X(x,y), {{\beta}^{1/q}_n}\bigr\}
\le \frac{{\beta^{1/q}_n}}{\pi}\hat \rho_n(x,y)\le
\min\bigl\{ \alpha_n^{1/q}\,d_X(x,y), {{\beta}^{1/q}_n}\bigr\}.
\end{equation}

For simplicity of the notation, we assume above that $X\subseteq Y_n$
and the embedding of $X$ in $Y_n$ is the identity mapping.

Denote $\rho_n=\frac{{\beta^{1/q}_n}}{\pi} \hat \rho_n$, and
$(Y_0,\rho_0)=(X,{\alpha^{1/q}_0} d)$.
Fix an arbitrary $o\in X$. Let $o_0=o\in Y_0$, and $o_n=(1,o)\in Y_n$ for $n\geq 1$. 
Let $(Y,\rho)$ be the $\ell_q$ product of the sequence
\( ((Y_n,\rho_n),o_n)_{n\ge 0}\).
We identify $X$ with the subset $\{ (x,x,\ldots)\in  \prod_{n\ge 0} Y_n:\ x\in X\}$.
We next prove that $X\subset Y$, that is, that
for every $x\in X$, $\sum_n \rho_n(o,x)^q<\infty$.
If we allow for $\infty$ values,
$\rho$ is well defined in $(X\times \prod_{n\ge 1} Y_n) \times (X\times \prod_{n\ge 1} Y_n)$,
so it is sufficient to prove that $\rho(o,x)<\infty$ for every $x\in X$.
This would follow as a special case of the estimates on $\rho(x,y)$ for every $x,y\in X$ as follows.
\begin{multline*}
    \rho(x,y)^q=  \sum_{n\ge 0} \rho_n(x,y)^q
    \stackrel{\eqref{eq:rho-approximation}}\le
    \alpha_0 d(x,y)^q + \sum_{n\ge 1} \min\bigl\{ \alpha_n\,d(x,y)^q, \beta_n\bigr\}
    \\ = \hat\omega\bigl (d(x,y)^q\bigr)
    \stackrel{\eqref{eq:approx-hat}}\le
    3\omega\bigl (d(x,y)^q\bigr) = 3 \varphi (d(x,y))^q.
\end{multline*}
\begin{multline*}
    \rho(x,y)^q= \alpha_0 d^q(x,y)+ \sum_{n\ge 1} \rho_n(x,y)^q
    \stackrel{\eqref{eq:rho-approximation}}\ge
    \bigl(\tfrac{2}{\pi}\bigr)^q \cdot \Bigl(\alpha_0 d^q(x,y) + \sum_{n\ge 1} \min\bigl\{ \alpha_n\,d(x,y)^q, \beta_n\bigr\} \Bigr )
    \\ =  \bigl(\tfrac{2}{\pi}\bigr)^q \cdot \hat\omega\bigl (d(x,y)^q\bigr)
    \stackrel{\eqref{eq:approx-hat}}\ge
    \tfrac12\cdot \bigl(\tfrac{2}{\pi}\bigr)^q\cdot \omega\bigl (d(x,y)^q\bigr)
     = \tfrac12\cdot \bigl(\tfrac{2}{\pi}\bigr)^q \cdot \varphi (d(x,y))^q.
\end{multline*}
Therefore, $(X,\varphi\circ d)$ embeds in $(Y,\rho)$ with a distortion at most $6^{1/q}\pi/2$. 
\end{proof}

We end the section with a result about extensions of Euclidean embeddings of combinatorial graphs to embeddings of the one-dimensional simplicial complexes
defined by these graphs.
For a weighted undirected graph $G=(V,E,w)$ and $e=(u,v)\in E$, we denote by $[e]$ the closed interval of length $w(u,v)$ whose endpoints are $u$ and $v$. 
Let $\Sigma(G)=\bigcup_{e\in E} [e]$ denotes the one-dimensional simplicial complex defined by $G$,
and $d_{\Sigma(G)}$ be the geodesic metric on $\Sigma(G)$.
The following theorem extends a bi-Lipschitz embedding into Hilbert space $\phi:G\to \mathcal H_1$ to a bi-Lipschitz embedding $\Phi:\Sigma(G)\to \mathcal H_2\supset \mathcal H_1$ into Hilbert super-space.
This type of extension was studied in~\cite{MMMR}, 
where it was coined \emph{bi-Lipschitz outer extension}.

\begin{lemma}\label{lem:biLip-extension}
Let $G=(V,E,w)$ be an undirected weighted graph (finite or infinite), and $W\subseteq V$. 
Suppose that there is a bi-Lipschitz embedding $\phi:(W,d_G)\to \mathcal H_1$ in Hilbert space such that for every $x,y\in W$,
\[
d_G(x,y)\le \|\phi(x)-\phi(y)\|_{\mathcal H_1} \le L\cdot d_G(x,y).
\]
Denote
\begin{equation}\label{eq:def-Sigma(W)}
\Sigma(W)=\bigcup\bigl\{[u,v]:\ u,v\in W,\; \{u,v\}\in E\bigr\}
\subseteq \Sigma(G).
\end{equation}
Then for any $\alpha\geq 0$ there exists an embedding\/ $\Phi=\Phi_\alpha:\Sigma(W)\to \mathcal H_2=(\mathcal H_1 \times \mathcal H_3)_2$
into Hilbert space such that for every $u\in W$, $\Phi(u)=(\phi(u),0)$ and
for any $x,y\in \Sigma(W)$,
\[
\frac{\alpha}{\sqrt{2+\alpha^2}}\, d_{\Sigma(G)}(x,y) \le \|\Phi(x)-\Phi(y)\|_{\mathcal H_2} \le \sqrt{L^2+\alpha^2(L+1)^2}\cdot d_{\Sigma(G)}(x,y).
\]
In particular,
\ for $\alpha=\sqrt{\sqrt2L/(L+1)}$,\;
\( \mathrm{dist}(\Phi)\leq (1+\sqrt{2})L+\sqrt{2}\).
\end{lemma}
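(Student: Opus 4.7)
The plan is to glue linear interpolation along each edge in $\mathcal{H}_1$ to an orthogonal ``bump'' indexed by the edge. Let $E_W \eqdef \{e \in E : e \subseteq W\}$, let $\mathcal{H}_3 \eqdef \ell_2(E_W)$ with orthonormal basis $\{\mathbf{f}_e\}_{e \in E_W}$, and set $\mathcal{H}_2 \eqdef (\mathcal{H}_1 \times \mathcal{H}_3)_2$. Parametrize each closed edge $[u,v] \subseteq \Sigma(W)$ with $\{u,v\} \in E_W$ by $t \in [0,1]$ so that $z_0 = u$, $z_1 = v$ and $d_{\Sigma(G)}(z_s, z_t) = |s-t|\, w(u,v)$; then define
\[
\Phi(z_t) \eqdef \bigl((1-t)\phi(u) + t\phi(v),\ \alpha(L+1)\min\{t, 1-t\}\, w(u,v)\, \mathbf{f}_{\{u,v\}}\bigr) \in \mathcal{H}_1 \times \mathcal{H}_3.
\]
Since $\min\{0, 1\} = 0$, the definition is unambiguous at shared endpoints and $\Phi(u) = (\phi(u), 0)$ for every $u \in W$, as required.

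For the upper Lipschitz estimate, set $g(t) \eqdef \min\{t, 1-t\}$ (which is $1$-Lipschitz). When $z_s, z_t$ lie on the same edge $\{u,v\}$, a direct computation using $\|\phi(u) - \phi(v)\| \le Lw(u,v)$ yields $\|\Phi(z_s) - \Phi(z_t)\|^2 \le (L^2 + \alpha^2(L+1)^2)(s-t)^2 w(u,v)^2$. When $z_s$ lies on $e_1 = \{u_1, v_1\}$ and $z_t$ on a different edge $e_2 = \{u_2, v_2\}$, denote by $p, q$ the $\mathcal{H}_1$-components of $\Phi(z_s), \Phi(z_t)$; set $w_i \eqdef w(u_i, v_i)$, $S \eqdef g(s) w_1$, $T \eqdef g(t) w_2$, $D \eqdef d_{\Sigma(G)}(z_s, z_t)$, and $\ell(z, a)$ for the in-edge distance from $z$ to the endpoint $a$. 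Since any continuous path from $z_s$ to $z_t$ in $\Sigma(G)$ must exit $e_1$ through $u_1$ or $v_1$ and enter $e_2$ through $u_2$ or $v_2$, we obtain $D = \min_{a \in \{u_1,v_1\},\; b \in \{u_2,v_2\}} [\ell(z_s, a) + d_G(a, b) + \ell(z_t, b)]$, which combined with the triangle inequality and the bounds $\|\phi(a) - p\| \le L\, \ell(z_s, a)$, $\|\phi(a) - \phi(b)\| \le L\, d_G(a,b)$ gives $\|p - q\| \le LD$. Specializing $d_G(a,b) = 0$ in the same identity also gives $S + T \le D$, hence $S^2 + T^2 \le (S+T)^2 \le D^2$; summed with the $\mathcal{H}_3$-component this confirms $\|\Phi(z_s) - \Phi(z_t)\|^2 \le (L^2 + \alpha^2(L+1)^2) D^2$.

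The main difficulty lies in the lower Lipschitz bound for points on distinct edges, since the $\mathcal{H}_1$-projection can collapse distances and only the orthogonal bump can save it. Let $a^*$ realize $\ell(z_s, a^*) = S$ (so $\|p - \phi(a^*)\| \le LS$) and similarly $b^*$; since $\ell(z_s, a^*) + d_G(a^*, b^*) + \ell(z_t, b^*) \ge D$, we get $d_G(a^*, b^*) \ge D - S - T$, whence the reverse triangle inequality $\|p - q\| \ge \|\phi(a^*) - \phi(b^*)\| - \|p - \phi(a^*)\| - \|q - \phi(b^*)\|$ yields $\|p - q\| \ge D - (L+1)(S + T)$. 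Split into two cases: if $(L+1)(S+T) \le D$, set $x = (L+1)(S+T)$ and use $S^2 + T^2 \ge (S+T)^2/2$ to get $\|\Phi(z_s) - \Phi(z_t)\|^2 \ge (D - x)^2 + \tfrac12 \alpha^2 x^2$; the unconstrained minimum over $x$ is attained at $x^* = 2D/(2+\alpha^2)$ and equals $\alpha^2 D^2/(2+\alpha^2)$. If $(L+1)(S+T) > D$, the orthogonal component alone gives $\alpha^2(L+1)^2(S^2 + T^2) \ge \alpha^2(L+1)^2(S+T)^2/2 > \alpha^2 D^2/2 \ge \alpha^2 D^2/(2+\alpha^2)$. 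The same optimization with $S = 0$ or $T = 0$ covers pairs where one or both of $z_s, z_t$ is a vertex of $W$.

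Combining the bounds, $\Phi$ has distortion at most $\sqrt{L^2 + \alpha^2(L+1)^2} \cdot \sqrt{2+\alpha^2}/\alpha$. Finally, substituting $\alpha^2 = \sqrt{2}\, L/(L+1)$ into this expression and simplifying converts $(L^2 + \alpha^2(L+1)^2)(2+\alpha^2)/\alpha^2$ into the perfect square $[(1+\sqrt{2})L + \sqrt{2}]^2$, yielding the ``in particular'' distortion bound $(1 + \sqrt{2})L + \sqrt{2}$.
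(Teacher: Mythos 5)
Your proof is correct and follows essentially the same route as the paper's: the same linear interpolation in $\mathcal H_1$ glued to the same per-edge orthogonal ``tent'' of height proportional to $(L+1)\min\{d_{\Sigma(G)}(\cdot,u),d_{\Sigma(G)}(\cdot,v)\}$, with the same triangle-inequality estimates and the same one-variable quadratic optimization in $S+T$ for the contraction bound. The only cosmetic difference is that you use the parametrization of the lemma statement (coefficient $\alpha$, contributions $\alpha^2$), whereas the paper's proof internally writes $\sqrt{\alpha}$ for the same quantity.
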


\begin{corollary}\label{cor:biLip-extension}
For any undirected weighted graph $G=(V,E,w)$,
$c_2(\Sigma(G))\leq (1+\sqrt{2}) c_2(G)+\sqrt2$.
\end{corollary}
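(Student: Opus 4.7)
The corollary is a direct consequence of \autoref{lem:biLip-extension} applied with $W=V$, in which case the set $\Sigma(W)$ defined in~\eqref{eq:def-Sigma(W)} coincides with the entire one-dimensional simplicial complex $\Sigma(G)=\bigcup_{e\in E}[e]$. Consequently, the extension supplied by \autoref{lem:biLip-extension} is a genuine embedding of $\Sigma(G)$ into a Hilbert space rather than only of a sub-complex.

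The plan is as follows. If $c_2(G)=\infty$ there is nothing to prove, so assume it is finite and fix an arbitrary $L>c_2(G)$. By the definition~\eqref{eq:def:c_Y(X)} of $c_2(G)$ and the bi-Lipschitz distortion~\eqref{eq:def distortion}, there exists a mapping $\phi:(V,d_G)\to\mathcal{H}_1$ into some Hilbert space satisfying, after an appropriate rescaling, $d_G(x,y)\leq \|\phi(x)-\phi(y)\|_{\mathcal{H}_1}\leq L\, d_G(x,y)$ for every $x,y\in V$. Apply \autoref{lem:biLip-extension} to this $\phi$ with $W=V$ and the specific choice $\alpha=\sqrt{\sqrt{2}L/(L+1)}$ mentioned in the ``in particular'' clause. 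This produces a Hilbert space $\mathcal{H}_2\supset\mathcal{H}_1$ and a mapping $\Phi:\Sigma(G)\to\mathcal{H}_2$ that extends $\phi$ and whose distortion is at most $(1+\sqrt{2})L+\sqrt{2}$. Hence $c_2(\Sigma(G))\leq (1+\sqrt{2})L+\sqrt{2}$, and letting $L\downarrow c_2(G)$ yields the asserted bound.

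There is no genuine obstacle here, since all the work is encapsulated in \autoref{lem:biLip-extension}; the only mild point to verify is the bookkeeping that the extension $\Phi$ in \autoref{lem:biLip-extension} is indeed defined on all of $\Sigma(G)$ when $W=V$, which is immediate from the definition~\eqref{eq:def-Sigma(W)} of $\Sigma(W)$.
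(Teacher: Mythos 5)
Your proposal is correct and is exactly the intended derivation: the paper states the corollary as an immediate consequence of \autoref{lem:biLip-extension} with $W=V$ (so that $\Sigma(W)=\Sigma(G)$ by~\eqref{eq:def-Sigma(W)}) together with the choice $\alpha=\sqrt{\sqrt{2}L/(L+1)}$ from the ``in particular'' clause. The limiting argument $L\downarrow c_2(G)$ is the right way to handle the possible non-attainment of the infimum in~\eqref{eq:def:c_Y(X)}, and nothing further is needed.
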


Observe that the combinatorial triangle $C _3$ is a uniform metric on three points and therefore $c_2(C_3)=1$. 
On the other hand, $\Sigma(C_3)\cong \tfrac{3}{2\pi}\mathbb S^1$ is a geodesic cycle for which $c_2(\mathbb S^1)=\frac\pi2$~\cite{LM00}.

\begin{question}
    What is the infimal $\alpha\in[\pi/2,1+2\sqrt{2}]$ such that $c_2(\Sigma(G))\leq \alpha\cdot c_2(G)$ for any undirected graph $G$ (possibly weighted and infinite)?
    Similarly, what is the infimal $\beta\in[1,1+\sqrt{2}]$ for which $c_2(\Sigma(G))\leq (\beta+o(1)) c_2(G)$?
\end{question}

\begin{proof}[Proof of \autoref{lem:biLip-extension}]
Define
$\Phi:\Sigma(W)\to  \mathcal H_2=\bigl(\mathcal H_1 \times\ell_2(E(W))\bigr)_2$ 
as a direct sum of two embeddings as follows.
The first is $\psi:\Sigma(W)\to \ell_2(E(W))$.
Fix  $\alpha=\alpha(L)>0$ to be determined later.
Fix an edge $\{u,v\}\in E(W)$, $u,v\in W$ and $x\in[u,v]$.
Let $\psi (x)=(\psi(x)_e)_{e\in E(W)}\in \ell_2(E(W))$ be
\[
\psi (x)_e =\begin{cases} \min \{d_{\Sigma(G)}(x,u),d_{\Sigma(G)}(x,v) \}\sqrt{\alpha} (L+1), & e=\{u,v\} \\
0, & e\neq \{u,v\}.
\end{cases}
\]
Observe that $\psi(u)=0$, $\forall u\in W$.
Next define $\tilde \phi:\Sigma(W)\to \mathcal H_1$ using the notation above, as follows.
\[
\tilde \phi(x)=
\frac{d_{\Sigma(G)}(x,v) \phi(u)+ d_{\Sigma(G)}(x,u)\phi(v)}
{d_G(u,v)}.
\]
It is straightforward to check that $\tilde \phi$ is an extension of $\phi$.
Unfortunately, it is not necessarily a bi-Lipschitz extension.
So, we define an outer extension
\[
\Phi(x)= \tilde\phi(x) \oplus \psi( x).
\]

Next, we bound the distortion of $\Phi$.
Fix $x,y\in \Sigma(W)$. Let $x\in[u,v]$, $y\in [r,s]$ be such that $\{u,v\},\{r,s\}\in E$,  and $u,v,r,s\in W$.
 Suppose, without loss of generality, that $u,v,r,s$ are such that
for the parameters $\gamma=d_{\Sigma(G)}(u,x)$, and $\delta= d_{\Sigma(G)}(r,y)$, we have
\begin{equation}
\label{eq:sigma-distance}
d_{\Sigma(G)}(x,y)=\begin{cases}
|\gamma-\delta|, & \ (u,v)=(r,s)\\
\gamma+ d_G(u,r) + \delta, & \ \text{otherwise}.
\end{cases}
\end{equation}

It is evident from~\eqref{eq:sigma-distance} that
to bound $\|\Phi\|_\mathrm{Lip}$
from above, it is sufficient to bound for $x,x'\in [u,v]$ for some edge $\{u,v\}\in E$, $u,v\in W$, and separately, for $r,s\in W$
\begin{align}
\nonumber \|\Phi(x)-\Phi(x')\|_{\mathcal H_2}^2 & \leq 
d_{\Sigma(G)}(x,x')^2\frac{\|\phi(u)-\phi(v)\|_{\mathcal H_1}^2}{d_G(u,v)^2}+ \alpha (L+1)^2 d_{\Sigma(G)}(x,x')^2
\\ 
\label{eq:extension-expansion} &\le (L^2+\alpha(L+1)^2) \, d_{\Sigma(G)}(x,x')^2;\\
\nonumber \|\Phi(r)-\Phi(s)\|_{\mathcal H_2}^2 & =\|\phi(r)-\phi(s)\|_{\mathcal H_1}^2 \le L^2 d_{G}(r,s)^2.
\end{align}
Hence,
\[
\|\Phi(x)-\Phi(y)\|_{\mathcal H_2} \leq \sqrt{L^2+\alpha(L+1)^2}\, d_{\Sigma(G)}(x,y).
\]

\medskip

Next, we bound the contraction of $\Phi$.
First, we consider the case  $\{u,v\}= \{r,s\}$. In this case, similar to~\eqref{eq:extension-expansion} above, we have
\begin{align*}
\|\Phi(x)-\Phi(y)\|_{\mathcal H_2} & \geq \|\tilde \phi(x) -\tilde \phi(y)\|_{\mathcal H_1} =
d_{\Sigma(G)}(x,y) \frac{\|\phi(u)-\phi(v)\|_{\mathcal H_1}}{d_G(u,v)}
\ge d_{\Sigma(G)}(x,y).
\end{align*}

Next, we consider $\{u,v\}\neq \{r,s\}$. 
Assume (without loss of generality) that
$\zeta:=d_{\Sigma(G)}(x,u)\leq d_{\Sigma(G)}(x,v)$ and
$\eta:=d_{\Sigma(G)}(y,r)\leq d_{\Sigma(G)}(y,s)$.
Denote $\tilde\zeta=\zeta/d_G(u,v)\leq 1/2$ 
and $\tilde \eta=\eta/d_G(r,s)\leq 1/2$.
By the triangle inequality
\[
d_{\Sigma(G)}(x,y)\leq \zeta+d_G(u,r)+\eta.
\]
Note that
\[\|\psi(x)-\psi(y)\|_2^2=\alpha (L+1)^2(\zeta^2+\eta^2)
\geq \alpha (L+1)^2(\zeta+\eta)^2/2,\] 
and
\begin{align*}
\|\tilde\phi(x)-\tilde\phi(y)\|_{\mathcal H_1} &=
\| \bigl((1-\tilde\zeta) \phi(u)+ \tilde \zeta \phi(v)\bigr) -
\bigl((1-\tilde \eta) \phi(r)+ \tilde \eta \phi(s)\bigr) \|_{\mathcal H_1}
\\ &=\|\phi(u)-\phi(r) +\tilde \zeta(\phi(v)-\phi(u)) +\tilde \eta(\phi(r)-\phi(s)) \|_{\mathcal H_1}
\\ &\ge \max \bigl\{\|\phi(u)-\phi(r)\|_{\mathcal H_1} - B ,0\bigr\}
\\ &\ge \max\{d_G(u,r) -B,0\},
\end{align*}
where $B$ is defined as 
\begin{equation*}
B= \|\tilde \zeta(\phi(v)-\phi(u)) +\tilde \eta(\phi(r)-\phi(s)) \|_{\mathcal H_1}
\le
\tilde \zeta \|\phi(v)-\phi(u)\|_{\mathcal H_1} +\tilde \eta \|\phi(r)-\phi(s) \|_{\mathcal H_1}
\le (\zeta+\eta)L.
\end{equation*}
So 
\begin{multline*}
\|\Phi(x)-\Phi(y)\|_{\mathcal H_2}^2 \geq
\max\{0, d_G(u,r) - (\zeta+\eta)L\}^2 +\alpha (\zeta+\eta)^2 (L+1)^2/2
\\ \geq 
\max\{0, d_{\Sigma(G)}(x,y) - (\zeta+\eta)(L+1)\}^2 +\alpha (\zeta+\eta)^2 (L+1)^2/2.
\end{multline*}
The right-hand side of the above is minimized when  
$\zeta+\eta=\frac{2d_{\Sigma(G)}(x,y)} {(L+1)(2+\alpha)}$ 
for which
\[ \|\Phi(x)-\Phi(y)\|_{\mathcal H_2} \geq \sqrt{\frac{\alpha}{{2+\alpha}}}\, {d_{\Sigma(G)}(x,y)}.\]

Hence
\[
\mathrm{dist}(\Phi)\leq \sqrt{\frac{{(2+\alpha)(L^2+\alpha(L+1)^2)}}{\alpha}}.
\]
Substituting $\alpha=\frac{\sqrt{2}L}{L+1}$
in the above estimate, we obtain
\(
\mathrm{dist}(\Phi)\leq (1+\sqrt{2})L+\sqrt{2}.
\)
\end{proof}

\section{Alexandrov spaces}
\label{sec:CAT(k)}

\subsection{CAT(0) spaces}

CAT$(0)$ spaces are geodesic metric spaces in which the distances between points along geodesic triangles are bounded from above by the corresponding distances
in a similar triangle in the Euclidean plane.
Formally:

\begin{definition}[CAT$(0)$ spaces] \label{def:CAT(0)}
A metric space $(X,d)$ is said to be a CAT$(0)$ space if every two
points $x,y\in X$ can be joined by a geodesic, and for every
$x,y,z\in X$, every (constant speed) geodesic $\phi:[0,1]\to X$ with $\phi(0)=y$ and
$\phi(1)=z$, and every $t\in [0,1]$ we have
\begin{equation}\label{eq:CAT(0)}
d_X(x,\phi(t))^2\le (1-t)d(x,y)^2+td(x,z)^2-t(1-t)d(y,z)^2.
\end{equation}
Complete CAT(0) spaces are also called \emph{Hadamard spaces}.
\end{definition}

The CAT$(0)$ property, when applied locally, is a property of
Riemannian manifolds with nonpositive sectional curvature.
There is a dual property which is a generalization of manifolds of nonnegative sectional curvature.

\begin{definition}[Alexandrov spaces of nonnegative curvature]
\label{def:NNC}
A metric space that satisfies \autoref{def:CAT(0)} but with the inequality~\eqref{eq:CAT(0)} \emph{reversed} is called
\emph{Alexandrov space of nonnegative curvature}.
\end{definition}

The following observation is straightforward. 
\begin{observation} \label{obs:scaled-CAT(0)}
Let $\alpha>0$, and $(X,d)$ be a CAT(0) space
[Alexandrov spaces of nonnegative curvature],
then so is $(X,\alpha d)$.
\end{observation}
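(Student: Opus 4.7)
The plan is to exploit the fact that the defining inequality~\eqref{eq:CAT(0)} is homogeneous of degree $2$ in the metric, so that multiplying $d$ by $\alpha>0$ multiplies both sides by the common factor $\alpha^2$. First I would verify that $(X,\alpha d)$ is a geodesic space by observing that constant-speed geodesics are preserved by metric rescaling: if $\phi:[0,1]\to X$ is a constant-speed $d$-geodesic from $y$ to $z$, then for every $t\in[0,1]$,
\[
(\alpha d)(\phi(t),y)=\alpha\cdot t\, d(y,z)=t\cdot(\alpha d)(y,z),
\]
and analogously with $z$ in place of $y$, so $\phi$ is also a constant-speed $(\alpha d)$-geodesic from $y$ to $z$. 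The converse is identical (replace $\alpha$ by $1/\alpha$), and in particular any two points of $X$ remain joinable by a constant-speed geodesic in $(X,\alpha d)$.

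Next, given $x,y,z\in X$ and a constant-speed $(\alpha d)$-geodesic $\phi:[0,1]\to X$ from $y$ to $z$, the previous step shows that $\phi$ is also a $d$-geodesic, so the CAT$(0)$ inequality~\eqref{eq:CAT(0)} for $(X,d)$ gives
\[
d(x,\phi(t))^2\le(1-t)d(x,y)^2+t\,d(x,z)^2-t(1-t)d(y,z)^2.
\]
Multiplying through by $\alpha^2$ yields exactly~\eqref{eq:CAT(0)} with $d$ replaced by $\alpha d$, establishing the CAT$(0)$ property of $(X,\alpha d)$. The Alexandrov nonnegative curvature case is proved by the very same scaling argument, simply reversing the direction of the inequality throughout. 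There is no real obstacle here: the only thing to check carefully is that the class of constant-speed geodesics is invariant under positive rescaling of the metric, which is immediate from the definition.
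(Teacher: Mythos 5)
Your proof is correct and is exactly the argument the paper has in mind: the paper states this observation without proof, calling it ``straightforward,'' and your two-step verification (constant-speed geodesics are invariant under positive rescaling of the metric, and the comparison inequality~\eqref{eq:CAT(0)} is homogeneous of degree $2$ so it is preserved upon multiplying by $\alpha^2$) is precisely the intended justification.
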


\begin{proposition} \label{prop:CAT(0)-l2sum}
Let $((X_n,d_n),x_n\in X_n)_n$ be a sequence of pointed CAT$(0)$  spaces [Aleksandrov spaces of nonnegative curvature].
Then $X=\bigl(\prod_m (X_m,x_m)\bigr)_2$  is also a CAT$(0)$ space [an Aleksandrov space of nonnegative curvature].
\end{proposition}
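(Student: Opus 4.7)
The plan is to verify the defining quadratic inequality~\eqref{eq:CAT(0)} on the $\ell_2$-product $X$ coordinate-by-coordinate, exploiting the fact that~\eqref{eq:CAT(0)} is a quadratic relation and is therefore transparently compatible with $\ell_2$-summation.

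First, given $u=(u_n)_n, v=(v_n)_n \in X$, I would construct a constant-speed geodesic joining them coordinatewise: let $\phi_n:[0,1]\to X_n$ be the (constant-speed) geodesic from $u_n$ to $v_n$ guaranteed to exist in the CAT$(0)$ factor $X_n$, and set $\phi(t)\eqdef (\phi_n(t))_n$. To see that $\phi(t)\in X$, the triangle inequality together with $(a+b)^2\le 2a^2+2b^2$ yields
\[
\sum_n d_n(\phi_n(t),x_n)^2 \le 2\sum_n d_n(\phi_n(t),u_n)^2+2\sum_n d_n(u_n,x_n)^2 \le 2\, d_X(u,v)^2+2\, d_X(u,\mathbf{x})^2<\infty,
\]
where $\mathbf{x}\eqdef (x_n)_n$ is the distinguished basepoint of $X$. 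A one-line computation using $d_n(\phi_n(s),\phi_n(t))=|s-t|\,d_n(u_n,v_n)$ then gives $d_X(\phi(s),\phi(t))^2=(s-t)^2 d_X(u,v)^2$, so $\phi$ is a constant-speed geodesic in $X$ from $u$ to $v$. In particular, $X$ is geodesic.

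Second, for any third point $w=(w_n)_n \in X$, I would apply the CAT$(0)$ inequality~\eqref{eq:CAT(0)} inside each factor $X_n$ to the triangle with vertices $w_n,u_n,v_n$ and geodesic $\phi_n$:
\[
d_n(w_n,\phi_n(t))^2 \le (1-t)d_n(w_n,u_n)^2+t\, d_n(w_n,v_n)^2-t(1-t) d_n(u_n,v_n)^2,
\]
and sum these inequalities over $n\in\N$. By the definition of the $\ell_2$-product metric, the resulting sum is precisely the CAT$(0)$ inequality~\eqref{eq:CAT(0)} applied to $w,u,v$ and $\phi$ in $X$, completing the verification that $X$ is a CAT$(0)$ space. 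The Aleksandrov nonnegative curvature case is entirely parallel: every instance of~\eqref{eq:CAT(0)} is reversed, and reversed quadratic inequalities are equally well preserved under $\ell_2$-summation.

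The only point that requires any care is the verification that the coordinatewise geodesic $\phi(t)$ lies in the restricted $\ell_2$-product $X$, which is dispatched by the triangle-inequality estimate above. Beyond that, no real obstacle is expected, because~\eqref{eq:CAT(0)} is a quadratic inequality in distances while the $\ell_2$-metric is built by summing squared distances, making the passage to $\ell_2$-products essentially automatic.
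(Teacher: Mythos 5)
Your approach is the paper's coordinatewise one, and your explicit check that the product curve lies in $X$ and is a constant-speed geodesic is a welcome addition (the paper's proof takes the existence of geodesics in $X$ for granted). But there is a gap: \autoref{def:CAT(0)} requires inequality~\eqref{eq:CAT(0)} to hold for \emph{every} constant-speed geodesic joining the two endpoints, whereas you verify it only for the particular coordinatewise geodesic $\phi=(\phi_n)_n$ that you construct. The paper instead starts from an \emph{arbitrary} geodesic $\phi$ in $X$ and observes that each coordinate $\phi_n$ is itself a geodesic in $X_n$; this reduction is where the (small but real) content lies. Concretely, if $\psi$ is any constant-speed geodesic from $u$ to $v$ in $X$ and $t\in(0,1)$, set $a_n=d_n(u_n,\psi_n(t))$, $b_n=d_n(\psi_n(t),v_n)$, $c_n=d_n(u_n,v_n)$. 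Then $a_n+b_n\ge c_n$ pointwise while $\|a\|_{\ell_2}+\|b\|_{\ell_2}=t\|c\|_{\ell_2}+(1-t)\|c\|_{\ell_2}=\|c\|_{\ell_2}\le\|a+b\|_{\ell_2}\le\|a\|_{\ell_2}+\|b\|_{\ell_2}$; the equality case of Minkowski's inequality forces $a_n=t\,c_n$ and $b_n=(1-t)c_n$ for every $n$, i.e., each $\psi_n$ is a constant-speed geodesic in $X_n$, after which your coordinatewise summation applies verbatim to $\psi$.

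In the CAT$(0)$ case your argument can also be completed without this decomposition: applying the inequality you proved for the product geodesic $\phi$ with comparison point $x=\psi(t)$ gives $d_X(\psi(t),\phi(t))^2\le (1-t)t^2d_X(u,v)^2+t(1-t)^2d_X(u,v)^2-t(1-t)d_X(u,v)^2=0$, so $\psi=\phi$ and geodesics in $X$ are unique, whence your single verification suffices. That escape route is \emph{not} available in the bracketed nonnegative-curvature case of \autoref{def:NNC}: the reversed inequality only yields the vacuous bound $d_X(\psi(t),\phi(t))^2\ge 0$, and geodesics in nonnegatively curved Alexandrov spaces need not be unique. So the sentence ``the Aleksandrov case is entirely parallel'' hides exactly the missing step, and there you genuinely need the Minkowski-equality argument above to reduce an arbitrary geodesic of $X$ to coordinate geodesics before reversing and summing the inequalities.
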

\begin{proof}
We consider the CAT$(0)$ case.
Fix $x=(x_n)_n\in X$, $y=(y_n)_n\in X$, $z=(z_n)_n\in X$.
and a geodesic $\phi=(\phi_n)_n:[0,1]\to X$ between
$\phi(0)=y$ and $\phi(1)=z$.
It is easy to check that every $\phi_n$ is a geodesic between $y_n$ and $z_n$ in $X_n$. Applying~\eqref{eq:CAT(0)} on $x_n,y_n,z_n\in X_n$ and the geodesic $\phi_n$, we have
\begin{equation*}
d_{n}(x_n,\phi_n(t))^2\le (1-t)d_{n}(x_n,y_n)^2+td_{n}(x_n,z_n)^2
-t(1-t)d_{n}(y_n,z_n)^2.
\end{equation*}
Summing these inequalities over $n$ we obtain the CAT$(0)$ inequality in $X$.
For Aleksandrov spaces of nonnegative curvature, the proof is similar, with the inequalities reversed.
\end{proof}

Combining \autoref{prop:CAT(0)-l2sum} and \autoref{prop:ell2-union-sum} we arrive at the following corollary. 

\begin{corollary} \label{cor:CAT(0)-union}
Fix $D\ge 1$.
Let $((X_n,d_n),x_n)$ be a sequence of pointed metric spaces such that for every $n\in \mathbb N$, $X_n$
embeds in some CAT$(0)$ space [Alexandrov space of nonnegative curvature] with distortion at most $D$.
Then $\bigl(\prod_n (X_n,x_n)\bigr)_2$  embeds in some CAT$(0)$ space [Alexandrov space of nonnegative curvature] with distortion at most $D$.
\end{corollary}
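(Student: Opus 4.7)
The plan is to combine the two preceding propositions essentially directly, with a small rescaling step to make the individual embeddings compatible with the $\ell_2$-product construction. Throughout the sketch below, the CAT(0) and nonnegative-curvature cases are parallel, so I will describe only the CAT(0) case.

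First, for each $n\in\mathbb N$, use the hypothesis to choose an embedding $f_n:X_n\to Y_n$ into some CAT(0) space $(Y_n,d_{Y_n})$ of distortion at most $D$. By the definition of distortion, there exists $s_n>0$ such that $d_n(x,y)\le s_n d_{Y_n}(f_n(x),f_n(y))\le D\cdot d_n(x,y)$ for all $x,y\in X_n$. By \autoref{obs:scaled-CAT(0)}, the rescaled space $(Y_n,s_n d_{Y_n})$ is still CAT(0), and with respect to this new metric $f_n$ is non-contractive and $D$-Lipschitz. So, after replacing $d_{Y_n}$ by $s_n d_{Y_n}$, we may assume
\[
d_n(x,y)\le d_{Y_n}(f_n(x),f_n(y))\le D\cdot d_n(x,y),\qquad\forall x,y\in X_n.
\]

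Next, set $y_n\eqdef f_n(x_n)\in Y_n$ and consider the pointed $\ell_2$-product $(Y,d_Y)\eqdef\bigl(\prod_n(Y_n,y_n)\bigr)_2$. By \autoref{prop:CAT(0)-l2sum}, $Y$ is itself a CAT(0) space. Define
\[
F:\Bigl(\prod_n(X_n,x_n)\Bigr)_2\longrightarrow Y,\qquad F\bigl((z_n)_n\bigr)\eqdef\bigl(f_n(z_n)\bigr)_n.
\]
The map $F$ is well-defined because, for any $z=(z_n)_n$ in the $\ell_2$-product,
\[
\sum_n d_{Y_n}(f_n(z_n),y_n)^2\le D^2\sum_n d_n(z_n,x_n)^2<\infty.
\]

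Finally, for $z=(z_n)_n$ and $w=(w_n)_n$ in $\bigl(\prod_n(X_n,x_n)\bigr)_2$, summing the coordinate-wise bi-Lipschitz inequality in squared form gives
\[
\sum_n d_n(z_n,w_n)^2\le \sum_n d_{Y_n}(f_n(z_n),f_n(w_n))^2\le D^2\sum_n d_n(z_n,w_n)^2,
\]
i.e., $d_X(z,w)\le d_Y(F(z),F(w))\le D\cdot d_X(z,w)$. Hence $F$ is an embedding of distortion at most $D$ into the CAT(0) space $Y$, proving the corollary. There is no real obstacle here beyond the rescaling bookkeeping: the only point requiring care is that the individual embeddings $f_n$ may come with different implicit scaling factors $s_n$, which must be absorbed into the target metrics so that a single $\ell_2$ product can combine them, and \autoref{obs:scaled-CAT(0)} is precisely what permits this.
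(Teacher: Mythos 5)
Your proof is correct and follows essentially the same route as the paper, which simply combines \autoref{prop:CAT(0)-l2sum} with the coordinate-wise embedding of \autoref{prop:ell2-union-sum}; the rescaling via \autoref{obs:scaled-CAT(0)} and the summed coordinate-wise distortion bound are exactly the intended bookkeeping.
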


We will also need the analog of CAT$(0)$ for the curvature $\kappa=1$.
\begin{definition}[CAT(1) spaces]
\label{def:CAT(1)}
A  metric space $(X,d_X)$ is a CAT$(1)$ space if it satisfies the following conditions. First, for every $x,y\in X$ with $d_X(x,y)<\pi$ there exists a geodesic joining $x$ to $y$, that is, a curve $\phi:[0,1]\to X$ that satisfies $d_X(\phi(t),x)=td_X(x,y)$ and $d_X(\phi(t),y)=(1-t)d_X(x,y)$
 for every $t\in [0,1]$. Suppose that $x,y,z\in X$ satisfy
 $$
 d_X(x,y)+d_X(y,z)+d_X(z,x)<2\pi,
 $$
and that $\phi_{x,y},\phi_{y,z},\phi_{z,x}:[0,1]\to X$ are the geodesics
that join $x$ to $y$, $y$ to $z$, and $z$ to $x$, respectively. Let
$\mathbb{S}^2$ be the unit Euclidean sphere in $\R^3$, and let $d_{\mathbb S^2}$
denote the geodesic metric on $\mathbb{S}^2$ (thus the diameter of $\mathbb S^2$
equals $\pi$ under this metric).
As explained in~\cite{BH99},
there exist $a,b,c\in \mathbb{S}^2$ such that $d_{\mathbb{S}^2}(a,b)=d_X(x,y)$,
$d_{\mathbb{S}^2}(b,c)=d_X(y,z)$ and $d_{\mathbb{S}^2}(c,a)=d_X(z,x)$. Let
$\phi_{a,b},\phi_{b,c},\phi_{c,a}:[0,1]\to \mathbb{S}^2$ be the geodesics that join $a$
to $b$, $b$ to $c$, and $c$ to $a$, respectively. Then the remaining
requirement in the definition of a CAT$(1)$ space is that for every
$s,t\in [0,1]$ we have $d_X(\phi_{x,y}(s),\phi_{y,z}(t))\le
d_{\mathbb{S}^2}(\phi_{a,b}(s),\phi_{b,c}(t))$.
\end{definition}

Note that \autoref{def:CAT(0)} can be cast in the same language as \autoref{def:CAT(1)} but replacing the model space $\mathbb{S}^2$ with the Euclidean plane $\mathbb R^2$.
For more on these fundamental notions,
see~\cite{BGP92,Per95,BH99,Stu99,BBI01,Sturm-NPC,Gro07}.
The following proposition is a straightforward corollary
of the relation between distances in $\mathbb R^2$ and
$\mathbb S^2$, see~\cite[Theorem~II.1.12]{BH99}.

\begin{proposition} \label{prop:CAT(0)=>CAT1}
 Every CAT(0) space is a CAT(1) space.
\end{proposition}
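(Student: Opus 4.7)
The plan is to deduce \autoref{prop:CAT(0)=>CAT1} from the classical fact that, among triangles sharing the same side lengths, the one living in the model space of smaller curvature is ``thinner'' than the one living in the model space of larger curvature. Concretely, I will compare a geodesic triangle in $X$ first to its Euclidean comparison triangle (using CAT$(0)$) and then to its spherical comparison triangle (using the Euclidean-versus-spherical comparison in the model spaces).

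First I would check the geodesic-existence clause in \autoref{def:CAT(1)}: CAT$(0)$ already gives a (constant speed) geodesic between any two points of $X$, so in particular between any two points at distance less than $\pi$. Next, fix $x,y,z\in X$ with $d_X(x,y)+d_X(y,z)+d_X(z,x)<2\pi$, and fix geodesics $\phi_{x,y},\phi_{y,z},\phi_{z,x}\colon[0,1]\to X$. Build in $\mathbb{R}^2$ a Euclidean comparison triangle with vertices $a',b',c'\in\mathbb{R}^2$ satisfying $\|a'-b'\|=d_X(x,y)$, $\|b'-c'\|=d_X(y,z)$, $\|c'-a'\|=d_X(z,x)$ (it exists because the three distances satisfy the triangle inequality), and let $\phi_{a',b'},\phi_{b',c'},\phi_{c',a'}\colon[0,1]\to\mathbb{R}^2$ be the corresponding Euclidean segments. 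In parallel, build the spherical comparison triangle $a,b,c\in\mathbb{S}^2$ of the same side lengths, whose existence under the perimeter-$<2\pi$ assumption is recorded in \cite{BH99}, with geodesics $\phi_{a,b},\phi_{b,c},\phi_{c,a}\colon[0,1]\to\mathbb{S}^2$. The CAT$(0)$ inequality \eqref{eq:CAT(0)}, applied twice, yields
\[
d_X\bigl(\phi_{x,y}(s),\phi_{y,z}(t)\bigr)\le \bigl\|\phi_{a',b'}(s)-\phi_{b',c'}(t)\bigr\|_{\mathbb{R}^2}
\]
for all $s,t\in[0,1]$ (this is the standard equivalent reformulation of CAT$(0)$, obtained by writing $\phi_{x,y}(s)$ as the geodesic midpoint-type image produced by iterating \eqref{eq:CAT(0)}).

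It therefore suffices to show the purely two-model-space comparison
\[
\bigl\|\phi_{a',b'}(s)-\phi_{b',c'}(t)\bigr\|_{\mathbb{R}^2}\le d_{\mathbb{S}^2}\bigl(\phi_{a,b}(s),\phi_{b,c}(t)\bigr),\qquad \forall s,t\in[0,1],
\]
for triangles with matching side lengths. This is the heart of the proof and is exactly \cite[Theorem~II.1.12]{BH99}, which the paper cites. Its content is the spherical law of cosines plus monotonicity: writing $\alpha_E,\alpha_S$ for the angle at $b',b$ in the two comparison triangles, the laws of cosines in $\mathbb{R}^2$ and $\mathbb{S}^2$ force $\alpha_S\ge\alpha_E$ (positive curvature fattens the angles for fixed side lengths), and then applying the laws of cosines once more at the vertex $b',b$ to compute the chord between $\phi_{a',b'}(s)$ and $\phi_{b',c'}(t)$ (respectively $\phi_{a,b}(s)$ and $\phi_{b,c}(t)$) yields the inequality since $\cos$ is monotone decreasing on $[0,\pi]$. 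The main obstacle is precisely this Alexandrov-type comparison lemma in the model spaces; however, since it is classical and can be invoked as a black box from \cite{BH99}, the proof of the proposition reduces to the two-line chain displayed above.
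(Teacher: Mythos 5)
Your proof is correct and follows essentially the same route as the paper, which simply cites \cite[Theorem~II.1.12]{BH99}: verify the geodesic clause from CAT$(0)$, reduce via the standard two-points-on-two-sides reformulation of \eqref{eq:CAT(0)} to the Euclidean comparison triangle, and then invoke the classical $\mathbb{R}^2$-versus-$\mathbb{S}^2$ model-space comparison. Your write-up just fills in the details that the paper leaves to the reference.
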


An important
theorem of Berestovski{\u\i}~\cite{Ber-Cone} (see also~\cite{ABN86}
and~\cite[Theoerm~II.3.14]{BH99}) asserts
\begin{theorem} \label{thm:berestovskii}
A metric space $(X,d_X)$ is CAT$(1)$ if and only if\/ $\cone(X)$ is  a CAT$(0)$ space.
\end{theorem}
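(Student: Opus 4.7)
My plan is to prove both directions by exploiting the fact that the Euclidean cone over the round sphere $\mathbb{S}^2$ is isometric to $\mathbb{R}^3$ (with the apex at the origin), so the ``model'' comparison inequality $\CAT(1)$ on $\mathbb{S}^2$ and the ``model'' equality $\CAT(0)$ on $\mathbb{R}^3$ are literally the same statement once translated through the cone construction. The structural backbone of the argument is that the subcone over any two points $x,y\in X$ with $d_X(x,y)<\pi$, namely $\{(s,x'):s>0,\ x'\in \gamma_{x,y}\}$ for a geodesic $\gamma_{x,y}$ joining $x$ and $y$, is isometric to a flat Euclidean sector of angle $d_X(x,y)$, hence is $\CAT(0)$. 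This observation should be checked first, directly from the law of cosines and~\eqref{eq:def cone}.

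For the forward direction ($X$ is $\CAT(1)\Rightarrow\cone(X)$ is $\CAT(0)$), I would first establish that $\cone(X)$ is geodesic: between $(s,x)$ and $(t,y)$ with $d_X(x,y)<\pi$ one uses the flat sector description, producing a geodesic whose image in $X$ is the $\CAT(1)$ geodesic $\gamma_{x,y}$; when $d_X(x,y)\ge \pi$ the unique geodesic passes through the cusp $o$. Next, for the comparison inequality, pick three points $P_i=(s_i,x_i)\in\cone(X)$, $i=1,2,3$. If the perimeter $d_X(x_1,x_2)+d_X(x_2,x_3)+d_X(x_3,x_1)<2\pi$, the $\CAT(1)$ condition on $X$ furnishes a spherical comparison triangle $\bar x_1,\bar x_2,\bar x_3\in\mathbb{S}^2$ and a distance-dominating comparison for points along the sides. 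Lift this to $\bar P_i=(s_i,\bar x_i)\in\cone(\mathbb{S}^2)=\mathbb{R}^3$; the formula~\eqref{eq:def cone} combined with the $\CAT(1)$ comparison on $X$ shows $d_{\cone(X)}(P_i,P_j)=d_{\mathbb{R}^3}(\bar P_i,\bar P_j)$ and that any cone-geodesic point between two of the $P_i$ has distance to the third vertex $\le$ the corresponding distance in $\mathbb{R}^3$. Since $\mathbb{R}^3$ is flat, the Euclidean law of cosines gives the $\CAT(0)$ inequality~\eqref{eq:CAT(0)} for the original triple. The leftover cases where the perimeter exceeds $2\pi$, or some pairwise distance equals or exceeds $\pi$, need to be handled separately: the corresponding cone-geodesics pass through $o$, and the relevant triangle in $\cone(X)$ degenerates into two Euclidean sectors glued at $o$, for which the $\CAT(0)$ inequality is elementary.

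For the converse ($\cone(X)$ is $\CAT(0)\Rightarrow X$ is $\CAT(1)$), I would use the canonical isometric slice $x\mapsto (1,x)$ (which is not a metric embedding, but angular information is preserved) together with the following two observations. First, if $\cone(X)$ is $\CAT(0)$, it is a length space, and any rectifiable curve in $X$ of length $<\pi$ can be used to produce a path in the unit slice whose cone-geodesic stays in a 2D sector, forcing $X$ to be geodesic below distance~$\pi$. Second, for three points $x_1,x_2,x_3\in X$ with perimeter $<2\pi$, lift to $P_i=(1,x_i)\in \cone(X)$; the triangle $P_1P_2P_3$ is a geodesic triangle in $\cone(X)$ that fits (by~\eqref{eq:def cone}) into a comparison triangle in $\mathbb{R}^3=\cone(\mathbb{S}^2)$ with vertices $\bar P_i=(1,\bar x_i)$. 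The $\CAT(0)$ comparison in $\cone(X)$ against this Euclidean triangle, combined with monotonicity of the cone formula $f(a,s,t)=\sqrt{s^2+t^2-2st\cos a}$ in the angular variable $a$, transfers back to the spherical comparison inequality defining $\CAT(1)$ for $X$.

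The main obstacle I expect is the bookkeeping around the truncation $\min\{\pi,d_X(\cdot,\cdot)\}$ in~\eqref{eq:def cone} and the degenerate cases where a geodesic in $\cone(X)$ goes through the cusp $o$: one must verify that the reduction to a flat Euclidean sector is valid exactly in the regime $d_X(x,y)<\pi$, and that pairs with $d_X(x,y)\ge\pi$ behave like pairs of antipodal points on $\mathbb{S}^2$ under the cone construction. The monotonicity calculus needed to compare cone-distances to Euclidean distances at corresponding points on a geodesic (the ``Alexandrov lemma'' style argument) is standard but tedious; this is where I would spend the most care, following the angle-comparison framework laid out in~\cite[Chapter~I.5 and II.3]{BH99}.
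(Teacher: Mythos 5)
A preliminary remark on scope: the paper does not prove this statement. It is quoted as Berestovski{\u\i}'s theorem, with references to \cite{Ber-Cone} and \cite[Theorem~II.3.14]{BH99}, so there is no in-paper argument to compare against; what you have written is a reconstruction of the standard textbook proof. Its architecture --- the flat-sector description of the subcone over a geodesic of length $<\pi$, the identification $\cone(\mathbb{S}^2)\cong\R^3$, the lift of spherical comparison triangles to Euclidean ones, and the transfer back and forth via monotonicity of $(s,t,a)\mapsto\sqrt{s^2+t^2-2st\cos a}$ in $a\in[0,\pi]$ --- is exactly the right one.

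There is, however, a genuine gap in the forward direction. Your case analysis splits into ``projected perimeter $<2\pi$'' (handled by the spherical comparison) and ``leftover cases,'' which you dispose of by asserting that the relevant cone-geodesics pass through the cusp $o$. That is correct when some pairwise angular distance is $\ge\pi$, but it is false in the remaining sub-case where every $d_X(x_i,x_j)$ is strictly less than $\pi$ and yet $d_X(x_1,x_2)+d_X(x_2,x_3)+d_X(x_3,x_1)\ge 2\pi$ (e.g.\ three points pairwise at distance $\tfrac{2\pi}{3}+\e$). In that regime no spherical comparison triangle exists, the $\CAT(1)$ hypothesis says nothing directly about the triple, and none of the three cone-geodesics joining the $P_i$ passes through $o$, so the ``two sectors glued at the cusp'' picture does not describe the triangle. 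This is precisely the delicate half of the proof of \cite[Theorem~II.3.14]{BH99}: one glues the two flat sectors over $[x_1,x_2]$ and $[x_1,x_3]$ along the radial segment $[o,P_1]$, notes that the resulting surface has cone angle $\ge\pi$ at $o$ exactly because the perimeter is $\ge 2\pi$, and invokes Alexandrov's lemma (a gluing/majorization argument) to conclude the comparison for $\Delta(P_1,P_2,P_3)$. Without this step the forward implication does not cover all triangles in $\cone(X)$. A smaller but real issue in the converse: to produce geodesics in $X$ between points at distance $<\pi$ you start from rectifiable curves in $X$, which presupposes $X$ is a length space --- not part of the hypothesis; the correct direction is to start from the $\cone(X)$-geodesic joining $(1,x)$ to $(1,y)$ and check that its angular reparametrization is a constant-speed geodesic of $X$.
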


\begin{corollary} \label{cor:cone-stable}
    A Euclidean cone of a subset of a CAT(0) space is a subset of \emph{some} CAT(0) space.
\end{corollary}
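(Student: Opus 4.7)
My plan is to reduce the corollary to the combination of \autoref{prop:CAT(0)=>CAT1} and \autoref{thm:berestovskii}, using the elementary fact that the Euclidean cone construction respects metric subspaces. Let $(Y,d_Y)$ be the ambient CAT(0) space and $X \subseteq Y$ be the given subset, equipped with the restricted metric $d_X = d_Y|_{X \times X}$.

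The first step I would carry out is to verify that the natural map $\iota : \cone(X) \hookrightarrow \cone(Y)$, sending an equivalence class of $(s,x) \in (0,\infty) \times X$ to the equivalence class of $(s,x) \in (0,\infty) \times Y$, extends to an isometric embedding. This is immediate from the formula in \autoref{def:cone}: the cone metric depends only on pairwise distances in the base, and since $d_X = d_Y|_{X\times X}$, one has $\min\{\pi, d_X(x,y)\} = \min\{\pi, d_Y(x,y)\}$ for all $x,y \in X$; hence $d_{\cone(X)}((s,x),(t,y)) = d_{\cone(Y)}((s,x),(t,y))$. This identity passes to the completions defining $\cone(X)$ and $\cone(Y)$.

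The second step is to apply the two cited structural results: by \autoref{prop:CAT(0)=>CAT1}, the CAT(0) space $Y$ is also CAT(1), and by \autoref{thm:berestovskii} (Berestovski\u{\i}'s theorem) it follows that $\cone(Y)$ is itself CAT(0). Combined with the isometric embedding from the first step, this places $\cone(X)$ inside the CAT(0) space $\cone(Y)$, which is exactly the assertion of the corollary.

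I do not expect any genuine obstacle here, since all three ingredients are stated just above. The only conceptual point worth flagging is that one cannot hope to apply Berestovski\u{\i}'s theorem directly to $X$: an arbitrary subset of a CAT(0) space need not be geodesic and hence is not a CAT(1) space in its own right, so $\cone(X)$ may well fail to be CAT(0). Routing through the ambient $Y$ is precisely the reason the conclusion is formulated as ``subset of some CAT(0) space'' rather than as a CAT(0) property of $\cone(X)$ itself.
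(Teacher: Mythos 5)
Your proof is correct and is precisely the argument the paper intends: the cone metric depends only on pairwise distances in the base, so $\cone(X)$ sits isometrically inside $\cone(Y)$, which is CAT(0) by \autoref{prop:CAT(0)=>CAT1} and \autoref{thm:berestovskii}. Your remark on why Berestovski\u{\i}'s theorem cannot be applied to $X$ directly is also the right reason the conclusion is phrased as it is.
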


Another corollary of \autoref{thm:berestovskii} and \autoref{prop:cone-truncation}:
\begin{corollary} \label{cor:CAT1-trunc-is-CAT(0)}
    Let $(X,d_X)$ be a CAT$(1)$ space.
    The metric $(X,\min\{d_X,\pi\})$ embeds in some CAT$(0)$ space with distortion at most $\pi /2$.
\end{corollary}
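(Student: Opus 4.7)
The proof is essentially a one-line combination of the two ingredients that immediately precede the corollary, so my plan is to simply chain them together in the correct order.

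First I would invoke Berestovski{\u\i}'s theorem (\autoref{thm:berestovskii}) to pass from the given CAT$(1)$ space $(X,d_X)$ to its Euclidean cone $\cone(X,d_X)$, which is a CAT$(0)$ space. This gives me a concrete CAT$(0)$ target space into which I can try to embed the truncated metric.

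Next, I would apply \autoref{prop:cone-truncation} directly to $(X,d_X)$. That proposition, proved via the explicit embedding $x\mapsto (1,x)\in\cone(X,d_X)$ together with the elementary estimate $\tfrac{2}{\pi}\theta\le \sqrt{2(1-\cos\theta)}\le \theta$ for $\theta\in[0,\pi]$, asserts exactly that
\[
c_{\cone(X,d_X)}\bigl(X,\min\{d_X,\pi\}\bigr)\le \pi/2.
\]
Combining the two steps, the metric space $(X,\min\{d_X,\pi\})$ bi-Lipschitzly embeds with distortion at most $\pi/2$ into the CAT$(0)$ space $\cone(X,d_X)$, which is the desired conclusion.

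There is no genuine obstacle here: the content of the corollary is already contained in the pair of results cited, and the only thing one is doing is observing that the target $\cone(X,d_X)$ furnished by \autoref{prop:cone-truncation} happens, by virtue of \autoref{thm:berestovskii} and the CAT$(1)$ hypothesis on $X$, to be CAT$(0)$. In particular, one does not need to invoke the metric transform formalism of Section~\ref{sec:transform-cone} or any of the product/union constructions, since the single truncation $\varphi(t)=\min\{t,\pi\}$ is handled directly by a single Euclidean cone.
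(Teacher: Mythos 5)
Your proof is correct and is exactly the argument the paper intends: the corollary is stated as an immediate consequence of \autoref{thm:berestovskii} (which makes $\cone(X,d_X)$ a CAT$(0)$ space) and \autoref{prop:cone-truncation} (which embeds the truncation into that cone with distortion at most $\pi/2$). Nothing further is needed.
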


Finally, we are in a position to prove that the class of CAT$(0)$ spaces is stable under metric transforms.

\begin{proof}[Proof of \autoref{thm:cat(0)-transform-closure}]
By \autoref{prop:transform-in-producr-cones}
$(X,\varphi\circ d)$
embeds with a distortion smaller at most $\pi \sqrt{3/2}$ in the space $Y$ that is an orthogonal product of a scaled $(X,d_X)$ and an orthogonal product of a sequence of scaled Euclidean cones of scaled versions of $(X,d_X)$.
But by \autoref{prop:CAT(0)-l2sum}, \autoref{obs:scaled-CAT(0)}, and \autoref{cor:cone-stable}, these operations are closed for CAT(0) spaces. Therefore, $Y$ is also a CAT(0) space.
\end{proof}

\begin{question}
Are subsets of Aleksandrov spaces of nonnegative curvature bi-Lipschitzly closed under metric transforms?
\end{question}

\subsection{Angles and barycenters}

In CAT(0) spaces there are natural notions of angles and tangent cones ~\cite[{I.1.12}, II.3.1]{BH99},
which we now define.

\begin{definition}
[Alexandrov angles in Hadamard spaces]
\label{def:alexandrov-angle}
Let $(X,d)$ be a Hadamard space and $p,x,y\in X$.
Let $\phi:[0,d(p,x)]\to X$ be the (unique) geodesic path connecting $p$ and $x$, and $\phi':[0,d(p,y)]\to X$ be the geodesic path connecting $p$ and $y$.
The (Alexandrov) angle $\angle_p(x,y)\in[0,\pi]$ between $\phi$ and $\phi'$ is defined as
\begin{equation} \label{eq:cat0-angle}
\angle_p(x,y)= \limsup_{t,t'\to 0^+}
\overline{\angle}_p(\phi(t),\phi'(t')).
\end{equation}
where $\overline{\angle}_p(u,v)\in[0,\pi]$ is defined using
\emph{the law of cosines},
\begin{equation}\label{eq:comp-triangle-law}
\cos \overline{\angle}_p(u,v)= \frac{d(p,u)^2+d(p,v)^2-d(u,v)^2}{2d(p,u)d(p,v)}.
\end{equation}
\end{definition}

\begin{proposition}[Triangle inequality for angles {(see~\cite[{}I.1.14]{BH99})}]
\label{prop:angle-triangle-ineq}
Let $(X,d)$ be a Hadamard 
space, and $p,x,y,z\in X$. Then
\[
\angle_p(x,z) \le \angle_p(x,y)+\angle_p(y,z).
\]
\end{proposition}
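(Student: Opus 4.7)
The plan is to follow the classical argument for the angle triangle inequality in nonpositively curved spaces, whose key ingredient is a monotonicity property of the comparison angle $\overline{\angle}_p$ along geodesics emanating from $p$. Let $\phi_1,\phi_2,\phi_3:[0,L]\to X$ be the unit-speed geodesics from $p$ to $x$, $y$, $z$ respectively (defined on a common parameter interval up to truncation). The angle $\angle_p(x,y)$ is defined via the limsup in~\eqref{eq:cat0-angle}; my first step is to upgrade this to a genuine limit and an infimum. Specifically, I will show the \emph{monotonicity lemma}: for $0<t'\le t$ and $0<s'\le s$ (all smaller than $L$),
\[
\overline{\angle}_p(\phi_i(t'),\phi_j(s'))\le \overline{\angle}_p(\phi_i(t),\phi_j(s)).
\]
This is proved by applying the CAT$(0)$ inequality~\eqref{eq:CAT(0)} twice: first to the comparison triangle with vertices $p,\phi_i(t),\phi_j(s)$ at the interior point $\phi_i(t')$ of the geodesic from $p$ to $\phi_i(t)$, which bounds $d(\phi_i(t'),\phi_j(s))$; then again to interpolate in the other coordinate. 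Combined with the law of cosines~\eqref{eq:comp-triangle-law}, this converts the distance bounds into the stated angle monotonicity. Consequently the $\limsup$ in~\eqref{eq:cat0-angle} equals $\inf_{t,s>0}\overline{\angle}_p(\phi_i(t),\phi_j(s))$ and is attained as an honest limit as $t,s\to 0^+$.

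Second, given the monotonicity, I reduce the triangle inequality to a statement about planar (or spherical) angles at a common vertex. For any $\e>0$, by the first step we may choose $t>0$ so small that
\[
\overline{\angle}_p(\phi_1(t),\phi_2(t))<\angle_p(x,y)+\e,\quad \overline{\angle}_p(\phi_2(t),\phi_3(t))<\angle_p(y,z)+\e,\quad \overline{\angle}_p(\phi_1(t),\phi_3(t))>\angle_p(x,z)-\e.
\]
Now consider the planar comparison triangles for the three pairs $\{p,\phi_i(t),\phi_j(t)\}$, each drawn in $\mathbb{R}^2$ sharing the vertex $p$ and a common side of length $t$ (the image of $\phi_2$). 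These form, in the plane, three rays from $p$ of length $t$ making Euclidean angles $\overline{\angle}_p(\phi_i(t),\phi_j(t))$; the Euclidean triangle inequality for angles at a vertex (equivalently, the triangle inequality on the unit sphere $\mathbb{S}^1$) then yields
\[
\overline{\angle}_p(\phi_1(t),\phi_3(t))\le \overline{\angle}_p(\phi_1(t),\phi_2(t))+\overline{\angle}_p(\phi_2(t),\phi_3(t)).
\]
Combining this with the three displayed inequalities and letting $\e\to 0$ gives $\angle_p(x,z)\le\angle_p(x,y)+\angle_p(y,z)$.

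The main technical obstacle is the monotonicity lemma: applying the CAT$(0)$ inequality at the interior point of a comparison triangle produces a distance bound, but translating that cleanly into a comparison angle inequality requires keeping careful track of which side of the triangle is being interpolated and invoking the planar Alexandrov lemma (the elementary fact that if two planar triangles share a common side with equal length opposite sides satisfying a certain inequality, then one angle dominates the other). Once this lemma is in hand, the reduction to the planar angle triangle inequality is essentially a bookkeeping exercise via the law of cosines~\eqref{eq:comp-triangle-law}. Note that completeness (i.e.\ the Hadamard hypothesis) enters only to guarantee uniqueness of geodesics; the argument itself is local near $p$ and only needs geodesics of small length, so it works in any CAT$(0)$ space.
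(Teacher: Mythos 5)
The first half of your argument (monotonicity of $\overline{\angle}_p$ along geodesics in a CAT$(0)$ space, hence the $\limsup$ in~\eqref{eq:cat0-angle} is an infimum and a genuine limit) is correct; it is \cite[Proposition~II.3.1]{BH99}. The gap is in the second step. Drawing the three comparison triangles in $\R^2$ with a common vertex $\bar p$ and declaring that ``the Euclidean triangle inequality for angles at a vertex'' yields
$\overline{\angle}_p(\phi_1(t),\phi_3(t))\le \overline{\angle}_p(\phi_1(t),\phi_2(t))+\overline{\angle}_p(\phi_2(t),\phi_3(t))$
is not a valid deduction. If you place $\bar a,\bar b,\bar c$ at distance $t$ from $\bar p$ so that the angles $\angle_{\bar p}(\bar a,\bar b)=\alpha:=\overline{\angle}_p(\phi_1(t),\phi_2(t))$ and $\angle_{\bar p}(\bar b,\bar c)=\beta:=\overline{\angle}_p(\phi_2(t),\phi_3(t))$ are realized, then $|\bar a-\bar c|\le 2t\sin\bigl(\tfrac{\alpha+\beta}{2}\bigr)$, whereas all the metric structure of $X$ gives you is $d(\phi_1(t),\phi_3(t))\le d(\phi_1(t),\phi_2(t))+d(\phi_2(t),\phi_3(t))=2t\sin(\alpha/2)+2t\sin(\beta/2)$, and the latter bound exceeds the former. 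So you cannot conclude $|\bar a-\bar c|\ge d(\phi_1(t),\phi_3(t))$, which is what the law of cosines would need. Concretely, the four-point metric space with $d(p,a)=d(p,b)=d(p,c)=1$, $d(a,b)=d(b,c)=1.2$, $d(a,c)=2$ satisfies all triangle inequalities, yet $\overline{\angle}_p(a,c)=\pi>2\cdot 2\arcsin(0.6)=\overline{\angle}_p(a,b)+\overline{\angle}_p(b,c)$; the comparison angle at a fixed radius simply does not satisfy the triangle inequality in a general metric space, so any correct proof of your intermediate claim would have to invoke the CAT$(0)$ hypothesis in an essential way, which your argument does not.

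The standard proof (this is \cite[Proposition~I.1.14]{BH99}, which the paper cites rather than proves, and which holds in \emph{every} metric space, with no curvature assumption and no monotonicity) avoids this by \emph{not} taking the comparison point on the middle geodesic at the same parameter. Arguing by contradiction, one fixes $t,s$ with $\overline{\angle}_p(\phi_1(t),\phi_3(s))$ close to $\angle_p(x,z)$, builds the planar comparison triangle for $(p,\phi_1(t),\phi_3(s))$, subdivides its angle at $\bar p$ by a cevian to a point $\bar r$ on the opposite side $[\bar x,\bar z]$ so that the two sub-angles strictly exceed $\angle_p(x,y)+\delta$ and $\angle_p(y,z)+\delta$ respectively, and then compares with the point $\phi_2(u)$ for $u=|\bar p-\bar r|$ --- a parameter on the middle geodesic dictated by the comparison configuration. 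The law of cosines then forces $d(\phi_1(t),\phi_2(u))<|\bar x-\bar r|$ and $d(\phi_2(u),\phi_3(s))<|\bar r-\bar z|$, and the triangle inequality in $X$ applied along $\bar x,\bar r,\bar z$ (which are \emph{collinear}, so the lengths add exactly) produces the contradiction. Replacing your second step by this argument repairs the proof; the monotonicity lemma, while true, is not needed.
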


\begin{definition}[Tangent Cones {(see~\cite[Def~I.3.18]{BH99})}]
\label{def:tangent}
Let $(X,d)$ be a Hadamard 
space and $p\in X$.
Define an equivalence relation $\sim$ on $X\setminus \{p\}$:
$x\sim y$ iff $\angle_p(x,y)=0$. Denote by $[x]_p$ the equivalence class of $x\in X$ in the relation $\sim$.
The \emph{space of directions} $S_pX$ is the completion of the quotient
$\bigl((X\setminus \{p\})/\sim \bigr)$ with distance $d_{S_pX}(x,y)=\angle_p(x,y)$. By \autoref{prop:angle-triangle-ineq}, this is a complete metric space.
The \emph{tangent cone} at $p\in X$ is defined to be $T_pX=\cone(S_pX)$, the Euclidean cone over the space of directions at $p\in X$. Since $S_pX$ is complete, so is $T_pX$ (see~\cite[Proposition~I.5.9]{BH99}).
\end{definition}

We shall need the following fact about the tangent cones of CAT(0) spaces.

\begin{proposition}[{\cite{Nikolaev}, see also~\cite[Theorem~I.3.19]{BH99}}]
If $(X,d)$ is a CAT$(0)$ space and $p\in X$, then $S_pX$ is a CAT$(1)$ space and $T_pX$ is a CAT$(0)$ space.
\end{proposition}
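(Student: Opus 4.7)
The plan is to reduce both claims to a single one via Berestovski\u{\i}'s theorem (\autoref{thm:berestovskii}): since $T_pX = \cone(S_pX)$, the cone $T_pX$ is CAT$(0)$ if and only if $S_pX$ is CAT$(1)$, so it suffices to establish the cone statement. I would therefore focus on producing $T_pX$ in a form in which it is manifestly a limit of CAT$(0)$ spaces, and then appeal to the closure of the CAT$(0)$ condition under such limits.

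First, I would realize $T_pX$ as the pointed Gromov--Hausdorff limit of the rescaled spaces $(X_n,p) = (X, n\,d, p)$ as $n\to\infty$. Each $(X_n,p)$ is CAT$(0)$ by \autoref{obs:scaled-CAT(0)}. The key technical input is the monotonicity of comparison angles: applying the CAT$(0)$ inequality \eqref{eq:CAT(0)} to the triple $(p,\phi(t),\phi'(t'))$ with midpoints of the $y$-to-$z$ geodesic and expanding via \eqref{eq:comp-triangle-law}, one finds that $\overline{\angle}_p(\phi(t),\phi'(t'))$ is nondecreasing as $t,t'\to 0^{+}$, so the $\limsup$ in \eqref{eq:cat0-angle} is actually a limit. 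Using this, for each direction $[x]_p\in S_pX$ and scale $s>0$, the point on the geodesic $\phi_x$ at $X$-distance $s/n$ from $p$ (hence $X_n$-distance $s$) yields a coherent family of ``ghost'' points whose pairwise distances in $X_n$ converge precisely to the cone distance given in \autoref{def:cone} for $S_pX$. This identifies the limit with $T_pX$, at least on the dense set of cone points coming from geodesic germs in $X$, and the completion inherent in both constructions extends the identification.

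Second, I would invoke the classical fact that the CAT$(0)$ inequality passes to pointed Gromov--Hausdorff limits of complete geodesic metric spaces. Concretely, \eqref{eq:CAT(0)} is a closed four-point inequality involving the midpoint of a geodesic; given four points in the limit, I would approximate them by quadruples in $X_n$ together with approximate midpoints, apply \eqref{eq:CAT(0)} in each $X_n$, and pass to the limit by Arzel\`a--Ascoli. Combined with completeness and the existence of geodesics in the limit (which follow from the same compactness argument applied to midpoint sequences), this yields the CAT$(0)$ property for $T_pX$.

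The main obstacle will be verifying that the Gromov--Hausdorff limit coincides with $T_pX$ rather than some other metric completion. The angle $\angle_p$ is defined via a $\limsup$, and the equivalence relation identifying directions with vanishing angle is built into the quotient; reconciling this with the limit procedure requires the comparison-angle monotonicity together with \autoref{prop:angle-triangle-ineq} to show that the natural map from rescaled ghost points to $T_pX$ is well defined on equivalence classes and isometric in the limit. Once the limit is correctly identified and the CAT$(0)$ inequality is transported across it, Berestovski\u{\i}'s theorem returns the CAT$(1)$ property of $S_pX$ for free.
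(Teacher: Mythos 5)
The paper does not prove this proposition; it is quoted from Nikolaev and \cite[Theorem~II.3.19]{BH99}, so there is no in-paper argument to compare against. Judged on its own terms, your blow-up strategy contains the right easy half but misses the hard half, and its central reduction is unavailable in the stated generality. The step that works: by monotonicity of comparison angles in CAT$(0)$ spaces, the $\limsup$ in \eqref{eq:cat0-angle} is a genuine limit, so for points of the form $(s,[x]_p)$, $(t,[y]_p)$ the rescaled distances $n\,d(\phi_x(s/n),\phi_y(t/n))$ do converge to the cone distance of \autoref{def:cone}; consequently any finite configuration of such points is a limit of configurations in the CAT$(0)$ spaces $(X,nd)$, and every \emph{closed} finitely-many-point comparison inequality passes to the limit. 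That gives the four-point part of the CAT$(0)$ condition on a dense subset of $T_pX$.

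The gaps are these. First, for a general CAT$(0)$ space (infinite-dimensional Hilbert space already) the pointed Gromov--Hausdorff limit of $(X,nd,p)$ does not exist, because balls in the rescalings are not uniformly totally bounded; and even replacing it by an ultralimit, the ultralimit is typically strictly larger than $T_pX$ (for Hilbert space the rescalings are all isometric to $X$, while the ultralimit is a nonseparable enlargement). So you cannot identify the limit with $T_pX$ and then invoke stability of CAT$(0)$ under limits; you would have to argue that $T_pX$ sits inside the ultralimit as a \emph{convex} subset, which is essentially the content you are trying to prove. Second, and more seriously, the genuinely hard part of Nikolaev's theorem is that $S_pX$ is $\pi$-geodesic (equivalently, that $T_pX$ is a geodesic space): \eqref{eq:CAT(0)} is an inequality \emph{about} geodesics, so without midpoints there is nothing to verify. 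Your appeal to Arzel\`a--Ascoli for midpoint sequences requires properness and fails here. A correct completion of your argument would instead run: establish the four-point comparison on ghost points as above; deduce from it that approximate midpoints of a fixed pair form a Cauchy family (the standard uniform-convexity argument); use completeness of $T_pX$ (built into \autoref{def:tangent}) to extract exact midpoints and hence geodesics; then conclude CAT$(0)$ for $T_pX$ and transfer to $S_pX$ via \autoref{thm:berestovskii}. Without the Cauchy-midpoint step, the proposal does not close.
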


\begin{definition}[Logarithmic map]
For a CAT(0) space $(X,d_X)$ and $p\in X$,
define the logarithmic map%
\footnote{While the exponential map from Riemannian geometry is not well-defined for general CAT(0) spaces, its inverse, the logarithmic map, is well-defined for any CAT(0) space but not necessarily injective.}
 $\Log_p:(X,d_X)\to T_pX$ by
$\Log_p(x)=(d_X(p,x),[x]_p)$.
For $\lambda\in[0,1]$ and $p,x\in X$ define $\lambda[p,x]\in X$  as follows: Let $\phi_{[p,x]}:[0,d_X(p,x)]\to X$ be the geodesic path connecting $p$ to $x$ in $X$.
Then $\lambda[p,x] =\phi_{[p,x]}(\lambda\cdot d(p,x))\in X$.
\end{definition}

\begin{lemma} \label{lem:tangent-distance}
Let $(X,d_X)$ be a Hadamard space and $p,x,y\in X$. Then
\begin{equation} \label{eq:T-distances}
d_{T_pX}(\Log_p(x),\Log_p(y))=
\lim_{\lambda\to 0} \frac{d_X(\lambda[p,x], \lambda[p,y])}{\lambda}.
\end{equation}
\end{lemma}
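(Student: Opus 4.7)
My plan is to unfold both sides of the desired identity and reduce the equality to the classical fact that, in a Hadamard space, the comparison angles at $p$ converge monotonically to the Alexandrov angle along any pair of geodesics issuing from $p$.

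First, I would set $\alpha := d_X(p,x)$ and $\beta := d_X(p,y)$. Since $\lambda[p,x]$ and $\lambda[p,y]$ lie on geodesics from $p$, we have $d_X(p,\lambda[p,x]) = \lambda\alpha$ and $d_X(p,\lambda[p,y]) = \lambda\beta$. Applying the law of cosines identity~\eqref{eq:comp-triangle-law} to the triple $(p,\lambda[p,x],\lambda[p,y])$ and dividing by $\lambda^2$, I obtain
\[
\frac{d_X(\lambda[p,x],\lambda[p,y])^2}{\lambda^2} = \alpha^2+\beta^2 - 2\alpha\beta\cos\bigl(\overline{\angle}_p(\lambda[p,x],\lambda[p,y])\bigr).
\]
On the other side, since $\Log_p(x)=(\alpha,[x]_p)$ and $\Log_p(y)=(\beta,[y]_p)$ in $T_pX=\cone(S_pX)$, the cone metric~\eqref{eq:def cone} together with the fact that $\angle_p(x,y)\in[0,\pi]$ gives
\[
d_{T_pX}(\Log_p(x),\Log_p(y))^2 = \alpha^2+\beta^2-2\alpha\beta\cos\bigl(\angle_p(x,y)\bigr).
\]

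So the identity~\eqref{eq:T-distances} reduces to proving the convergence
\[
\lim_{\lambda\to 0^+}\overline{\angle}_p(\lambda[p,x],\lambda[p,y]) \;=\; \angle_p(x,y).
\]
This is the main (and only nontrivial) step, and is the place where the CAT(0) hypothesis is essential: the \textbf{a priori} defining expression of $\angle_p(x,y)$ in~\eqref{eq:cat0-angle} is merely a $\limsup$ as two independent parameters $t,t'\to 0^+$, whereas I need an honest limit along the diagonal $t=t'=\lambda d(p,\cdot)$.

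The hard part — which I expect to be routine but is the crux — is the standard monotonicity of comparison angles in Hadamard spaces: if $\phi_x,\phi_y:[0,1]\to X$ are the constant-speed geodesics from $p$ to $x,y$, then the function $(s,s')\mapsto \overline{\angle}_p(\phi_x(s),\phi_y(s'))$ is nondecreasing in each variable on $(0,1]$. This follows directly from~\eqref{eq:CAT(0)} applied to the triangles with vertex $p$: comparing the CAT(0) quadrilateral inequality to the Euclidean comparison quadrilateral yields the claim; see~\cite[Proposition~I.1.14 and Proposition~II.3.1]{BH99}. Consequently, the $\limsup$ in~\eqref{eq:cat0-angle} is actually a decreasing limit as $s,s'\to 0^+$, and in particular coincides with the limit along the diagonal choice $s=\lambda$, $s'=\lambda$ (after rescaling by $\alpha$ and $\beta$), which is precisely $\lim_{\lambda\to 0^+}\overline{\angle}_p(\lambda[p,x],\lambda[p,y])=\angle_p(x,y)$. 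Substituting this into the two displayed expressions above and taking square roots (using continuity of $\cos$) completes the proof.
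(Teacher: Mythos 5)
Your proof is correct and follows essentially the same route as the paper's: expand the cone metric to get $\alpha^2+\beta^2-2\alpha\beta\cos\angle_p(x,y)$, replace $\cos\angle_p(x,y)$ by the limit of the diagonal comparison angles via the law of cosines, and simplify. You are in fact more careful than the paper's two-line computation in justifying why the $\limsup$ over two independent parameters in~\eqref{eq:cat0-angle} is an honest limit that may be evaluated along the diagonal, which is exactly the monotonicity of comparison angles in $\CAT(0)$ spaces.
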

\begin{proof}
For $x\in X$, denote $\|x\|=d_X(p,x)$. Then
\begin{align*}
d_{T_pX}&(\Log_px,\Log_py) ^2
\\& \stackrel{\eqref{eq:def cone}}= \|x\|^2 + \|y\|^2 -2\|x\|\,\|y\|\cos \angle_p(x,y) \\
&\stackrel{\eqref{eq:cat0-angle},\eqref{eq:comp-triangle-law}}= \lim_{\lambda\to 0}  \biggl(
\|x\|^2 + \|y\|^2 -2\|x\|\,\|y\| \frac{(\lambda\|x\|)^2+(\lambda\|y\|)^2
-d_X(\lambda[p,x],\lambda[p,y])^2}
{2\lambda\|x\|\lambda\|y\|}
\biggr)\\
&= \lim_{\lambda\to 0} \frac{d_X(\lambda[p,x],
\lambda[p,y])^2}{\lambda^2}. \qedhere
\end{align*}
\end{proof}

It follows immediately from the above lemma and from the convexity of the distance function in CAT(0) spaces
(see also~\cite[Lemma~II.3.20]{BH99}) that
$\Log_p$ is $1$-Lipschitz, i.e.,
\begin{equation} \label{eq:busemann}
\forall \ p,x,y\in X, \qquad d_{T_p X}(\Log_p (x),\Log_p (y)) \le d_X(x,y).
\end{equation}

Similarly to Hilbert spaces, Hadamard spaces have the concept of barycenter, which we discuss next.

\begin{theorem}[{Barycenters in Hadamard space, see~\cite[Proposition~4.3]{Sturm-NPC}}]
\label{thm:barycenter}
Let $(X,d)$ be a Hadamard space and $\mu$ be a Borel measure with finite second moment (i.e., $\int_X d(x_0,x)^2\dd\mu(x)<\infty$
for some $x_0\in X$).
Then there exists a unique point in $X$, denoted $\mathcal{B}(\mu)$, that minimizes the expression
\(
\int_X d(b,x)^2\dd\mu(x)
\)
over $b\in X$.
\end{theorem}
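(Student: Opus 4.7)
The plan is to establish uniqueness first (which is easy), then existence by showing that any minimizing sequence is Cauchy, and finally to conclude by completeness together with passage to the limit inside the integral. Throughout, the key tool is the CAT(0) inequality \eqref{eq:CAT(0)} applied at the midpoint of a geodesic (i.e., with $t = 1/2$), which for any three points $x,u,v$ in $X$ gives $d(x,m)^2 \le \tfrac{1}{2}d(x,u)^2 + \tfrac{1}{2}d(x,v)^2 - \tfrac{1}{4}d(u,v)^2$, where $m$ is the midpoint of the geodesic joining $u$ and $v$.

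For uniqueness, I would suppose that $b_1, b_2 \in X$ both minimize $F(b) \eqdef \int_X d(b,x)^2 \dd\mu(x)$, and let $m$ be the midpoint of the (unique) geodesic from $b_1$ to $b_2$. Integrating the midpoint inequality against $\mu$ yields $F(m) \le \tfrac{1}{2}F(b_1) + \tfrac{1}{2}F(b_2) - \tfrac{1}{4}d(b_1,b_2)^2 = \inf F - \tfrac{1}{4}d(b_1,b_2)^2$, which forces $d(b_1,b_2)=0$.

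For existence, let $V \eqdef \inf_{b \in X} F(b)$, which is finite by the finite second moment hypothesis, and pick a minimizing sequence $(b_n)$ with $F(b_n) \to V$. Applying the same midpoint inequality to $b_n$ and $b_m$, integrating against $\mu$, and using $F(\cdot) \ge V$, I obtain
\[
\tfrac{1}{4}d(b_n,b_m)^2 \le \tfrac{1}{2}F(b_n) + \tfrac{1}{2}F(b_m) - V \xrightarrow[n,m\to\infty]{} 0,
\]
so $(b_n)$ is Cauchy. By completeness of the Hadamard space it converges to some $b \in X$, and the triangle inequality together with $d(x,b_n)^2 \le 2 d(x,b_1)^2 + 2 d(b_1,b_n)^2$ provides a $\mu$-integrable dominating function, so dominated convergence gives $F(b) = \lim_n F(b_n) = V$.

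I do not foresee a serious obstacle: the only subtlety is verifying a dominating integrable function for the passage to the limit, and the CAT(0) midpoint inequality (a form of $2$-uniform convexity of $F$) automatically delivers both uniqueness and the Cauchy property of minimizing sequences.
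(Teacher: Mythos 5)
Your argument is correct, and it is precisely the standard proof of this result (the one in the cited reference \cite{Sturm-NPC}); the paper itself offers no proof, only the citation. The one point worth making explicit is that integrating the midpoint inequality produces the term $-\tfrac14 d(b_1,b_2)^2\mu(X)$, so you are implicitly normalizing $\mu$ to be a probability (or at least nonzero finite) measure, which is the intended setting of the statement.
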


\begin{proposition}[{$2$-barycentric property, see~\cite[Theorem~4.9]{Sturm-NPC}}]
Let $(X,d)$ be a Hadamard space and $\mu$ be a Borel measure with finite second moment. Then for any $z\in X$,
\begin{equation} \label{eq:2-barycentric}
\int_X d(z,x)^2d\mu(x)\ge d(z,\mathcal{B}(\mu))^2+
\int_X d(\mathcal{B}(\mu),x)^2 d\mu(x).
\end{equation}
\end{proposition}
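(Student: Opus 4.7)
The plan is to combine the defining minimization property of the barycenter $\mathcal{B}(\mu)$ with the CAT(0) inequality~\eqref{eq:CAT(0)} applied along the geodesic from $\mathcal{B}(\mu)$ to the test point $z$, and then pass to the limit along that geodesic. This is a standard variational argument: one obtains~\eqref{eq:2-barycentric} essentially as a "second-order" optimality condition for the minimizer.

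Concretely, let $b\eqdef \mathcal{B}(\mu)$. Because $X$ is Hadamard (hence geodesic), there is a unique constant-speed geodesic $\gamma:[0,1]\to X$ with $\gamma(0)=b$ and $\gamma(1)=z$. For any fixed $x\in X$, apply the CAT(0) inequality~\eqref{eq:CAT(0)} to the triangle with vertices $x,b,z$ and the geodesic $\gamma$: this gives, for each $t\in[0,1]$,
\[
d_X(x,\gamma(t))^2 \le (1-t)\, d_X(x,b)^2 + t\, d_X(x,z)^2 - t(1-t)\, d_X(b,z)^2.
\]
All three terms on the right are $\mu$-integrable (the first two by \autoref{thm:barycenter} and the triangle inequality, the third being constant in $x$), so integrating against $\mu$ preserves the inequality:
\[
\int_X d_X(x,\gamma(t))^2\,\dd\mu(x) \le (1-t)\!\int_X d_X(x,b)^2\,\dd\mu(x) + t\!\int_X d_X(x,z)^2\,\dd\mu(x) - t(1-t)\, d_X(b,z)^2.
\]

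Next, invoke the minimality of $b=\mathcal{B}(\mu)$ per \autoref{thm:barycenter}: the left-hand side is bounded below by $\int_X d_X(x,b)^2\,\dd\mu(x)$. Subtracting $(1-t)\int_X d_X(x,b)^2\,\dd\mu(x)$ from both sides and dividing by $t\in(0,1]$ yields
\[
\int_X d_X(x,b)^2\,\dd\mu(x) \le \int_X d_X(x,z)^2\,\dd\mu(x) - (1-t)\, d_X(b,z)^2.
\]
Letting $t\to 0^+$ gives~\eqref{eq:2-barycentric}.

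I do not anticipate any real obstacle: the only subtle point is checking integrability so as to justify integrating the pointwise CAT(0) inequality against $\mu$, but this follows immediately from the assumption that $\mu$ has finite second moment together with the elementary inequality $d_X(x,y)^2\le 2d_X(x,x_0)^2+2d_X(x_0,y)^2$ applied at $x_0=b$. Everything else is a one-line variational argument; the CAT(0) condition is doing all the geometric work, which explains why the same proposition fails on general geodesic spaces and requires nonpositive curvature.
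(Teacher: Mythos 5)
Your proof is correct: the pointwise CAT(0) inequality along the geodesic from $\mathcal{B}(\mu)$ to $z$, integrated against $\mu$ and combined with the minimality of the barycenter, yields~\eqref{eq:2-barycentric} exactly as you compute (and the limit $t\to 0^+$ cleanly removes the factor $1-t$). The paper does not reprove this statement but cites Sturm, and your variational argument is essentially the standard proof of that variance inequality, so there is nothing to add.
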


Let $X$ and $Y$ be probability spaces and $f:X\to Y$ measurable. Let $\mu$ be a probability measure on $X$. The \emph{push-forward} measure $\mu_{\#f}$ on $Y$ is defined as $\mu_{\#f}(B)=\mu(f^{-1}(B))$, for any measurable $B\subseteq Y$.
\begin{proposition} \label{prop:cusp-barycenter}
Let $(X,d)$ be a Hadamard space and $\mu$ be a Borel measure with finite second moment. Let $b=\mathcal{B}(\mu)$ be the barycenter of $\mu$. Then in $T_bX$, $\mathcal{B}(\mu_{\#\Log_b})=0_{T_bX}$, the cusp of $T_bX$.
\end{proposition}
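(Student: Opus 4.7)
The plan is to verify directly that $0\in T_bX$ satisfies the variational characterization of the barycenter. Since $T_bX$ is CAT(0) (hence, after completion, Hadamard) by the result cited immediately before the proposition, \autoref{thm:barycenter} reduces the claim to proving
\[
G(v):=\int_X d_{T_bX}\bigl(v,\Log_b x\bigr)^2\,d\mu(x) \;\geq\; G(0)\qquad\text{for every }v\in T_bX.
\]
Any $v\in T_bX\setminus\{0\}$ can be written $v=(r,\xi)$ with $r>0$ and $\xi\in S_bX$; since the integrand is continuous in $\xi$ and the set $\{[y]_b:y\in X\setminus\{b\}\}$ is dense in $S_bX$ by \autoref{def:tangent}, it suffices to consider $\xi=[y]_b$. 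After normalizing so $\mu$ is a probability measure (which does not affect the minimizer of $z\mapsto \int d_X(z,x)^2\,d\mu$), expanding $d_{T_bX}\bigl((r,[y]_b),(d_X(b,x),[x]_b)\bigr)^2$ via~\eqref{eq:def cone} reduces the inequality $G(v)\ge G(0)$ to the pointwise estimate
\[
(\star)\qquad\int_X d_X(b,x)\cos\angle_b(y,x)\,d\mu(x)\;\leq\;0\qquad\text{for every }y\in X\setminus\{b\}.
\]

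To prove $(\star)$, I will apply the first variation formula for CAT(0) spaces to the functional $F(z):=\int_X d_X(z,x)^2\,d\mu(x)$. By definition of the barycenter, $F$ is minimized at $z=b$, so along the unit-speed geodesic $\phi=\phi_{b,y}$ we have $F(\phi(s))\geq F(b)$ for every $s\in[0,d_X(b,y)]$. Applying the comparison law of cosines~\eqref{eq:comp-triangle-law} and integrating yields
\[
0\;\leq\;\frac{F(\phi(s))-F(b)}{s}\;=\;s\;-\;2\int_X d_X(b,x)\cos\overline{\angle}_b(\phi(s),x)\,d\mu(x).
\]
The first variation formula for CAT(0) spaces—a standard consequence of the monotonicity of comparison angles afforded by the CAT(0) condition (see, e.g.,~\cite[Theorem~4.5.6]{BBI01} and \cite[Chapter~II.3]{BH99})—asserts that $\lim_{s\to 0^+}\overline{\angle}_b(\phi(s),x)=\angle_b(y,x)$ for every $x\in X\setminus\{b\}$. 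Since the integrand is bounded in absolute value by $d_X(b,\cdot)\in L^1(\mu)$ (integrability follows from the finite second-moment assumption on $\mu$), dominated convergence lets one send $s\to 0^+$ through the integral, delivering $(\star)$ and hence the proposition.

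The main obstacle is justifying the first variation formula in the specific form required here. The Alexandrov angle $\angle_b(y,x)$ is defined as a \emph{double} limit $\overline{\angle}_b(\phi_{b,y}(s),\phi_{b,x}(t))\to\angle_b(y,x)$ as $(s,t)\to(0,0)$, whereas what is needed is that letting \emph{only} the first parameter tend to zero---with the second held fixed at $d_X(b,x)$---already produces the same limiting value. This reduction is particular to CAT(0) geometry; if one prefers a self-contained argument rather than a citation, it can be extracted by applying the CAT(0) inequality on the triangle $b,\phi(s_0),x$ to bound $\overline{\angle}_b(\phi(s),x)$ from above by $\overline{\angle}_b(\phi(s_0),x)$ for $s\leq s_0$, and then letting $s_0\to 0^+$ after an elementary limsup argument using the monotonicity of comparison angles established in \autoref{prop:angle-triangle-ineq}'s preparatory discussion.
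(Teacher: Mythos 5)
Your proof is correct, but it takes a genuinely different route from the paper's. You expand the cone metric on $T_bX$ explicitly, which reduces the claim to the first-order condition $(\star)$ (``no mean direction at the barycenter''), and you establish $(\star)$ via the first variation formula, i.e.\ the fact that in a CAT(0) space $\overline{\angle}_b(\phi(s),x)\to\angle_b(y,x)$ as $s\to 0^+$ with $x$ held fixed --- a consequence of the monotonicity of comparison angles that the paper never introduces (it only records the $\limsup$ definition of the Alexandrov angle). The paper instead works entirely with rescalings: it writes $d_{T_bX}(\tilde z,\Log_b x)$ as $\lim_{\lambda\to 0}\lambda^{-1}d_X(\lambda\eta[b,z],\lambda[b,x])$ via \autoref{lem:tangent-distance}, pushes $\mu$ forward under the contraction $\lambda[b,\cdot]$, uses \autoref{lem:fixed-barycenter} to keep the barycenter of the contracted measure at $b$, applies the $2$-barycentric inequality~\eqref{eq:2-barycentric} at each scale $\lambda$, and then lets $\lambda\to 0$; this yields $\int d_{T_bX}(\tilde z,y)^2\,\dd\mu_{\#\Log_b}(y)\ge \eta^2 d_X(z,b)^2+\int d_{T_bX}(0,y)^2\,\dd\mu_{\#\Log_b}(y)$ directly, with no angle asymptotics. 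What your approach buys is a more classical Euler--Lagrange-style argument that isolates the geometric content in $(\star)$; what it costs is the external input of the first variation formula (your final paragraph correctly identifies this as the only nontrivial justification, and your sketch of it via angle monotonicity is sound). What the paper's approach buys is self-containment within its own toolkit (\autoref{lem:tangent-distance}, \autoref{lem:fixed-barycenter}, \eqref{eq:2-barycentric}); note that its interchange of $\lim_{\lambda\to 0}$ with the integral needs the same kind of monotone/dominated-convergence justification that you supply for your $s\to 0^+$ limit, so neither route is analytically cheaper. Two minor points on your write-up: integrability of $d_X(b,\cdot)$ requires $\mu$ to be finite in addition to having finite second moment (your normalization to a probability measure handles this, so just make it explicit), and your computation in fact gives $G(v)-G(0)\ge r^2>0$ for $v\neq 0$, so $0$ is the \emph{strict} minimizer, matching the uniqueness in \autoref{thm:barycenter}.
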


\begin{lemma} \label{lem:fixed-barycenter}
Let $(X,d)$ be a Hadamard space and $\mu$ be a Borel measure with finite second moment. Let $b=\mathcal{B}(\mu)$ be the barycenter of $\mu$. Let $(\lambda [b,\cdot]):X\to X$ be defined as
$(\lambda[b,\cdot])(x)=\lambda [b,x]$.
Then for every $\lambda\in [0,1]$ we have
\(
\mathcal{B}(\mu_{\#(\lambda[b,\cdot])})=b.
\)
\end{lemma}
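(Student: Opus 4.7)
The plan is to prove the lemma via the first-variation characterization of the barycenter in a Hadamard space. Write $\nu := \mu_{\#(\lambda[b,\cdot])}$ and $F(z) := \int_X d(z, y)^2 \, d\nu(y)$; by \autoref{thm:barycenter}, $\mathcal{B}(\nu)$ is the unique minimizer of $F$. The case $\lambda = 0$ is trivial, as $\nu = \delta_b$, so assume $\lambda \in (0, 1]$. By the CAT(0) inequality~\eqref{eq:CAT(0)}, the map $z \mapsto d(z, y)^2$ is $2$-strongly convex along geodesics, hence so is $F$, and its minimizer is determined by first-order conditions. It therefore suffices to show that for every unit-speed geodesic $\gamma:[0, \varepsilon) \to X$ with $\gamma(0) = b$ and every $c = \gamma(s) \in X \setminus \{b\}$, the one-sided derivative $\tfrac{d}{dt^+}|_{t=0} F(\gamma(t))$ is nonnegative.

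To compute this derivative, I will invoke the first-variation formula for Hadamard spaces. From the law of cosines~\eqref{eq:comp-triangle-law} applied at $b$, one has $d(\gamma(t), y)^2 = t^2 + d(b, y)^2 - 2t \, d(b, y) \cos \overline{\angle}_b(\gamma(t), y)$. By the standard consequence of the CAT(0) inequality that comparison angles are monotone (upgrading the $\limsup$ in~\eqref{eq:cat0-angle} to a genuine limit, and yielding in particular $\overline{\angle}_b(\gamma(t), y) \to \angle_b(c, y)$ as $t \to 0^+$), one obtains
\[
\tfrac{d}{dt^+}\Big|_{t=0} d(\gamma(t), y)^2 = -2 d(b, y) \cos \angle_b(c, y), \qquad y \in X \setminus \{b\}.
\]
The difference quotients are uniformly bounded for small $t$ by $1 + 2 d(b, y)$, which is $\nu$-integrable since $\nu$ has finite second moment, so dominated convergence gives
\[
\tfrac{d}{dt^+}\Big|_{t=0} F(\gamma(t)) = -2 \int_X d(b, y) \cos \angle_b(c, y) \, d\nu(y).
\]

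The key observation is that for every $x \in X$ the point $\lambda[b, x]$ lies on the geodesic from $b$ to $x$; consequently it belongs to the same equivalence class as $x$ in $S_b X$, and by \autoref{prop:angle-triangle-ineq} this forces $\angle_b(c, \lambda[b, x]) = \angle_b(c, x)$, while the defining parametrization gives $d(b, \lambda[b, x]) = \lambda d(b, x)$. Substituting $y = \lambda[b, x]$ and pushing the integral back to $\mu$ yields
\[
\tfrac{d}{dt^+}\Big|_{t=0} F(\gamma(t)) = -2 \lambda \int_X d(b, x) \cos \angle_b(c, x) \, d\mu(x) = \lambda \cdot \tfrac{d}{dt^+}\Big|_{t=0} g(\gamma(t)),
\]
where $g(z) := \int_X d(z, x)^2 d\mu(x)$. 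Since $b = \mathcal{B}(\mu)$ minimizes $g$, the latter derivative is nonnegative; multiplying by $\lambda \ge 0$ preserves the sign, so $\tfrac{d}{dt^+}|_{t=0} F(\gamma(t)) \ge 0$, and $b$ is the unique minimizer of $F$. The main technical hurdle is the rigorous derivation of the first-variation formula: although it is folklore in the theory of Hadamard spaces, it is not stated in the excerpt, so one would need either to cite an appropriate reference (e.g.\ \cite[Chapter~II]{BH99}) or to establish in-line, via the monotonicity of comparison angles in CAT(0), that $\overline{\angle}_b(\gamma(t), y) \to \angle_b(c, y)$ as $t \to 0^+$, together with the routine dominated convergence argument.
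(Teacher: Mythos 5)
Your argument is correct, but it is a genuinely different proof from the one in the paper. The paper's proof is purely metric and uses no angular or differential machinery: writing $c=\mathcal{B}(\mu_{\#(\lambda[b,\cdot])})$, $\delta=d(b,c)$, $A^2=\int d(b,x)^2\,\dd\mu$ and $B^2=\int d(c,\lambda[b,x])^2\,\dd\mu$, it applies the $2$-barycentric inequality~\eqref{eq:2-barycentric} twice --- once to $\mu_{\#(\lambda[b,\cdot])}$ at the point $b$, giving $B^2+\delta^2\le\lambda^2A^2$, and once to $\mu$ at the point $c$, which after the triangle inequality, Cauchy--Schwarz, and the first bound gives $A^2+\delta^2\le B^2+(1-\lambda)^2A^2+2\lambda(1-\lambda)A^2$ --- and then simply sums the two inequalities to get $2\delta^2\le 0$. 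Your route instead rests on the first-variation formula $\tfrac{d}{dt^+}\big|_{t=0}d(\gamma(t),y)=-\cos\angle_b(c,y)$, the invariance of directions under the contraction $\lambda[b,\cdot]$ (via \autoref{prop:angle-triangle-ineq}), and the sufficiency of the first-order condition for a geodesically convex functional. The trade-off: your proof is conceptually transparent (contracting the measure toward $b$ rescales the directional derivatives of the objective at $b$ by $\lambda$ without changing their signs), but it imports the first-variation formula, a nontrivial CAT(0) fact not stated in the paper (it is \cite[Corollary~II.3.6]{BH99}), plus some routine but real care about measurability of $y\mapsto\cos\angle_b(c,y)$ and the dominated-convergence step --- all of which you correctly flag. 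The paper's proof needs only the already-quoted inequality~\eqref{eq:2-barycentric} and elementary manipulations, so it is the more self-contained of the two; yours would be a legitimate alternative provided the citation to the first-variation formula is supplied.
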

\begin{proof}
Fix $\lambda\in[0,1]$, and
denote $c=\mathcal{B}(\mu_{\#(\lambda[b,\cdot])})$,
$\delta= d_X(b,c)$, and
\[
 A^2=\int_X d_X(b,x)^2 \dd\mu(x),
\quad B^2=\int_X d_X(c,y)^2 \dd\mu_{\#(\lambda[b,\cdot])}(y)
=\int_X d_X(c,\lambda[b,x])^2 \dd\mu(x).
\]
Our goal is to prove that $\delta=0$.

From the $2$-barycentric property~\eqref{eq:2-barycentric} for
$\mu_{\#(\lambda [b,\cdot])}$,
\begin{equation} \label{eq:2-bary-c}
B^2 +\delta^2 \le \int_X d(b, \lambda[b,x])^2 \dd\mu(x)
=\lambda^2 A^2.
\end{equation}
From the $2$-barycentric property~\eqref{eq:2-barycentric} for $\mu$,
\begin{equation} \label{eq:2-bary-b}
\begin{aligned}
A^2+\delta^2 & \leq \int_X d_X(c,x)^2 \dd\mu(x)\\
& \leq  \int_X \bigl( d_X(c,\lambda[b,x])+d_X(\lambda[b,x],x) \bigr )^2\dd\mu(x)\\
&= B^2+ (1-\lambda)^2 A^2+{2(1-\lambda)}
\int_X \bigl (d_X(c,\lambda[b,x])\cdot d_X(b,x)\bigr)\dd\mu(x)\\
&\stackrel{(*)}\le B^2+(1-\lambda)^2 A^2 + 2(1-\lambda) BA\\
&\stackrel{\eqref{eq:2-bary-c}}\le  B^2+(1-\lambda)^2 A^2 +2\lambda(1-\lambda) A^2,
\end{aligned}
\end{equation}
where Inequality (*) is implied by Cauchy-Schwarz  inequality on the integral.
Summing \eqref{eq:2-bary-c} and \eqref{eq:2-bary-b}, and isolating $\delta^2$, we obtain
\begin{equation*}
2\delta^2 \le \lambda^2 A^2-B^2+B^2+(1-\lambda)^2A^2+2\lambda(1-\lambda)A^2-A^2=0. \qedhere
\end{equation*}
\end{proof}

\begin{proof}[Proof of \autoref{prop:cusp-barycenter}.]
Let $\tilde z =(s,[z]_b) \in T_bX \setminus\{0\}$ be an arbitrary non-cusp point in $T_bX$,
where $z\in X\setminus\{b\}$, and $s>0$.
Let $\eta =s/d(b,z)$.
Observe that
$\lambda \eta[b,z]\in X$ is well defined for $\lambda\in[0,\min\{1,\eta^{-1}\}]$.
Then,
\begin{align*}
\int_{T_bX}  d_{T_bX} (\tilde z,y)^2\dd\mu_{\#\Log_b}(y)
& =
\int_X d_{T_bX} (\tilde z,\Log_b(x))^2\dd\mu(x) \\
& \stackrel{\eqref{eq:T-distances}}{=}
\int_X \lim_{\lambda\to 0} \frac{d_X(\lambda\eta[b,z] , \lambda [b,x])^2}{\lambda^2}\dd\mu(x) \\
& = \lim_{\lambda\to 0} \lambda^{-2} \int_X
d_X(\lambda\eta[b,z], \lambda[b,x])^2 \dd\mu(x) \displaybreak[0]\\
& = \lim_{\lambda\to 0} \lambda^{-2} \int_X
d_X(\lambda\eta[b,z], y)^2 \dd\mu_{\#(\lambda[b,\cdot])}(y) \\
&\stackrel{(*)}\ge \lim_{\lambda\to 0} \lambda^{-2}
\Bigl( d_X(\lambda \eta[b,z],b)^2+ \int_X
d_X(b, y)^2 \dd\mu_{\#(\lambda[b,\cdot])}(y) \Bigr)\displaybreak[0]\\
&= \eta^2 d_X(z,b)^2+ \int_X d_X (b,  x)^2 d\mu(x)\\
&>\int_X d (b,  x)^2 d\mu(x)\\
&= \int_X d_{T_bX} (0, \Log_b x)^2 d\mu(x)\\
& = \int_{T_bX}  d_{T_bX} (0,y)^2\dd\mu_{\#\Log_b}(y).
\end{align*}
Inequality~(*) follows from \autoref{lem:fixed-barycenter} and the $2$-barydentric inequality~\eqref{eq:2-barycentric}.
The above inequality means that $\tilde z=0$ is the unique minimizer of $\tilde z\mapsto \int_{T_bX}  d_{T_bX} (\tilde z,y)^2\dd\mu_{\#\Log_b}(y)$ which means that $\mathcal{B}(\mu_{\#\Log_b})=0$ (see \autoref{thm:barycenter}).
\end{proof}

The $2$-barycentric property can be generalized to the $p$-barycentric property for any fixed $p\in [2,\infty)$.
Although this property is natural and not hard to prove, we have not been able to find it in the literature.
We therefore provide a proof here.  
The proof is a straightforward adaptation of the proof in~\cite[§6]{MN-superexpanders} of an analogous proposition for $2$-convex Banach spaces.
Independently, a similar proposition with a somewhat weaker constant has been proved by Gietl~\cite[Theorem~2]{Gietl25}.

\begin{proposition}[$p$-barycentric property of CAT(0), for $p\in[2,\infty)$]
\label{prop:p-barycentric-CAT0}
Let $p\in[2,\infty)$.
Let $(X,d_X)$ be a Hadamard space and $\mu$ be a Borel measure with finite $p$-th moment, and let $\mathcal{B}(\mu)$ its barycenter (according to \autoref{thm:barycenter}). Then for any $z\in X$,
\begin{equation} \label{eq:p-barycentric}
\int_X d_X(z,x)^pd\mu(x)\ge d(z,\mathcal{B}(\mu))^p+
\frac{1}{2^{p-1}-1} \int_X d_X(\mathcal{B}(\mu),x)^p d\mu(x).
\end{equation}
\end{proposition}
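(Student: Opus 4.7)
The strategy is to adapt the Ball-type dyadic argument of \cite{MN-superexpanders}[\S 6], with the CAT$(0)$ midpoint inequality (Eq.~\eqref{eq:CAT(0)} specialized to $t=\tfrac{1}{2}$) playing the role of uniform $2$-convexity in the Banach-space setting. I first convert the quadratic midpoint bound $d_X(z,m)^2 + \tfrac{1}{4}d_X(x,y)^2 \le \tfrac{1}{2}d_X(z,x)^2 + \tfrac{1}{2}d_X(z,y)^2$ into a power-$p$ version by raising to the power $p/2\ge 1$ and using the elementary bounds $(\alpha^2+\beta^2)^{p/2}\ge \alpha^p+\beta^p$ (from super-additivity of $t\mapsto t^{p/2}$ at $t=0$) on the left and $\bigl(\tfrac{\alpha^2+\beta^2}{2}\bigr)^{p/2}\le \tfrac{\alpha^p+\beta^p}{2}$ (Jensen, since $t\mapsto t^{p/2}$ is convex for $p\ge 2$) on the right, obtaining
\begin{equation}
\label{eq:midpoint-p-plan}
d_X(z,m)^p + \bigl(\tfrac{1}{2}d_X(x,y)\bigr)^p \;\le\; \tfrac{1}{2}d_X(z,x)^p + \tfrac{1}{2}d_X(z,y)^p,
\end{equation}
where $m$ is the midpoint of $[x,y]$. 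Inequality~\eqref{eq:midpoint-p-plan} is already \eqref{eq:p-barycentric} for the two-point uniform measure $\mu=\tfrac12(\delta_x+\delta_y)$, in fact with constant $1$ rather than the weaker $\tfrac{1}{2^{p-1}-1}$, which is consistent with the fact that the gap in \eqref{eq:p-barycentric} comes from the failure of iterated midpoints to coincide with the true barycenter.

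By a standard density argument (both sides of \eqref{eq:p-barycentric} are continuous in $\mu$ in the Wasserstein-$p$ topology on measures with finite $p$-th moment, and dyadic uniform measures are dense), it suffices to establish \eqref{eq:p-barycentric} for a uniform dyadic $\mu_n=2^{-n}\sum_{i=1}^{2^n}\delta_{x_i}$. I organize $x_1,\dots,x_{2^n}$ as the leaves of a complete binary tree of depth $n$, attach to each internal vertex the midpoint of its two children, and denote the root by $\tilde b_n$. Applying \eqref{eq:midpoint-p-plan} at every internal vertex, weighting by the mass $2^{-k}$ of the vertex at level $k$, and telescoping level by level collapses the leaf-side into $\int_X d_X(z,\cdot)^p\, d\mu_n$ and the root-side into $d_X(z,\tilde b_n)^p$, yielding
\begin{equation}
\label{eq:telescope-plan}
\int_X d_X(z,\cdot)^p\, d\mu_n \;\ge\; d_X(z,\tilde b_n)^p + \sum_{k=0}^{n-1}\sum_{v\in L_k} 2^{-k}\bigl(\tfrac12 d_X(v_{\mathrm{L}},v_{\mathrm{R}})\bigr)^p,
\end{equation}
where $L_k$ is the set of level-$k$ internal vertices and $v_{\mathrm{L}},v_{\mathrm{R}}$ are the two children of $v$. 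Re-applying \eqref{eq:midpoint-p-plan} \emph{inside} each subtree, this time with reference point $\tilde b_n$ rather than $z$, converts the double sum in \eqref{eq:telescope-plan} into a lower bound for $\tfrac{1}{2^{p-1}-1}\int_X d_X(\tilde b_n,\cdot)^p\, d\mu_n$: each descent from level $k$ to $k+1$ halves the number of leaves below $v$ while the $p$-th power of a sibling distance shrinks by a factor of $2^{-p}$, so lining the weights up against $\int_X d_X(\tilde b_n,\cdot)^p\, d\mu_n$ produces the geometric series $\sum_{k\ge 1}2^{-k(p-1)}=\tfrac{1}{2^{p-1}-1}$.

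The last step is to replace the iterated midpoint $\tilde b_n$ by the true barycenter $b_n=\mathcal B(\mu_n)$. By the $2$-barycentric inequality \eqref{eq:2-barycentric}, any competitor $y\in X$ satisfies $\int d_X(y,\cdot)^2 d\mu_n \ge \int d_X(b_n,\cdot)^2 d\mu_n + d_X(y,b_n)^2$; choosing $y=\tilde b_n$ and combining with a reverse bound on $\int d_X(\tilde b_n,\cdot)^2 d\mu_n$ coming from the tree construction gives $d_X(\tilde b_n,b_n)\to 0$ along any refinement sequence $\mu_n\to \mu$. Continuity of $y\mapsto\int d_X(y,\cdot)^p d\mu$ in $y$ together with the density argument then transfers the inequality from $\tilde b_n$ to $b=\mathcal B(\mu)$ for general $\mu$. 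The hard part I anticipate is the bookkeeping in the paragraph above: in the Banach-space model of \cite{MN-superexpanders}[\S 6] the analogue of $\tilde b_n$ is literally the arithmetic mean and the within-subtree identities are exact, whereas here $\tilde b_n\ne b_n$ and \eqref{eq:midpoint-p-plan} is only an inequality, so controlling the accumulated slack without spoiling the sharp factor $\tfrac{1}{2^{p-1}-1}$ may require either a carefully weighted telescoping that keeps track of the order of pairings, or a direct variational comparison exploiting the optimality of $b_n$ among all iterated-midpoint candidates $\tilde b_n$.
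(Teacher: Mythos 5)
Your first step is exactly the paper's: the midpoint inequality raised to power $p$, i.e.\ $d_X(z,m)^p+2^{-p}d_X(x,y)^p\le\tfrac12\bigl(d_X(z,x)^p+d_X(z,y)^p\bigr)$, coincides with the paper's \eqref{eq:p-convexity}. The gap is in the second half of your plan. After the telescoping that produces \eqref{eq:telescope-plan}, you claim that ``re-applying'' the midpoint inequality inside each subtree with reference point $\tilde b_n$ converts the accumulated sibling terms into a lower bound of the form $\tfrac{1}{2^{p-1}-1}\int d_X(\tilde b_n,\cdot)^p\,d\mu_n$. But the midpoint inequality with reference point $\tilde b_n$ gives the \emph{reverse} comparison: summing it over all internal vertices telescopes to $\int d_X(\tilde b_n,\cdot)^p\,d\mu_n\ \ge\ \sum_k\sum_{v\in L_k}2^{-k}\bigl(\tfrac12 d_X(v_{\mathrm L},v_{\mathrm R})\bigr)^p$, i.e.\ it bounds the sibling sum \emph{from above} by the energy, not from below. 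The alternative route --- bounding $d_X(\tilde b_n,x_i)$ by the sum of the half-sibling-distances along the root-to-leaf path and then comparing $\bigl(\sum_k a_k\bigr)^p$ with $\sum_k a_k^p$ --- loses a factor $n^{p-1}$ in the depth, since the sibling distances at different levels have no built-in geometric decay. This missing implication is precisely the martingale-cotype-$p$ content of the proposition; it is the whole point, not bookkeeping. (Your final step, trading $\tilde b_n$ for the true barycenter $b_n=\mathcal B(\mu_n)$, is also only sketched: for a fixed dyadic measure $\tilde b_n\neq b_n$ in general, and the $2$-barycentric inequality alone does not make them close.)

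The paper closes this gap without any dyadic decomposition, by a Ball-type self-improvement on the optimal constant. Let $\theta$ be the infimum of $\bigl(\int d(z,x)^p d\nu - d(z,\mathcal B(\nu))^p\bigr)\big/\int d(x,\mathcal B(\nu))^p d\nu$ over all $z$ and all admissible $\nu$ (so $\theta\ge0$ by Jensen and \eqref{eq:2-barycentric}). Given a near-optimal pair $(z_0,\nu_0)$, apply your midpoint inequality pointwise with $y=\mathcal B(\nu_0)$ and $m=m_x$ the midpoint of $[x,\mathcal B(\nu_0)]$, and push $\nu_0$ forward under $x\mapsto m_x$ to obtain $\nu_1$. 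The crucial input is \autoref{lem:fixed-barycenter}: this contraction toward the barycenter \emph{preserves} the barycenter, so the definition of $\theta$ can be re-applied to $(z_0,\nu_1)$, whose centered $p$-energy is exactly $2^{-p}$ times that of $\nu_0$. Integrating the midpoint inequality then yields the recursion $\theta\ge 2^{1-p}(\theta+1)$, i.e.\ $\theta\ge\tfrac{1}{2^{p-1}-1}$ --- the same geometric series $\sum_{j\ge1}2^{-j(p-1)}$ you identified, but generated by iterating the contraction on the \emph{measure} rather than by descending a midpoint tree. If you want to salvage your approach, you should replace step (b) by this variational bootstrap (or by an honest martingale-cotype argument), since as written the telescoping cannot produce the required lower bound.
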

\begin{proof}

Apply~\eqref{eq:CAT(0)} with $t=1/2$ and 
 $m$ being the midpoint between $x$ and $y$,
\[
\Bigl(d_X(z,m)^2+\frac{d(x,y)^2}{2^2}\Bigr)^{\frac12}
\leq 
\Bigl(\frac{d_X(z,x)^2+d_X(z,y)^2}{2}\Bigr)^{\frac12}
\]
Since
\[
\Bigl(d_X(z,m)^p+\frac{d(x,y)^p}{2^p}\Bigr)^{\frac1p}
\leq
\Bigl(d_X(z,m)^2+\frac{d(x,y)^2}{2^2}\Bigr)^{\frac12}
\ \text{and}\ 
\Bigl(\frac{d_X(z,x)^2+d_X(z,y)^2}{2}\Bigr)^{\frac12}
\leq
\Bigl(\frac{d_X(z,x)^p+d_X(z,y)^p}{2}\Bigr)^{\frac1p},
\]
we conclude that 
\begin{equation} \label{eq:p-convexity}
d_X(z,m)^p+2^{-p}d(x,y)^p
\leq 
\frac{d_X(z,x)^p+d_X(z,y)^p}{2}.
\end{equation}
Denote by $L_p(X)$ the Borel measures on $X$ with finite $p$-th moment.
Denote
\begin{equation} \label{eq:phi-nu0-0}
\theta:= \inf \biggl\{
\frac{\int_X d_X(z,x)^p d\nu(x)- d_X(\mathcal{B}(\nu),z)^p}
{\int_Xd_X(x,\mathcal{B}(\nu))^pd\nu(x)}\; :\; z\in X \land \nu\in L_p(X)
\land \int_Xd_X(x,\mathcal{B}(\nu))^pd\nu(x)>0
\biggr\}.
\end{equation}
By Jensen's inequality and the $2$-barycentric property~\eqref{eq:2-barycentric},  
\[
\Bigl(\int_X d(z,x)^p d\nu(x)\Bigr)^{1/p} \geq 
\Bigl(\int_X d(z,x)^2 d\nu(x)\Bigr)^{1/2} \geq 
d(z,\mathcal{B}(\nu)),
\]
and therefore
$\theta\geq 0$. Our goal is to show 
$\theta\geq 1/(2^{p-1}-1)$.
Fix $\phi>\theta$.
Then there exists $z_0\in X$ and a Borel measure $\nu_0$
for which
\begin{equation}\label{eq:phi-nu0-1}
    \phi {\int_Xd_X(x,\mathcal{B}(\nu_0))^pd\nu_0(x)}
    >
 {\int_X d_X(z_0,x)^p d\nu_0(x)- d_X(\mathcal{B}(\nu_0),z_0)^p}   .
\end{equation}
For $x\in\mathrm{supp}(\nu_0)$, Apply~\eqref{eq:p-convexity} for $z=z_0$, $y=\mathcal{B}(\nu_0)$, and  $m=m_x$ the mid-point between $x$ and $\mathcal{B}(\mu)$
to get the point-wise estimate
\begin{equation}\label{eq:phi-nu0-2}
2 d_X(z_0,m_x)^p+ 2^{1-p}d_X(x,\mathcal{B}(\nu_0))^p \leq
d_X(z_0,x)^p+ d_X(z_0,\mathcal{B}(\nu_0))^p.
\end{equation}
Denote $m:X\to X$, $m(x)=m_x$, i.e., $m$ maps a point $x$ to the mid-point between $x$ and $\mathcal{B}(\nu_0)$.
Let $\nu_1=(\nu_0)_{\# m}$ the push-forward of $\nu_0$ under $m$.
By \autoref{lem:fixed-barycenter}, 
\begin{equation} \label{eq:B(nu1)=B(nu0)}
    \mathcal{B}(\nu_1)=\mathcal{B}(\nu_0).
\end{equation}
Hence,
\begin{align*}
        \phi& {\int_Xd_X(x,\mathcal{B}(\nu_0))^pd\nu_0(x)}
     \\ &\stackrel{\eqref{eq:phi-nu0-1}} >
 {\int_X d_X(z_0,x)^p d\nu_0(x)- d_X(\mathcal{B}(\nu_0),z_0)^p}  \\
 & \stackrel{\eqref{eq:phi-nu0-2}}\geq 
 2\biggl(\int_X d_X(z_0,m_x)^p d\nu_0(x)- d_X(z_0,\mathcal{B}(\nu_0))^p \biggr)
 +2^{1-p} \int_X d_X(x,\mathcal{B}(\nu_0))^p d\nu_0(x)
 \\ & \stackrel{\eqref{eq:B(nu1)=B(nu0)}}= 
  2\biggl(\int_X d_X(z_0,x)^p d\nu_1(x)- d_X(z_0,\mathcal{B}(\nu_1))^p \biggr)
 +2^{1-p} \int_X d_X(x,\mathcal{B}(\nu_0))^p d\nu_0(x)
 \\ & \stackrel{\eqref{eq:phi-nu0-0}}\geq 
 2\theta  \int_X d_X(x,\mathcal{B}(\nu_1))^p d\nu_1(x) 
  +2^{1-p} \int_X d_X(x,\mathcal{B}(\nu_0))^p d\nu_0(x)
  \\ &\stackrel{\eqref{eq:B(nu1)=B(nu0)}} =
  (2^{1-p}\theta+ 2^{1-p}) \int_X d_X(x,\mathcal{B}(\nu_0))^p d\nu_0(x).
\end{align*}
Thus, 
\[\phi> 2^{1-p}\theta+ 2^{1-p}.\]
Since the above holds for every $\phi>\theta$,
we conclude that $\theta\geq 1/(2^{p-1}-1)$, proving~\eqref{eq:p-barycentric}.
\end{proof}

\begin{remark}
The $p$-barycentric inequality
\begin{equation} \label{eq:p-bi-fails-0}
\int_X d_X(z,x)^pd\mu(x) \geq d(z,\mathcal B(\mu))^p +\e \int_X d_X(\mathcal B(\mu),x)^pd\mu(x), 
\end{equation}
fails for any fixed $\e\in(0,1]$ and $p\in (0,2)$ and any barycenter map even for $X=\mathbb R$.
In fact, fix $\e\in(0,1]$, and $p\in (0,2)$. 
Let $\mu=\tfrac12\delta_{-\eta}+\tfrac12\delta_{\eta}$, 
where 
$\eta=1$ when $p\in(0,1]$ and $\eta\ll \bigl( \frac{\e}{p(p-1)} \bigr)^{{1}/{(2-p)}}$ when $p\in(1,2)$.
Let $\alpha=\mathcal B (\mu)\in\mathbb R$. 
We will check~\eqref{eq:p-bi-fails-0} for $z\in\{-1,1\}$ only: 

\begin{multline*} \label{eq:p-bi-fails}
\max_{z\in\{\pm 1\}}\int_{\mathbb R} |z-x|^p d\mu(x)=\frac{(1+\eta)^p+(1-\eta)^p}{2} \\
\stackrel{(*)}< 1+\tfrac{\e}{2}\eta^p\leq (1+|\alpha|)^p+ \tfrac{\e}{2} (|\alpha-\eta|^p+|\alpha+\eta|^p)=
\max_{z\in\{\pm 1\}} |z-\alpha|^p +\e \int_{\mathbb R} |\alpha-x|^p d\mu(x).
\end{multline*}
Inequality ($*$) is elementary to check using the Taylor expansion of the left-hand side.
\end{remark}

\begin{remark}
The barycenter $\mathcal{B}(\mu)$ defined in \autoref{thm:barycenter} is also called the $2$-barycenter.
Naor and Silberman~\cite{NS11} defined $p$-barycenters similarly but with $p$ powers instead of $2$ powers.
The notion of $p$-barycenters was further studied in~\cite{Kuwae}.
The proof above of $p$-barycentric inequality used $2$-barycenter.
It is possible that a similar $p$-barycenteric inequality can be proved using $p$-barycenters, but we have not studied that. 
For the application of the $p$-barycentric inequality~\eqref{eq:p-barycentric} in this paper, 
any point $c(\mu)\in X$
that satisfies~\eqref{eq:p-barycentric}, or a similar inequality with a different constant replacing $\frac{1}{2^{p-1}-1}$, would suffice.
\end{remark}

\section{Counterexample to nonlinear Kwapie\'n formulations}
\label{sec:no-kwapien}

In this section, we prove \autoref{thm:no-Kwapien}.
The space utilized in its proof, $\SE$, is made up of $\frac12$ snowflakes of high-girth expander graphs.
On the one hand, it was proven in~\cite{ANN} that the $\frac12$ -snowflake of any finite metric space admits bi-Lipschitz embedding into some Aleksandrov space of nonnegative curvature.
On the other hand, it is observed in~\cite{Gromov-rand-groups,Kondo} that high-girth graphs
admit bi-Lipschitz embedding in some CAT$(0)$ space.
Using \autoref{thm:cat(0)-transform-closure}, we now know that so do their $\frac12$-snowflakes.
However, expander graphs are known to be poorly embeddable
into Hilbert spaces by almost any standard,
which we demonstrate for coarse embedding on average here.

To state the argument rigorously, we need a few standard graph-theoretic concepts that we will define now.
An (undirected) \emph{graph} $G=(V,E)$ is a pair:
A finite set $V$ whose elements are called vertices and a subset $E\subseteq \binom{V}{2}$ of unordered pairs of vertices, called edges.
A graph is called $d$-regular if every vertex has exactly $d$ edges adjacent to it.
A \emph{simple path} $P=(a_0,\ldots,a_{\ell})$
of length $\ell\ge 1$
is a sequence of  $\ell+ 1$
 distinct vertices $a_0,a_1,\ldots, a_\ell\in V$ such that every consecutive pair of vertices along the path constitutes an edge, i.e.,
$\{a_i,a_{i+1}\}\in E$ for every $i\in\{0,\ldots,\ell-1\}$.
A \emph{simple cycle} $C=(a_0,\ldots,a_{\ell-1})$ of
length $\ell\ge 3$ is a simple path of length $\ell-1$
in which $\{a_0,a_{\ell-1}\}\in E$ is also an edge.
The \emph{girth} of the graph $G$, denoted $\girth(G)$, is the length of the shortest simple cycle in $G$. For forests (i.e., graphs without simple cycles), the girth is defined as $\infty$.
The shortest path metric on the connected graph $G=(V,E)$, denoted by $d_G(u,v)$ for $u,v\in V$,
is defined as
the length of the shortest path between $u$ and $v$ (and $0$ if $u=v$).
The diameter of the graph $G=(V,E)$, is $\diam(G)=\diam((V,d_G))=\max_{u,v\in V} d_G(u,v)$.

Fix a $d$-regular graph $G=(V,E)$.
Define $\gr(G)=\lfloor \girth(G)/2\rfloor/\diam(G)$.
Let $\gamma(G)=1/(1-\lambda_2(G))$ be the $2$-Poincar\'e constant of $G$, where $\lambda_2(G)$ is the second largest eigenvalue of the Markov operator associated with the random walk in $G$ (that is, the adjacency
matrix of $G$ divided by $d$).
An $n$-vertex $d$-regular graph $G$ with $\gr(G)\ge \gr$ and $\gamma(G)\le \gamma$ is called a $(n,d,\gamma,\gr)$-graph.
An infinite family $\mathcal{G}$ of graphs is called a $(d,\gamma,\gr)$-family of graphs if every graph in $\mathcal{G}$
is an $(n,d,\gamma,\gr)$-graph for some $n\in\mathbb N$,
and for every $N$, there exist $n>N$ and an $(n,d,\gamma,\gr)$-graph in $\mathcal{G}$.
\emph{High-girth expander graphs}, is a $(d,\gamma,\gr)$-family of graphs for some fixed $d\in\{3,4,\ldots\}$,  $\gr\in(0,1)$, and $\gamma \in [1,\infty)$.

\begin{theorem} \label{thm:hi-girth-expander}
There exists a family of high-girth expander graphs.
Quantitatively, there exists a $(6,5,0.1)$-family of graphs which we denote by $\mathcal E$.
\end{theorem}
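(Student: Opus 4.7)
The plan is to construct $\mathcal{E}$ using the Lubotzky--Phillips--Sarnak Ramanujan graphs $X^{5,q}$, indexed by primes $q>2\sqrt{5}$. By construction these are $6$-regular graphs on $\Theta(q^{3})$ vertices. The LPS theorem yields $\lambda_{2}(X^{5,q})\le 2\sqrt{5}/6=\sqrt{5}/3$ (second eigenvalue of the transition operator), hence
\[
\gamma(X^{5,q})\;=\;\frac{1}{1-\lambda_{2}(X^{5,q})}\;\le\;\frac{3}{3-\sqrt{5}}\;<\;4\;<\;5,
\]
which supplies the required spectral bound with room to spare.

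For the ratio $\gr(X^{5,q})$ I would combine two inputs from the LPS literature. The girth bound of LPS gives $\girth(X^{5,q})\ge 2\log_{5}q$, proved by analyzing the minimum word length of non-identity elements in the discrete arithmetic group defining $X^{5,q}$ (see, e.g., Davidoff--Sarnak--Valette). For the diameter, I would invoke the Ramanujan-specific moment-method bound $\diam(G)\le (2+o(1))\log_{d-1}|V|$ valid for any $d$-regular Ramanujan graph (see, e.g., the Hoory--Linial--Wigderson survey); for $d=6$ and $|V|\asymp q^{3}$ this gives $\diam(X^{5,q})\le (6+o(1))\log_{5}q$. Combining these,
\[
\gr(X^{5,q})\;\ge\;\frac{\log_{5}q-O(1)}{(6+o(1))\log_{5}q}\;\xrightarrow[q\to\infty]{}\;\frac{1}{6}\;>\;0.1,
\]
so the subfamily of $X^{5,q}$ with $q$ sufficiently large is the desired $(6,5,0.1)$-family $\mathcal{E}$.

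The main obstacle I anticipate lies in the diameter bound. The routine Chung--Faber--Manteuffel estimate $\diam(G)\le \lceil\log(|V|-1)/\log(d/\lambda_{2})\rceil$ is too crude here: it yields $\diam(X^{5,q})\lesssim 16\log_{5}q$ and hence only $\gr\gtrsim 0.06$, just below the required threshold $0.1$. One must instead exploit the sharper Ramanujan-specific bound, which uses that \emph{all} non-trivial eigenvalues (not just the second) are bounded by $\sqrt{5}/3$; this gives the asymptotically optimal $(2+o(1))\log_{d-1}|V|$ estimate that recovers the extra factor. As a fallback avoiding this delicate numerical tuning, one could construct $\mathcal{E}$ probabilistically: a uniformly random $6$-regular graph has $\lambda_{2}\le 2\sqrt{5}/6+o(1)$ asymptotically almost surely by Friedman's theorem, has diameter $(1+o(1))\log_{5}n$ with high probability by a standard configuration-model computation, and by a first-moment argument on short cycles it may be conditioned on $\girth\ge c\log_{5}n$ for a sufficiently small $c>0$ without disturbing these two properties, which again yields a family meeting the target constants.
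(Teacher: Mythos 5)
Your main construction is essentially the paper's: both arguments take the Lubotzky--Phillips--Sarnak Ramanujan graphs with $p=5$, $d=6$, and both get the spectral bound $\gamma\le 1/(1-\sqrt{5}/3)<4<5$ for free from the Ramanujan property. Where the two proofs differ is in how they clear the threshold $\gr\ge 0.1$. The paper works with the \emph{bipartite} LPS graphs, whose girth is at least $\tfrac43\log_5 n$ (twice the $\tfrac23\log_5 n=2\log_5 q$ that you use), and this stronger girth input lets it get away with the crude Chung-type spectral diameter bound $\diam(G)\le 2-\log_{\lambda_2}(n/2)\approx 5.6\log_5 n$, yielding $\gr\gtrapprox \tfrac23\cdot\frac{\log(4/3)}{\log 5}\approx 0.12$. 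You instead keep the weaker girth bound and sharpen the diameter to $(2+o(1))\log_5 n$ via the Chebyshev/non-backtracking-walk argument that exploits the bound on \emph{all} nontrivial eigenvalues; that is a correct and standard estimate for Ramanujan graphs (it is in LPS and in Davidoff--Sarnak--Valette), and it gives $\gr\to 1/6$. Either route works, and you correctly diagnose that mixing the weak girth bound with the weak diameter bound lands at $\approx 1/16$, below threshold. One small caveat: if your family includes the bipartite members $X^{5,q}$, the $(2+o(1))\log_{d-1}|V|$ diameter bound requires the routine modification for the eigenvalue $-d$ (restricting to walks of the appropriate parity), which the paper's computation carries out explicitly; alternatively, restrict to primes $q$ for which $5$ is a quadratic residue so that $X^{5,q}$ is non-bipartite.

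Your probabilistic fallback, however, does not work as stated. The expected number of cycles of length at most $g$ in a random $6$-regular graph is of order $5^{g}/g$, so for $g=c\log_5 n$ this expectation is $\Theta(n^{c}/\log n)\to\infty$ for every constant $c>0$; a first-moment argument therefore cannot give girth $c\log_5 n$, and in fact the event $\{\girth\ge c\log_5 n\}$ has probability roughly $e^{-\Theta(n^{c}/\log n)}$. Conditioning on an event of that probability cannot be absorbed by Friedman's theorem or by the diameter estimate, whose failure probabilities are only polynomially small, so the claim that the two properties are ``not disturbed'' is unjustified. Since the LPS route already suffices, this does not affect the correctness of your proposal, but the fallback should be dropped (or replaced by a construction that surgically removes short cycles, in the spirit of the $(1+\delta)$-sparsity arguments used elsewhere in the paper).
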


The first to explicitly state the existence of high-girth expander graphs that prove \autoref{thm:hi-girth-expander} were probably Margulis~\cite{Margulis} and Lubotzky, Phillips, and Sarnak~\cite{LPS}, from which we borrowed the quantitative bounds in \autoref{thm:hi-girth-expander}.
Their proofs were certainly the first explicit construction, but the existence of those graphs may have been known before, at least at some level.
See~\cite[\S~5.14]{Ostrovskii-book} for a different and very simple construction due to Margulis and Lubotzky, and
see~\cite[\S~5.17]{Ostrovskii-book} for more on the history of those graphs.

We need a generalization of Matou\v{s}ek's extrapolation
proved in~\cite[Lemma~52]{Naor-avg-john}.
\begin{proposition}[\cite{Naor-avg-john}]
\label{prop:naor-extrapolation}
Fix a regular graph $G=(V,E)$ that satisfies the $2$-Poincar\'e inequality with constant $\gamma>0$, and $p,q\ge 1$.
Then for every $f:V\to \ell_p$,
\begin{equation} \label{eq:naor-extrapolation}
\biggl(\frac{1}{|V|^2}\sum_{u,v\in V} \|f(u)-f(v)\|_p^q \biggr)^{\frac1q} \lesssim \bigl({\gamma}(p^2+q^2)\bigr)^{\frac{1}{\min\{p,2\}}}
\cdot \biggl( \frac{1}{|E|}\sum_{\{u,v\}\in E} \|f(u)-f(v)\|_p^{\max\{p,q\}} \biggr)^{\frac{1}{\max\{p,q\}}}.
\end{equation}
\end{proposition}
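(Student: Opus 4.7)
The plan is to combine the scalar Matoušek-type extrapolation for the 2-Poincaré inequality with the Rademacher type/cotype structure of $\ell_p$, in that order. First I would reduce to the nontrivial regime by noting that for $\max\{p,q\}\le 2$ the inequality follows from a single application of Jensen's inequality to the hypothesized 2-Poincaré bound (applied coordinate-wise to the Hilbert-space-valued case, since $\ell_p$ with $p\le 2$ embeds into an $L_2$-space after the appropriate type-$p$ manipulation). So I may assume $\max\{p,q\}\ge 2$, and it is also harmless to assume $q\ge 2$ (for $q<2$ Hölder upgrades the target power from $q$ to $2$ at no cost).

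Next I would prove the scalar Matoušek extrapolation: for every $s\ge 2$ and $g:V\to\mathbb R$,
\[
\Bigl(\tfrac{1}{|V|^2}\sum_{u,v\in V} |g(u)-g(v)|^{s}\Bigr)^{1/s}
\lesssim \sqrt{\gamma s}\,\Bigl(\tfrac{1}{|E|}\sum_{\{u,v\}\in E} |g(u)-g(v)|^{s}\Bigr)^{1/s}.
\]
The standard route is to let $c$ be a median of $g$, apply the 2-Poincaré hypothesis to $h(x)=\sgn(g(x)-c)|g(x)-c|^{s/2}$, and exploit the pointwise bound $|h(u)-h(v)|^{2}\le s^{2}\bigl(|g(u)-c|^{s-2}+|g(v)-c|^{s-2}\bigr)\bigl(g(u)-g(v)\bigr)^{2}$. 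A Hölder step turns the edge sum involving this pointwise bound into $(\text{avg}_V |g-c|^{s})^{(s-2)/s}\cdot (\text{avg}_E |g(u)-g(v)|^{s})^{2/s}$, while the choice of median gives $\text{avg}_V h^{2}\gtrsim \text{avg}_V|g-c|^{s}$ and the triangle inequality gives $\text{avg}_V|g-c|^{s}\gtrsim \text{avg}_{V^{2}}|g(u)-g(v)|^{s}$. Rearranging yields the $\sqrt{\gamma s}$ factor.

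The $\ell_p$-valued extension is then obtained by separating the two regimes for $p$. When $p\in[1,2]$, the identity $\|f(u)-f(v)\|_p^{p}=\sum_k |f_k(u)-f_k(v)|^{p}$ lets one apply the scalar extrapolation coordinate-wise with exponent $s=\max\{p,q\}$, and then reassemble the coordinates by Minkowski's inequality; the resulting exponent of $\gamma$ is $1/p=1/\min\{p,2\}$, and the extra $q^{2}$ comes from the scalar $\sqrt{\gamma s}$ raised to the appropriate power. When $p\in[2,\infty)$, one first linearizes the $\ell_p$-norm via the Rademacher type-$2$ inequality of $\ell_p$ (with constant $O(\sqrt p)$), reducing the problem to a scalar $L^{2}$-level comparison to which the 2-Poincaré directly applies; the type-2 constant is what contributes the $p^{2}$ term inside $\gamma(p^{2}+q^{2})$ and the exponent becomes $1/2=1/\min\{p,2\}$. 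Combining the two sources of blow-up via a single Hölder step in the mixed direction produces the asserted constant.

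The main obstacle is the bookkeeping of exponents, since the right-hand side of~\eqref{eq:naor-extrapolation} integrates with power $\max\{p,q\}$ rather than $q$. The two applications of Hölder—one in the coordinate index $k$ (to redistribute a factor of $\|f(u)-f(v)\|_p^{\max\{p,q\}-q}$) and one in the vertex pair $(u,v)$ (to absorb the Matoušek extrapolation factor)—must be carried out in the correct order, and the regimes $p\le q$ and $p\ge q$ must be treated separately so that no spurious factor of $p$ or $q$ appears outside the claimed $(\gamma(p^{2}+q^{2}))^{1/\min\{p,2\}}$. This careful interleaving is what makes the proof delicate even though each individual ingredient is classical.
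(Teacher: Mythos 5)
First, a point of reference: the paper does not prove \autoref{prop:naor-extrapolation} at all --- it is quoted verbatim from \cite[Lemma~52]{Naor-avg-john}, and the only in-paper work building on it is \autoref{lem:gamma+-extrapolation} (passing to $\gamma^+$ via the double cover). So there is no internal proof to match your sketch against; I can only assess the sketch on its merits. Your scalar step is essentially the standard Matou\v{s}ek argument and is sound, except that the constant it actually produces is $s\sqrt{\gamma}$ rather than $\sqrt{\gamma s}$: the pointwise bound on $|h(u)-h(v)|^2$ carries a factor $s^2$, which after the Hölder rearrangement exits as $(\gamma s^2)^{1/2}$. This is harmless for \eqref{eq:naor-extrapolation}, since $s^2\le p^2+q^2$, but you should state the lemma with the constant your argument delivers. (Also, $\operatorname{avg}_V h^2=\operatorname{avg}_V|g-c|^s$ is an identity; what the median buys is $\operatorname{avg}_{V^2}|h(u)-h(v)|^2\gtrsim \operatorname{avg}_V h^2$.)

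The genuine gaps are in the vector-valued transfer. For $p\ge 2$ you invoke Rademacher type~$2$ of $\ell_p$ to ``linearize'' the norm and reduce to a scalar application of the $2$-Poincar\'e hypothesis, but no such mechanism exists: type~$2$ controls averages over random signs, not vertex-pair Poincar\'e inequalities on an arbitrary regular graph, and a general implication of the form ``$X$ has type $2$ $\Rightarrow\gamma_2(G,X)\lesssim T_2(X)^2\gamma_2(G,\R)$'' is precisely the (open, and in metric generality false) super-expander transference problem. The workable route for $p\ge 2$ is the one you reserve for $p\le 2$: the identity $\|x\|_p^p=\sum_k|x_k|^p$ applied at exponent $s=p$, summing the scalar extrapolation over coordinates to get $\gamma_p(G,\ell_p)^{1/p}\lesssim p\sqrt{\gamma}$ (this is where the $p^2$ inside $\gamma(p^2+q^2)$ really comes from --- note that the type-$2$ constant $\sqrt{p}$ squared would only yield $p$, a further sign the mechanism is misattributed); when $q>p$ one then upgrades from exponent $p$ to exponent $q$ by a Mazur-map argument \`a la de Laat--de la Salle, not by type. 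Separately, your opening reduction is too quick: when $p<q\le 2$ the target exponent $\max\{p,q\}=q$ exceeds $p$, so Jensen runs in the wrong direction; you need the $p/2$-snowflake embedding of $\ell_p$ into Hilbert space followed by the Hilbert-valued extrapolation at exponent $2q/p\ge 2$, which is what produces the exponent $1/p=1/\min\{p,2\}$ on $\gamma$ in that regime. With these two repairs the architecture closes; as written, the $p\ge2$, $q>p$ case and the $p<q\le 2$ case are not proved.
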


Let $B_G(u,t)=\{v\in V: d_G(u,v)\le t\}$ be the (closed) ball around the vertex $u\in V$ in the graph $G=(V,E)$.
The following proposition follows from the simple fact that in $n$-vertex $d$-regular graph $G=(V,E)$,
$|B_G(u,t)|\le 1+d \sum_{i=0}^{t-1} (d-1)^i$, for every $u\in V$.
\begin{proposition} \label{prop:log_d-diameter}
For $\ell\le \log_{d-1}\bigl(\frac{n+5}{3}\bigr)$,
$|B_G(u,\ell)|\le n/2$ for every vertex $u\in V$
in $n$-vertex $d$-regular graph $G=(V,E)$.
\end{proposition}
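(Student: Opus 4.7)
I will derive the bound $|B_G(u,\ell)|\le n/2$ directly from the ball-size estimate flagged in the sentence preceding the proposition. That estimate is just the routine BFS-tree count: from $u$ there are at most $d$ vertices at distance exactly $1$; from each of those, at most $d-1$ additional vertices at distance exactly $2$; and inductively at most $d(d-1)^{i-1}$ vertices at distance exactly $i\ge 1$. Summing over the layers $i=0,1,\ldots,\ell$ yields
\[
|B_G(u,\ell)|\le 1+\sum_{i=1}^{\ell}d(d-1)^{i-1}=1+d\sum_{i=0}^{\ell-1}(d-1)^{i},
\]
which is exactly the claim of the preamble.

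Next I would evaluate the geometric sum. Since $d\ge 3$ (so $d-1\ge 2$),
\[
1+d\sum_{i=0}^{\ell-1}(d-1)^{i}=1+\frac{d\bigl((d-1)^{\ell}-1\bigr)}{d-2}.
\]
I would then substitute the hypothesis $(d-1)^{\ell}\le (n+5)/3$ and compare the right-hand side to $n/2$. This reduces to an elementary algebraic inequality in $n$ and $d$; the factor $d/(d-2)$ is a decreasing function of $d$, so it suffices to verify the bound in the worst case $d=3$ and then note that larger $d$ only improve the estimate.

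I do not foresee a real obstacle: the entire argument is one line of counting plus one line of algebra. The only bookkeeping issue is lining up the numerical constants $(n+5)/3$ and $n/2$ uniformly over $d\ge 3$, so the plan is to carry out that single comparison at $d=3$ and invoke monotonicity in $d$ for the rest.
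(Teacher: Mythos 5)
Your counting step and the reduction of the ball size to a geometric sum are exactly what the paper intends (the proposition is prefaced by the Moore-type bound $|B_G(u,t)|\le 1+d\sum_{i=0}^{t-1}(d-1)^i$ and no further proof is given). The gap is precisely in the step you wave off as bookkeeping: the final comparison does not close, and it fails worst at the case $d=3$ that your (correct) monotonicity observation identifies as extremal. Substituting $(d-1)^\ell\le\frac{n+5}{3}$ into $1+\frac{d}{d-2}\bigl((d-1)^\ell-1\bigr)$ yields $1+\frac{d}{d-2}\cdot\frac{n+2}{3}$, which at $d=3$ equals $n+3$ — a vacuous bound, nowhere near $n/2$. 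The inequality $1+\frac{d}{d-2}\cdot\frac{n+2}{3}\le\frac{n}{2}$ forces $\frac{d}{3(d-2)}\le\frac12$, i.e.\ $d\ge 6$, and even at $d=6$ the computation only gives $|B_G(u,\ell)|\le\frac{n+4}{2}$, so the literal conclusion can be off by an additive $2$.

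This is not merely a failure of the method: the statement is false for $d=3$. Take $G$ the Heawood graph ($n=14$, $3$-regular, distance-regular with ball profile $1,4,10,14$ from any vertex) and $\ell=2\le\log_2\frac{19}{3}\approx 2.66$; then $|B_G(u,2)|=10>7=n/2$. So no proof can establish the proposition in the stated generality, and your plan of "verify at $d=3$ and invoke monotonicity" collapses at its first step. The repair is to exploit that the proposition is only ever applied to the $6$-regular graphs of \autoref{thm:hi-girth-expander} with $n\to\infty$: either restrict to $d\ge 6$ and weaken the conclusion to $|B_G(u,\ell)|\le\frac{n}{2}+2$ (which your computation does deliver and which still gives \eqref{eq:avg-diam-lb} up to a harmless $1-o(1)$ factor), or replace the hypothesis by $(d-1)^{\ell}\le 1+\frac{(d-2)(n-2)}{2d}$, which is exactly what the geometric-sum estimate requires.
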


\begin{proof}[Proof of \autoref{thm:no-Kwapien}]
Fix $\mathcal{E}$ to be the high-girth expander graphs from \autoref{thm:hi-girth-expander}.
As observed by Gromov~\cite{Gromov-rand-groups} and Kondo~\cite{Kondo},
high-girth graphs admit bi-Lipschitz embedding in some CAT$(0)$ space.
Quantitatively, by~\cite[Lemma~3.9]{MN-expanders2}, an $(n,d,\cdot,\gr)$-graph embeds in some CAT$(0)$ space
with distortion at most $\pi/(4\gr)$.
Therefore, each $(G,d_G)\in \mathcal E$ embeds in some CAT(0) with a distortion at most $2.5\pi<8$.

Fix $G\in\mathcal E$.
On the one hand,
as proved in~\cite{ANN},
the $\tfrac12$-snowflake of any finite metric space (and in particular, $(V(G),\sqrt{d_G})$)
embeds in $W_2(\mathbb R^3)$
with distortion%
\footnote{The distortion can be improved to $1+\varepsilon$, for any $\varepsilon>0$.}
at most~$2$.
On the other hand, applying \autoref{thm:cat(0)-transform-closure} to the $\frac12$ -snowflake of the image of the biLipschitz embeddings of $(V(G),{d_G})$ in CAT$(0)$,
we conclude that $(V(G),\sqrt{{d}_G})$ embeds in
some CAT$(0)$ space $Y_G$ with distortion at most $4\cdot\sqrt{8} < 12$.
Let
 \[\sqrt{\mathcal{E}} =\Bigl(\biguplus \Bigl\{ \bigl( \bigl(V(G),\sqrt{\smash[b]{d_G}}\bigr), x_G\bigr):\; {G\in\mathcal E} \Bigr \}\Bigr)_2,\]
where $x_G\in V(G)$ is an arbitrary vertex in $G$.
By \autoref{cor:CAT(0)-union},
$\sqrt{\mathcal{E}}$ embeds with a distortion less than $12$ in the CAT$(0)$ space $\bigl(\prod\{Y_G:\; G\in \mathcal{E}\}\bigr)_2$,
and with distortion at most 2 in some Aleksandrov space of nonnegative curvature (a countable orthogonal product of $W_2(\mathbb R^3)$).

We next prove that $\SE$ is poorly embedded in Hilbert space.
Assume toward a contradiction the existence of a coarse embedding on the average of $\SE$ in a Hilbert space $\mathcal H$, and let $\omega,\rho,\Omega:[0,\infty) \to [0,\infty)$ be as stated in \autoref{def:coarse-average}.
Observe that by replacing $\Omega(t)$ with $\Omega(t)+1$  we may assume without loss of generality that $\Omega(1)>0$.
Fix $G=(V,E)\in \mathcal E$ to be an $n$-vertex graph for some   sufficiently large $n$ that satisfies
\begin{equation} \label{eq:alpha>beta-coarse}
\rho \Bigl(\tfrac12 \omega \Bigl (\bigl (\log_{5}\tfrac{n+5}{3} \bigr)^{1/2} \Bigr)\Bigr)>
 21\cdot  \Omega(1).
\end{equation}
This is possible since the limit on the left-hand side of~\eqref{eq:alpha>beta-coarse} is $\infty$ when $n\to \infty$.

Let $\mu$ be the uniform distribution over $V(G)$, the vertices of $G$, in $\SE$.
Let $f_\mu:V\to \mathcal H$ be an embedding of $(V,d_G^{1/2})$ in $\mathcal H$ satisfying~\eqref{eq:coarse-ub} and~\eqref{eq:coarse-avg-lb}.
First, by \autoref{prop:log_d-diameter}, the monotonicity of $\rho$ and $\omega$, and \eqref{eq:alpha>beta-coarse},
\begin{equation}\label{eq:avg-diam-lb}
\rho \Bigl(\frac{1}{n^2}\sum_{u,v\in V}\omega\bigl (d_G(u,v)^{1/2}\bigr) \Bigr)
\ge  \rho\Bigl( \tfrac12 \omega \Bigl (\bigl (\log_{5}\tfrac{n+5}{3} \bigr)^{1/2} \Bigr) \Bigr)\stackrel{\eqref{eq:alpha>beta-coarse}}{>}
21 \cdot  \Omega(1).
\end{equation}
On the other hand, applying~\eqref{eq:naor-extrapolation} with $q=1$ and $p=2$,
\begin{equation} \label{eq:avg-diam-ub}
\begin{aligned}
\rho \Bigl(\frac{1}{n^2}\sum_{u,v\in V}\omega\bigl (d_G(u,v)^{1/2}\bigr)  \Bigr)
& \stackrel{\eqref{eq:coarse-avg-lb}}{\le}
\frac{1}{n^2}\sum_{u,v\in V}\bigl \|f_\mu(u)-f_\mu(v)\bigr\|_{\mathcal H} \\
& \stackrel{\eqref{eq:naor-extrapolation}}{\leq}
20 \Bigl(\frac{1}{|E|}\sum_{\{u,v\}\in E}
\|f_\mu(u)-f_\mu(v)\|_{\mathcal H}^{2}\Bigr)^{ 1/{2}}\\
& \stackrel{\eqref{eq:coarse-ub}}\le
20 \cdot \Omega(1).
\end{aligned}
\end{equation}
Inequalities~\eqref{eq:avg-diam-lb} and~\eqref{eq:avg-diam-ub} contradict each other and therefore embedding $f_\mu$
does not exist.
\end{proof}

\section{Matou\v{s}ek extrapolation for Hadamard spaces}
\label{sec:mat-extrapolation}


In this section we prove \autoref{thm:extrapolation-cat0} --- Matou\v{s}ek extrapolation for Hadamard spaces.
We begin with the case $0<p<q$.

\begin{proposition} \label{prop:extrapolation-down}
Fix $q\in(0,\infty)$, a metric transform
$\varphi:[0,\infty)\to [0,\infty)$, and a Hadamard space $(X,d_X)$.
Then for any regular graph $G$,
\[
\gamma_q(G,(X,\varphi\circ d_X)) \le 4^q \gamma_q(G,\mathrm{Hadamard}).
\]
In particular, for $0<p\le q$,
\[
\gamma_p(G,X) \le 4^q \gamma_q(G,\mathrm{Hadamard}).
\]
\end{proposition}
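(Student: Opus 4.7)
The plan is to leverage \autoref{thm:cat(0)-transform-closure} directly: since every metric transform of a CAT(0) space admits a bi-Lipschitz embedding into \emph{some} Hadamard space with distortion at most $\pi\sqrt{3/2}<4$, we can pull back the $q$-Poincar\'e inequality that the Hadamard target satisfies.

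The main computation. Fix a regular graph $G=(\sfV,\sfE)$ and a map $f:\sfV\to X$. By \autoref{thm:cat(0)-transform-closure}, there exist a Hadamard space $(Y,d_Y)$, an embedding $\Phi:(X,\f\circ d_X)\to Y$, a scaling factor $s>0$, and a constant $D\le \pi\sqrt{3/2}$ such that
\[
\forall\, x,y\in X,\qquad (\f\circ d_X)(x,y)\le s\cdot d_Y(\Phi(x),\Phi(y))\le D\cdot (\f\circ d_X)(x,y).
\]
Applying the upper half of this inequality to all $|\sfV|^2$ pairs $(u,v)$ and the lower half to the $|\sfE|$ pairs forming edges, then inserting the definition of $\gamma_q(G,\mathrm{Hadamard})$ applied to the composition $\Phi\circ f:\sfV\to Y$, one obtains
\[
\frac{1}{|\sfV|^2}\sum_{u,v\in\sfV}(\f\circ d_X)(f(u),f(v))^q\le \frac{s^q}{|\sfV|^2}\sum_{u,v\in\sfV}d_Y(\Phi f(u),\Phi f(v))^q\le \frac{s^q\gamma_q(G,\mathrm{Hadamard})}{|\sfE|}\sum_{\{u,v\}\in\sfE}d_Y(\Phi f(u),\Phi f(v))^q,
\]
which is in turn at most $D^q\gamma_q(G,\mathrm{Hadamard})$ times $\frac{1}{|\sfE|}\sum_{\{u,v\}\in\sfE}(\f\circ d_X)(f(u),f(v))^q$. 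Since $D^q\le (\pi\sqrt{3/2})^q<4^q$, the first inequality of the proposition follows.

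The ``in particular'' statement. For $0<p\le q$, set $\f(t)=t^{p/q}$. This $\f$ is nondecreasing, concave (because $p/q\le 1$), and vanishes at $0$, hence is a metric transform. For this choice $(\f\circ d_X)^q=d_X^p$ holds pointwise on $X\times X$, so the $q$-Poincar\'e inequality with respect to $(X,\f\circ d_X)$ is literally the $p$-Poincar\'e inequality with respect to $(X,d_X)$. Therefore $\gamma_p(G,X)=\gamma_q(G,(X,\f\circ d_X))\le 4^q\gamma_q(G,\mathrm{Hadamard})$, as desired.

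The proof is essentially a direct consequence of \autoref{thm:cat(0)-transform-closure} combined with the standard observation that bi-Lipschitz embeddings transfer Poincar\'e inequalities up to a multiplicative loss equal to the $q$-th power of the distortion; there is no real obstacle beyond invoking the rigidity theorem and bookkeeping the scaling factor.
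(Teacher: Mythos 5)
Your proof is correct and follows the same route as the paper's: invoke \autoref{thm:cat(0)-transform-closure} to embed $(X,\varphi\circ d_X)$ into a Hadamard space with distortion less than $4$, transfer the $q$-Poincar\'e inequality at the cost of the $q$-th power of the distortion, and obtain the second assertion via the metric transform $\varphi(t)=t^{p/q}$. The only difference is that you spell out the scaling-factor bookkeeping that the paper leaves implicit.
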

\begin{proof}
Fix a Hadamard space $(X,d_X)$.
Apply \autoref{thm:cat(0)-transform-closure} on $(X,\varphi\circ d_X)$.
We conclude that $(X,\varphi\circ d_X)$ admits a bi-Lipschitz embedding in some Hadamard space with a distortion smaller than $4$.
It therefore satisfies the $q$-Poincar\'e inequality for Hadamard spaces with constant multiplied by $4^q$.
The bound on $\gamma_p(G,X)$ is achieved from the above bound using the metric transform $\varphi(t)=t^{p/q}$.
\end{proof}

Next, we consider the case $p\ge q\ge 1$.

\begin{proposition} \label{prop:extrapolation-2-upward}
Fix\/ $\infty>p\geq q\geq 1$.
Then for every regular graph $G$,
\[
\gamma_p(G,\mathrm{Hadamard}) \le C_{p,q} \gamma_q(G,\mathrm{Hadamard})^{p/q}.
\]
\end{proposition}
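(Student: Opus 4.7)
The strategy is to adapt the Banach-space extrapolation of de Laat and de la Salle~\cite{Laat-Salle}, which establishes the analogous inequality via Pisier's martingale $p$-cotype, by substituting the $p$-barycentric inequality of \autoref{prop:p-barycentric-CAT0}. I treat first the main case $p\ge q\ge 2$, where \autoref{prop:p-barycentric-CAT0} applies directly; the residual case $1\le q\le p<2$ requires an additional argument that is briefly indicated at the end.

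Fix $p\ge q\ge 2$, a $d$-regular graph $G=(V,E)$, a Hadamard space $(X,d_X)$, and a map $f:V\to X$. Let $\mu_u$ denote the uniform probability measure on the neighbors of $u\in V$, and define $A:X^V\to X^V$ by $(Ag)(u)\eqdef\mathcal{B}(g_\#\mu_u)$, which is well defined by \autoref{thm:barycenter}. Set $f_t\eqdef A^tf$, write $V_r(g)\eqdef|V|^{-2}\sum_{u,w\in V}d_X(g(u),g(w))^r$ and $E_r(g)\eqdef|E|^{-1}\sum_{\{u,v\}\in E}d_X(g(u),g(v))^r$, and $\gamma_q\eqdef\gamma_q(G,\mathrm{Hadamard})$. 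The goal is $V_p(f)\lesssim_{p,q}\gamma_q^{p/q}E_p(f)$.

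The first step extracts a martingale-cotype estimate from \autoref{prop:p-barycentric-CAT0}. Applying~\eqref{eq:p-barycentric} to $\mu=(f_t)_\#\mu_u$ with $z=f_t(u)$ and averaging over $u$ yields $\Delta_t\lesssim_p E_p(f_t)$, where $\Delta_t\eqdef|V|^{-1}\sum_{u\in V}\tfrac{1}{d}\sum_{v\sim u}d_X(f_{t+1}(u),f_t(v))^p$. Applied instead with $z=f_{t+1}(w)$ and averaged over $u,w\in V$, followed by a $p$-th-power triangle-inequality rewriting of the cross-term, it gives
\[
V_p(f_t)-V_p(f_{t+1})\;\gtrsim_p\;\Delta_t.
\]
Since the $2$-barycenter is an $L_2$-Wasserstein contraction (the $2$-barycentric property of \autoref{sec:CAT(k)}), iterating $A$ contracts all $V_r$ and $E_r$, and $V_p(f_t)\to 0$ as $t\to\infty$ by the mixing of the random walk on $G$ (implicit in the assumption $\gamma_q<\infty$). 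Hence telescoping gives $\sum_{t\ge 0}\Delta_t\lesssim_p V_p(f)$.

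The second step applies the $q$-Poincar\'e hypothesis to each iterate: $V_q(f_t)\le\gamma_qE_q(f_t)\le\gamma_qE_p(f_t)^{q/p}\lesssim_p\gamma_q\Delta_t^{q/p}$ (using Jensen since $q\le p$, followed by $\Delta_t\lesssim_p E_p(f_t)$ from the first step). Combining this with the Minkowski triangle inequality $V_p(f)^{1/p}\le V_p(f_{T+1})^{1/p}+2\sum_{t=0}^{T}D_t^{1/p}$, where $D_t\eqdef|V|^{-1}\sum_u d_X(f_t(u),f_{t+1}(u))^p\lesssim_p\Delta_t$ (by \autoref{prop:p-barycentric-CAT0} with $z=f_t(u)$), together with the geometric decay of $V_q(f_t)$ forced by the $q$-Poincar\'e estimate and Step 1, and passing to the limit $T\to\infty$, one sums a geometric series whose rate is calibrated by $\gamma_q^{-p/q}$ to conclude $V_p(f)\lesssim_{p,q}\gamma_q^{p/q}E_p(f)$, as desired. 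The principal obstacle is the careful book-keeping of the exponent $p/q$ in this final assembly: since Minkowski on $p$-th-power averages passes through $1/p$-th roots, the naive reconstruction of $V_p(f)$ from $\sum_t D_t^{1/p}$ produces the wrong exponent $\gamma_q^{p^2/q}$, and a more refined Gronwall-type iteration of the one-step decay is needed to recover the sharp $\gamma_q^{p/q}$---mirroring the delicate Banach-space computation of~\cite{Laat-Salle}. For the residual range $1\le q\le p<2$, \autoref{prop:p-barycentric-CAT0} is unavailable, and the argument must be supplemented by a level-set reduction in the spirit of Matou\v{s}ek~\cite{Mat-extrapolation}, which I do not detail here.
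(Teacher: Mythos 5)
Your proposal takes a genuinely different route from the paper, but it has gaps that are not merely technical. The paper does not iterate a barycenter operator or invoke any martingale-cotype mechanism here; it follows the de Laat--de la Salle \emph{Mazur map} strategy: normalize $V_p(f)=1$, pass to the tangent cone $Y=T_bX$ at the barycenter $b$ of $f$ via $\Log_b$, apply the cone Mazur map $\psi_{p,q}$ to get $g=\psi_{p,q}(\Log_b f)$ with $\|g\|_q\le 1$, run the $q$-Poincar\'e inequality on $g$ in the Hadamard space $Y$, and transfer back using the two Lipschitz/H\"older bounds of \autoref{thm:mazur-cone} together with the $1$-Lipschitzness of $\Log_b$. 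That argument is a single change of variables, it produces the exponent $p/q$ automatically, and it works uniformly for all $p\ge q\ge 1$. The $p$-barycentric inequality (\autoref{prop:p-barycentric-CAT0}) is used elsewhere in the paper, for the spectral-calculus/Ces\`aro decay needed in the zigzag iteration --- not for extrapolation.

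The concrete gaps in your argument are the following. First, you concede that your reconstruction of $V_p(f)$ from the Minkowski decomposition yields $\gamma_q^{p^2/q}$ and that ``a more refined Gronwall-type iteration'' is needed to recover $\gamma_q^{p/q}$; that refinement is precisely the content of the proposition and is not supplied, so the proof is incomplete at its crux. (This is not incidental: the reason the literature switched from martingale-type arguments to Mazur maps for extrapolation is exactly that the iteration route does not transparently deliver the sharp power.) Second, your Step~1 telescoping $V_p(f_t)-V_p(f_{t+1})\gtrsim_p\Delta_t$ does not follow as stated: applying \eqref{eq:p-barycentric} with $z=f_{t+1}(w)$ produces the cross quantity $|V|^{-2}\sum_{u,w}\frac1d\sum_{v\sim u}d_X(f_{t+1}(w),f_t(v))^p$ on the large side, and converting it to $V_p(f_t)$ by a $p$-th-power triangle inequality introduces a multiplicative constant $2^{p-1}>1$ in front of $V_p(f_t)$, which destroys the telescoping (the partial sums then grow rather than being controlled by $V_p(f)$). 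Third, the range $1\le q\le p<2$ is part of the statement and is left entirely open; a Matou\v{s}ek level-set reduction exploits the cut-cone structure of real-valued functions and has no evident analogue for maps into a general Hadamard space, whereas the Mazur-map proof covers this range with no extra work.
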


The rest of the section is therefore devoted to the proof of \autoref{prop:extrapolation-2-upward} using Mazur maps for tangent cones and
following the proof of Matou\v{s}ek extrapolation for Banach spaces in~\cite{Cheng,Laat-Salle} (see also the exposition in \cite[Section~3]{Esk22}).

The Mazur map $\psi=\psi_{p,q}:L_p(\mathbb R)\to L_q(\mathbb R)$,
for fixed $1\le p,q<\infty$,
is defined as
\begin{equation} \label{eq:mazur-map-reals}
(\psi_{p,q}(f))(\omega)= \|f\|_p^{1-p/q} |f(\omega)|^{p/q} \sgn(f(\omega)).
\end{equation}
It is easy to check that $\psi_{p,q}^{-1}= \psi_{q,p}$.

\begin{theorem}[See {\cite[Proposition~9.2]{BL}}]
Fix $1\le p,q<\infty$. There exists $c=c_{p/q}>0$ such that
for every $f_1,f_2\in L_p$,
if\/ $\|f_i\|_p\le R$, then
\begin{gather}
\label{eq:real-mazur-0}
\|\psi_{p,q}(f_i)\|_q=\|f_i\|_p, \text{\quad and}\\
\label{eq:real-mazur}
\| \psi_{p,q}(f_1)-\psi_{p,q}(f_2)\|_q \le
\begin{cases}
\frac{p}{q} \|f_1-f_2\|_p & p\ge q \\
c R^{1-p/q} \|f_1-f_2\|_p^{p/q} & p<q.
\end{cases}
\end{gather}
\end{theorem}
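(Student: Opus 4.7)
The plan is to reduce both estimates to pointwise inequalities for the scalar function $\tau_\alpha:\mathbb R\to \mathbb R$ defined by $\tau_\alpha(t)=|t|^\alpha \sgn(t)$ with $\alpha=p/q$, and then pass from the equal-norm case to the general case using the homogeneity $\psi_{p,q}(\lambda f)=\lambda\, \psi_{p,q}(f)$, which is immediate from the definition~\eqref{eq:mazur-map-reals} for every $\lambda\in\mathbb R$.

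First, identity~\eqref{eq:real-mazur-0} follows by direct computation:
\[
\|\psi_{p,q}(f)\|_q^q=\int \|f\|_p^{q-p}|f(\omega)|^p\, \dd\omega=\|f\|_p^{q-p}\cdot\|f\|_p^p=\|f\|_p^q.
\]
Next I would prove two pointwise Lipschitz/H\"older bounds on $\tau_\alpha$. When $\alpha\ge 1$ (the case $p\ge q$), applying the mean value theorem on each sign-homogeneous branch of $\tau_\alpha$, combined with the elementary inequality $s^\alpha+t^\alpha\le (s+t)^\alpha$ for $s,t\ge 0$ to deal with opposite signs of $a$ and $b$, yields
\[
\forall a,b\in\mathbb R,\qquad |\tau_\alpha(a)-\tau_\alpha(b)|\le \alpha\, (|a|\vee|b|)^{\alpha-1}|a-b|.
\]
When $0<\alpha\le 1$ (the case $p<q$), subadditivity of $t\mapsto t^\alpha$ on $[0,\infty)$ for same-sign pairs together with the concavity bound $s^\alpha+t^\alpha\le 2^{1-\alpha}(s+t)^\alpha$ for opposite signs yields $|\tau_\alpha(a)-\tau_\alpha(b)|\le 2^{1-\alpha}|a-b|^\alpha$.

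Next I would treat the equal-norm case $\|f_1\|_p=\|f_2\|_p=s$. Since $\psi_{p,q}(f_i)(\omega)=s^{1-p/q}\tau_\alpha(f_i(\omega))$, the pointwise estimate above combined with H\"older's inequality in the exponents $p/q$ and $p/(p-q)$ (for $p>q$) gives the Lipschitz bound $\|\psi_{p,q}(f_1)-\psi_{p,q}(f_2)\|_q\lesssim (p/q)\,\|f_1-f_2\|_p$ for $p\ge q$, and the snowflaked bound $\|\psi_{p,q}(f_1)-\psi_{p,q}(f_2)\|_q\lesssim R^{1-p/q}\|f_1-f_2\|_p^{p/q}$ for $p<q$, since $s\le R$. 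The general unequal-norm case then reduces to the equal-norm case via homogeneity: assuming $\|f_1\|_p\le \|f_2\|_p$, set $\lambda=\|f_1\|_p/\|f_2\|_p\in[0,1]$, apply the equal-norm estimate to the pair $(f_1,\lambda f_2)$ with common norm $\|f_1\|_p\le R$, and control the residual term by the triangle inequality and the identity $\|\psi_{p,q}(f_2)-\psi_{p,q}(\lambda f_2)\|_q=(1-\lambda)\|f_2\|_p=\|f_2\|_p-\|f_1\|_p\le\|f_1-f_2\|_p$.

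The main obstacle will be pinning down the sharp constant $p/q$ (rather than $(p/q)\cdot 2^{(p-q)/q}$ that the crude estimate $|f_1|+|f_2|\le 2(|f_1|\vee|f_2|)$ produces) in the Lipschitz case of~\eqref{eq:real-mazur}; this almost certainly requires exploiting $\max$ instead of sum in the pointwise bound together with a layer-cake or rearrangement argument, as is carried out in \cite[Proposition~9.2]{BL}, whose computation I would ultimately follow to match the stated constants.
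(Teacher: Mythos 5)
The paper does not actually prove this statement: it is imported verbatim from \cite[Proposition~9.2]{BL}, so there is no in-paper argument to compare against. Your sketch reconstructs what is essentially the standard Benyamini--Lindenstrauss proof, and it is structurally sound: the norm identity~\eqref{eq:real-mazur-0} is a one-line computation as you say; the reduction of~\eqref{eq:real-mazur} to pointwise estimates on $\tau_\alpha$ followed by H\"older's inequality with exponents $p/q$ and $p/(p-q)$ is the right mechanism; and the passage from the equal-norm case to general $f_1,f_2$ via the degree-one homogeneity of $\psi_{p,q}$ and the identity $\|\psi_{p,q}(f_2)-\psi_{p,q}(\lambda f_2)\|_q=(1-\lambda)\|f_2\|_p\le\|f_1-f_2\|_p$ is exactly how one removes the unit-sphere restriction.

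Two small points of care. First, in the opposite-sign case for $\alpha=p/q\ge 1$, the route you propose ($s^\alpha+t^\alpha\le(s+t)^\alpha$ followed by $s+t\le 2(s\vee t)$) only yields the constant $2^{\alpha-1}$, which exceeds $\alpha$ once $\alpha>2$; the claimed bound $|\tau_\alpha(a)-\tau_\alpha(b)|\le\alpha(|a|\vee|b|)^{\alpha-1}|a-b|$ is nevertheless true, by the direct estimate $s^\alpha+t^\alpha\le\alpha s^{\alpha-1}s+\alpha s^{\alpha-1}t$ for $s\ge t\ge 0$ and $\alpha\ge1$, so this is a fixable detail rather than a gap. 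Second, as you yourself flag, the H\"older step costs a factor $2^{(p-q)/(pq)}$ (from $\int\max\{|f_1|,|f_2|\}^p\le 2R^p$) and the unequal-norm reduction costs further constants, so your argument proves~\eqref{eq:real-mazur} with $c_{p/q}\cdot\frac pq$ rather than the bare $\frac pq$ in the first case. Since the only downstream use in the paper (\autoref{thm:mazur-cone}) tolerates any constant depending on $p/q$, this discrepancy is harmless here, but matching the stated constant exactly does require the sharper bookkeeping of \cite[Proposition~9.2]{BL}.
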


We next extend Mazur maps to $L_p$ spaces over
Euclidean cones,
in the spirit of the extension of Mazur maps to $L_p$ spaces
over arbitrary Banach spaces~\cite{Cheng,Laat-Salle}.
Fix a probability space $(\Omega,\mu)$. Let $X$ be an arbitrary metric space and $Y=\cone(X)$. Recall the notation from \autoref{def:cone}:
For $(s,x)\in Y$: $|(s,x)|=s\geq0$, and $\arg(s,x)=x\in X$.
Define $L_p(\Omega,\mu,Y)$ as the metric space
of functions $f:\Omega\to Y$ satisfying
\[
\int_\Omega |f(\omega)|^p \dd\mu(\omega) <\infty,
\]
with the distance
\[d_{L_p(\Omega,\mu,Y)}(f_1,f_2)=\biggl(\int_\Omega d_Y(f_1(\omega),f_2(\omega))^p\dd \mu(\omega)\biggr)^{1/p}.\]
We also use the norm notation
\[
\|f\|_p=d_{L_p(\Omega,\mu,Y)}(0,f)=\|(|f|)\|_{L_p(\Omega,\mu,\mathbb{R})}.
\]
When there is no confusion, we may also denote $L_p(\Omega,\mu,Y)$ by $L_p(\Omega,Y)$ or even $L_p(Y)$.

We define the Mazur map for Euclidean cones
$\psi_{p,q}:L_p(Y) \to L_q(Y)$ as follows.
Fix $f\in L_p(\Omega,Y)$. Denote $|f|\in L_p(\Omega,\mathbb R)$ as $|f|(\omega)=|f(\omega)|$ and $\arg(f):\Omega \to X$ as
$\arg(f)(\omega)=\arg(f(\omega))$. Define
\[
\psi_{p,q}(f)=\bigl (\psi_{p,q}(|f|), \arg(f)\bigr)\in L_q(Y),
\]
where ``$\psi_{p,q}$" on the right-hand side of the equation is the classical Mazur map~\eqref{eq:mazur-map-reals} on $L_p(\Omega,\mathbb R)$.
It follows directly from the classical case that
$\psi_{p,q}^{-1}=\psi_{q,p}$.

\begin{theorem}[Mazur map for cones]
\label{thm:mazur-cone}
Let $Y=\cone(X)$, and
fix $1\le p,q<\infty$.
For every $f\in L_p(\Omega,\mu,Y)$,
\begin{equation}
\label{eq:mazur-bound-0}
\|\psi_{p,q}(f)\|_q=\|f\|_p,
\end{equation}
and every $f_1,f_2\in L_p(\Omega,\mu,Y)$,
if\/ $\|(|f_i|)\|_p \le R$, then
\begin{equation}
\label{eq:mazur-bound}
d_{L_q(Y)}\bigl (\psi_{p,q}(f_1),\psi_{p,q}(f_2)\bigr) \le
\begin{cases}
\bigl (\tfrac pq+1 \bigr ) d_{L_p(Y)}(f_1,f_2) & p\ge q \\
\bigl (c+2) R^{1-p/q} d_{L_p(Y)}(f_1,f_2)^{p/q} & p<q,
\end{cases}
\end{equation}
where $c=c_{p/q}>0$ depends only on $p/q$.
\end{theorem}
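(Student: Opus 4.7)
The first equality~\eqref{eq:mazur-bound-0} is immediate from the definitions: the radial coordinate of $\psi_{p,q}(f)(\omega)$ equals $\psi_{p,q}(|f|)(\omega)=\||f|\|_p^{1-p/q}|f(\omega)|^{p/q}$, hence
$\|\psi_{p,q}(f)\|_q^q=\||f|\|_p^{q-p}\int_\Omega|f|^p\,\dd\mu=\|f\|_p^q$.

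For the Lipschitz estimate~\eqref{eq:mazur-bound}, my plan is to derive a pointwise upper bound on $d_Y\bigl(\psi(f_1)(\omega),\psi(f_2)(\omega)\bigr)$ that cleanly separates the radial and angular contributions, then handle each piece by integration. Fix $\omega\in\Omega$ and write $s_i=|f_i(\omega)|$, $s'_i=\psi_{p,q}(|f_i|)(\omega)=\||f_i|\|_p^{1-p/q}s_i^{p/q}$, and $\alpha=\alpha(\omega)=\sqrt{2\bigl(1-\cos\bigl(\min\{\pi,d_X(\arg f_1(\omega),\arg f_2(\omega))\}\bigr)\bigr)}\in[0,2]$. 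The cone formula~\eqref{eq:1-cos identity} gives $d_Y(\psi(f_1)(\omega),\psi(f_2)(\omega))=\sqrt{(s'_1-s'_2)^2+s'_1s'_2\alpha^2}$, and the elementary inequality $\sqrt{a^2+b^2}\le|a|+|b|$ yields the pointwise bound
\[
d_Y\bigl(\psi(f_1)(\omega),\psi(f_2)(\omega)\bigr)\le |s'_1-s'_2|+\sqrt{s'_1s'_2}\,\alpha.
\]
Minkowski's inequality in $L_q$ then reduces the theorem to bounding $\|s'_1-s'_2\|_q$ and $\bigl\|\sqrt{s'_1s'_2}\,\alpha\bigr\|_q$ separately.

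The radial term is the left-hand side of the classical Mazur estimate~\eqref{eq:real-mazur} applied to $|f_1|,|f_2|\in L_p(\Omega,\mathbb{R})$. Since the radial coordinate is $1$-Lipschitz on $Y$ (because $d_Y((s,x),(t,y))\ge|s-t|$), we have $\bigl\||f_1|-|f_2|\bigr\|_p\le d_{L_p(Y)}(f_1,f_2)$, so~\eqref{eq:real-mazur} gives $(p/q)\,d_{L_p(Y)}(f_1,f_2)$ when $p\ge q$ and $cR^{1-p/q}\,d_{L_p(Y)}(f_1,f_2)^{p/q}$ when $p<q$. For the angular term, the key pointwise inequality is $\sqrt{s_1s_2}\,\alpha\le d_Y\bigl(f_1(\omega),f_2(\omega)\bigr)$, which is immediate from the cone formula. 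Substituting the explicit formula for $s'_i$ gives
\[
\bigl\|\sqrt{s'_1s'_2}\,\alpha\bigr\|_q^q=\bigl(\||f_1|\|_p\||f_2|\|_p\bigr)^{(q-p)/2}\int_\Omega(s_1s_2)^{p/2}\alpha^q\,\dd\mu.
\]

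When $p\ge q$, the prefactor has nonpositive exponent and must be cancelled by the integral: write $(s_1s_2)^{p/2}\alpha^q=\bigl((s_1s_2)^{1/2}\alpha\bigr)^q\cdot(s_1s_2)^{(p-q)/2}$ and apply H\"older with exponents $p/q$ and $p/(p-q)$; bounding the first factor pointwise by $d_Y$ and the second by Cauchy--Schwarz produces exactly the cancelling factor $\bigl(\||f_1|\|_p\||f_2|\|_p\bigr)^{(p-q)/2}$, leaving $\bigl\|\sqrt{s'_1s'_2}\,\alpha\bigr\|_q\le d_{L_p(Y)}(f_1,f_2)$. Combined with the radial estimate, this yields the $(p/q+1)$ constant. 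When $p<q$ the prefactor is at most $R^{q-p}$, and we use the crude bound $\alpha\le 2$ to write $(s_1s_2)^{p/2}\alpha^q\le 2^{q-p}\bigl((s_1s_2)^{1/2}\alpha\bigr)^p\le 2^{q-p}d_Y(f_1,f_2)^p$; this gives $\bigl\|\sqrt{s'_1s'_2}\,\alpha\bigr\|_q\le 2^{1-p/q}R^{1-p/q}d_{L_p(Y)}(f_1,f_2)^{p/q}$, and absorbing $2^{1-p/q}$ into $c$ produces the claimed $(c+2)$. The only delicate step is the H\"older bookkeeping in the $p\ge q$ case; the rest is routine.
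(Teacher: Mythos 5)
Your proof is correct, and it takes a genuinely different route from the paper's. The paper splits $d_Y(\psi_{p,q}(f_1)(\omega),\psi_{p,q}(f_2)(\omega))$ via the triangle inequality through the intermediate point $\bigl(\psi_{p,q}(|f_2|)(\omega),\arg(f_1(\omega))\bigr)$, then controls the angular piece by a change of measure to $\dd\nu=\|(|f_2|)\|_p^{-p}|f_2|^p\dd\mu$, the nesting $\|h\|_{L_q(\nu)}\le\|h\|_{L_p(\nu)}$ (resp.\ the interpolation bound with $L_\infty$ when $p<q$), and the auxiliary cone comparison $d_Y((s,x),(s,y))\le 2\,d_Y((s,x),(t,y))$. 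You instead exploit the explicit cone formula algebraically, via $\sqrt{a^2+b^2}\le|a|+|b|$, to split radial and angular contributions pointwise, and then handle the angular term by H\"older with exponents $p/q$ and $p/(p-q)$ followed by Cauchy--Schwarz; the pointwise bound $\sqrt{s_1s_2}\,\alpha\le d_Y(f_1(\omega),f_2(\omega))$ does the work that the paper's \eqref{eq:cone-bound} does. Your route avoids both the change of measure and that lemma, and it actually delivers the constant $\tfrac pq+1$ exactly as stated in the theorem, whereas the paper's own argument only concludes $\tfrac pq+2$ (a small mismatch between the paper's statement and its proof). Two degenerate cases are worth a sentence in a final write-up: when $p=q$ the H\"older exponent $p/(p-q)$ is undefined, but then the prefactor $(\|f_1\|_p\|f_2\|_p)^{(q-p)/2}$ equals $1$ and the bound is immediate; and when some $\|f_i\|_p=0$ the angular term vanishes outright.
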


Before proving \autoref{thm:mazur-cone}, we need the following lemma.

\begin{lemma}
For every $s,t\ge 0$, $x,y\in X$, and $Y=\cone(X)$
\begin{equation} \label{eq:cone-bound}
d_Y((s,x),(s,y)) \le {2} \cdot d_Y((s,x),(t,y)).
\end{equation}
\end{lemma}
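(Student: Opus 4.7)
The plan is to prove this inequality by a two-step triangle-inequality argument that exploits the additive identity~\eqref{eq:1-cos identity} for the cone metric. First, I would invoke the triangle inequality on the three points $(s,x), (t,y), (s,y) \in Y$ to get
\[
d_Y\big((s,x),(s,y)\big) \le d_Y\big((s,x),(t,y)\big) + d_Y\big((t,y),(s,y)\big).
\]
Next, I would evaluate the second term on the right explicitly. Applying \eqref{eq:1-cos identity} with the same argument $y$ in both coordinates, so that $\cos(\min\{\pi,d_X(y,y)\})=\cos 0 = 1$, yields $d_Y\big((t,y),(s,y)\big) = |s-t|$, i.e.\ points of $Y$ with the same argument lie on a Euclidean ray from the cusp.

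The key observation, which also comes directly from rewriting the cone metric via \eqref{eq:1-cos identity}, is the lower bound
\[
d_Y\big((s,x),(t,y)\big)^2 = (s-t)^2 + 2st\big(1-\cos(\min\{\pi, d_X(x,y)\})\big) \ge (s-t)^2,
\]
since $1-\cos\theta \ge 0$ for every $\theta\in [0,\pi]$ and $s,t\ge 0$. Consequently, $d_Y\big((t,y),(s,y)\big) = |s-t| \le d_Y\big((s,x),(t,y)\big)$, and plugging this back into the triangle inequality above produces the desired bound
\[
d_Y\big((s,x),(s,y)\big) \le 2\, d_Y\big((s,x),(t,y)\big).
\]

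There is no real obstacle here; the only thing to be careful about is the cusp case $t=0$ (and $s=0$), which is already subsumed by the cone construction (the completion of $(0,\infty)\times X$) and for which the formula $d_Y((0,y),(s,y))=s$ agrees with the general identity used above. So the proof is essentially a two-line triangle inequality once one recognizes that the cone metric dominates the radial distance $|s-t|$.
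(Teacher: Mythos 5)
Your proof is correct, and it takes a genuinely different route from the paper's. You pass through the auxiliary point $(t,y)$ via the triangle inequality in $\cone(X)$ and then observe that the radial leg satisfies $d_Y((t,y),(s,y))=|s-t|\le d_Y((s,x),(t,y))$, both facts being immediate from~\eqref{eq:1-cos identity}; this yields the factor $2$ in two lines. The paper instead treats $t\mapsto d_Y((s,x),(t,y))^2=s^2+t^2-2st\cos(\min\{\pi,d_X(x,y)\})$ as a quadratic in $t$, finds the constrained minimizer $t_{\min}=\max\{s\cos(\cdot),0\}$, and verifies $4\,d_Y((s,x),(t_{\min},y))^2\ge d_Y((s,x),(s,y))^2$ by a short case analysis on the sign of the cosine. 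Your argument is shorter and avoids the calculus and the case split, at the modest cost of invoking the triangle inequality for the cone metric (a nontrivial input, \cite[Proposition~I.5.9]{BH99}, though one the paper already relies on elsewhere). The paper's computation is self-contained and, as a by-product, exhibits the exact minimizer, which shows the constant $2$ is attained (take $\cos=-1$ and $t=0$) and hence cannot be improved; the same example also shows your triangle inequality is tight there. Your handling of the cusp case $t=0$ is fine.
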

\begin{proof}
For brevity, denote $c=\cos(\min\{\pi, d_X(x,y)\})$.
To minimize $d_Y((s,x),(t,y))^2$ as a function of $t$, we differentiate
\[
\frac{\dd }{\dd t} d_Y((s,x),(t,y))^2
\stackrel{\eqref{eq:1-cos identity}}=2(t-s) +2s(1-c)=2t -2sc.
\]
Solving for $0$ and recalling that $t\ge 0$, we have $t_{\min}=\max\{sc,0\}$.
Thus,
\begin{multline*}
4 d_Y((s,x),(t,y))^2\ge 4d_Y((s,x),(t_{\min},y))^2
\stackrel{\eqref{eq:def cone}}= 4(s^2- s^2c\max\{0,c\})
\\ =2s^2\cdot 2(1-c\max\{0,c\})
\stackrel{(*)}\ge 2s^2(1-c)
\stackrel{\eqref{eq:1-cos identity}}= d_Y((s,x),(s,y))^2.
\end{multline*}
Inequality~(*) is deduced by analyzing the cases
$c\in(0,1]$ and $c\in [-1,0]$ separately.
\end{proof}

\begin{proof}[Proof of \autoref{thm:mazur-cone}]
From~\eqref{eq:1-cos identity}, we have $\bigl |\,|f_1(\omega)|-|f_2(\omega)|\, \bigr| \le d_Y(f_1(\omega),f_2(\omega)) $, and therefore
\begin{equation} \label{eq:power<=distance}
\bigl\| |f_1|-|f_2| \bigr \|_p \le d_{L_p(Y)}(f_1,f_2).
\end{equation}
We will also need the bound
\begin{equation} \label{eq:eq-height}
d_Y((1,x),(1,y)) \le 2,
\end{equation}
which follows directly from~\eqref{eq:1-cos identity}.

Define $g_i=\psi_{p,q}(f_i)$.
By the triangle inequality in $Y$
\begin{equation*}
d_Y(g_1(\omega),g_2(\omega))\le
\bigl |\psi_{p,q}(|f_1|)(\omega) - \psi_{p,q}(|f_2|)(\omega)\bigr |
 + d_Y\bigl(\, \bigl(\psi_{p,q}(|f_2|)(\omega)  ,\arg(f_1(\omega))\bigr)\,,\,g_2(\omega)\bigr).
\end{equation*}
Taking the $L_q$ norm over the above inequality, and using the triangle inequality in $L_q$ (Minkowski inequality), we have
\begin{equation} \label{eq:one-bound}
d_{L_q(Y)}(g_1,g_2) \le \bigl \| \psi_{p,q}(|f_1|) - \psi_{p,q}(|f_2|)\bigr \|_{q}+
 d_{L_q(Y)}\bigl((\psi_{p,q}(|f_2|),\arg(f_1)),g_2\bigr )
\end{equation}

We begin bounding the first term on the right-hand side of~\eqref{eq:one-bound}.
Using~\eqref{eq:real-mazur} and~\eqref{eq:power<=distance},  it is bounded from above by
$\frac pq\, d_{L_p(Y)}(f_1,f_2)$ when $p\geq q$,
and by $c R^{1-p/q} d_{L_p(Y)}(f_1,f_2)^{p/q}$ when
$p<q$.

The second term on the right-hand side of~\eqref{eq:one-bound} is bounded
as follows.
\begin{equation} \label{eq:1a}
\begin{aligned}
 d_{L_q(\Omega,\mu,Y)}
 \!\!\!\!\!\!\!\!&\,\,\,\,\,\,\,\,
 \bigl((\psi_{p,q}(|f_2|),\arg(f_1)),g_2\bigr )
\\ & =
\|\psi_{p,q}(|f_2|)\|_q
\biggl(\int_\Omega d_{Y} ((1,\arg(f_1)(\omega)),(1,\arg(f_2)(\omega)))^q
\frac{\psi_{p,q}(|f_2|)(\omega)^q}
{\|\psi_{p,q}(|f_2|)\|_q^q } d\mu(\omega) \biggr)^{1/q}
\\ & \stackrel{\substack{\eqref{eq:mazur-map-reals}\\ \eqref{eq:real-mazur-0}}}{=}
\|f_2\|_p
\biggl(\int_\Omega d_{Y} ((1,\arg(f_1)(\omega)),(1,\arg(f_2)(\omega)))^q
{\frac{|f_2(\omega)|^p}{\|f_2\|_p^{p}}}
 d\mu(\omega) \biggr)^{1/q}
 \\ & =\|f_2\|_p d_{L_q(\Omega,\nu,Y)}((1,\arg(f_1)),(1,\arg(f_2))),
\end{aligned}
\end{equation}
where $\nu$ is the probability measure
    $\dd \nu= \|(|f_2|)\|_p^{-p} |f_2|^p \dd\mu$.
When $q\le p$, we continue~\eqref{eq:1a} using the bound $\|h\|_{L_q(\Omega,\nu)}\le \|h\|_{L_p(\Omega,\nu)}$:

\begin{equation} \label{eq:1b}
\begin{aligned}
  \|f_2\|_p \,& d_{L_q(\Omega,\nu,Y)}((1,\arg(f_1)),(1,\arg(f_2)))
  \\ &\leq   \|f_2\|_p\, d_{L_p(\Omega,\nu,Y)}((1,\arg(f_1)),(1,\arg(f_2)))
\\ & =
\Bigl(\int_\Omega d_{Y} ((|f_2(\omega)|,\arg(f_1)(\omega)),(|f_2(\omega)|,\arg(f_2)(\omega)))^p
d\mu(\omega) \Bigr)^{1/p}
\\ & \stackrel{\eqref{eq:cone-bound}}\le
{2}\cdot \Bigl(\int_\Omega d_{Y} ((|f_1(\omega)|,\arg(f_1)(\omega)),(|f_2(\omega)|,\arg(f_2)(\omega)))^p
d\mu(\omega) \Bigr)^{1/p}
\\ & = {2}\cdot d_{L_p(\Omega,\mu,Y)}(f_1,f_2).
\end{aligned}
\end{equation}
We conclude that
\(
d_{L_q(Y)}(g_1,g_2) \le \bigl (\tfrac pq +2\bigr )d_{L_p(Y)}(f_2,f_1)
\) when $q\le p$.

When $p<q$ we continue~\eqref{eq:1a} using the bound
$\|h\|_{L_q(\Omega,\nu)}\le \|h\|_{L_p(\Omega,\nu)}^{p/q} \|h\|_{L_\infty(\Omega,\nu)}^{1-p/q}$
as follows:

\begin{align*}
  \|f_2\|_p\, & d_{L_q(\Omega,\nu,Y)}((1,\arg(f_1)),(1,\arg(f_2)))
  \\ &\leq  \Bigl(\|f_2\|_p\,  d_{L_p(\Omega,\nu,Y)}((1,\arg(f_1)),(1,\arg(f_2)))\Bigr)^{p/q}
 \\ & \qquad\cdot  \Bigl(\|f_2\|_p\,
  d_{L_\infty(\Omega,\nu,Y)}((1,\arg(f_1)),(1,\arg(f_2)))\Bigr)^{1-p/q}
    \\ &\leq   (2R)^{1-p/q} \Bigl(\|f_2\|_p\,  d_{L_p(\Omega,\nu,Y)}((1,\arg(f_1)),(1,\arg(f_2)))\Bigr)^{p/q}
\\ & \stackrel{\eqref{eq:1b}}{\leq}
(2R)^{1-p/q} \bigl(2d_{L_p(\Omega,\mu,Y)}(f_1,f_2)\bigr)^{p/q}
\\& = 2 R^{1-p/q}d_{L_p(\Omega,\mu,Y)}(f_1,f_2)^{p/q}.
\end{align*}
We conclude that
\(
d_{L_q(Y)}(g_1,g_2) \le  (c +2) R^{1-p/q} d_{L_p(Y)}(f_2,f_1)^{p/q}
\)
when $p< q$.
\end{proof}

We are now in a position to use the Mazur map to deduce extrapolation.

\begin{proof}[Proof of \autoref{prop:extrapolation-2-upward}]
Fix a $d$-regular graph $G=(V,E)$ on $n$ vertices satisfying the $q$-Poincar\'e inequality \eqref{eq:p-poincare-metric} with constant $\gamma=\gamma_q(G,\mathrm{Hadamard})>0$ with respect to any Hadamard space.
Fix also a Hadamard space $X$ and a function $f:V\to X$.
By rescaling the metric $d_X$ if necessary, assume that
\begin{equation} \label{eq:constraint-for-extrapolation}
\frac{1}{|V|^2} \sum_{u,v\in V} d_X(f(u),f(v))^p =1.
\end{equation}

Denote by $b=\mathcal{B}(f(V))$, the barycenter of the uniform distribution over $f(V)$ in $X$.
Let $Y=T_bX$ be the tangent cone of $X$ at $b$.
$Y$ is a Euclidean cone and a CAT(0) space (see~\autoref{def:tangent}).
Denote by $0$ the cusp of $Y$.
Since $p\geq1$, for any fixed point $z\in X$, we have
\begin{equation} \label{eq:jensen-cat0}
\frac{1}{|V|}\sum_{v\in V} d_X(z,f(v))^p \geq \bigg( \frac{1}{|V|} \sum_{v\in V} d_X(z,f(v))\bigg)^{p} \geq d_X(z,b)^p
\end{equation}
by Jensen's inequality and the $1$-Lipschitz property of the barycenter mapping in $X$ (see~\cite[Lem. 4.2]{Lang-PS} or \cite[Thm. 6.3]{Sturm-NPC}). Therefore, \eqref{eq:constraint-for-extrapolation}, together with~\eqref{eq:jensen-cat0}, yield
\begin{equation} \label{eq:norm-smaller-1}
\frac{1}{|V|} \sum_{u\in V} d_X(f(u),b)^p \leq \frac{1}{|V|^2} \sum_{u,v\in V} d_X(f(u),f(v))^p = 1.
\end{equation}
Now, recall the definition of the logarithmic map $\Log_bf: V\to Y$ and consider the mapping $g:V\to Y$ given by $g = \psi_{p,q} (\Log_bf)$.
 By the definitions of the Mazur map, the tangent cone and the logarithmic map,  \eqref{eq:norm-smaller-1} gives
\begin{equation} \label{eq:mazur-constraints}
\|g\|_q \stackrel{\eqref{eq:mazur-bound-0}}= \| \Log_b f \|_p = \bigg( \frac{1}{|V|} \sum_{u\in V} d_Y(\Log_bf(u),0)^p \bigg)^{1/p} = \bigg( \frac{1}{|V|} \sum_{u\in V} d_X(f(u),b)^p \bigg)^{1/p} \leq 1.
\end{equation}
Conversely,  by the triangle inequality we also have
\begin{equation}
1=\frac{1}{|V|^2} \sum_{u,v\in V} d_X(f(u),f(v))^p \leq \frac{2^p}{|V|} \sum_{u\in V} d_X(f(u),b)^p = \frac{2^p}{|V|} \sum_{u\in V} d_Y(\Log_bf(u),0)^p.
\end{equation}
Applying precisely the same reasoning as in \eqref{eq:norm-smaller-1}  for the function $\Log_bf$ which takes values in the CAT(0) space $Y$ and has barycenter 0 by  \autoref{prop:cusp-barycenter}, we therefore deduce that
\begin{align*}
1&\leq \frac{2^p}{|V|^2} \sum_{u,v\in V} d_Y( \Log_bf(u),\Log_bf(v))^p
\\ &= \frac{2^p}{|V|^2} \sum_{u,v\in V} d_Y (\psi_{q,p}g(u),\psi_{q,p}g(v))^p \displaybreak[0]
\\ &
\stackrel{\eqref{eq:mazur-bound}, \eqref{eq:mazur-constraints}}{\lesssim_{p,q}} \frac{1}{|V|^2} \sum_{u,v\in V} d_Y(g(u),g(v))^q \displaybreak[0]
\\ & \leq \frac{\gamma}{|E|} \sum_{(u,v)\in E} d_Y(g(u),g(v))^q
\\ & = \frac{\gamma}{|E|} \sum_{(u,v)\in E} d_Y(\psi_{p,q}\Log_bf(u),\psi_{p,q}\Log_bf(v))^q
\\ &\stackrel{\eqref{eq:mazur-bound}}{\lesssim_p} \gamma \bigg( \frac{1}{|E|}\sum_{(u,v)\in E} d_Y(\Log_bf(u),\Log_bf(v))^p \bigg)^{q/p} \displaybreak[0]
\\ & \stackrel{\eqref{eq:busemann}}{\leq} \gamma \bigg( \frac{1}{|E|}\sum_{(u,v)\in E} d_X(f(u),f(v))^p \bigg)^{q/p}
\\ &  \le \gamma^{p/q}\frac{1}{|E|}\sum_{(u,v)\in E} d_X(f(u),f(v))^p,
\end{align*}
where the last inequality follows by raising the penultimate expression to the power of $p/q\ge 1$ and using the first inequality to argue that it is at least~$1$.
Taking an infimum of the right-hand side over all CAT(0) spaces $X$ and all functions $f:V\to X$ satisfying \eqref{eq:constraint-for-extrapolation}, we obtain
$\gamma_p(G,\mathrm{Hadamard})\lesssim_{p,q} \gamma^{p/q}$,
which completes the proof.
\end{proof}

\section{Expanders with respect to random regular graphs}
\label{sec:extrapolation-rrg}

In this section we prove \autoref{thm:MN-rrg-expander-p} --- the existence of an expander with respect to random regular graphs that satisfies the $p$-Poincar\'e
inequalities for any $p\in(0,\infty)$.

\subsection{High-level approach}

The proof contains three adaptations to the proof of
\autoref{thm:MN-rrg-expander-2}:
\begin{itemize}
\item 
Working with any metric transform of random regular graphs using the same tools used in earlier parts of the paper to handle metric transforms of CAT(0) spaces.
\item 
The arguments of~\cite{MN-expanders2} all make use of squared distances. They will be extended to work with any power $q\geq 2$ of the distances.
\item 
Adding a ``diagonalization zigzag step"
from~\cite[Lemma~4.3]{MN-superexpanders} to the zigzag iteration to weave the different expanders for different powers into one expander.
\end{itemize}

In this section we give high level description of the proof
in~\cite{MN-expanders2} and explain the first (and most novel) adaptation listed above.
We will use the following stronger variant of the Poincar\'e inequality:

\begin{definition}[Non-bipartite Poincar\'e inequality with respect to $X$]\label{def:bi-tension}
Fix $p\in(0,\infty)$ and a metric space $(X,d_X)$.
For a regular graph $G=(V,E)$
we define by $\gamma^+_p(G,X)$ the infimal $\gamma^+$
such that for every $f,g: V\to X$,
\begin{equation} \label{eq:abs-p-poincare-metric}
\frac{1}{|V|^2}\sum_{u,v\in V} d_X(f(u),g(v))^p \le \frac{\gamma^+}{|E|} \sum_{(u,v)\in E} d_X(f(u),g(v))^p.
\end{equation}
\end{definition}

We already mentioned that $\gamma_2(G,\mathbb R)=1/(1-\lambda_2(G))$.
Analogously, $\gamma_2^+(G,\mathbb R)=1/(1-\lambda(G))$,
where $\lambda(G)=\max\{\lambda_2(G),-\lambda_n(G)\}$
is the second normalized eigenvalue of $G$, \emph{in absolute value}.
Compared with \autoref{def:X-PI} to \autoref{def:bi-tension},  it is also plainly obvious that $\gamma_p(\mathcal E,\X)\le \gamma_p^+(\mathcal E,\X)$ for any $\mathcal E$, $\X$ and $p> 0$. For more on $\gamma^+_p$ and its relation to $\gamma_p$, see~\cite{MN-superexpanders}.

In~\cite{MN-expanders2} the following stronger version of \autoref{thm:MN-rrg-expander-2} is actually proved at
the end of the proof of~\cite[Theorem~4.1]{MN-expanders2}.%
\footnote{
The corresponding inequality in~\cite{MN-expanders2}
appears immediately after Inequality~(85) there.
Unfortunately, due to an editing error in~\cite{MN-expanders2}
the inequality there is not written as a uniform bound over the random regular graph $H$ (with high probability).
A similar problem appears in~\cite{MN-expanders2} in formulae (60), (61),
the two displayed formulae after Lemma~3.16,
(84), (85), and the last displayed formula in the proof of Theorem~4.1.}
Fix $d\ge 3$.
There exists $\mathcal E=\{G_n\}$, a family of 3-regular expander graphs,
and $\Gamma\in(0,\infty)$ with the following property.
Fix $m\ge 3$.   Let $\RRG\sim \mathcal G_{m,d}$ be a random $d$-regular graph on $m$ vertices.
Then there exists $\sigma_\RRG>0$ such that with probability $1-o_m(1)$,
\begin{equation} \label{eq:poincare-cone-MN-expanders}
\sup_n \gamma^+_2\Bigl(G_n, {\cone\bigl(\Sigma(\RRG),\sigma_\RRG d_{\Sigma(\RRG)}\bigr)}\Bigr) <\Gamma\ \land\ \sigma_\RRG \asymp \frac{1}{\log_d m}.
\end{equation}
$\Sigma(\RRG)$ is the one-dimensional simplicial complex defined by $\RRG$,
and $d_{\Sigma(\RRG)}$ is the metric on $\Sigma(\RRG)$ induced by the one-dimensional simplicial complex (see \autoref{sec:transform-cone}.)
For the purpose of proving \autoref{thm:MN-rrg-expander-2} the fact that $\sigma_\RRG\asymp\frac{1}{\log_dm}$ has the property that with probability $1-o_m(1)$, $\diam(\Sigma(\RRG))\approx \diam(\RRG)\asymp 1/\sigma_\RRG$.
Hence, $\bigl(\Sigma(\RRG), d_{\Sigma(\RRG)}\bigr)$
admits a bi-Lipschitz embedding in $\cone\bigl(\Sigma(\RRG),\sigma_\RRG d_{\Sigma(\RRG)}\bigr)$
and therefore \eqref{eq:poincare-cone-MN-expanders} implies
$\gamma^+_2(\mathcal E,\RRG)\lesssim \Gamma$ with probability $1-o_m(1)$.
For our purposes, the crucial observation is that~\eqref{eq:poincare-cone-MN-expanders} actually holds  for
$\cone\bigl(\Sigma(\RRG),\tau d_{\Sigma(\RRG)}\bigr)$   for any $\tau\ge \sigma_\RRG$ and not just for $\tau=\sigma_\RRG$.
Furthermore, this statement can be made for any power $q\geq 2$ of the distances and not just $q=2$.

\begin{theorem} \label{thm:poincare-cone}
There exists a sequence of 3-regular graphs of increasing size $\mathcal E$ with the following property:
For every $q\in\{2,3,\infty)$
There exists
$\Gamma_q>0$ such that for any $d\ge 3$,
\begin{equation}
\label{eq:poincare-cone}
\lim_{m\to \infty}\Pr_{\RRG\sim \mathcal G_{m,d}} \biggl[\sup_{\tau\ge\sigma_{\RRG}}  \gamma_q^+ \Bigl (\mathcal E, {\cone\bigl(\Sigma(\RRG),\tau d_{\Sigma(\RRG)}\bigr)}\Bigr)<\Gamma_q\ \land\ \sigma_{\RRG}\asymp \tfrac{1}{\log_dm}\biggr]=1.
\end{equation}
\end{theorem}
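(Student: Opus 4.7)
The proof builds on the iteration of~\cite{MN-expanders2}, which already yields a family $\mathcal E_{(2)}$ satisfying the $q{=}2$ case of~\eqref{eq:poincare-cone} at the single scaling $\tau = \sigma_{\RRG}$ (the estimate $\sigma_{\RRG} \asymp 1/\log_d m$ holding with probability $1-o_m(1)$ is standard and already appears inside~\eqref{eq:poincare-cone-MN-expanders}). Relative to that baseline, three enhancements are needed: (i) pass from power $2$ to any integer power $q \ge 2$; (ii) establish the bound uniformly for $\tau \ge \sigma_{\RRG}$ rather than only at $\tau = \sigma_{\RRG}$; and (iii) fuse the resulting per-$q$ families into a single family $\mathcal E$. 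I would address them in that order.

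For (i), the MN argument tracks the evolution of nonlinear spectral gaps with exponent $2$ under the zigzag product, and its core analytic input is the $2$-barycentric property~\eqref{eq:2-barycentric}, used to control the fluctuations of $\CAT(0)$-valued martingales along the random walk on the zigzag graph. Replacing~\eqref{eq:2-barycentric} throughout the MN argument by the $q$-barycentric inequality of~\autoref{prop:p-barycentric-CAT0} yields a nonlinear martingale cotype-$q$ estimate on $\CAT(0)$ spaces with a constant depending only on $q$, which plugs into the MN evolution lemma verbatim and gives, for each fixed $q \ge 2$, a family $\mathcal E_q$ satisfying the $q$-power analog of~\eqref{eq:poincare-cone-MN-expanders} at the single scaling $\tau = \sigma_{\RRG}$. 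Recall that $\cone(\Sigma(\RRG), \sigma_{\RRG} d_{\Sigma(\RRG)})$ is indeed $\CAT(0)$ by~\autoref{thm:berestovskii}, since $\sigma_{\RRG}$ is chosen precisely so that the rescaled base is (essentially) $\CAT(1)$.

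For (ii), the idea is to compare cones at different scalings of the base metric. Writing $\tau d_{\Sigma(\RRG)} = \delta \sigma_{\RRG} d_{\Sigma(\RRG)}$ with $\delta = \tau/\sigma_{\RRG} \ge 1$, the identity $\cone(X, \delta\sigma d_X) = \cone(X, \delta \min\{\sigma d_X, \pi\})$ from~\eqref{eq:stuiped} together with the same truncation argument as in the proof of~\autoref{cor:blow-out-cone-in-cone} (but with the $L_p$-embedding target replaced by the bi-Lipschitz-invariant quantity $\gamma_q^+$) should produce a constant-distortion embedding of $\cone(\Sigma(\RRG), \tau d_{\Sigma(\RRG)})$ into an orthogonal product of $\cone(\Sigma(\RRG), \sigma_{\RRG} d_{\Sigma(\RRG)})$ with a one-dimensional radial factor. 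Since $q$-Poincar\'e inequalities for $\CAT(0)$ spaces pass to $\ell_2$-products by applying them coordinate-wise (in the spirit of the deduction of~\autoref{thm:cat(0)-transform-closure} from~\autoref{prop:CAT(0)-l2sum}, using $q \ge 2$ for Minkowski), this reduces the supremum over $\tau \ge \sigma_{\RRG}$ to the case $\tau = \sigma_{\RRG}$ already handled in step~(i), losing only a universal constant factor in $\Gamma_q$.

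For (iii), the families $\{\mathcal E_q\}_{q \ge 2}$ are interleaved into a single family $\mathcal E$ by a diagonal zigzag step following~\cite[Lemma~4.3]{MN-superexpanders}: at the $k$-th stage of the iteration, the auxiliary graph is chosen from $\mathcal E_{q_k}$, where $(q_k)_k$ is a sequence visiting every integer $q \ge 2$ infinitely often; standard monotonicity of nonlinear spectral gaps then shows that the resulting single family $\mathcal E$ inherits the $q$-Poincar\'e estimate for every $q$. The main obstacle is step~(ii): $d_{\cone(X, \tau d_X)}$ is not uniformly comparable to $\tau$ times $d_{\cone(X, d_X)}$ once the angular part of~\eqref{eq:1-cos identity} saturates at $\pi$, so the naive ``identity'' between cones at different scalings is not quasi-isometric. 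Splitting the analysis into the regime where $\tau \diam(\Sigma(\RRG))$ is bounded and the regime where it is so large that $\cone(\Sigma(\RRG), \tau d_{\Sigma(\RRG)})$ degenerates into an $\ell_1$-union of radial rays per part~(ii) of~\autoref{lem:ell1-union} (for which the $q$-Poincar\'e inequality for $\mathcal E_q$ is immediate from the scalar case) will be the essential technical ingredient.
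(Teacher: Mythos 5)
Your steps (i) and (iii) track the paper's strategy: the power-$q$ extension does go through the $q$-barycentric inequality of \autoref{prop:p-barycentric-CAT0} feeding the spectral-calculus condition \eqref{eq:spectral calculus condition}, and the per-$q$ families are fused by the diagonal iteration of \cite[Lemma~4.3]{MN-superexpanders}. The gap is in step (ii), which is the crux of the theorem, and your own closing paragraph correctly identifies why your main device fails: $\cone(\Sigma(\RRG),\tau d_{\Sigma(\RRG)})$ is, up to constants, a truncation of the graph metric at threshold $\asymp 1/\tau$, so as $\tau$ grows it resolves strictly finer structure than $\cone(\Sigma(\RRG),\sigma_\RRG d_{\Sigma(\RRG)})$, and no embedding into a product of the latter with a radial line can have distortion independent of $\tau/\sigma_\RRG$. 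The fallback you propose does not rescue this: since $\Sigma(\RRG)$ is connected and geodesic, its cone never degenerates into an $\ell_1$-union of rays (part (ii) of \autoref{lem:ell1-union} needs the pieces to be at mutual distance at least $\pi$), and the genuinely hard regime is the intermediate one where $1\ll 1/\tau\ll\diam(\Sigma(\RRG))$, which neither of your two cases covers.

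The paper's route is not a reduction to $\tau=\sigma_\RRG$ but a proof that the whole structure of~\cite{MN-expanders2} is stable under magnification by $\delta=\tau/\sigma_\RRG\ge 1$. Relatedly, your parenthetical claim that $\cone(\Sigma(\RRG),\sigma_\RRG d_{\Sigma(\RRG)})$ is CAT(0) is false: a random regular graph typically contains short cycles, so the rescaled base is not CAT(1); this is exactly why \autoref{prop:rrg-structure} decomposes $\Sigma(\RRG)=A_1\cup A_2$ with $\cone(A_1,\delta\sigma_\RRG d_{\Sigma(\RRG)})$ embedding in $L_1$ (uniformly in $\delta\ge1$ via \autoref{cor:blow-out-cone-in-cone}) and $\cone(A_2,\delta\sigma_\RRG d_{\Sigma(\RRG)})$ embedding in the Hadamard space $\X_\delta=\cone(U_\F,\delta d_\F)$, which remains Hadamard for every $\delta\ge1$ because scaling a CAT(1) metric up keeps it CAT(1). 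The expander is then built \emph{once} against the single Hadamard space $\bigl(\biguplus_{\delta\ge1}\X_\delta\bigr)_1$ (\autoref{lem:X_delta-expander}), the $A_1$-part is handled by classical expansion plus the $\gamma^+$ extrapolation \eqref{eq:gamma+-extrapolation}, and the two bounds are merged by the gluing lemma \autoref{lem:gamma-2-union}, using that $\delta\sigma_\RRG d_{\Sigma(\RRG)}(A_1\setminus A_2,A_2\setminus A_1)\gtrsim\delta\ge1$. None of these ingredients --- the $A_1/A_2$ decomposition, the uniform-in-$\delta$ Hadamard target, the gluing --- appears in your proposal, and without them the supremum over $\tau$ in \eqref{eq:poincare-cone} is not obtained.
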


This generalization of~\eqref{eq:poincare-cone-MN-expanders} is important here
because it implies that $\mathcal E$ is also an expander with respect
to $(\RRG,\min \{d_\RRG,\alpha\})$, for any $\diam(\RRG)\gtrsim \alpha>0$, 
i.e., with respect to any truncation of $(\RRG,d_\RRG)$ and with any power.
Using \autoref{prop:transform-in-producr-cones},
we will next deduce that
$\mathcal E$ is also an expander with respect
to any metric transform of $d_\RRG$ that satisfies the 
$q$-Poincar\'e inequality for any power $q\in\{2,3,\ldots\}$.

\begin{proof}[Proof of \autoref{thm:MN-rrg-expander-p} based on \autoref{thm:poincare-cone}]
By
\autoref{prop:transform-in-producr-cones},
$(\RRG,\varphi\circ d_{\RRG})$ embeds in a $\ell_q$ product
\[
\biggl(\alpha_0 \RRG \times \Bigl(\prod_{i\in\mathbb N} \bigl(\alpha_i  \cone(\beta_i \RRG)\bigr)\Bigr)_q\biggr)_q
\]  
with distortion at most $4$.
Since $\RRG$ has a finite diameter, we may assume without loss of generality that $\alpha_0=0$ and $\beta_i\ge 2\pi/\diam(\Sigma(\RRG))$.
Since $2\pi/\diam(\Sigma(\RRG))\asymp\sigma_\RRG$, we can
also assume that $\beta_i\ge \sigma_\RRG$.
The cost of this assumption is an increase in the distortion of the embedding by a factor of at most
\[
F=\max\{\sigma_\RRG \diam(\Sigma(\RRG)),1\}\lesssim 1.
\]

The structure of the Poincar\'e inequality for expanders indicates that for any regular graph $G$, any $q\in[1,\infty)$ and any series of metric spaces $(Y_i)_i$,
\begin{equation} \label{eq:gamma-prod-q}    
\gamma^+_q\Bigl(G,\Bigl(\prod_i Y_i\Bigr)_q\,\Bigr)= \sup_i \gamma^+_q(G,Y_i).
\end{equation}
Therefore, with probability $1-o_m(1)$,
\[
\gamma^+_q(\mathcal E,(\RRG,\varphi\circ d_{\RRG}))\le 4^q
\gamma^+_q\Bigl(\mathcal E,\Bigl(\prod_i \alpha_i \cone(\beta_i\RRG)\Bigr)_q\,\Bigr)
\leq (4F)^q \sup_{i\in \mathbb N} \gamma^+_q(\mathcal E, \alpha_i \cone(\beta_i \RRG))\stackrel{\eqref{eq:poincare-cone}}\le (4F)^q \Gamma_q,
\]
where $\mathcal E$ is the expander graph constructed in \autoref{thm:poincare-cone}.
\end{proof}

To gain some understanding of the proof of \autoref{thm:poincare-cone},
we next explain informally what $\sigma_\RRG$ in~\eqref{eq:poincare-cone-MN-expanders} is and why it is reasonable that~\eqref{eq:poincare-cone-MN-expanders} can be extended to~\eqref{eq:poincare-cone}.
It is known that random regular graph $\RRG\sim\mathcal G_{m,d}$ is ``close'' to a high-girth graph (with probability $1-o_m(1)$).
More precisely, there is a small subset of edges of $\RRG$
that after removing them,
the resulting graph $\RRG'$ has a girth which is asymptotically equivalent to its diameter. We define $\sigma_\RRG=2\pi/\girth(\RRG')$.
As explained in~\cite{MN-expanders2}, this means that the metric
$(\Sigma(\RRG'),\sigma_\RRG d_{\Sigma(\RRG')})$ admits bi-Lipschitz embedding in a CAT(1) metric, which allows us to apply the machinery developed in \cites{ALNRRV}{MN-superexpanders}{MN-barycentric}{MN-expanders2}
to construct expander $\mathcal E$ with respect to the CAT(0) space $\cone (\Sigma(\RRG'),\sigma_\RRG d_{\Sigma(\RRG')})$.
Furthermore, as explained in~\cite{MN-expanders2},
the set difference  $\cone \bigl(\Sigma(\RRG)\setminus \Sigma(\RRG'),\sigma_\RRG d_{\Sigma(\RRG)}\bigr)$  is a $\cone(L_1)$ metric.
Furthermore, it is known that any expander is also an $L_1$-expander and, as proved in~\cite{MN-expanders2}, this property extends to $\cone(L_1)$.
The two Poincar\'e inequalities ``can be merged'' into a Poincar\'e inequality for the union of the cones.
That is, $\mathcal E$ is also an expander with respect to
$\cone (\Sigma(\RRG),\sigma_\RRG d_{\Sigma(\RRG)})$.

Now consider the metric $\cone (\Sigma(\RRG),\tau d_{\Sigma(\RRG)})$, for arbitrary $\tau\ge\sigma_\RRG$.
Using the notation above,
$(\Sigma(\RRG'), \tau d_{\Sigma(\RRG)})$ is a magnification of $(\Sigma(\RRG'), \sigma_\RRG d_{\Sigma(\RRG)})$ when $\tau>\sigma_\RRG$
and therefore it is also CAT$(1)$.
The same machinery will prove that $\mathcal E$ is also an expander with respect to
$\cone (\Sigma(\RRG),\tau d_{\Sigma(\RRG)})$, for any $\tau \ge \sigma_\RRG$.
This generalization of \cite{MN-expanders2}
is rather straightforward.
Unfortunately, we do not see how to convincingly reduce the whole problem to the $\sigma_\RRG$ factor of~\cite{MN-expanders2}.
Instead, we generalize some definitions and reduce parts of the argument to lemmas from~\cite{MN-expanders2},
obtaining a relatively short proof with many references to lemmas from~\cite{MN-expanders2}.

\subsection{Zigzag expanders}    
We next describe the construction of an expander with respect to a metric while abstracting the metric properties needed for it to work.
The construction is a variant of the zigzag iterative procedure of Reingold, Vadhan, and Wigderson~\cite{RVW}.
It was extended to work with respect to more general metric spaces in~\cite{MN-superexpanders,MN-barycentric,MN-expanders2}. The construction here is yet another variant based on the constructions in~\cites{MN-superexpanders,MN-expanders2}. 
We begin by describing the relevant graph operations and their properties and then describe the iterative construction. 

\subsubsection{Graph operations}\label{sec:products}

Fix $d_1,d_2\in \N$. 
Let $G_1$ be a $d_1$-regular graph and let
$G_2$ be a $d_2$-regular graph. Suppose that $|V_{G_2}|=d_1$. Then
one can construct a new graph, called the {\em zigzag product} of
$G_1$ and $G_2$ and denoted 
$G_1\oz G_2$. 
This construction is due
to Reingold, Vadhan and Wigderson~\cite{RVW}. We will not need to
recall the definition of $G_1\oz G_2$ here:
All that we will use below is that $G_1\oz G_2$ is $d_2^2$-regular,  $|V_{G_1\oz
G_2}|=|V_{G_1}|\cdot|V_{G_2}|$, and for every metric space $(X,d_X)$ and any $q\in(0,\infty)$
we have
\begin{equation}\label{eq:zigzag sub multiplicativity}
\gamma_q^+\!\left(G_1\oz G_2,X\right)\le \gamma_q^+\!\left(G_1,X\right)\cdot \gamma_q^+\!\left(G_2,X\right)^2.
\end{equation}
Inequality~\eqref{eq:zigzag sub multiplicativity} in this abstract setting is due
to~\cite{MN-superexpanders}, generalizing a slightly better bound when $q=2$ and $X=L_2$.

Fix $d\in \N$ and suppose that $G$ is a $d$-regular graph. For every
integer $D\ge d$ one can define a new graph called the $D$-{\em edge
completion} of $G$, and denoted $\mathcal{C}_D(G)$.
See~\cite[Def.~2.8]{MN-superexpanders} for the definition of
$\mathcal{C}_D(G)$. 
All we will use below is that $\mathcal{C}_D(G)$ is
$D$-regular, $V_{\mathcal{C}_D(G)}=V_G$, and for every metric space $(X,d_X)$ and every $q\in[1,\infty)$ we have
\begin{equation}\label{eq:completion gamma+}
\gamma_q^+\!\left(\mathcal{C}_D(G),X\right)\le 2\gamma_q^+\!\left(G,X\right).
\end{equation}
The proof of~\eqref{eq:completion gamma+} 
is contained in~\cite[Lem.~2.9]{MN-superexpanders}.

We will also work in the following with {\em Ces\`aro averages} of graphs. Given $d,m\in \N$ and a $d$-regular graph $G$, its $m$th Ces\`aro average $\A_m(G)$ is a new graph defined by $V_{\A_m(G)}=V_G$ and
$$
\forall\, (u,v)\in V_G\times V_G,\qquad E_{\A_m(G)}(u,v)\eqdef \sum_{t=0}^{m-1} d^{m-1-t} \left(A_G^t\right)_{u,v},
$$
where we recall that $A_G$ is a the normalized adjacency matrix of $G$. One checks that $\A_m(G)$ is $md^{m-1}$-regular and that the adjacency matrix of $\A_m(G)$ is the corresponding Ces\`aro average of $A_G$, i.e.,
$$
A_{\A_m(G)}=\frac{1}{m}\sum_{t=0}^{m-1} A_G^t.
$$
The Ces\'aro average is used during the construction of the expander
to decrease the 
Poincar\'e constant $\gamma^+$ in certain cases. 
In particular we will say that a metric space $X$ 
satisfies \emph{the spectral calculus condition}%
\footnote{As observed in~\cite[§~9.2]{MN-superexpanders}, 
for the construction of zigzag expanders one could be satisfied with a weaker condition of \emph{uniform decay} of the Poincar\'e constant.}
with power $q$ 
if there exists $K=K_{X,q}>0$ such that 
for every regular graph $G$ and every $m\in\mathbb N$, we have
\begin{equation}\label{eq:spectral calculus condition}
\gamma_q^+\!\left(\mathcal A_m(G),X\right)\le K\max\left\{1,\frac{\gamma_q^+(G,X)}{m}\right\}.
\end{equation}

Hilbert spaces (and, in particular, $\mathbb R$) are easily seen to satisfy~\eqref{eq:spectral calculus condition} with power $q=2$ using the spectral decomposition of symmetric matrices. It was generalized to uniformly convex Banach spaces in~\cite{MN-superexpanders} and to CAT(0) spaces (with power $q=2$) in~\cite{MN-barycentric}.
The proof for CAT(0) spaces in~\cite{MN-barycentric} uses
the $2$-barycentric property of CAT(0)~\eqref{eq:2-barycentric}.
Using the $q$-barycentric property of CAT(0) space for $q\geq 2$ that is stated in \autoref{prop:p-barycentric-CAT0}, together with~\cite[Theorem~1.8]{MN-barycentric},
we conclude:
\begin{lemma} \label{lem:q-speactral-calculus}
    For any $q\in[2,\infty)$ there exists $K_q>0$, such that~\eqref{eq:spectral calculus condition} holds with constant $K_q$ for any CAT(0) space $X$.
\end{lemma}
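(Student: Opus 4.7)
The plan is to invoke Theorem~1.8 of \cite{MN-barycentric} as a black box: that theorem establishes the spectral calculus inequality \eqref{eq:spectral calculus condition} for any metric space that satisfies an abstract $q$-barycentric inequality, with the constant in~\eqref{eq:spectral calculus condition} depending only on $q$ and on the multiplicative constant appearing in the hypothesized barycentric inequality. Proposition~\ref{prop:p-barycentric-CAT0} just proved supplies exactly such an inequality for every Hadamard space and every $q\in[2,\infty)$, with constant $\frac{1}{2^{q-1}-1}$ that depends only on $q$. Feeding this into Theorem~1.8 of \cite{MN-barycentric} would therefore yield $K_q>0$ depending only on $q$ such that \eqref{eq:spectral calculus condition} holds for \emph{every} CAT(0) space $X$.

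Concretely, first I would verify that Theorem~1.8 of \cite{MN-barycentric}, which is stated there for $q=2$, is in fact proved by an argument whose structural inputs are (a) the existence of a canonical barycenter map from the finitely-supported probability measures into $X$, (b) the $q$-barycentric inequality of the form \eqref{eq:p-barycentric}, and (c) the triangle inequality. For CAT(0) spaces, (a) is given by Theorem~\ref{thm:barycenter}, (b) is Proposition~\ref{prop:p-barycentric-CAT0}, and (c) is automatic. Thus the argument transfers line by line, with the $q=2$ barycentric constant $1$ in \eqref{eq:2-barycentric} replaced by the constant $\frac{1}{2^{q-1}-1}$ from \eqref{eq:p-barycentric}; this is a purely quantitative substitution that introduces only $q$-dependent multiplicative losses.

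For the reader's convenience (and to make the reduction self-contained), I would sketch the mechanism: given a function $f:V_G\to X$, one iterates the barycenter construction associated with the lazy random walk on $G$ to build a sequence of functions $f_0,f_1,\ldots,f_{m-1}:V_G\to X$, where $f_{t+1}(u)$ is the barycenter of the measure induced on $X$ by applying the adjacency operator of $G$ to $f_t$ at $u$. The $q$-barycentric inequality converts the martingale-type increments of this sequence into a telescoping bound on $\sum_{(u,v)\in E}d_X(f(u),f(v))^q$ from below, while the left-hand side of~\eqref{eq:spectral calculus condition} is controlled by $d_X(f_{m-1}(u),f_{m-1}(v))^q$ up to a $q$-dependent factor. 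The spectral-gap-like gain of a factor $1/m$ comes from the $t$-th iterate of the Markov operator being replaced by the Ces\`aro average $\frac{1}{m}\sum_{t=0}^{m-1}A_G^t$.

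The main obstacle I anticipate is confirming that the machinery in \cite{MN-barycentric} is packaged abstractly enough that one really can just swap in \eqref{eq:p-barycentric} without revisiting internal computations. If Theorem~1.8 of \cite{MN-barycentric} is formulated only for the $q=2$ regime, one would need to import the proof and perform the substitution of $2$-barycentric by $q$-barycentric at every occurrence, tracking how the constant $\frac{1}{2^{q-1}-1}$ propagates through the telescoping estimate and is ultimately absorbed into $K_q$. Provided no step of the argument implicitly uses Hilbert-space identities (e.g., a parallelogram identity rather than a pure barycentric inequality), the proof goes through essentially verbatim, and $K_q$ can be taken to depend only on $q$, uniformly over all Hadamard spaces $X$.
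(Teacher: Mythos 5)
Your proposal is correct and is essentially the paper's own argument: the paper deduces the lemma by combining the $q$-barycentric property of CAT(0) spaces (Proposition~\ref{prop:p-barycentric-CAT0}) with \cite[Theorem~1.8]{MN-barycentric}, which is already packaged abstractly for barycentric metric spaces, so no re-derivation of the internal telescoping argument is needed.
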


It is also a straightforward observation that if the metric spaces $X_1,X_2,\ldots$ satisfy~\eqref{eq:spectral calculus condition} with power $q$ and constant $K$, then so does $\bigl(\prod_i X_i\bigr )_q$.

The following simple lemma is proved, for example, as part of the proof of~\cite[Lemma~3.1]{MN-expanders2}.
\begin{lemma}\label{lem:from MN plus replacement}
Fix two integers $d,n\ge 3$ and let $G$ be a $d$-regular graph with
$|V_G|=n$. 
Then there exists a $3$-regular graph $G^*$, 
determined uniquely and deterministically by $G$, satisfying
$|V_{G^*}|=36dn=18d|V_G|$ such that for every metric space
$(X,d_X)$,
\begin{equation}\label{eq:d^4 goal}
\gamma_q^+\left(G^*,X\right)\lesssim d^{2q}\gamma_q^+\left(G,X\right).
\end{equation}
\end{lemma}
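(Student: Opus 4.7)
The plan is to construct $G^*$ using a combination of the graph operations described in \autoref{sec:products}, in the spirit of a replacement-type product. The most natural first attempt is to form $G \circr C_d$, where $C_d$ is the $d$-cycle whose vertices are identified with the $d$ half-edges emanating from each vertex of $G$: this yields a 3-regular graph on $d|V_G|$ vertices. The extra factor of $18$ in the claim $|V_{G^*}|=18d|V_G|$ would arise from further technical refinements---e.g., edge subdivisions, or replacing each cloud by a larger $2$-regular structure on $18d$ vertices with $d$ designated ``port'' positions---that are needed to make the construction deterministic and uniformly applicable for all $d,n\ge 3$ (handling in particular the small-$d$ degenerate cases).

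To bound $\gamma^+_q(G^*,X)$, I would follow a two-step ``contraction'' scheme. Given functions $f,g:V_{G^*}\to X$, define canonical contractions $\bar f,\bar g:V_G\to X$ by selecting a deterministic representative (e.g., the lexicographically smallest port vertex) from each cloud. For $u,v\in V_G$, compare the average $\tfrac{1}{|V_{G^*}|^2}\sum_{x,y}d_X(f(x),g(y))^q$ with $\tfrac{1}{|V_G|^2}\sum_{u,v}d_X(\bar f(u),\bar g(v))^q$ by expanding distances within each cloud using the triangle inequality and the path-telescoping bound
\begin{equation*}
d_X(z_0,z_\ell)^q\le \ell^{q-1}\sum_{i=1}^\ell d_X(z_{i-1},z_i)^q
\end{equation*}
along a geodesic in the cycle, which yields $\gamma^+_q(C_d,X)\lesssim d^q$ uniformly in the metric space $X$. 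Then apply the $q$-Poincar\'e inequality for $G$ to $\bar f,\bar g$ and convert each $G$-edge into a pair of $G^*$-edges (adjacent port vertices in neighboring clouds), at the cost of another intra-cloud traversal contributing an additional $d^q$ factor. Multiplying the two cloud contributions by $\gamma^+_q(G,X)$ yields the desired $d^{2q}\gamma^+_q(G,X)$ bound, matching~\eqref{eq:d^4 goal}.

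The main obstacle I expect is performing the path-telescoping estimates cleanly in the purely metric (non-Hilbertian) setting, where no spectral calculus is available: one must balance the $\ell^{q-1}$ losses from the $q$-th power triangle inequality against the normalization factors in the Poincar\'e inequality, carefully summing over intra-cloud paths without accumulating extra $d$ factors beyond the $d^{2q}$ budget. A secondary subtlety is that $G^*$ must be determined uniquely and deterministically from $G$ --- this is handled by fixing at the outset a canonical global ordering on the vertex and edge labels of $G$, which simultaneously pins down the cyclic arrangement of half-edges at each vertex (hence the replacement-product edges) and the choice of cloud representative used to define $\bar f$ and $\bar g$.
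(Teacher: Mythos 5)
The paper does not actually prove this lemma: it is quoted verbatim from the proof of \cite[Lemma~3.1]{MN-expanders2}, and the construction there is indeed a replacement-product-type degree reduction, so your overall strategy is the right one. However, as written your argument has a genuine gap at its central step. You claim that the telescoping bound along the cycle yields $\gamma_q^+(C_d,X)\lesssim d^q$ ``uniformly in the metric space $X$.'' This is false whenever $d$ is even, because $C_d$ is then bipartite and the non-bipartite Poincar\'e constant $\gamma_q^+$ of \autoref{def:bi-tension} of any bipartite graph is infinite: taking $X=\{0,1\}$ and $f,g$ to be the indicators of the two opposite color classes of $C_4$ makes every edge term $d_X(f(u),g(v))$ vanish while the left-hand side of \eqref{eq:abs-p-poincare-metric} is positive. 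Since the submultiplicativity estimate for the replacement product requires $\gamma_q^+$ (not $\gamma_q$) of the inner graph, and since $d$ ranges over all integers $\geq 3$, your bound collapses for half of the admissible degrees. The standard repair is to take a non-bipartite inner gadget (a cycle augmented with self-loops, as in \cite{MN-superexpanders}), but that changes the inner degree and hence the regularity and vertex count of the product, which is precisely why the construction in \cite{MN-expanders2} is more elaborate than $G\,\text{\textcircled r}\,C_d$ and why $|V_{G^*}|=36dn$ rather than $dn$. Relatedly, the telescoping of the intra-cloud pieces is not as clean as you suggest even in the odd case: the right-hand side of \eqref{eq:abs-p-poincare-metric} contains only mixed $f$--$g$ edge terms, so the $f$--$f$ distances along cycle edges produced by your path decomposition are not directly chargeable to it and must first be split by an additional triangle inequality, which is where the $3^{q-1}$-type losses in \cite{MN-superexpanders} come from.

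A second, lesser, defect is that the lemma asserts a specific graph on exactly $36dn$ vertices determined deterministically by $G$, whereas you produce a graph on $dn$ vertices and defer the remaining factor of $36$ to unspecified ``technical refinements.'' Those refinements are not cosmetic: each padding step (subdivision, enlarging the clouds, adding matchings between copies) must itself be shown to degrade $\gamma_q^+$ by at most a factor depending only on $q$, and the final object must land exactly at degree $3$. To make the proposal complete you should either spell out the gadget achieving $36d$ vertices per cloud and verify \eqref{eq:d^4 goal} for it, or simply invoke \cite[Lemma~3.1]{MN-expanders2} as the paper does.
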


We will require a Matou\v{s}ek's extrapolation-type lemma 
for $\gamma^+$.
\autoref{prop:naor-extrapolation} implies that
for any $q\geq p\geq 1$ and any regular graph $G$,
\begin{equation}\label{eq:naor-extrapolation-simplified}
    \gamma_q(G,L_p) \lesssim \bigl(\gamma_2(G,\mathbb R)(p^2+q^2) \bigr)^{\frac{q}{\min\{p,2\}}}.
\end{equation}
A similar bound holds for $\gamma^+$:
\begin{lemma} \label{lem:gamma+-extrapolation}
For any regular graph $G$ and $q\geq p\geq 1$,
\begin{equation} \label{eq:gamma+-extrapolation}
    \gamma^+_q(G,L_p) \lesssim \bigl(\gamma^+_2(G,\mathbb R)(p^2+q^2) \bigr)^{\frac{q}{\min\{p,2\}}}.
\end{equation}
\end{lemma}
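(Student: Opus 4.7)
The natural approach is to reduce the non-bipartite Poincar\'e inequality for $G$ to the (ordinary) Poincar\'e inequality for the bipartite double cover $\tilde G$ of $G$, and then invoke \autoref{prop:naor-extrapolation}. Recall that $\tilde G$ has vertex set $V\times\{0,1\}$, and $\{(u,0),(v,1)\}$ is an edge of $\tilde G$ iff $\{u,v\}\in E(G)$. Then $\tilde G$ is bipartite and $d$-regular (if $G$ is $d$-regular on $n$ vertices), with $|V(\tilde G)|=2n$ and $|E(\tilde G)|=2|E(G)|$. The adjacency matrix of $\tilde G$ (suitably normalized) has the block form $\bigl(\begin{smallmatrix}0 & A_G\\ A_G & 0\end{smallmatrix}\bigr)$, so its eigenvalues are $\pm|\lambda_i(G)|$; consequently, $\lambda_2(\tilde G)=\lambda(G)$ and therefore
\[
\gamma_2(\tilde G,\mathbb R)=\frac{1}{1-\lambda_2(\tilde G)}=\frac{1}{1-\lambda(G)}=\gamma_2^+(G,\mathbb R).
\]

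Given functions $f,g:V\to L_p$ for which we want to establish a $\gamma_q^+$-type estimate, define $F:V(\tilde G)\to L_p$ by $F(v,0)=f(v)$ and $F(v,1)=g(v)$. Applying the $q$-Poincar\'e inequality for $\tilde G$ to $F$, the average over $E(\tilde G)$ equals
\[
\frac{1}{|E(\tilde G)|}\sum_{\{x,y\}\in E(\tilde G)}\|F(x)-F(y)\|_p^q=\frac{1}{|E(G)|}\sum_{\{u,v\}\in E(G)}\|f(u)-g(v)\|_p^q,
\]
while expanding the average over ordered pairs in $V(\tilde G)$ gives
\[
\frac{1}{|V(\tilde G)|^2}\sum_{x,y}\|F(x)-F(y)\|_p^q\;\geq\;\frac{1}{2|V|^2}\sum_{u,v\in V}\|f(u)-g(v)\|_p^q,
\]
since the cross terms between the two copies of $V$ contribute $2\sum_{u,v}\|f(u)-g(v)\|_p^q$ and the remaining (non-negative) intra-copy terms are discarded. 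Combining these two displays yields $\gamma_q^+(G,L_p)\leq 2\gamma_q(\tilde G,L_p)$.

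The proof is then concluded by invoking \autoref{prop:naor-extrapolation} (in its consequence \eqref{eq:naor-extrapolation-simplified}) on $\tilde G$:
\[
\gamma_q^+(G,L_p)\;\leq\;2\gamma_q(\tilde G,L_p)\;\lesssim\;\bigl(\gamma_2(\tilde G,\mathbb R)(p^2+q^2)\bigr)^{q/\min\{p,2\}}\;=\;\bigl(\gamma_2^+(G,\mathbb R)(p^2+q^2)\bigr)^{q/\min\{p,2\}},
\]
which is exactly \eqref{eq:gamma+-extrapolation}. I expect no serious obstacle here: the only nontrivial ingredients are the (standard) spectral identification $\gamma_2(\tilde G,\mathbb R)=\gamma_2^+(G,\mathbb R)$ and the bookkeeping in comparing averages over $V(\tilde G)^2$ versus $V^2$ and over $E(\tilde G)$ versus $E(G)$. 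Both are elementary. The main subtlety to watch is that the factor $2$ lost when discarding the intra-copy sums gets absorbed into the implicit constant in $\lesssim$, so the exponent $q/\min\{p,2\}$ is preserved.
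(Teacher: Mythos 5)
Your proof is correct and follows essentially the same route as the paper: pass to the bipartite double cover, identify $\gamma_2$ of the cover with $\gamma_2^+$ of $G$ via the block spectral decomposition, and apply \autoref{prop:naor-extrapolation} to the cover. The only difference is that you prove the comparison $\gamma_q^+(G,L_p)\le 2\gamma_q(\tilde G,L_p)$ directly (correctly), whereas the paper simply cites it from \cite[Lemma~2.3]{MN-superexpanders}.
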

\begin{proof}
 In \cite[Lemma~2.3]{MN-superexpanders} 
 it is stated that for any regular graph $G$ and any metric space $X$, the double cover of $G$, $H=G\otimes e$ 
 (where $e$ is the graph consisting of one edge and $\otimes$ is the graphical tensor product), satisfies
 \[\frac{2}{2^q+1} \gamma_q (H,X) \leq \gamma_q^+(G,X)\leq 2\gamma_q(H,X).\]
 Furthermore, by the spectral decomposition theorem, when $X=\mathbb R$ and $q=2$ we have :
 \[ \gamma_2^+(G,\mathbb R)=\frac{1}{1-\lambda(G)}=\frac{1}{1-\lambda_2(H)}=\gamma_2(H,\mathbb R).\]
 Applying the above to~\eqref{eq:naor-extrapolation-simplified},
 \[\gamma_q^+(G,L_p)\leq 2 \gamma_q(H,L_p)
 \lesssim \bigl (\gamma_2(H,\mathbb R)(p^2+q^2) \bigr)^{\frac{q}{\min\{p,2\}}}
 = \bigl(\gamma^+_2(G,\mathbb R)(p^2+q^2) \bigr)^{\frac{q}{\min\{p,2\}}} . \qedhere \]
\end{proof}

\subsubsection{Abstract zigzag iteration}

The following  is the basic zigzag iteration which is similar to \cite[Theorem~3.2]{MN-expanders2} (which in turn is a variant of a similar iteration from~\cite{RVW}). 
Compared to~\cite{MN-expanders2}, the main difference here is that 
the bound on the Poincar\'e constant depends only on the underlying metric space, but not on the base graph. 
This will be important next when we construct one expander family of graphs that works for all powers of the metric. 
The cost is a higher (but still constant) degree of the expander: the number of vertices of the base graph instead of the base graph's degree squared.
This achieved by simply 
externalizing the graphs after the Ces\'aro average instead of after the zigzag product. 

\begin{lemma}\label{lem:zigzag simple} 
Fix $q_0\in(0,\infty)$ and $K_0\in [1,\infty)$ and two integers $n_0,d_0\ge 3$ with $n_0\ge d_0^3$. Suppose that $(X,d_X)$ is a metric space satisfying~\eqref{eq:spectral calculus condition}. 
Suppose further that there exists a $d_0$-regular graph $H$ with $|V_H|=n_0$ such that
\begin{equation}\label{eq:base graph assumption}
\gamma_{q_0}^+\!\left(H,d_X\right)\le \sqrt{\frac{m_0}{2K_0}},
\quad \text{where} \quad
m_0\leq \left\lfloor\frac{\log n_0}{3\log d_0}\right\rfloor.
\end{equation}

Then there exists a sequence of\/ $n_0$-regular graphs
$\{G_j\}_{j=1}^\infty$  satisfying
\begin{equation}\label{eq:Gj cardinality}
\forall\, j\in \N,\qquad \left|V_{G_j}\right|=n_0^j,
\end{equation}
and
\begin{equation}\label{eq:Gj gamma+ vanilla}
\forall\, j\in \mathbb N,\qquad \gamma_{q_0}^+\!\left(G_j,d_X\right)\leq 2 K_0.
\end{equation}
Furthermore, $\{G_j\}_j$
depends uniquely and deterministically only on $H$ and $m_0$.
\end{lemma}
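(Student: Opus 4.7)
The plan is to iterate the zigzag product on top of $H$ so that, at every stage, each of the three available graph operations (edge completion, Ces\`aro averaging, zigzag) serves to discharge exactly one of the three factors arising in the recursion. Concretely, I would set
\[
G_1 \eqdef \mathcal{C}_{n_0}\!\bigl(\mathcal{A}_{m_0}(H)\bigr)\qquad\text{and}\qquad
G_{j+1} \eqdef \mathcal{C}_{n_0}\!\bigl(\mathcal{A}_{m_0}(G_j\oz H)\bigr)\quad(j\ge 1),
\]
which is manifestly deterministic and depends only on $H$ and~$m_0$.

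First I would verify the combinatorial data by induction on $j$. Suppose $G_j$ is $n_0$-regular on $n_0^j$ vertices. Since $|V_H|=n_0=\deg(G_j)$, the zigzag product $G_j\oz H$ is well-defined, is $d_0^2$-regular, and has $n_0^{j+1}$ vertices, so $\mathcal{A}_{m_0}(G_j\oz H)$ is $m_0 d_0^{2(m_0-1)}$-regular on the same vertex set. The bound $m_0\le \lfloor\log n_0/(3\log d_0)\rfloor$ from~\eqref{eq:base graph assumption}, together with the elementary inequality $m_0\le d_0^{m_0}$ (valid since $d_0\ge 3$), yields $m_0 d_0^{2(m_0-1)}\le d_0^{3m_0-2}\le n_0$, so the edge completion $\mathcal{C}_{n_0}$ is legal and $G_{j+1}$ is $n_0$-regular on $n_0^{j+1}$ vertices. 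The analogous estimate applied directly to $\mathcal{A}_{m_0}(H)$ takes care of the base case $j=1$, giving $|V_{G_1}|=n_0$.

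The analytic core is a single chain of the three structural inequalities \eqref{eq:completion gamma+}, \eqref{eq:spectral calculus condition}, and \eqref{eq:zigzag sub multiplicativity}:
\begin{align*}
\gamma_{q_0}^+(G_{j+1},d_X)
&\stackrel{\eqref{eq:completion gamma+}}{\le} 2\gamma_{q_0}^+\!\bigl(\mathcal{A}_{m_0}(G_j\oz H),d_X\bigr)\\
&\stackrel{\eqref{eq:spectral calculus condition}}{\le} 2K_0\max\!\Bigl\{1,\ \tfrac{\gamma_{q_0}^+(G_j\oz H,d_X)}{m_0}\Bigr\}\\
&\stackrel{\eqref{eq:zigzag sub multiplicativity}}{\le} 2K_0\max\!\Bigl\{1,\ \tfrac{\gamma_{q_0}^+(G_j,d_X)\cdot\gamma_{q_0}^+(H,d_X)^2}{m_0}\Bigr\}\\
&\stackrel{\eqref{eq:base graph assumption}}{\le} \max\bigl\{2K_0,\ \gamma_{q_0}^+(G_j,d_X)\bigr\}.
\end{align*}
Running the same chain on $G_1=\mathcal{C}_{n_0}(\mathcal{A}_{m_0}(H))$ produces $\gamma_{q_0}^+(G_1,d_X)\le 2K_0$, using that $\gamma_{q_0}^+(H,d_X)/m_0\le 1/\sqrt{2K_0 m_0}\le 1$. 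A trivial induction on $j$ then delivers~\eqref{eq:Gj gamma+ vanilla}.

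I do not anticipate a genuine obstacle; the only moderately delicate point is bookkeeping the numerical inequality $n_0\ge d_0^{3m_0}$ so that one Ces\`aro step (which raises the degree to $m_0 d_0^{2(m_0-1)}$) still leaves room below $n_0$ for the subsequent edge completion. The collapse of the recursion to $\max\{2K_0,\gamma_{q_0}^+(G_j,d_X)\}$ is exactly what the normalization $\gamma_{q_0}^+(H,d_X)^2\le m_0/(2K_0)$ was engineered to produce, keeping the final Poincar\'e bound a fixed multiple of $K_0$ independently of $j$.
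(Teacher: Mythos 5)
Your proof is correct and follows essentially the same route as the paper: the identical recursion $G_{j+1}=\mathcal{C}_{n_0}(\mathcal{A}_{m_0}(G_j\oz H))$, the same degree bookkeeping via $m_0 d_0^{2(m_0-1)}\le n_0$, and the same three-step chain through \eqref{eq:completion gamma+}, \eqref{eq:spectral calculus condition}, and \eqref{eq:zigzag sub multiplicativity}. The only (immaterial) difference is the base case: the paper takes $G_1$ to be the complete graph with self-loops on $n_0$ vertices (so $\gamma_{q_0}^+(G_1)=1$ trivially), whereas you take $G_1=\mathcal{C}_{n_0}(\mathcal{A}_{m_0}(H))$, which also satisfies the required bound.
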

\begin{proof}
Define $G_1$ to be the complete graph with self-loops  on $n_0$ vertices of degree $n_0$.
And indeed $\gamma_q^+(G_1)=1\leq 2K_0$.
Supposing that $G_j$ has been defined and satisfies~\eqref{eq:Gj cardinality} and~\eqref{eq:Gj gamma+ vanilla}. 
Let 
\begin{equation}\label{eq:Gj+1}
G_{j+1}\eqdef \mathcal{C}_{n_0}\bigl(\A_{m_0}(G_j\oz H)\bigr).
\end{equation}

The zigzag product $G_j\oz H$ is a regular graph with $n_0^j\cdot n_0=n_0^{j+1}$ vertices and degree $d_0^2$.
The Ces\`aro average
$\A_{m_0}(G_j\oz H)$ is therefore with the same number of vertices but degree $m_0d_0^{2(m_0-1)}$.
Recalling~\eqref{eq:base graph assumption}, we have
$m_0d_0^{2(m_0-1)}\le n_0$. We can therefore form the edge completion
$G_{j+1}=\mathcal{C}_{n_0}\bigl(\A_{m_0}(G_j\oz H)\bigr)$, which is an $n_0$-regular
graph, with $|V_{G_{j+1}}|=n_0^{j+1}$.
Bounding the Poincar\'e constant:

\begin{align*}
    \gamma_{q_0}^+(G_{j+1},X) &
    \stackrel{\eqref{eq:completion gamma+}}\leq  2 \gamma_{q_0}^+(\A_{m_0}(G_j\oz H),X)
\\ & \stackrel{\eqref{eq:spectral calculus condition}} \leq 2 K_0 \max\biggl\{ 
1,\frac{\gamma_{q_0}^+(G_j\oz H,X)}{m_0}
\biggr\}
\\ & \stackrel{\eqref{eq:zigzag sub multiplicativity}}\leq 2 K_0 \max\biggl\{ 
1,\frac{\gamma_{q_0}^+(G_j,X)\gamma_q^+(H,X)^2}{m_0}
\biggr\}
\\ & \stackrel{\substack{\eqref{eq:Gj gamma+ vanilla}\\ \eqref{eq:base graph assumption}}}\leq 2 K_0 \max\biggl\{ 
1,\frac{2K_0\cdot (m_0/2K_0)}{m_0}
\biggr\}
\\ & = 2 K_0 . 
\qedhere
\end{align*}
\end{proof}

The above iteration constructs for each power $q$ a different expander.
In order to extract one expander with respect to all powers of the metric we use an argument of~\cite{MN-superexpanders}. Specifically, the following lemma.

\begin{lemma}[{Main zigzag iteration \cite[Lemma~4.3]{MN-superexpanders}}]\label{lem:zigzag main iteration}
Let $\{d_k\}_{k=1}^\infty$ be a sequence of integers and for each $k\in \N$ let $\{n_j(k)\}_{j=1}^\infty$ be a strictly increasing sequence of integers. For every $j,k\in \N$ let $F_j(k)$ be a regular graph of degree $d_k$ with $n_j(k)$ vertices. Suppose that $\K$ is a family of kernels such that
\begin{equation}\label{eq:K finiteness assumption}
\forall\, K\in \K,\ \forall\, j,k\in \N,\quad \gamma_+(F_j(k),K)<\infty.
\end{equation}
Suppose also that the following two conditions hold true.
\begin{itemize}
\item For every $K\in \K$ there exists $k_1(K)\in \N$ such that
\begin{equation}\label{eq:k1}
\sup_{\substack{j,k\in \N\\k\ge k_1(K)}}\gamma_+(F_j(k),K)\le k_1(K).
\end{equation}
\item For every $K\in \K$ there exists $k_2(K)\in \N$ such that every regular graph $G$ satisfies the following spectral calculus inequality.
\begin{equation}\label{eq:k2}
\forall\, t\in \N,\quad \gamma_+\left(\A_t(G),K\right)\le k_2(K)\max\left\{1,\frac{\gamma_+(G,K)}{t^{1/k_2(K)}}\right\}.
\end{equation}
\end{itemize}
Then there exists $d\in \N$ and a sequence of $d$-regular graphs $\{H_i\}_{i=1}^\infty$ with
$$
\lim_{i\to \infty} |V(H_i)|=\infty
$$
and
\begin{equation}\label{eq:good for all K}
\forall\, K\in \K,\quad \sup_{j\in \N}\gamma_+(H_j,K)<\infty.
\end{equation}

Furthermore the family of graphs $\{H_i\}_i$ is determined deterministically by the family of graphs 
$\{(F_j(k))\}_{j,k}$.
\end{lemma}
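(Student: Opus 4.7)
The plan is to construct $\{H_i\}$ by a modified zigzag iteration in which, unlike the single-base-graph iteration of \autoref{lem:zigzag simple}, the base graph \emph{varies along the iteration}, and the Ces\`aro averaging parameter \emph{grows along the iteration}. This two-parameter schedule is what will allow a single family of graphs to satisfy~\eqref{eq:good for all K} simultaneously for every $K\in\K$, rather than producing a separate family tailored to each kernel. Enumerate $\K=\{K_1,K_2,\ldots\}$ (or handle finitely many kernels at once if $\K$ is finite). Along the iteration I will use two integer-valued nondecreasing sequences $\{k(i)\}_{i\ge 1}$ and $\{m(i)\}_{i\ge 1}$, both tending to $\infty$, whose growth rates will be chosen at the end to make the analysis work.

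First I would set the common degree. Fix any $k^\star\in \N$ and consider a subsequence $F_i \eqdef F_{j(i)}(k(i))$ with $j(i)$ chosen so large that $|V(F_i)|$ is exactly the number of vertices of the partially-built graph $G_i$ (which is the prerequisite for the zigzag product with $G_i$). Let $d$ be a fixed large integer, chosen so that $d\ge m(i)\cdot d_{k(i)}^{2(m(i)-1)}$ for every $i$; this is feasible because the sequences $d_{k(i)}$ and $m(i)$ are under our control and we can always slow $m(i)$ down so that the right-hand side stays bounded. Set $G_1$ to be a $d$-regular graph on a convenient number of vertices (e.g., a $d$-regular edge-completion of a short cycle), and for every $i\ge 1$ define
\begin{equation*}
    G_{i+1}\eqdef \mathcal C_{d}\bigl(\A_{m(i)}(G_i\oz F_i)\bigr).
\end{equation*}
Then each $G_i$ is $d$-regular and $|V(G_{i+1})|=|V(G_i)|\cdot|V(F_i)|\to\infty$.

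Next I would bound $\gamma_+(G_i,K)$ for each fixed $K\in\K$. Combining~\eqref{eq:completion gamma+}, the spectral calculus hypothesis~\eqref{eq:k2}, and zigzag submultiplicativity~\eqref{eq:zigzag sub multiplicativity}, for every $K\in\K$ and every $i$,
\begin{equation*}
    \gamma_+(G_{i+1},K)\;\le\; 2k_2(K)\max\!\Bigl\{1,\ \frac{\gamma_+(G_i,K)\cdot\gamma_+(F_i,K)^2}{m(i)^{1/k_2(K)}}\Bigr\}.
\end{equation*}
Once $i\ge i_0(K)$, where $i_0(K)$ is the least index with $k(i)\ge k_1(K)$, hypothesis~\eqref{eq:k1} gives $\gamma_+(F_i,K)\le k_1(K)$, and hence the recursion simplifies to
\begin{equation*}
    \gamma_+(G_{i+1},K)\;\le\; 2k_2(K)\max\!\Bigl\{1,\ \frac{k_1(K)^2\gamma_+(G_i,K)}{m(i)^{1/k_2(K)}}\Bigr\}.
\end{equation*}
If for every $K\in\K$ there exists $i_1(K)\ge i_0(K)$ such that $m(i)^{1/k_2(K)}\ge 4 k_1(K)^2 k_2(K)$ for every $i\ge i_1(K)$, then for $i\ge i_1(K)$ this becomes $\gamma_+(G_{i+1},K)\le \max\{2k_2(K),\,\tfrac12 \gamma_+(G_i,K)\}$, which stabilizes the sequence at $\max\{2k_2(K),\gamma_+(G_{i_1(K)},K)\}$; the latter is finite by the finiteness assumption~\eqref{eq:K finiteness assumption} and a straightforward (finite) induction over the steps $i\le i_1(K)$.

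The main obstacle, and the source of the diagonal enumeration, is arranging for both schedules to satisfy the requirements for \emph{every} $K\in\K$ at once while keeping the common degree $d$ finite. Concretely, after enumerating $\K=\{K_\ell\}$, I would pick $k(i)$ to be the largest integer with $k(i)\le\max\{k_1(K_\ell):\ell\le i\}+i$ that is compatible with $|V(F_{j(i)}(k(i)))|=|V(G_i)|$ being realizable, and $m(i)$ to be the smallest integer with $m(i)^{1/k_2(K_\ell)}\ge 4k_1(K_\ell)^2 k_2(K_\ell)$ for every $\ell\le i$, truncated from above to keep $m(i)\cdot d_{k(i)}^{2(m(i)-1)}\le d$ for our fixed $d$. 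Because each finite subset $\{K_1,\ldots,K_N\}$ imposes only finitely many conditions, such $k(i)$ and $m(i)$ can be chosen large enough when needed but also bounded for all $i$ at a uniform rate; then every $K_\ell$ has $i_1(K_\ell)<\infty$, and the argument above yields $\sup_i \gamma_+(H_i,K_\ell)<\infty$ as $H_i\eqdef G_i$. The construction is evidently deterministic in $\{F_j(k)\}_{j,k}$, as no random choices are made.
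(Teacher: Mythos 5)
The paper does not actually prove this lemma: it is quoted verbatim from \cite{MN-superexpanders} (Lemma~4.3 there), with only the final ``determined deterministically'' sentence added. So your proposal must stand on its own, and it has a genuine gap at precisely the point that makes the lemma nontrivial.

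First, a local but consequential error: the compatibility condition for forming $G_i\oz F_i$ is that $|V(F_i)|$ equal the \emph{degree} of $G_i$ (here the fixed $d$), not the number of vertices of $G_i$. Once corrected, your scheme needs $n_{j(i)}(k(i))=d$ for every $i$; since $j\mapsto n_j(k)$ is strictly increasing, for each level $k$ at most one $j$ can work, and there is no reason the single number $d$ lies in the range of $n_{\cdot}(k)$ for the infinitely many levels $k(i)$ you must visit. Second, and more fundamentally, the claim that a single $d$ can satisfy $d\ge m(i)\,d_{k(i)}^{2(m(i)-1)}$ for all $i$ is false in the intended generality: the degrees $d_k$ are part of the \emph{hypothesis}, not under your control (in the paper's application via \autoref{lem:bootstraping-main-zigzag-iteration} they grow without bound), and $k(i)\to\infty$ is forced because $k_1(K)$ need not be bounded over $K\in\K$; moreover your damping condition $m(i)^{1/k_2(K_\ell)}\ge 4k_1(K_\ell)^2k_2(K_\ell)$ for all $\ell\le i$ forces $m(i)\to\infty$ whenever $\sup_\ell\max\{k_1(K_\ell),k_2(K_\ell)\}=\infty$. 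These two requirements on the schedule --- large enough to damp every kernel eventually, small enough to keep the completion degree below a fixed $d$ --- are incompatible, and ``truncating $m(i)$ from above'' simply destroys the damping for the later kernels, so \eqref{eq:good for all K} is not obtained. The recursion estimate itself (edge completion via \eqref{eq:completion gamma+}, then \eqref{eq:k2}, then \eqref{eq:zigzag sub multiplicativity}, then \eqref{eq:k1}, yielding $\gamma_+(G_{i+1},K)\le\max\{2k_2(K),\tfrac12\gamma_+(G_i,K)\}$ once the schedules are favorable, with the finite prefix handled by \eqref{eq:K finiteness assumption}) is correct and standard. What is missing is the actual diagonal bookkeeping of \cite{MN-superexpanders} that keeps the output degree fixed while the base-graph degrees and the Ces\`aro parameters both escape to infinity --- which is the entire content of the lemma and cannot be waved through by asserting that the parameters ``can be chosen large enough when needed but also bounded at a uniform rate.''
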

\autoref{lem:zigzag main iteration} is a direct quote from~\cite{MN-superexpanders}, except the additional sentence in the end, that is evident when following the proof of Lemma~4.3 in~\cite{MN-superexpanders}.
``Kernels" in the above lemma include powers of metrics.

The construction of expnder with respect to a space $\mathcal Y$ in~\cite{MN-expanders2} requires a partition of $\mathcal Y$ into a ``large" part and a``small" part. 
To make the presentation more modular here we 
``axiomatize" the partition as follows.

\begin{definition} \label{def:size-partition}
 A metric space (or a class of metric spaces) $\mathcal Y$ has an abstract partition according to size if   
for every $s\in\mathbb N$ there exist 
$\mathcal Y_{<s}, \mathcal Y_{\geq s}\subset \mathcal Y$ such that
\begin{compactitem}
\item    
$\mathcal Y_{<s}$ is monotonous non-decreasing in $s$
(i.e., $\mathcal Y_{<s}\subseteq \mathcal Y_{<t}$ whenever
$s\leq t$.) 
$\mathcal Y_{\geq s}$ is monotonous non-increasing in $s$.
\item 
$c_2(\mathcal Y_{<s})\leq g(s)$,
where $g:N\to [1,\infty)$ is some universally fixed function (for our application, one can take $g(s)= C\log (s+1)$.)
\item 
\( \gamma_q^+(G,\mathcal Y) \leq 
h(q,\gamma_q^+(G,\mathcal Y_{<s}), \gamma_q^+(G,\mathcal Y_{\geq s}))
\)
for any regular graph $G$, $s\in\mathbb N$ and $q\in[1,\infty)$, for some universally fixed function $h$ (in our application one can take $h(q,x,y)=2^{q-1}\max\{x,y\}$.)
\end{compactitem}
\end{definition}

\begin{lemma} \label{lem:bootstraping-main-zigzag-iteration}
Fix a class of metric spaces $\mathcal Y$ having an abstract partition according to \autoref{def:size-partition} with the following properties.

\begin{itemize}
    \item 
$\mathcal{Y}$ satisfies the spectral calculus condition~\eqref{eq:spectral calculus condition} with power $q$ and constant $K_q$,  $\forall q\in\{2,3,\ldots\}$.

\item 
There exist non-decreasing function
$f:[1,\infty)\times [1,\infty)\to [1,\infty)$ 
and non-decreasing series
$(s_n)_{n\in\mathbb N}$ 
such that for any $n\in\mathbb N$ 
and any graph $G$ on $n$ vertices,
\begin{equation} \label{eq:gamma_q^+(G,Y)-bound}
    \gamma_q^+(G,\mathcal{Y}_{\geq s_n})\leq f(q,\gamma_2^+(G,\mathbb R)).
\end{equation}
\end{itemize}

Then:
\begin{itemize}
\item There exists
a $3$-regular infinite family of graphs  $\mathcal E$ 
such that
\begin{equation} \label{eq:gamma_q(E,Y)<infty}
    \forall q\in\{2,3,\ldots\},\quad  \gamma_q^+(\mathcal E,\mathcal Y)<\infty.
\end{equation}
\item 
For every $q\in\{2,3,\ldots\}$ there exists
a $3$-regular infinite family of graphs  $\mathcal E(q)=(G_n(q))_n$ 
such that
\begin{equation} \label{eq:E_q bound}
    \forall q\in\{2,3,\ldots\},\quad  \gamma_q^+(\mathcal E_q,\mathcal Y)<\infty,
\end{equation}
and $|V_{G_j(q)}|=a_q\cdot b_q^j$, for some $a_q\in\mathbb N$, $b_q\in\{2,3\ldots\}$.
\end{itemize}
Furthermore, $\mathcal E$ and $\mathcal E_q$ are determined uniquely and deterministically 
by the sequences $(K_q)_q$, $(s_n)_n$ as well as the function $f(\cdot,\cdot)$.
\end{lemma}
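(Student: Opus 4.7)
For each fixed $q\in\{2,3,\ldots\}$, the plan is to construct $\mathcal{E}(q)$ by invoking the basic zigzag iteration (\autoref{lem:zigzag simple}) with $X=\mathcal{Y}$, power $q_0=q$, and constant $K_0=K_q$ from the spectral-calculus hypothesis. As the base graph $H_q$ I would take a classical constant-degree expander on $n_{0,q}$ vertices (say, of LPS type), so that $\gamma^+_2(H_q,\mathbb{R})\le D$ for a universal $D$. The Poincar\'e bound $\gamma^+_q(H_q,\mathcal{Y})$ is obtained via the abstract partition at threshold $s=s_{n_{0,q}}$: the large-part inequality $\gamma^+_q(H_q,\mathcal{Y}_{\ge s_{n_{0,q}}})\le f(q,D)$ is immediate from \eqref{eq:gamma_q^+(G,Y)-bound}, while $\mathcal{Y}_{<s_{n_{0,q}}}$ embeds in $L_2$ with distortion at most $g(s_{n_{0,q}})$, so \autoref{lem:gamma+-extrapolation} gives $\gamma^+_q(H_q,\mathcal{Y}_{<s_{n_{0,q}}})\lesssim g(s_{n_{0,q}})^q(q^2 D)^{q/2}$. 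Combining via $h$ produces a bound $B(q,n_{0,q})$ which is finite for each fixed pair $(q,n_{0,q})$; choosing $n_{0,q}$ large enough that $2K_q B(q,n_{0,q})^2\le \lfloor \log n_{0,q}/(3\log d_{0,q})\rfloor$, \autoref{lem:zigzag simple} outputs $\mathcal{E}(q)=\{G_j(q)\}_{j\ge 1}$ of $n_{0,q}$-regular graphs satisfying $\gamma^+_q(G_j(q),\mathcal{Y})\le 2K_q$ uniformly in $j$ and $|V_{G_j(q)}|=n_{0,q}^j$. An invocation of \autoref{lem:from MN plus replacement} then reduces the degree to $3$ at a bounded cost, yielding the second bullet of the conclusion with $a_q,b_q$ determined by $n_{0,q}$.

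To merge these per-power families into a single $\mathcal{E}$ that works for all powers at once (the first bullet), the plan is to invoke \autoref{lem:zigzag main iteration} with $F_j(k):=G_j(k)$, degrees $d_k:=n_{0,k}$, and kernels $\mathcal{K}:=\{d_\mathcal{Y}^q:q\in\{2,3,\ldots\}\}$. Hypothesis \eqref{eq:K finiteness assumption} follows because each $G_j(k)$ is a classical expander, so the partition argument above shows $\gamma^+_q(G_j(k),\mathcal{Y})<\infty$ for all $j,k,q$. Hypothesis \eqref{eq:k1} is verified by taking $k_1(d_\mathcal{Y}^q)=q$: for $k\ge q$, the graph $G_j(k)$ satisfies $\gamma^+_2(G_j(k),\mathbb{R})=O(1)$, and the partition together with \eqref{eq:gamma_q^+(G,Y)-bound} and \autoref{lem:gamma+-extrapolation} yield a $j$-uniform bound on $\gamma^+_q(G_j(k),\mathcal{Y})$, once the issue of $g(s_{n_{0,k}^j})$ is handled as discussed below. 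Hypothesis \eqref{eq:k2} is precisely the spectral-calculus assumption with $k_2(d_\mathcal{Y}^q)=K_q$. \autoref{lem:zigzag main iteration} then delivers the $3$-regular family $\mathcal{E}$ satisfying \eqref{eq:gamma_q(E,Y)<infty}. The furthermore clause about determinism comes for free from the corresponding clauses in \autoref{lem:zigzag simple} and \autoref{lem:zigzag main iteration}, once the choice of $n_{0,q}$ is made from the fixed parameters $(K_q,s_n,f,g)$.

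The main obstacle is the verification of \eqref{eq:k1}, and with it the parameter balancing in the first step. Applied naively, the partition bound on $\gamma^+_q(G_j(k),\mathcal{Y})$ involves $g(s_{n_{0,k}^j})$, which \emph{grows} with $j$ when $(s_n)$ is unbounded and $g$ is nonconstant; similarly, satisfying $B(q,n_{0,q})\le\sqrt{m_{0,q}/(2K_q)}$ with $m_{0,q}\lesssim\log n_{0,q}$ requires $g(s_{n_{0,q}})^q\lesssim \sqrt{\log n_{0,q}}$, which is a nontrivial constraint. To obtain a $j$-uniform bound I would exploit the monotonicity $\mathcal{Y}_{<s}\subseteq\mathcal{Y}_{<s'}$ (for $s\le s'$) to fix the partition threshold $s$ independently of $j$ once and for all per $q$, and then invoke a downward extrapolation for the $q$-Poincar\'e inequality (cf.~\autoref{thm:extrapolation-cat0}) using the Hadamard-like structure that $\mathcal{Y}_{\ge s}$ inherits in the target application to cones over random regular graphs. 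The remaining parameter balancing is then possible precisely because, in that application, $g$ and $(s_n)$ grow only logarithmically, leaving enough slack in $\log n_{0,q}$ to absorb the $g(s_{n_{0,q}})^q q^q$ factor when $n_{0,q}$ is chosen sufficiently large as a function of $q$.
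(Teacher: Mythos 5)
Your proposal runs into a genuine gap that you yourself flag but do not resolve, and the resolution you sketch is not available in the abstract setting of the lemma. The difficulty is structural: you feed the \emph{full} space $\mathcal Y$ into \autoref{lem:zigzag simple}, so the base-graph hypothesis \eqref{eq:base graph assumption} forces you to control $\gamma_q^+(H_q,\mathcal Y)$, and the only route the stated hypotheses offer is the partition, which contributes a factor $g(s_{n_{0,q}})^q$ against a budget of only $\sqrt{m_{0,q}}\asymp\sqrt{\log n_{0,q}}$. With $g(s)\asymp\log(s+1)$ and $s_n$ polynomial (the intended application), this inequality fails for every choice of $n_{0,q}$, so the first step of your construction cannot be completed. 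Appealing to ``downward extrapolation using the Hadamard-like structure in the target application'' abandons the abstract hypotheses, which is exactly what the lemma is designed to avoid. The paper's proof sidesteps this entirely: the zigzag iteration is run against the large part $\mathcal Y_{\ge s_{n_q}}$ \emph{only}, where \eqref{eq:gamma_q^+(G,Y)-bound} bounds the base graph by $f(q,\gamma_2^+(H_q,\mathbb R))$ with no $g$-factor at all; the small part $\mathcal Y_{<s_{n_q}}$ is handled exactly once, at the very end, for the already-constructed diagonal expander $\mathcal F$, where Matou\v{s}ek extrapolation \eqref{eq:gamma+-extrapolation} from $\gamma_2^+(\mathcal F,\ell_2)$ only needs to produce a \emph{finite} bound, not one comparable to $\sqrt{m}$.

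The second gap is in your verification of \eqref{eq:k1}. With kernels $d_{\mathcal Y}^q$ on the full space, bounding $\gamma_q^+(G_j(k),\mathcal Y)$ by re-applying \eqref{eq:gamma_q^+(G,Y)-bound} to $G_j(k)$ ties the partition threshold to $s_{|V_{G_j(k)}|}$, which grows with $j$, so the bound is not $j$-uniform --- again as you note. The paper instead takes the kernels to be $d^q$ restricted to the \emph{fixed} large parts $\mathcal Y_{\ge s_{n_q}}$, and the $j$-uniform bound comes from the \emph{conclusion} of \autoref{lem:zigzag simple} (namely $\le 2K_p$), not from re-invoking the hypothesis on each $G_j(k)$. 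Making this work for lower powers $p<k$ on the level-$k$ family requires the bootstrapping you omit: the base graph for level $q$ is $H_q=F_{n_q}(q-1)$, a member of the level-$(q-1)$ family, which by induction already satisfies $\gamma_p^+(H_q,\mathcal Y_{\ge s_{n_p}})\le 2K_p$ for all $p\le q-1$; a fresh LPS expander on $n_q$ vertices would only give you, via \eqref{eq:gamma_q^+(G,Y)-bound}, control over $\mathcal Y_{\ge s_{n_q}}$, which by monotonicity is a \emph{smaller} space than the $\mathcal Y_{\ge s_{n_p}}$ you need. Your second bullet (the per-power families $\mathcal E_q$ with exponential growth) would be salvageable once the first step is repaired, since it only needs a single power, but as written the repair is missing.
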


\begin{remark}
The construction of $\mathcal E$ in~\eqref{eq:gamma_q(E,Y)<infty} above in this abstract setting only guarantees that the sizes of the graphs in
$\mathcal E$ increase \emph{at least} exponentially in the index. For the algorithmic applications of the construction in \autoref{sec:UA-proofs} we need the increase to be \emph{at most} exponential. This is the purpose of the expanders $\mathcal E_q$ from~\eqref{eq:E_q bound} in the conclusion of the lemma.
\end{remark}

\begin{proof}[Proof of \autoref{lem:bootstraping-main-zigzag-iteration}]
The proof uses \autoref{lem:zigzag main iteration}. To this end,
we begin with constructing for every $q\in\{2,3,\ldots\}$ a sequence of graphs $F_n(q)$ that satisfies~\eqref{eq:k1}  (\eqref{eq:K finiteness assumption} is automatically satisfied when the kernel is a power of metric and the graphs are not bipartite).
Our starting point is some universally fixed classical $3$-regular non-bipartite expander 
    $(T_{n})_{n\in 2\mathbb N}$ such that $|T_n|=n$ and $\sup_n\gamma_2^+(T_n,\mathbb R)= c_2<\infty$.
Let $n_2=28$ the minimal even number 
such that $n_2\geq 3^3$ and
\[ c_2 \leq \sqrt{m_2/(2K_2)}, \quad \text{where} \quad 
m_2=\biggl \lfloor \frac{\log n_2}{3\log 3}\biggr\rfloor.
\]
Let $H_2=T_{n_2}$.
Both~\eqref{eq:spectral calculus condition} and~\eqref{eq:base graph assumption} hold
with $q_0=2$, $H=H_2$, $K_0=K_2$, $m_0=m_2$, $n_0=n_2$ and $X=\mathcal Y_{\geq s_{n_2}}$. 
So by 
\autoref{lem:zigzag simple}, there exists 
a family $(F_2(n))_{n\in\mathbb N}$ of $n_2$-regular graphs of increasing size such that
\[ \sup_{n\in\mathbb N} \gamma_2^+(F_n(2),\mathcal Y_{\geq s_{n_2}})\leq 2K_2.\]

Now suppose that  $(n_2,\ldots,n_{q-1})$ and 
$(F_n(p))_{n\in\mathbb N,\ p\in\{2,\ldots,q-1\}}$
are already defined. Further, by the inductive hypothesis, for every $p\in\{2,\ldots,q-1\}$, $(F_n(p))_n$
is a family of $n_p$-regular graphs of increasing size and
\begin{equation}
    \sup_{\substack{n\in\mathbb N\\ r\in\{p,\ldots,q-1\}}} \gamma_p^+(F_n(r), \mathcal Y_{\geq s_{n_p}})\leq 2 K_p.
\end{equation}

Denote $c_{q}=\sup_n f_{q}(\gamma_2^+(F_n(q-1),\mathbb R)$. Since $(F_n(q-1))_n$ is a classical expander, $c_{q}<\infty$.
Let $n_{q}$ be the minimal natural number such that 
$n_{q}\geq n_{q-1}$, 
$\frac{\log n_q}{3\log n_{q-1}}\geq m_{q-1}$ 
and
\[ \max\{c_{q},2K_{q-1}\} \leq \sqrt{m_{q}/(2K_{q})}, \quad \text{where} \quad 
m_{q}=\biggl \lfloor \frac{\log n_{q}}{3\log n_{q-1}}\biggr\rfloor.
\]

Let $H_{q}=F_{n_{q}}(q-1)$.
Both~\eqref{eq:spectral calculus condition} and~\eqref{eq:base graph assumption} hold
with $q_0=q$, $H=H_{q}$, $K_0=K_{q}$, $m_0=m_q$, $n_0=n_{q}$ and $X=\mathcal Y_{\geq s_{n_q}}$. 
So, by \autoref{lem:zigzag simple}, there exists 
a family of $n_{q}$-regular graphs $(F_{n}(q))_n$
that depends only on $H_q$ and $m_q$
such that $|V_{F_n(q)}|=(n_q)^n$ and
\[ \sup_{n\in\mathbb N} \gamma_{q}^+(F_n(q),\mathcal Y_{\geq s_{n_q}})\leq 2K_q.\]

Before continuing, observe that applying
\autoref{lem:from MN plus replacement}
on $(F_n(q))_n$ one gets the sequence $\mathcal E_q$ that satisfies~\eqref{eq:E_q bound}.

We now continue with the construction of $\mathcal E$.
Observe also that for any $p\in\{2,\ldots, q-1\}$
 both~\eqref{eq:spectral calculus condition} and~\eqref{eq:base graph assumption} also hold 
 for $q_0=p$, $H=H_q$, $K_0=K_p$, $m_0=m_q$, $n_0=n_q$, and
 $X=\mathcal Y_{\geq s_{n_p}}$.
 So by \autoref{lem:zigzag simple} the same sequence of graphs $(F_n(q))_n$ (which depends only on $H=H_q$ and $m_0=m_q$) also satisfies
 \[ \sup_n\gamma_p^+(F_n(q),\mathcal Y_{\geq s_{n_p}})\leq 2K_p.\]

Thus, by induction,~\eqref{eq:k1} holds for the sequence of expanders $(F_n(q))_{n,q}$.
Observe that \eqref{eq:spectral calculus condition}
implies~\eqref{eq:k2}. 
Hence, the conditions of \autoref{lem:zigzag main iteration} are met.
Furthermore, observe that by induction, the graphs $H$ and parameter $m_0$ in all applications of \autoref{lem:zigzag simple} are determined deterministically by only the sequences $(K_q)_q$, $(s_n)_n$ and the function $f(\cdot,\cdot)$
(as well as the family $(T_n)_n$ but this family is considered fixed in advance). 
Hence, by \autoref{lem:zigzag simple} so is the family $(F_n(q))_{n,q}$.

Therefore by \autoref{lem:zigzag main iteration} there exists a $d$-regular expander $\mathcal F$ for some constant $d$, uniquely determined by 
 $(K_q)_q$, $(s_n)_n$ and $f(\cdot)$,
such that for every $q\in\{2,3,\ldots,\}$,
\[ \gamma_q^+(\mathcal F,\mathcal Y_{\geq s_{n_q}})<\infty .  \]
Since $\mathcal F$ is a classical expander and by \autoref{def:size-partition} and Matou\v{s}ek's extrapolation~\eqref{eq:gamma+-extrapolation} for $\gamma^+$, 
\[ \gamma_q^+(\mathcal F,\mathcal Y_{< s_{n_q}})
\leq  \bigl (g(s_{n_q})^2\gamma_2^+(\mathcal F,\ell_2)\cdot(q^2+2^2)\bigr)^{q/2}<\infty.
\]
Again, by \autoref{def:size-partition} the above bounds can be merged, and so $\gamma_q^+(\mathcal F,\mathcal Y)<\infty$ for any $q\in\{2,3,\ldots\}$.
Lastly, by \autoref{lem:from MN plus replacement},
$\mathcal F$ can be converted deterministically 
into a family $\mathcal E$ of $3$-regular graphs such that for any $q\in\{2,3,\ldots,\}$,
$\gamma_q^+(\mathcal E,\mathcal Y)\lesssim d^{2q} \gamma_q(\mathcal F,\mathcal Y)<\infty$.
\end{proof}

\subsection{Proof of \autoref{thm:poincare-cone}}

Fix a universal constant $K\ge 1$ that will be used throughout the proof as a sufficiently large constant. .
Also, fix $\delta\ge 1$.
We define  $\tau=\delta\sigma_{\RRG}$.

\begin{definition}\label{def:F}
Let $\F$ be the family of all connected graphs $G$ that satisfy
\begin{compactenum}
\item \( {\diam(G)}{}\le K \girth(G),\) and
\item For every $S\subseteq \Sigma(G)$ with $|S|\le \sqrt{|V_G|}$ we have
\begin{equation}\label{eq:F-embed-L1}
c_1\bigl(\cone\bigl(S, \tfrac{2\pi}{\girth(G)}\cdot d_{\Sigma(G)}\bigr)\bigr)\le K,
\end{equation}
where $c_1(\cdot)=c_{L_1}(\cdot)$ is the distortion of embedding into $L_1$.
\end{compactenum}
\end{definition}

Define $U_\F$ to be the formal disjoint union of the one-dimensional simplicial complexes $\{\Sigma(G)\}_{G\in\F}$.
Define a metric $d_\F:U_\F\times U_\F\to[0,\infty)$ on $U_\F$ as follows.
\[
d_\F(x,y)=\begin{cases}
\frac{2\pi d_{\Sigma(G)}(x,y)}{\girth(G)} &
\text{if } \exists G\in\F \text{ s.t. } x,y\in\Sigma(G),\\
2\pi (K+1) & \text{otherwise}.
\end{cases}
\]
As proved in~\cite[\S~3.5]{MN-expanders2}, $d_\F$ is a CAT(1) metric.
By the same reasoning, $\delta d_{\F}$ is also a CAT(1) metric when $\delta\ge 1$.%
\footnote{Note that this statement is not necessarily true when $\delta<1$.}
Define
\begin{equation}\label{eq:def:X-delta}
\X_\delta= \cone(U_\F, \delta d_{\F}).
\end{equation}

The following is a small generalization of~\cite[Lemma~3.9]{MN-expanders2} from $\delta=1$ to $\delta\geq 1$.
\begin{lemma} \label{lem:X_delta-hadamard}
 $\X_\delta$ is Hadamard space, and for any $G\in \F$, and
\begin{equation}
c_{\X_\delta}(\Sigma(G),\min\{d_{\Sigma(G)},\diam(\Sigma(G))/\delta\}) \le {\pi (K+1)}.
\end{equation}
\end{lemma}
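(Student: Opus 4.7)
Here is how I would approach the proof:

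\textbf{Plan.} The two assertions split naturally: the Hadamard property of $\X_\delta$ is a direct invocation of Berestovski{\u\i}'s theorem, while the distortion estimate boils down to the standard cone-truncation lemma (\autoref{prop:cone-truncation}) combined with the comparison between the two truncation thresholds $\girth(G)/(2\delta)$ and $\diam(\Sigma(G))/\delta$, which is controlled by the $\F$-condition $\diam(G)\le K\girth(G)$.

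\textbf{Step 1: Hadamard.} The paragraph preceding the lemma recalls that $d_\F$ is CAT(1) on $U_\F$ and observes that the scaling $\delta d_\F$ remains CAT(1) whenever $\delta\ge 1$ (this uses that $\delta\ge 1$, so no pair of points in $U_\F$ gets pulled closer than $\pi$ into the ``short'' regime where CAT(1) could fail). By \autoref{thm:berestovskii}, $\cone(U_\F,\delta d_\F)=\X_\delta$ is then CAT(0). Since $\X_\delta$ is defined as the completion appearing in \autoref{def:cone}, it is complete, hence a Hadamard space.

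\textbf{Step 2: the embedding $f(x)=(1,x)$.} Fix $G\in \F$ and embed $\Sigma(G)$ into $U_\F$ in the canonical way, then into $\X_\delta$ via $x\mapsto(1,x)$. By \autoref{prop:cone-truncation}, this map realizes the truncated base metric $\min\{\pi,\delta d_\F\}$ with bi-Lipschitz constants $\tfrac{2}{\pi}$ and $1$ (see~\eqref{eq:cone-approx-homogene}). On $\Sigma(G)$ we have $\delta d_\F(x,y)=s\,d_{\Sigma(G)}(x,y)$ where $s=\frac{2\pi\delta}{\girth(G)}$, so
\[
\min\bigl\{\pi,\delta d_\F(x,y)\bigr\}=s\cdot\min\!\Bigl\{d_{\Sigma(G)}(x,y),\tfrac{\girth(G)}{2\delta}\Bigr\}.
\]

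\textbf{Step 3: comparing the thresholds.} The target metric uses the threshold $\diam(\Sigma(G))/\delta$. Since $\girth(G)\le 2\diam(G)+1$ and $\diam(G)\le\diam(\Sigma(G))\le\diam(G)+1$, the condition $\diam(G)\le K\girth(G)$ gives $\girth(G)/2\le\diam(\Sigma(G))\le (K+1)\girth(G)$. Writing $D(x,y)=\min\{d_{\Sigma(G)}(x,y),\girth(G)/(2\delta)\}$ and $\tilde d(x,y)=\min\{d_{\Sigma(G)}(x,y),\diam(\Sigma(G))/\delta\}$ (the metric we want to embed), a two-case check (according to whether $d_{\Sigma(G)}(x,y)<\girth(G)/(2\delta)$ or not) yields
\[
D(x,y)\le \tilde d(x,y)\le 2(K+1)\,D(x,y).
\]
Combining with the bounds from \autoref{prop:cone-truncation} gives
\[
\frac{2s}{\pi\cdot 2(K+1)}\,\tilde d(x,y)
\;\le\; d_{\X_\delta}\!\bigl(f(x),f(y)\bigr)
\;\le\; s\cdot \tilde d(x,y),
\]
so the distortion of $f$ is at most $\pi(K+1)$, as required.

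\textbf{Main obstacle.} The only genuine point of care is the comparison in Step 3; everything else is bookkeeping or a direct appeal to earlier results (Berestovski{\u\i}'s theorem and the cone-truncation distortion $\pi/2$). In particular, the threshold mismatch between ``$\girth$'' (dictated by the CAT(1) structure on $U_\F$, hence by the range $[0,\pi]$ inside the cone) and ``$\diam(\Sigma(G))$'' (which is what the truncation in the conclusion uses) is exactly absorbed by the constant $K$ in \autoref{def:F}, so the quoted distortion bound $\pi(K+1)$ is clean.
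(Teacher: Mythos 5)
Your proof is correct and follows essentially the same route as the paper's: Berestovski{\u\i}'s theorem for the Hadamard property, the map $x\mapsto(1,x)$ with \autoref{prop:cone-truncation} for the embedding, and the $\F$-condition $\diam(G)\le K\girth(G)$ together with $\diam(\Sigma(G))\ge\girth(G)/2$ to compare the two truncation thresholds (the paper phrases this as sandwiching the rescaled target metric between $\min\{\delta d_\F,\pi\}$ and $2(K+1)\min\{\delta d_\F,\pi\}$, which is the same computation). The only quibble is that your cited justification $\girth(G)\le 2\diam(G)+1$ gives $\diam(\Sigma(G))\ge(\girth(G)-1)/2$ rather than $\girth(G)/2$; the sharper bound (which both you and the paper actually need) follows instead from noting that two antipodal points of a shortest cycle are at distance exactly $\girth(G)/2$ in $\Sigma(G)$.
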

\begin{proof}
As mentioned above, $(U_\F,\delta d_\F)$ is a complete CAT(1) space, so by \autoref{thm:berestovskii}, $\X_\delta$ is a Hadamard space.
The metric $\min\{d_{\Sigma(G)}, \diam(\Sigma(G))/\delta\}$ is equivalent to the metric
$ \min\bigl\{\frac{\delta 2\pi}{\girth(G)}d_{\Sigma(G)},
\frac{2\pi \diam(\Sigma(G))}{\girth(G)}\bigr\}$ up to scaling.
Bounding from below:
\[
\min\Bigl\{\frac{\delta 2\pi}{\girth(G)}d_{\Sigma(G)},
\frac{2\pi \diam(\Sigma(G))}{\girth(G)}\Bigr\}
\ge \min\Bigl\{\frac{\delta 2\pi}{\girth(G)}d_{\Sigma(G)}, \pi
\Bigr\} \ge \min\{\delta d_\F, \pi\}.
\]
Bounding from above:
\begin{equation*}
\min\Bigl\{\frac{\delta 2\pi}{\girth(G)}d_{\Sigma(G)},
\frac{2\pi \diam(\Sigma(G))}{\girth(G)}\Bigr\}
\le  \min \{\delta d_\F , 2\pi (K+1)\} \le 2(K+1) \min\{\delta d_\F,\pi\}.
\end{equation*}
Application of \autoref{prop:cone-truncation} on $\delta d_\F$, concludes the proof.
\end{proof}

In~\eqref{eq:F-embed-L1} we require a bi-Lipschitz embedding of the $\cone(S,d_\F)$ in $L_1$, where in fact here we will need a bi-Lipschitz embedding of $\cone(S,\delta d_{\F})$.
We chose the ``wrong'' definition to be compatible with~\cite{MN-expanders2} so we can quote (in the proof of \autoref{prop:rrg-structure}) the embedding property from~\cite[Lemma~3.12]{MN-expanders2} verbatim.
However, \autoref{cor:blow-out-cone-in-cone} shows that $\F$ in fact satisfies the ``right'' requirement:

The next proposition describes the structure of the Euclidean cone over scaled random regular graph as being made of two overlapping Euclidean cones:
One that is embeddable in $L_1$, and the other is over (approximate) high girth graph and therefore CAT(0).
The statement (except for \autoref{it:VI} below) is just a slight generalization of \cite[Proposition~3.15]{MN-expanders2}.

\begin{proposition} \label{prop:rrg-structure}
For every integer $d\ge 3$ there exists $C'(d)>0$ such that for every $n\in \mathbb N$, if\/ $\RRG\sim \mathcal{G}_{m,d}$ then with probability at least $1-C'(d)/\sqrt[3]{m}$ the graph $\RRG$
is connected and there exists $\sigma_\RRG\in(0,\infty)$ and two
subsets $A_1,A_2\subseteq \Sigma(\RRG)$ such that for every $\delta\ge 1$ the following assertions hold true.
\begin{compactenum}[(I)]
\item \label{it:I} $A_1\cup A_2=\Sigma(\RRG)$.
\item \label{it:II} $d_{\Sigma(\RRG)}(A_1\setminus A_2,A_2\setminus A_1)\gtrsim 1/\sigma_\RRG$.
\item \label{it:III}
$c_1(\cone(A_1,\delta\sigma_{\RRG} d_{\Sigma(\RRG)}))\leq K$.
\item \label{it:IV}
$c_{\X_\delta}(\cone(A_2,\delta\sigma_\RRG d_{\Sigma(\RRG)}))\leq K$,
where $\X_{\delta}$ is as
in~\eqref{eq:def:X-delta}.
\item \label{it:V}
$c_{\cone(\Sigma(\RRG),\delta \sigma_{\RRG} d_{\Sigma(\RRG)})}(\Sigma(\RRG),\min\{d_{\Sigma(\RRG)}, \diam(\Sigma(\RRG))/\delta\})\leq K.$
\item \label{it:VI}
$\frac{1}{\sigma_\RRG}\asymp \diam(\Sigma(\RRG))$.
\end{compactenum}
\end{proposition}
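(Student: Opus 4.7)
The plan is to adapt the proof of \cite[Proposition~3.15]{MN-expanders2} and show that its construction of $A_1, A_2$ and $\sigma_\RRG$ works uniformly over all $\delta \ge 1$, not just $\delta=1$. The starting point is the known structural fact that, with probability $1-O(1/\sqrt[3]{m})$, the random $d$-regular graph $\RRG$ is connected, has diameter comparable to $\log_{d-1} m$, and becomes a high-girth graph after deleting a small ``bad'' set of edges. Concretely, I would let $B\subset E(\RRG)$ be the set of edges whose removal produces a subgraph $\RRG'$ with $\girth(\RRG') \gtrsim \log_{d-1} m$ (this is the classical probabilistic input from \cite{MN-expanders2}), and then set $\sigma_\RRG \eqdef 2\pi/\girth(\RRG')$. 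With this choice, item \ref{it:VI} will be immediate, since random regular graphs have diameter $\asymp \log_{d-1} m \asymp 1/\sigma_\RRG$ with high probability.

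Next, I would define $A_2 \eqdef \Sigma(\RRG')$ (the ``good,'' almost high-girth part) and $A_1$ as a suitable thickening of $\Sigma(\RRG) \setminus \Sigma(\RRG')$, i.e., the union of the bad edges together with small neighborhoods of their endpoints in $\Sigma(\RRG)$; the thickening is chosen so that $A_1 \cup A_2 = \Sigma(\RRG)$ (item \ref{it:I}) and the sets $A_1 \setminus A_2$ and $A_2 \setminus A_1$ are $d_{\Sigma(\RRG)}$-separated by $\gtrsim 1/\sigma_\RRG$ (item \ref{it:II}). Item \ref{it:V} then follows from \autoref{cor:blow-out-cone-in-cone} applied with $X = \Sigma(\RRG)$, combined with the observation that $\diam(\Sigma(\RRG))/\delta \le \diam(\Sigma(\RRG))$ and that the truncation at $\pi$ inside the cone is what drives the bi-Lipschitz estimate; the constant $K$ absorbs the universal factor $\pi/2$ from \autoref{prop:cone-truncation} and the scaling.

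For item \ref{it:IV}, I would verify that $\RRG'$ (or the combinatorial graph underlying $A_2$) belongs to the family $\F$ of \autoref{def:F}. The first clause $\diam \lesssim \girth$ holds by construction of $\RRG'$. The second clause, $L_1$-embeddability of small cones over subsets of $\Sigma(\RRG')$, is exactly the content of \cite[Lemma~3.12]{MN-expanders2}, which controls small subsets of $\Sigma$ of a high-girth graph. Once $\RRG' \in \F$, the canonical inclusion identifies $(A_2, \sigma_\RRG d_{\Sigma(\RRG)})$ isometrically with a subset of $(U_\F, d_\F)$, and then scaling by $\delta \ge 1$ gives an isometric inclusion into $(U_\F, \delta d_\F)$; applying the functor $\cone(\cdot)$ produces the required embedding into $\X_\delta = \cone(U_\F, \delta d_\F)$ with distortion $\le K$. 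Crucially, this step is where the requirement $\delta \ge 1$ is used essentially: scaling a CAT(1) metric by a factor $\ge 1$ preserves the CAT(1) property, whereas shrinking could destroy it.

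Item \ref{it:III} is the main obstacle, because the $L_1$-embedding of $\cone(A_1, \delta \sigma_\RRG d_{\Sigma(\RRG)})$ must hold with a \emph{uniform} constant $K$ across all $\delta \ge 1$. The approach is to argue that $A_1$ decomposes as an $\ell_1$-union (in the sense of \autoref{def:orthogonal-union}) of local ``gadgets'' around the bad edges, each of which is a short path or tree-like structure of bounded combinatorial complexity; such gadgets embed isometrically in $L_1$ after the $\cone(\cdot)$ construction by \autoref{lem:cone(Lp) in Lp}, and this behavior is preserved under the $\ell_1$-union by \autoref{lem:ell1-union}. The rescaling by $\delta \ge 1$ is handled by \autoref{cor:blow-out-cone-in-cone} (with $p=1$), which produces a uniform distortion bound independent of $\delta$; this is the place where the whole setup was engineered to scale correctly, and it is precisely what upgrades the $\delta = 1$ version of the proposition from \cite{MN-expanders2} to the uniform-in-$\delta$ version needed here.
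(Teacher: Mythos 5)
Your proposal follows essentially the same route as the paper's proof: both reduce to the $\delta=1$ construction of \cite[Proposition~3.15]{MN-expanders2} (the edge-deleted almost-high-girth subgraph, the choice $\sigma_\RRG\asymp 1/\girth$ of the pruned graph, and the decomposition into an $L_1$-embeddable part $A_1$ and a part $A_2$ mapping into $U_\F$), and both obtain uniformity in $\delta\ge 1$ by the same two mechanisms: \autoref{cor:blow-out-cone-in-cone} for item~(\ref{it:III}), and for item~(\ref{it:IV}) the observation that scaling a CAT(1) metric by $\delta\ge1$ is harmless, so one can simply multiply the two-sided comparison between $\sigma_\RRG d_{\Sigma(\RRG)}$ on $A_2$ and $d_\F$ by $\delta$. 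The paper quotes the $\delta=1$ construction as a black box rather than re-deriving the thickening and the gadget decomposition, but that is a presentational difference. Two corrections. First, in item~(\ref{it:IV}) the map from $(A_2,\sigma_\RRG d_{\Sigma(\RRG)})$ into $(U_\F,d_\F)$ is only bi-Lipschitz with a universal constant ($3$ in the paper's quotation of inequality~(203) of~\cite{MN-expanders2}), not isometric: the ambient metric $d_{\Sigma(\RRG)}$ restricted to $A_2$ can be strictly smaller than the intrinsic metric of the pruned complex because of shortcuts through deleted edges. This does not affect the conclusion. Second, for item~(\ref{it:V}) \autoref{cor:blow-out-cone-in-cone} is not the relevant tool---it bounds $c_p$ of a rescaled cone, not the distortion of a truncated metric inside a cone---and the inequality $\diam(\Sigma(\RRG))/\delta\le\diam(\Sigma(\RRG))$ plays no role. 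What makes item~(\ref{it:V}) work is \autoref{prop:cone-truncation} combined with item~(\ref{it:VI}): after rescaling by $\delta\sigma_\RRG$, the truncation level $\diam(\Sigma(\RRG))/\delta$ becomes $\sigma_\RRG\diam(\Sigma(\RRG))\asymp 1\asymp\pi$, so $\min\{d_{\Sigma(\RRG)},\diam(\Sigma(\RRG))/\delta\}$ is bi-Lipschitz equivalent to $\min\{\delta\sigma_\RRG d_{\Sigma(\RRG)},\pi\}$, which embeds into $\cone(\Sigma(\RRG),\delta\sigma_\RRG d_{\Sigma(\RRG)})$ with distortion $\pi/2$.
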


\begin{proof}
The statement of \cite[Proposition~3.15]{MN-expanders2}
is a special case of
\autoref{prop:rrg-structure} (without item~(\ref{it:VI})) when $\delta=1$.
Fortunately, we will not need to repeat the whole proof of
\cite[Proposition~3.15]{MN-expanders2}.
Instead, we will see that the same choice of
$A_1,A_2\subseteq \Sigma(\RRG)$ and $\sigma=\sigma_\RRG>0$  in the proof of \cite[Proposition~3.15]{MN-expanders2} which satisfies
items~(\ref{it:I})--(\ref{it:V}) for $\delta=1$, will also satisfy them and item~(\ref{it:VI}) for any $\delta\ge 1$.

We begin with item~(\ref{it:VI}):
As asserted in formula~(204) in~\cite{MN-expanders2},
$\sigma_\RRG\asymp 1/\log_dm$. But
with probability at least $1-C_d/{m^{1/3}}$,
$\diam(\Sigma(\RRG))\asymp \log_d m$.
Indeed, the lower bound holds for any graph in the support of $\mathcal G_{m,d}$, by a simple volume argument.
The upper is a well-known bound that holds with probability
at least $1-C_d/{m^{1/3}}$.
See for example formula~(197) in~\cite{MN-expanders2} where
a rough bound of $\diam(\RRG) \lesssim \log _d m$ is proved.%
\footnote{The actual statement in~\cite[formula~(197)]{MN-expanders2} is $\diam(L) \lesssim \log_d m$, where $L=(V_L,E_L)$ is a graph with $V_L=V_{\RRG}$ and $E_L\subset E_{\RRG}$ and therefore $\diam(\RRG)\le \diam(L)$.}
A precise estimate of $\diam(\RRG)$ is proved in~\cite{Bol-rrg-diameter}.

Item~(\ref{it:I}) and item~(\ref{it:II}) do not depend on $\delta$ and therefore are proven in~\cite{MN-expanders2} as is.

Item~(\ref{it:III}):
Since item~(\ref{it:III}) is already proved for $\delta=1$ in
\cite[Proposition~3.15]{MN-expanders2}, the extension for $\delta\ge 1$ is just an application of~\eqref{eq:blow-out-cone-in-cone}.

Item~(\ref{it:IV}):
Here we use \emph{the proof} of \autoref{it:IV} in
\cite[Proposition~3.15]{MN-expanders2}.
It is proved (in Inequality~(203)) there that
there is a canonical embedding
$\phi:(A_2,\sigma_\RRG d_{\Sigma(\RRG)}) \to (U_\F,d_\F)$ such that
\[
\sigma_{\RRG} d_{\Sigma(\RRG)}(x,y)\le d_\F(\phi(x),\phi(y)) \le 3
\sigma_{\RRG} d_{\Sigma(\RRG)}(x,y).
\]
Obviously, this is equivalent to
\[
\delta \sigma_{\RRG} d_{\Sigma(\RRG)}(x,y)\le \delta d_\F(\phi(x),\phi(y)) \le 3
\delta \sigma_{\RRG} d_{\Sigma(\RRG)}(x,y).
\]
Thus, by~\cite[Fact~3.5]{MN-expanders2},
\begin{equation}
c_{\X_\delta}(\cone(A_2, \delta\sigma_{\RRG} d_{\Sigma(X)}))  = c_{\cone(U_\F,\delta d_\F)}(\cone(A_2, \delta\sigma_{\RRG} d_{\Sigma(X)}))\leq 3.
\end{equation}

Item~(\ref{it:V}):
The metric $\min\{d_{\Sigma(\RRG)},\diam(\Sigma(\RRG))/\delta\}$
is isometric up to rescaling to
\begin{equation}\label{eq:some-metric}
\min\{\delta\sigma_{\RRG} d_{\Sigma(\RRG)}, \sigma_{\RRG}\diam(\Sigma(\RRG))\}.
\end{equation}
By item~(\ref{it:VI}), $\sigma_{\RRG}\diam(\Sigma(\RRG))\asymp 1$.
Therefore, the metric~\eqref{eq:some-metric} is bi-Lipschitz equivalent to the metric
$\min\{\delta\sigma_{\RRG} d_{\Sigma(\RRG)}, \pi\}$.
Therefore, by~\eqref{eq:truncation in cone},
\begin{multline*}
c_{\cone(\Sigma(\RRG),\delta \sigma_{\RRG} d_{\Sigma(\RRG)})}(\Sigma(\RRG),\min\{d_{\Sigma(\RRG)}, \diam(\Sigma(\RRG))/\delta\})
\\ \asymp
c_{\cone(\Sigma(\RRG),\delta \sigma_{\RRG} d_{\Sigma(\RRG)})}
(\Sigma(\RRG),\min\{\delta\sigma_{\RRG} d_{\Sigma(\RRG)}, \pi\})
\stackrel{\eqref{eq:truncation in cone}}\leq \pi/2.  
 \quad\qed
\end{multline*}
\renewcommand{\qedsymbol}{}
\end{proof}

\medskip

The next lemma combines the Poincar\'e inequality with respect to $\cone(A_2,\delta\sigma_{\RRG} d_{\Sigma(\RRG)})$ and
the Poincar\'e inequality with respect to
$\cone(A_1,\delta\sigma_{\RRG} d_{\Sigma(\RRG)})$,
to a Poincar\'e inequality with respect to their union, $\cone(\Sigma(\RRG),\delta\sigma_{\RRG} d_{\Sigma(\RRG)})$.
The statement of the following lemma is a slight generalization of
\cite[Lemma~3.16]{MN-expanders2} where it is stated only for $q=2$.

\begin{lemma} 
\label{lem:gamma-2-union}
Fix $q\in[1,\infty)$ and $\beta \in (0,\pi]$ and $n\in \mathbb N$.
Let $(X,d_X)$ be a metric space and suppose that $A,B\subseteq X$ satisfy $A\cup B=X$ and
\(
d_X(A\setminus B, B\setminus A)\ge \beta.
\)
Then,  every regular graph $G$  satisfies
\[
\gamma_q^+(G, \cone(X)) \lesssim \frac{\gamma_q^+(G,\cone(A))+\gamma_q^+(G,\cone(B))}{\beta^{2q}}.
\]
\end{lemma}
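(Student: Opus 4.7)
The plan is to reduce $\gamma_q^+(G, \cone(X))$ to $\gamma_q^+(G, \cone(A))$ and $\gamma_q^+(G, \cone(B))$ via a smooth partition-of-unity projection. Set $\psi_A(x) \eqdef \min\{1, d_X(x, B \setminus A)/\beta\}$ and $\psi_B \eqdef 1 - \psi_A$: both are $(1/\beta)$-Lipschitz, and the hypothesis $d_X(A\setminus B, B\setminus A) \ge \beta$ forces $\psi_A \equiv 1$ on $A \setminus B$ and $\psi_A \equiv 0$ on $B \setminus A$. Define \emph{partial radial projections} $\chi_A, \chi_B : \cone(X) \to \cone(A), \cone(B)$ by $\chi_A(s,x) \eqdef (s\psi_A(x), x)$ and $\chi_B(s,x) \eqdef (s\psi_B(x), x)$, identifying a vanishing modulus with the cusp; then $\chi_A(\cone(X)) \subseteq \cone(A)$ since $\psi_A \equiv 0$ off $A$, and similarly for $\chi_B$.

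The heart of the argument is two matching pointwise inequalities. Given $z = (s, x), z' = (t, y) \in \cone(X)$, write $\theta \eqdef \min\{\pi, d_X(x, y)\}$, $a \eqdef \psi_A(x)$, $b \eqdef \psi_A(y)$. Expanding~\eqref{eq:def cone} and using the algebraic identities $a^2 + (1-a)^2 = \tfrac12((2a-1)^2+1)$ and $ab+(1-a)(1-b) = \tfrac12((2a-1)(2b-1)+1)$ yields
\[
d_{\cone(A)}(\chi_A z, \chi_A z')^2 + d_{\cone(B)}(\chi_B z, \chi_B z')^2 = \tfrac12 d_{\cone(X)}(z, z')^2 + \tfrac12 E,
\]
where $E \eqdef ((2a-1)s - (2b-1)t)^2 + 2(2a-1)(2b-1) st (1 - \cos\theta)$. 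A short case analysis on the signs of $(2a-1)(2b-1)$ and $\cos\theta$ gives $E \ge 0$ (when $(2a-1)(2b-1) < 0$ and $\cos\theta < 0$, one has $E \ge ((2a-1)s - |2b-1|t)^2 \ge 0$). This yields the upper bound $d_{\cone(X)}(z,z')^2 \le 2[d_{\cone(A)}(\chi_A z, \chi_A z')^2 + d_{\cone(B)}(\chi_B z, \chi_B z')^2]$, which $(x+y)^{q/2} \lesssim_q x^{q/2} + y^{q/2}$ promotes to the $q$-th power version.

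For the reverse direction, decompose $(2a-1)s - (2b-1)t = (2a-1)(s-t) + 2(a-b)t$ (and its symmetric variant) to obtain $((2a-1)s - (2b-1)t)^2 \le 2(s-t)^2 + 8(a-b)^2 \min(s,t)^2 \le 2(s-t)^2 + 8(a-b)^2 st$. Combine this with the Lipschitz bound $(a-b)^2 \le \theta^2/\beta^2$, with $(s-t)^2 \le d_{\cone(X)}(z,z')^2$, with $2st(1-\cos\theta) \le d_{\cone(X)}(z,z')^2$, and with the sharp estimate $d_{\cone(X)}(z,z')^2 \ge 4st\theta^2/\pi^2$ (from~\eqref{eq:cone-approx-homogene}) to conclude
\[
d_{\cone(A)}(\chi_A z, \chi_A z')^2 + d_{\cone(B)}(\chi_B z, \chi_B z')^2 \lesssim \beta^{-2} d_{\cone(X)}(z, z')^2.
\]
An $\ell_2$-vs.-$\ell_q$ comparison then yields $d_{\cone(A)}(\chi_A z, \chi_A z')^q + d_{\cone(B)}(\chi_B z, \chi_B z')^q \lesssim_q \beta^{-2q} d_{\cone(X)}(z, z')^q$.

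To finish, given $f, g : V \to \cone(X)$, summing the upper pointwise bound over $V \times V$, applying the $q$-Poincar\'e inequality for $\cone(A)$ to $(\chi_A \circ f, \chi_A \circ g)$ and for $\cone(B)$ to $(\chi_B \circ f, \chi_B \circ g)$, and controlling the resulting edge sums via the reverse pointwise bound yields the claim. The main obstacle I expect to be the reverse pointwise inequality: a naive hard cusp-cutting projection would fail because a point $z' \in \cone(A \cap B)$ whose argument is close to $B \setminus A$ sits at full-modulus distance from the cusp in $\cone(A)$, which can be arbitrarily larger than $d_{\cone(X)}(z, z')$ for a nearby $z$ with $\arg z \in B \setminus A$. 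The smooth $(1/\beta)$-Lipschitz radial shrinkage by $\psi_A$ precisely averts this — forcing $|\chi_A z'| \le |z'| \cdot d_X(\arg z', B \setminus A)/\beta$ — and converts the $\beta$-separation hypothesis into the required $\beta^{-2}$ amplification.
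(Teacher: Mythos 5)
Your argument is correct, and it is worth noting that it is genuinely self-contained, whereas the paper's ``proof'' of this lemma is only an outline that defers to \cite[Lemma~3.16 and Lemma~5.8]{MN-expanders2} and asserts that the passage from $q=2$ to general $q$ is routine. I verified the two pointwise inequalities at the heart of your proposal. For the identity, writing $u=(2a-1)s$ and $v=(2b-1)t$ one gets $E=u^2+v^2-2uv\cos\theta\ge (|u|-|v|)^2\ge 0$ directly by Cauchy--Schwarz, so the lower bound $d_{\cone(A)}(\chi_Az,\chi_Az')^2+d_{\cone(B)}(\chi_Bz,\chi_Bz')^2\ge\tfrac12 d_{\cone(X)}(z,z')^2$ holds with no case analysis needed. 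For the reverse direction, the chain $(a-b)^2\le\min\{1,d_X(x,y)/\beta\}^2\le\theta^2/\beta^2$ (this is where $\beta\le\pi$ enters), together with $st\theta^2\le\tfrac{\pi^2}{4}d_{\cone(X)}(z,z')^2$ from~\eqref{eq:cone-approx-homogene}, is exactly right, and the formula $d_{\cone(A)}(\chi_Az,\chi_Az')^2=(as)^2+(bt)^2-2(as)(bt)\cos\theta$ remains valid when one or both images degenerate to the cusp, so no separate cusp cases are needed. The combination with the defining inequality of $\gamma_q^+$ applied to the pairs $(\chi_A\circ f,\chi_A\circ g)$ and $(\chi_B\circ f,\chi_B\circ g)$ then closes the argument. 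One pleasant byproduct: since $d_{\cone(A)}^2+d_{\cone(B)}^2\lesssim\beta^{-2}d_{\cone(X)}^2$ pointwise, raising to the power $q/2$ gives $d_{\cone(A)}^q+d_{\cone(B)}^q\lesssim_q\beta^{-q}d_{\cone(X)}^q$, so your route in fact yields the stronger conclusion $\gamma_q^+(G,\cone(X))\lesssim_q\beta^{-q}\bigl(\gamma_q^+(G,\cone(A))+\gamma_q^+(G,\cone(B))\bigr)$, which of course implies the $\beta^{-2q}$ bound stated in the lemma. The only cosmetic point is that your final displayed exponent $\beta^{-2q}$ undersells your own estimate.
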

\begin{proof}[Outline of a proof]
The proof of \cite[Lemma~3.16]{MN-expanders2} uses
\cite[Lemma~5.8]{MN-expanders2}.
This lemma is extended in a straightforward way fro power $q=2$ to general powers $q\in[1,\infty)$.
In order to apply the generalization of \cite[Eq.~(129)]{MN-expanders2} to powers $q\in[1,\infty)$, it suffice to use the simple bound $\sqrt{a^2+b^2}\leq 2(a^q+b^q)^{1/q}$, for any $q\in[1,\infty)$ and therefore \cite[Eq.~(129)]{MN-expanders2}
implies a similar inequality where on the left-hand side there is an $\ell_q$ distance instead of $\ell_2$ and on the right hand side in the denominator, $\kappa$ is replaced by $2\kappa$.
\end{proof}

We also need the following simpler variant of \autoref{lem:gamma-2-union}, as stated (for $q=2$) in \cite[Lemma~4.2]{MN-expanders2}:
\begin{lemma} 
\label{lem:gamma-2-union-simple}
Let $X$ be a metric space and $q\in[1,\infty)$.
Suppose $\{B_i\}_{i\in I}$
are subsets of $X$ such that
\[ X=\bigcup_{I}B_i \quad \text{and}\quad i\ne j \Rightarrow d_X(B_i,B_j)\ge \pi.\]
Then, for any regular graph $G$,
\[
\gamma_q^+(G,\cone(X))\le 2^{q-1} \sup_I \gamma_q^+(G,\cone(B_i)).
\]
\end{lemma}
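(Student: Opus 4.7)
The plan is to exploit the fact that when the pieces $B_i$ lie pairwise at distance at least $\pi$ in $X$, the cone $\cone(X)$ becomes (isometrically) the $\ell_1$-gluing of the cones $\cone(B_i)$ at their common cusp. Indeed, for $x\in B_i$ and $y\in B_j$ with $i\neq j$, we have $\min\{\pi,d_X(x,y)\}=\pi$, so by~\eqref{eq:def cone}, for any $s,t>0$,
\[
d_{\cone(X)}\bigl((s,x),(t,y)\bigr)^2 = s^2+t^2+2st = (s+t)^2,
\]
which is exactly the distance one obtains by routing through the cusp. In particular the $B_i$ are pairwise disjoint, so every $z\in \cone(X)\setminus\{o\}$ lies in a unique $\cone(B_{i(z)})\setminus\{o\}$ (here $o$ denotes the cusp).

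Given regular $G=(V,E)$ and $f,g:V\to \cone(X)$, define for each $i\in I$ the projection $f_i,g_i:V\to \cone(B_i)$ by $f_i(v)=f(v)$ when $f(v)\in \cone(B_i)\setminus\{o\}$ and $f_i(v)=o$ otherwise (and analogously for $g_i$). The first step will be to prove the following two pointwise inequalities, which hold for every $(u,v)\in V^2$:
\begin{align}
 \sum_{i\in I} d_{\cone(B_i)}\bigl(f_i(u),g_i(v)\bigr)^q &\le d_{\cone(X)}\bigl(f(u),g(v)\bigr)^q,  \label{eq:plan-lb}\\
d_{\cone(X)}\bigl(f(u),g(v)\bigr)^q &\le 2^{q-1}\sum_{i\in I} d_{\cone(B_i)}\bigl(f_i(u),g_i(v)\bigr)^q. \label{eq:plan-ub}
\end{align}
Both will follow from a short case analysis based on the indices $i(f(u))$ and $i(g(v))$: if these indices agree (or one of the points is the cusp), then exactly one summand on the left is nonzero and equals $d_{\cone(X)}(f(u),g(v))$; otherwise exactly two summands are nonzero, and their values are $|f(u)|$ and $|g(v)|$, whose sum equals $d_{\cone(X)}(f(u),g(v))$ by the identity displayed above. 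The bound~\eqref{eq:plan-lb} then reduces to $a^q+b^q\le (a+b)^q$ for $a,b\ge 0$ and $q\ge 1$, and~\eqref{eq:plan-ub} reduces to $(a+b)^q\le 2^{q-1}(a^q+b^q)$.

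Once~\eqref{eq:plan-lb} and~\eqref{eq:plan-ub} are in hand, the conclusion follows by straightforward summation:
\begin{align*}
\frac{1}{|V|^2}\sum_{u,v\in V} d_{\cone(X)}\bigl(f(u),g(v)\bigr)^q &\stackrel{\eqref{eq:plan-ub}}{\le} 2^{q-1}\sum_{i\in I}\frac{1}{|V|^2}\sum_{u,v\in V} d_{\cone(B_i)}\bigl(f_i(u),g_i(v)\bigr)^q \\
&\le 2^{q-1}\sup_{i\in I}\gamma_q^+\bigl(G,\cone(B_i)\bigr)\cdot\sum_{i\in I}\frac{1}{|E|}\sum_{(u,v)\in E}d_{\cone(B_i)}\bigl(f_i(u),g_i(v)\bigr)^q \\
&\stackrel{\eqref{eq:plan-lb}}{\le} 2^{q-1}\sup_{i\in I}\gamma_q^+\bigl(G,\cone(B_i)\bigr)\cdot\frac{1}{|E|}\sum_{(u,v)\in E} d_{\cone(X)}\bigl(f(u),g(v)\bigr)^q.
\end{align*}
Taking the infimum over $f,g$ yields the claim. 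There is no substantial obstacle here; the only slightly delicate point is the case analysis that underlies~\eqref{eq:plan-lb} and~\eqref{eq:plan-ub}, and in particular checking that nothing goes wrong when $f(u)$ or $g(v)$ equals the cusp.
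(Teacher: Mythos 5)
Your proof is correct and is essentially the argument the paper has in mind: the identity $d_{\cone(X)}\bigl((s,x),(t,y)\bigr)=s+t$ for $x,y$ in distinct pieces is exactly the $\ell_1$-gluing statement of \autoref{lem:ell1-union}(ii), and your pointwise two-sided comparison together with $(a+b)^q\le 2^{q-1}(a^q+b^q)$ is precisely the ``straightforward extension to general $q\in[1,\infty)$'' of the $q=2$ case from the cited Lemma~4.2 of \cite{MN-expanders2} that the paper invokes without writing out. Nothing is missing; the only point worth a half-sentence in a final write-up is that a non-cusp point of the completion $\cone(X)$ lies in the closure of a unique $(0,\infty)\times B_i$ because distinct pieces of the cone away from the cusp stay uniformly separated.
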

Again, the proof \autoref{lem:gamma-2-union-simple} is a straightforward extension of the proof of
\cite[Lemma~4.2]{MN-expanders2} to general $q\in[1,\infty)$.

The following lemma constructs an expander with respect to $\cone(A_2,\delta\sigma_\RRG d_{\Sigma(\RRG)})$, 
as defined in \autoref{prop:rrg-structure}.

\begin{lemma} \label{lem:X_delta-expander}
There exists a family of\/ $3$-regular expander graphs
$\mathcal E$ that for any $q\in\{2,3,\ldots\}$ there exists
$\Gamma_q>0$ such that for any
 $\delta\ge 1$, 
\(
\gamma_q^+(\mathcal E, \X_\delta)\lesssim \Gamma_q,
\)
where $\X_\delta$ is defined in~\eqref{eq:def:X-delta}.
Furthermore, there exists $\mathcal E_q$ satisfying~\eqref{eq:E_q bound}, where $\mathcal X_\delta\subseteq \mathcal Y$ for any $\delta\geq 1$.
\end{lemma}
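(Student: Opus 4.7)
The plan is to apply the abstract zigzag iteration \autoref{lem:bootstraping-main-zigzag-iteration} to a class $\mathcal{Y}$ of Hadamard spaces built from the individual cone ``pieces'' of $\X_\delta$. Since any two distinct graphs $G \in \F$ are separated in $\delta d_\F$ by at least $2\pi(K+1)\delta \ge 2\pi > \pi$, \autoref{lem:ell1-union}(ii) identifies $\X_\delta = \cone(U_\F, \delta d_\F)$ with the $\ell_1$-union (over cusps) of the family $\{\cone(\Sigma(G), \delta d_\F|_{\Sigma(G)})\}_{G \in \F}$, and \autoref{lem:gamma-2-union-simple} then yields, for every regular graph $G'$ and every $\delta \ge 1$,
\[
\gamma_q^+(G', \X_\delta) \le 2^{q-1} \sup_{G \in \F} \gamma_q^+\!\bigl(G', \cone(\Sigma(G), \delta d_\F|_{\Sigma(G)})\bigr).
\]
It therefore suffices to exhibit $\mathcal{E}$ for which the right-hand side is bounded uniformly over $G \in \F$ and $\delta \ge 1$.

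Define $\mathcal{Y} := \{\cone(\Sigma(G), \delta d_\F|_{\Sigma(G)}) : G \in \F,\ \delta \ge 1\}$. Each member is Hadamard: the base $(\Sigma(G), \delta d_\F|_{\Sigma(G)})$ is CAT(1) (the $\delta=1$ case underlies \autoref{lem:X_delta-hadamard} via the argument of \cite{MN-expanders2}, and rescaling by $\delta \ge 1$ only strengthens the curvature bound), so \autoref{thm:berestovskii} makes the cone CAT(0). By \autoref{lem:q-speactral-calculus}, $\mathcal{Y}$ satisfies the spectral calculus condition with uniform constant $K_q$ for every $q \in \{2, 3, \ldots\}$. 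For the abstract partition of \autoref{def:size-partition}, set $s_n := n$, $g(s) := C\log(s+2)$, and
\[
\mathcal{Y}_{<s} := \bigl\{\cone(\Sigma(G), \delta d_\F|_{\Sigma(G)}) \in \mathcal{Y} : |V_G| < s^2\bigr\}, \qquad \mathcal{Y}_{\ge s} := \mathcal{Y} \setminus \mathcal{Y}_{<s}.
\]

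For the ``small'' side, Bourgain's theorem gives $c_2(V_G, d_G) \lesssim \log|V_G|$, \autoref{cor:biLip-extension} transfers this to $c_2(\Sigma(G), d_{\Sigma(G)}) \lesssim \log|V_G|$, and \autoref{lem:cone(Lp) in Lp} lifts it (scale-invariantly in the base metric, independently of $\delta$) to $c_2(\cone(\Sigma(G), \delta d_\F|_{\Sigma(G)})) \lesssim \log|V_G| \lesssim \log s$, delivering condition~(a) with $g(s) = C\log(s+2)$. For the ``large'' side, fix an $n$-vertex regular $G'$ and $Y = \cone(\Sigma(G), \delta d_\F|_{\Sigma(G)}) \in \mathcal{Y}_{\ge n}$, so $|V_G| \ge n^2$. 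For any $f : V_{G'} \to Y$, the argument $S' := \arg(f(V_{G'})) \subseteq \Sigma(G)$ satisfies $|S'| \le n \le \sqrt{|V_G|}$, so \eqref{eq:F-embed-L1} gives $c_1(\cone(S', \tfrac{2\pi}{\girth(G)} d_{\Sigma(G)})) \le K$, and \autoref{cor:blow-out-cone-in-cone} (applied with $\delta \ge 1$) promotes this to $c_1(\cone(S', \delta d_\F|_{S'})) \lesssim K$. Since $f(V_{G'}) \subseteq \cone(S', \delta d_\F|_{S'})$, combining this bi-Lipschitz $L_1$ embedding with \autoref{lem:gamma+-extrapolation} (at $p=1$) yields
\[
\gamma_q^+(G', Y) \lesssim K^q \gamma_q^+(G', L_1) \lesssim \bigl(K\gamma_2^+(G', \mathbb{R})(1+q^2)\bigr)^q,
\]
so the function $f(q, x) := C(Kx(1+q^2))^q$ fulfills condition~(c).

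All hypotheses of \autoref{lem:bootstraping-main-zigzag-iteration} are now verified, producing a $3$-regular family $\mathcal{E}$ with $\gamma_q^+(\mathcal{E}, \mathcal{Y}) < \infty$ for every $q$, and likewise the auxiliary families $\mathcal{E}_q$ of exponential growth satisfying \eqref{eq:E_q bound}. The union reduction from the first paragraph then delivers $\gamma_q^+(\mathcal{E}, \X_\delta) \le 2^{q-1}\gamma_q^+(\mathcal{E}, \mathcal{Y}) =: \Gamma_q$ uniformly over $\delta \ge 1$, and analogously for $\mathcal{E}_q$. The main obstacle is the ``large-size'' condition~(c): the inequality \eqref{eq:F-embed-L1} is stated only at the canonical scale $\delta = 1$, and transporting it to arbitrary $\delta \ge 1$ without incurring a $\delta$-dependent loss is exactly what forces the appeal to \autoref{cor:blow-out-cone-in-cone}, whose validity is crucially restricted to $p \in \{1, 2\}$. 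Verifying the Hadamard status of each member of $\mathcal{Y}$ uniformly in $\delta \ge 1$ and checking that $\X_\delta$ is captured by the cone-union closure of $\mathcal{Y}$ are the remaining technical commitments.
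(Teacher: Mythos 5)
Your argument follows the paper's proof in all essentials: decompose $\X_\delta$ into the cones over the individual graphs of $\F$ via \autoref{lem:ell1-union} and \autoref{lem:gamma-2-union-simple}, verify the abstract size partition of \autoref{def:size-partition} (Bourgain plus \autoref{cor:biLip-extension} plus \eqref{eq:cone(Lp) in Lp} on the small side; \eqref{eq:F-embed-L1} plus \autoref{cor:blow-out-cone-in-cone} plus \autoref{lem:gamma+-extrapolation} on the large side), and feed everything into \autoref{lem:bootstraping-main-zigzag-iteration}. The only packaging difference is that the paper glues all the pieces, across all $\delta\ge 1$, into a single Hadamard space $\mathcal Y=\bigl(\biguplus_{\delta\ge 1}\X_\delta\bigr)_1$, whereas you keep $\mathcal Y$ as a class of Hadamard spaces; \autoref{def:size-partition} permits either formulation.

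There is, however, one concrete slip in your large-side verification. The quantity you must bound is $\gamma_q^+$, which by \autoref{def:bi-tension} is defined with \emph{two} functions $f,g:V_{G'}\to Y$, so the subset of the base you need to embed into $L_1$ is $\arg(f(V_{G'}))\cup\arg(g(V_{G'}))$, which can contain up to $2n$ points. With your threshold $s_n=n$ (i.e.\ $|V_G|\ge n^2$ on the large side) you only get $\sqrt{|V_G|}\ge n$, so the hypothesis $|S|\le\sqrt{|V_G|}$ of \eqref{eq:F-embed-L1} need not hold for this union. This is precisely why the paper takes $s_n=4n^2$ and works with a pair $\phi,\psi$ whose combined image has at most $2n\le\sqrt{4n^2}$ points. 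The fix is immediate (replace $s_n=n$ by $s_n=4n^2$ and run the argument for two functions), but as written the appeal to \eqref{eq:F-embed-L1} is not justified.
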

\begin{proof}
The proof follows parts of the proof of \cite[Theorem~4.1]{MN-expanders2}.
Define $\mathcal Y=\bigl(\biguplus_{\delta\in[1,\infty)} \mathcal X_\delta\bigr )_1$ 
the $\ell_1$-union of $\{\mathcal X_\delta\}_{\delta\geq 1}$ with the specified points being the cusps of the cones $\{\mathcal X_\delta\}_{\delta\geq 1}$.
By \autoref{lem:ell1-union}, $\mathcal Y$ is a CAT(0) space. Our goal is now is to construct an expander $\mathcal E$, such that $\gamma_q^+(\mathcal E,\mathcal Y)<\infty$ for any $q\in\{2,3,\ldots\}$ 
using \autoref{lem:bootstraping-main-zigzag-iteration}.
For this end we now establish the hypotheses of 
\autoref{lem:bootstraping-main-zigzag-iteration}.

For a given $q\in\{2,3,\ldots\}$ let 
$K_q\geq 1$ be a sufficiently large number depending only on $q$ to satisfy \autoref{def:F}, \autoref{prop:rrg-structure} (both are satified by some universal constant independent of $q$), as well as \autoref{lem:q-speactral-calculus} (which depends on $q$).
In particular $\mathcal Y$ satisfies the spectral calculus
condition~\eqref{eq:spectral calculus condition} with power $q$ and constant $K_q$,  $\forall q\in\{2,3,\ldots\}$.

We next define the notion of size partition compatible with 
\autoref{def:size-partition}.
Recall that $\X_\delta$ is defined as a cone over a family $\F$ of graphs. We first partition $\F$ and $\X_\delta$ to ``small'' and ``large'' graphs/spaces. Let
\begin{equation}
 \label{eq:def:X_delta^*}
 \F_{\geq s}=\{G\in\F: |V_G|\ge s\},
\text{ and }
(\X_\delta)_{\geq s}=\cone(U_{\F_{\geq s}},d_{\delta \F}).
\end{equation}
and the complement
\begin{equation*}
 \F_{< s}=\{G\in\F: |V_G|< s\},
\text{ and }
(\X_\delta)_{< s}=\cone(U_{\F_{< s}},d_{\delta \F}).
\end{equation*}
Finally,
\begin{equation*}
\mathcal Y_{\geq s}=\Bigl(\biguplus_{\delta\in[1,\infty)} \mathcal (\X_\delta)_{\geq s}\Bigr )_1,
\ \text{and} \ 
\mathcal Y_{< s}=\Bigl(\biguplus_{\delta\in[1,\infty)} \mathcal (\
\X_\delta)_{< s}\Bigr )_1.
\end{equation*}

We next check that this partition satisfies the requirements of \autoref{def:size-partition}. Monotonicity is clear. 
Regarding the Euclidean distortion of $\mathcal Y_{<s}$:
First, fix $G\in\F_{<s}$. By \autoref{cor:biLip-extension} and Bourgain's embedding,
\[ c_2(\Sigma(G))\lesssim c_2(G)\lesssim \log(s+1),\]
Hence, for any $\tau>0$, 
\[
c_2(\cone(\Sigma(G),\tau d_{\Sigma(G)}))
\stackrel{\eqref{eq:cone(Lp) in Lp}}\lesssim
c_2(\Sigma(G),\tau d_{\Sigma(G)}) =
c_2(\Sigma(G)) \lesssim \log(s+1).
\]
Since 
\[ \mathcal Y_{<s}= \Bigl(\biguplus_{\delta\geq 1} \biguplus_{G\in\F_{<s}} \cone(\Sigma(G),\frac{2\pi\delta}{\girth(G)}d_{\Sigma(G)}\Bigr)_1
\]
(see \autoref{lem:ell1-union}), by \autoref{lem:ell1-union}
\[ 
c_2(\mathcal Y_{<s}) \leq \sqrt{2}\sup_{\tau>0}\sup_{G\in\F_{<s}} c_2(\Sigma(G),\tau d_{\Sigma(G)})\lesssim \log (s+1).
\]
Hence the Euclidean embedding condition is satisfied with $g(s)=C\log (s+1)$ for some universal constant $C>0$.
Lastly, observe (using \autoref{lem:ell1-union})
that $\mathcal Y_{<s}$ and $\mathcal Y_{\geq s}$ satisfy the condition of \autoref{lem:gamma-2-union-simple} and therefore
\[ \gamma_q^+(G,\mathcal Y)\leq 2^{q-1} \max\{\gamma_q^+(G,\mathcal Y_{<s}), \gamma_q^+(G,\mathcal Y_{\geq s})\},\]
so the last condition of \autoref{def:size-partition} holds with $h(q,x,y)=2^{q-1}\max\{x,y\}$. 

Define $(s_n)_n$ by $s_n=4n^2$, and $f_q(t)=(Cq^2t)^q$, for some sufficiently large universal constant. We next prove that these parameters satisfy~\eqref{eq:gamma_q^+(G,Y)-bound}.
Fix $n$-vertex graph $H$.
Fix $\delta\geq 1$ and $G\in\F_{\geq 4n^2}$.
Fix
\[\phi,\psi: V_H \to \cone\Bigl(\Sigma(G),\frac{2\pi \delta}{\girth(G)}d_{\Sigma(G)} \Bigr).
\]
Observe that
\[
\bigl| \phi(V_H) \cup \psi(V_H) \bigr|\le 2n_0
\stackrel{\eqref{eq:def:X_delta^*}}\leq \sqrt{|V_G|}
\]
so by~\eqref{eq:F-embed-L1} and~\eqref{eq:blow-out-cone-in-cone},
\[
c_1\Bigl(\phi(V_H) \cup \psi(V_H)\Bigr)\lesssim  K^2.
\]
Hence, 
\[
\gamma_q^+\Bigl(H,\cone\Bigl(\Sigma(G),\frac{2\pi \delta}{\girth(G)}d_{\Sigma(G)} \Bigr)\Bigr) \lesssim K^{2q} \gamma_q^+(H,L_1) \stackrel{\eqref{eq:gamma+-extrapolation}}\lesssim  K^{2q}(1+q^2)^q \gamma_2^+(H,\mathbb R)^q\leq f_q(\gamma_2^+(H,\mathbb R)).
\]
Since this is true for any $G\in \F_{\geq s_n}$ and any $\delta\geq 1$, by
\autoref{lem:gamma-2-union-simple}, \eqref{eq:def:X_delta^*}, and \autoref{lem:ell1-union}
\[
\gamma_q^+(H,\mathcal Y_{\geq 4n^2})\lesssim_q  f_q(\gamma_2^+(H,\mathbb R)),
\]
which means that~\eqref{eq:gamma_q^+(G,Y)-bound} is satisfied.

We have just demonostrated that all the hypotheses of
\autoref{lem:bootstraping-main-zigzag-iteration} hold for the space $\mathcal Y$.
Therefore there exists a $3$-regular expander graph $\mathcal E$ for which 
$\gamma_q^+(\mathcal E,\mathcal Y)<\infty$, for any $q\in\{2,3,\ldots\}$. 
Furthermore, by the same lemma, there exists $\mathcal E_q$ satisfying~\eqref{eq:E_q bound}.
Since $\mathcal X_\delta\subset \mathcal Y$, this proves the lemma. 
\end{proof}

\begin{proof}[Proof of \autoref{thm:poincare-cone}]
Let $\mathcal E$ be the family of expander graphs whose existence is asserted in \autoref{lem:X_delta-expander}.
For $m>d\ge 3$
Let $\RRG\sim \mathcal G_{m,d}$. With probability $1-o_m(1)$, $\RRG$ satisfies
items I--VI of \autoref{prop:rrg-structure} for any $\delta\ge 1$.
We assume from now on that it does, and fix $\delta \ge 1$.

Let $A_1,A_2\subset \Sigma(\RRG)$ be as in
\autoref{prop:rrg-structure}.
Denote by $h:\Sigma(\RRG)\times\Sigma(\RRG)\to [0,\infty)$ the metric $h=\delta\sigma_\RRG d_{\Sigma(\RRG)}$.
By \autoref{it:IV} of \autoref{prop:rrg-structure} and \autoref{lem:X_delta-expander}, we have
\[
\gamma_q^+(\mathcal E,\cone(A_2,h))\lesssim_q 1 .
\]
Since $\mathcal E$ is a classical expander and by \autoref{it:III} of \autoref{prop:rrg-structure},
\[
\gamma_q^+(\mathcal E,\cone(A_1,h)) \lesssim_q 
\gamma_q^+(\mathcal E,L_1)
\stackrel{\eqref{eq:gamma+-extrapolation}}\lesssim_q 
\gamma_2^+(\mathcal E,\mathbb R)^q\lesssim_q 1.
\]

By \autoref{it:II} of \autoref{prop:rrg-structure},
$h(A_1\setminus A_2, A_2\setminus A_1) \ge \delta$.
Hence, by \autoref{lem:gamma-2-union} we have
\[
\gamma_q^+(\mathcal E,\cone(\Sigma(\RRG),h)) \leq
\frac{\gamma_q^+(\mathcal E,\cone(A_1,h))+\gamma_q^+(\mathcal E,\cone(A_2,h))}{\pi^{2q}} \lesssim_q 1.
\]
The same bound
for any $\tau=\delta\sigma_\RRG\ge \sigma_\RRG$, which proves the theorem.
\end{proof}

\begin{corollary} \label{cor:rrg-expander-p-exp-growth}
For every $q\in(0,\infty)$ there exists 
an infinite family $\mathcal{E}_q$ of\/ $3$-regular graphs
and $\Gamma_q\in(0,\infty)$ such that 
for any integer $d\ge 3$, 
\[
\lim_{N\to \infty}\Pr_{\RRG\sim \mathcal G_{N,d}}
\left[\sup_\varphi \gamma_q\bigl (\mathcal{E}_q,\bigr (\RRG,\circ d_{\RRG}\bigr )\bigr)\le \Gamma_q\right]=1.
\]
Furthermore, the $j$-th graph in $\mathcal{E}_q$ has
$a_q\cdot b_q^j$ vertices.
\end{corollary}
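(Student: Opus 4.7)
The plan is to revisit the proof of \autoref{thm:MN-rrg-expander-p} and replace the universal expander $\mathcal{E}$ with the family $\mathcal{E}_{q_0}$ provided by the second conclusion of \autoref{lem:X_delta-expander}, which has geometric size growth $a_{q_0}\cdot b_{q_0}^j$. The key is that \autoref{lem:X_delta-expander} was proved using \autoref{lem:bootstraping-main-zigzag-iteration}, and this lemma produces (simultaneously with the universal family $\mathcal{E}$) a sequence $\mathcal{E}_{q_0}$ satisfying $\gamma_{q_0}^+(\mathcal{E}_{q_0},\mathcal{Y})<\infty$ with the desired geometric growth; hence in particular $\gamma_{q_0}^+(\mathcal{E}_{q_0},\mathcal{X}_\delta)\lesssim_{q_0}1$ for every $\delta\ge 1$.

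Given $q\in(0,\infty)$, I would set $q_0\eqdef\max\{2,\lceil q\rceil\}\in\{2,3,\ldots\}$ and define $\mathcal{E}_q\eqdef\mathcal{E}_{q_0}$. First, I would run the argument of \autoref{thm:poincare-cone} verbatim, substituting $\mathcal{E}_{q_0}$ for $\mathcal{E}$: for a random $\RRG\sim\mathcal{G}_{N,d}$ satisfying items (I)--(VI) of \autoref{prop:rrg-structure} (an event of probability $1-o_N(1)$) and any $\tau\ge\sigma_\RRG$, writing $h=\tau d_{\Sigma(\RRG)}$, the bound $\gamma_{q_0}^+(\mathcal{E}_{q_0},\cone(A_2,h))\lesssim_{q_0}1$ comes directly from \autoref{lem:X_delta-expander} via item (IV), while $\gamma_{q_0}^+(\mathcal{E}_{q_0},\cone(A_1,h))\lesssim_{q_0}1$ follows from item (III) together with \autoref{lem:gamma+-extrapolation} (since $\mathcal{E}_{q_0}$ is a classical expander). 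Combining the two using item (II) and \autoref{lem:gamma-2-union} yields
\[
\sup_{\tau\ge\sigma_\RRG}\gamma_{q_0}^+\!\left(\mathcal{E}_{q_0},\cone(\Sigma(\RRG),\tau d_{\Sigma(\RRG)})\right)\le\Gamma_{q_0}
\]
asymptotically almost surely. Next, exactly as in the deduction of \autoref{thm:MN-rrg-expander-p} from \autoref{thm:poincare-cone}, for any metric transform $\psi$ I would apply \autoref{prop:transform-in-producr-cones} to embed $(\RRG,\psi\circ d_\RRG)$ with distortion $\lesssim 1$ into an $\ell_{q_0}$ product of rescaled Euclidean cones $\cone(\Sigma(\RRG),\beta_i d_{\Sigma(\RRG)})$ with all $\beta_i\gtrsim\sigma_\RRG$ (using that $\diam(\Sigma(\RRG))\asymp 1/\sigma_\RRG$ by item (VI) to discard the components with $\beta_i$ too small), and then use the product identity $\gamma_{q_0}^+(G,(\prod_i Y_i)_{q_0})=\sup_i\gamma_{q_0}^+(G,Y_i)$ to conclude
\[
\gamma_{q_0}^+\!\left(\mathcal{E}_{q_0},(\RRG,\psi\circ d_\RRG)\right)\lesssim_{q_0}1.
\]

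To pass from integer $q_0$ back to the arbitrary real $q\le q_0$, I would use the observation that if $\varphi$ is a metric transform, then $\psi\eqdef\varphi^{q/q_0}$ is also a metric transform: since $q/q_0\in(0,1]$, the map $t\mapsto t^{q/q_0}$ is concave, nondecreasing, and vanishes at $0$, so composition with the concave nondecreasing $\varphi$ gives another metric transform. Applying the previous paragraph to $\psi$ and noting that $\psi(t)^{q_0}=\varphi(t)^{q}$ yields
\[
\gamma_q\!\left(\mathcal{E}_q,(\RRG,\varphi\circ d_\RRG)\right)\le\gamma_q^+\!\left(\mathcal{E}_q,(\RRG,\varphi\circ d_\RRG)\right)=\gamma_{q_0}^+\!\left(\mathcal{E}_{q_0},(\RRG,\psi\circ d_\RRG)\right)\lesssim_{q_0}1,
\]
uniformly over all metric transforms $\varphi$, establishing the corollary with $\Gamma_q\asymp\Gamma_{q_0}$.

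The bookkeeping obstacle (rather than a deep one) is checking that the second half of \autoref{lem:X_delta-expander} genuinely delivers $\mathcal{E}_{q_0}$ with size $a_{q_0}\cdot b_{q_0}^j$; this is not written out in the stated version of the lemma, so I would have to inspect the proof of \autoref{lem:bootstraping-main-zigzag-iteration}, noting that the intermediate family $(F_n(q_0))_n$ produced before the ``main iteration'' step has $|V_{F_n(q_0)}|=(n_{q_0})^n$ by \autoref{lem:zigzag simple}, and then verify that \autoref{lem:from MN plus replacement} converts it to a $3$-regular family whose size still grows exponentially in the index (the replacement-style conversion multiplies sizes by a constant factor). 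Everything else is a direct reuse of the machinery developed earlier in the paper.
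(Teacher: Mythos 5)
Your proposal is correct and follows essentially the same route as the paper: the paper's proof is precisely to rerun the argument of \autoref{thm:MN-rrg-expander-p} (via \autoref{thm:poincare-cone} and \autoref{prop:transform-in-producr-cones}) with the family $\mathcal E_{q_0}$ from the second conclusion of \autoref{lem:X_delta-expander} in place of $\mathcal E$, the exponential size growth being exactly what \eqref{eq:E_q bound} asserts. Your explicit reduction of general $q\in(0,\infty)$ to $q_0=\max\{2,\lceil q\rceil\}$ via the composed metric transform $\psi=\varphi^{q/q_0}$ is the same substitution the paper uses to deduce \autoref{cor:rrg-expander-p}, just carried out while retaining the supremum over $\varphi$.
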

This corollary is a weaker version of \autoref{thm:MN-rrg-expander-p} but with the growth of the expander graph explicitly stated to be exponential. 
This will be needed for the algorithmic application in \autoref{sec:UA-proofs}. 
The proof of the corollary follows word for word the proof of \autoref{thm:MN-rrg-expander-p} using \autoref{thm:poincare-cone} but instead of using the expander $\mathcal E$ of \autoref{lem:X_delta-expander}, we use the expander $\mathcal E_q$ of the same lemma.

\section{Estimation of the average distance in random regular graphs}
\label{sec:UA-proofs}

\begin{proof}[Outline of the proof of \autoref{thm:u-approxer}]
Any $X$-expander $\mathcal E$ with power $p=1$ with at most an exponential growth induces a universal approximator of the average distance with respect to $X$. 
This is essentially proved in~\cite[§2]{MN-expanders2}.
Here we just outline the proof. 

The straightforward approximator is as follows. 
For $n\in \mathbb N$, let $G_n=([n],E)\in\mathcal E$ be an $n$-vertex $X$-expander graph. The subset $E$ is the set of pairs produced by the approximator per~\eqref{eq:UA}.
On the response $w:E\to[0,\infty)$, the algorithm outputs the estimate $|E|^{-1}\sum_{(i,j)\in E} w(i,j)$.

We next prove that if $([n],d)$ is a metric on a subset of $X$ then
\begin{equation}\label{eq:def-UA}
\frac{1}{2|E|}\sum_{(i,j)\in E} d(i,j) \leq 
  \frac{1}{n^2}\sum_{i,j\in [n]}  d(i,j) \lesssim 
  \frac{1}{|E|}\sum_{(i,j)\in E} d(i,j)
\end{equation}

First, observe that the right inequality in~\eqref{eq:def-UA} is essentially equivalent to $X$-expander with power $p=1$.
The only difference is that in the definition of  approximators there exists an approximator for every size.
This issue is addressed in~\cite[§2]{MN-expanders2} in a straightforward way as follows.
In our case, by \autoref{cor:rrg-expander-p-exp-growth}, the sizes of the graphs in the expander constitute a geometric progression
$|V(G_j)|=a_1 b_1^j $.
The approximator for $n$-point subset $S$, where
$a_1 b_1^{j-1}< n\leq a_1 b_1^j$ will use a ``balanced" surjection $V(G_j)\to S$
such that for any two points in $S$, the sizes of their inverse images differ by at most one.
This argument degrades the approximation factor by a factor of at most $2$.

The left inequality in~\eqref{eq:def-UA} actually holds for any regular graph and any metric $X$ 
simply by using the triangle inequality on every edge $\{i,j\}\in E_n$ and every vertex $k\in[n]$,
\[d_X(x_i,x_j)\le d_X(x_i,x_k)+d_X(x_k,x_j).\]
Then, averaging these inequalities over all the edges $\{i,j\}\in E_n$ and the vertices $k\in [n]$.
\end{proof}

\begin{remark}
The proof above used $k$-vertex graph in the expander in order to approximate the average distance in $n$-point subset, and it required that $n\leq k$. Note that the number of distance queries is $3k/2$, so in order for this to be $O(n)$, it is required that $k=O(n)$.
This is the reason we need the growth of the number of vertices of the graphs in the expander to be at most exponential. 
\end{remark}

\autoref{prop:reg-universal} is proved by the following two propositions.

\begin{proposition} \label{prop:const-degree-universality}
    For any finite metric space $(X,d_X)$, and any $\e>0$, there exists a simple undirected, unweighted, $3$-regular graph $G=(V,E)$ equipped with its shortest path metric $d_G$ such that
    $c_{(V,d_G)}(X)\leq 1+\e$  (that is, $X$ embeds in ${(V,d_G)}$ with distortion at most $1+\e$.)
\end{proposition}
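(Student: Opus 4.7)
The plan is a three-stage construction: rescale $d_X$ to large integers, realize those integer distances by an unweighted subdivided complete graph, and then bring the maximum degree down to $3$ by a local gadget replacement together with a doubling trick, paying only a bounded additive error in the distances. Assume for simplicity $|X|=n\ge 4$; the cases $n\in\{1,2,3\}$ will require separate ad hoc gadgets but are easily handled. Setting $d_{\min}=\min_{x\ne y}d_X(x,y)$ and fixing an integer $N\ge (n+1)/(\e d_{\min})$, I would for each pair $\{x,y\}\subset X$ put $\ell(x,y)=\lceil N d_X(x,y)\rceil$, and then build a graph $G_1$ whose vertex set consists of $X$ together with, for each pair $(x,y)$, a disjoint chain of $\ell(x,y)-1$ fresh subdivision vertices arranged as a simple path of length $\ell(x,y)$ joining $x$ and $y$. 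By the triangle inequality for $d_X$, any path in $G_1$ between $X$-vertices $x,y$ passing through further $X$-vertices $x=z_0,\ldots,z_k=y$ has length at least $\sum_i \ell(z_{i-1},z_i)\ge N d_X(x,y)$, giving $d_{G_1}(x,y)=\ell(x,y)$. In $G_1$ every $X$-vertex has degree $n-1$ and every subdivision vertex has degree $2$.

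Next I would replace each $X$-vertex $x$ by a cycle $H_x$ of length $n-1$, attaching the $i$-th outgoing subdivided arc of $x$ to the $i$-th vertex of $H_x$; every vertex of $H_x$ then has degree exactly $3$. In the resulting graph $G_2$ only the (former) subdivision vertices still have degree $2$. To raise their degree I take two disjoint copies $G_2',G_2''$ of $G_2$ and, for every subdivision vertex $s\in G_2'$, add one edge from $s$ to its copy $s''\in G_2''$. The final graph $G$ is $3$-regular and simple, and its vertex set contains two copies of each $H_x$; I define $f\colon X\to V(G)$ by sending $x$ to a distinguished vertex $v_x\in H_x\subset G_2'$. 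The key observation is that any excursion from $G_2'$ to $G_2''$ must cost at least two edges (enter and exit), so the restriction of $d_G$ to $V(G_2')$ coincides with $d_{G_2'}$.

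It then remains to compare $d_G(f(x),f(y))$ with $Nd_X(x,y)$. The upper bound follows by walking in $H_x$ from $v_x$ to the exit port of the direct subdivided $(x,y)$-arc (cost $\le \lfloor(n-1)/2\rfloor$), traversing that arc (cost $\ell(x,y)$), and walking in $H_y$ to $v_y$ (cost $\le \lfloor(n-1)/2\rfloor$); thus $d_G(f(x),f(y))\le \ell(x,y)+n-1$. The lower bound will follow by projecting any shortest path in $G_2'$ to a sequence $x=z_0,\ldots,z_k=y$ of $X$-vertices and arguing, as for $G_1$, that the subdivided arcs traversed contribute at least $Nd_X(x,y)$ to the length. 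Dividing by $N$ will yield $d_X(x,y)\le \tfrac{1}{N}d_G(f(x),f(y))\le d_X(x,y)+(n+1)/N\le (1+\e)d_X(x,y)$, giving distortion at most $1+\e$ as required.

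The main obstacle I anticipate is the coordination between the two degree fixes and the distance analysis: while driving every degree down to exactly $3$ is mechanical, one must verify that neither the cycle gadgets $H_x$ nor the doubling introduce shortcuts larger than the additive $O(n)$ detour inside the cycles at the endpoints. The two-copy construction is precisely what rules out all other spurious shortcuts, since each inter-copy excursion costs two extra edges and therefore cannot improve over the corresponding path kept inside a single copy; the $O(n)$ cycle traversals at $x$ and $y$ are then absorbed by choosing $N$ sufficiently large.
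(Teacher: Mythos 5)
Your proposal is correct and follows essentially the same route as the paper's proof: rescale the metric to large integers, realize it by subdividing the complete graph, replace the high-degree vertices by cycle gadgets (absorbing the $O(n)$ detour into the scaling), and take two copies joined by a matching on the degree-$2$ vertices to reach exact $3$-regularity. The only differences are cosmetic (the paper rescales in two stages and only replaces vertices of degree greater than $3$ by cycles, which also handles the small cases you set aside).
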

\begin{proof}
    Assume $|X|\geq 3$. Fix $\e\in(0,1)$. Let $\delta=\min_{x\ne y\in X}d_X(x,y)$.
    Define $\bar w(x,y)= \bigl \lceil 3 d_X(x,y)/(\e\delta)\bigr \rceil$ and notice that clearly for every $x,y\in X$
    \begin{equation}\label{eq:Prop-7.1-1} \frac{3}{\e\delta} d_X(x,y) \leq \bar{w}(x,y) \leq \frac{3}{\e\delta} d_X(x,y)+1 \leq \frac{1}{\delta}\Big(\frac{3}{\e}+1\Big) d_X(x,y).
    \end{equation}
    The shortest path metric $\hat w:X\times X \to \{0\} \cup \mathbb N$ on the weighted graph
    $(X,\binom{X}{2},\bar w)$ is the largest metric on $X$ satisfying $\hat{w}\leq\bar{w}$. Therefore,
    $(X,\hat w)$ is a metric space consisting of integer distances satisfying
    $$\frac{3}{\e\delta} d_X(x,y) \leq \hat{w}(x,y) \leq \frac{1}{\delta}\Big(\frac{3}{\e}+1\Big) d_X(x,y),$$
    and thus $c_{(X,\hat w)}(X,d_X)\le 1+\e/3$.

    We then construct a short sequence of graphs as follows.
   First, $G_1=(X,\binom{X}{2},w)$ is the complete graph on $X$ with weights defined as $w(x,y)=\hat w(x,y)\cdot \lceil 3|X|/\e\rceil$.
    Then, $G_2=(V_2,E_2)$ is obtained from $G_1$ by subdividing every edge $e$ of $G_1$ to $w(e)$ edges, each of weight $1$.
    $(X,w)$ is canonically and isometrically identified with a subset $X_2$ of $(V_2,d_{G_2})$.
    Note that in $G_2$ all vertices have degree at least $2$, and the only vertices of degree strictly greater than $2$ are in $X_2$.
    Afterwards, $G_3=(V_3,E_3)$ is obtained from $G_2$ by replacing every
    vertex $x\in X_2$ having degree $\deg_{G_2}(x)>3$,  with a
    cycle $x_1,\ldots, x_{\deg_{G_2}(x)}$ of length $\deg_{G_2}(x)$ and every vertex of this cycle is an endpoint of a unique $G_2$-edge adjacent to $x$.
    In $G_3$, the degrees are all either $2$ or $3$.
    We map every $x\in X_2$ to a vertex $\iota(x)\in V_3$ on the corresponding cycle.
    Then, given $x,y\in X_2$, clearly
    $$d_2(x,y) \leq d_3(\iota(x),\iota(y)) \leq d_2(x,y) + \frac{\deg_{G_2}(x)+\deg_{G_2}(y)}{2} \leq d_2(x,y) + |X| \leq \Big(1+\frac{\e}{3}\Big) d_2(x,y).$$
    That is, the embedding $\iota:X_2\to V_3$ has distortion of at most $1+\e/3$.
    Lastly, let $G'_3=(V'_3,E'_3)$ be a disjoint ``tagged'' copy of $G_3$.
    We end with the graph $G=(V_4,E_4)$ where $V_4=V_3\cup V'_3$ and
    \[E_4=E_3\cup E'_3\cup\{(u,u'):\  u\in V_3,\; \deg_{G_3}(u)=2\},\]
    where $u'\in V'_3$ is the tagged version of $u\in V_3$.
    $G$ is clearly $3$-regular, and $(V_3,d_{G})=(V_3,d_{G_3})$ as metric spaces.
    In particular, $(X,d_X)$ embeds in $G$ with distortion at most $(1+\e/3)^2\le 1+\e$.
\end{proof}

\begin{proposition}
For any simple $3$-regular graph $G=(V,E)$ and any $d\in\{4,5,\ldots\}$, 
there exists a simple $d$-regular graph $H=(U,F)$ in which $G$ embeds isometrically.
\end{proposition}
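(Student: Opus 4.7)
The plan is to construct $H$ as the Cartesian product of $G$ with the complete graph $K_{d-2}$ on $d-2$ vertices. Concretely, I would take $U \eqdef V \times [d-2]$ and declare $\{(v,i),(v',j)\}\in F$ if and only if either (i) $v=v'$ and $i\neq j$, or (ii) $\{v,v'\}\in E$ and $i=j$. This is a genuine construction since $d-2\ge 2$ whenever $d\ge 4$.

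The first step is to check that $H$ is a simple $d$-regular graph. There are no self-loops because the endpoints of every edge differ in at least one coordinate, and there are no multi-edges because type~(i) and type~(ii) edges are distinguished by whether their first coordinates agree. Every vertex $(v,i)\in U$ has exactly $\deg_G(v)=3$ neighbors of type~(ii) and $d-3$ neighbors of type~(i) (namely $(v,j)$ for $j\in [d-2]\setminus\{i\}$), giving total degree $3+(d-3)=d$.

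The second step is to show that the map $\iota:V\to U$ defined by $\iota(v)\eqdef (v,1)$ is an isometric embedding of $(V,d_G)$ into $(U,d_H)$. The upper bound $d_H(\iota(u),\iota(v))\le d_G(u,v)$ is immediate, since any $d_G$-geodesic from $u$ to $v$ lifts to a path of the same length inside the fiber $V\times\{1\}\subset U$. For the matching lower bound, I would use the projection $\pi:U\to V$ given by $\pi((v,i))\eqdef v$: every edge of type~(ii) projects to an edge of $G$, while every edge of type~(i) projects to a single point. Consequently, any path in $H$ from $(u,1)$ to $(v,1)$ of length $\ell$ projects to a walk in $G$ from $u$ to $v$ that uses at most $\ell$ edges of $G$, so $\ell\ge d_G(u,v)$.

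Combining these bounds yields $d_H(\iota(u),\iota(v))=d_G(u,v)$ for all $u,v\in V$, exhibiting $\iota$ as the desired isometric embedding of $G$ into the simple $d$-regular graph $H$. The argument is entirely routine, and I do not foresee any substantive obstacle; the only mild point is the need to verify simplicity and constant degree when $d=4$, where $K_{d-2}=K_2$ contributes exactly one vertical neighbor per vertex, which is handled by the uniform count above.
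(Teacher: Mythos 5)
Your construction is exactly the paper's: the Cartesian product $G\,\square\,K_{d-2}$ is precisely the disjoint union of $d-2$ copies of $G$ with all copies of each vertex joined pairwise, which is what the paper builds, and your degree count and projection argument for isometry are correct. No issues.
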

\begin{proof}
    Let $(G^i=(V^i,E^i))_{i=1}^{d-2}$ be $d-2$ disjoint copies of $G$.
    Denote $u^i\in V^i$, the copy of $u\in V$.
Let
\[ U=\bigcup_i V^i \quad \text{ and } \quad F=\bigcup_i E^i \cup \bigcup_{i\ne j} \{(u^i,u^j): u^1\in V^1\}.\]
    Clearly, $H=(U,F)$ is $d$-regular and $G$ is isometrically equivalent to any of its copies in $H$.
\end{proof}

Before proving lower bound on deterministic adaptive approximators, 
we begin, for pedagogical reasons, with proving the special case of same lower bound on deterministic \emph{non-adaptive} approximators. 
The proof is simpler and shorter, and is a convenient introduction to the proof of the general case.

\begin{proposition} \label{prop:nonadaptive-lb}
Fix $k\in\mathbb N$.
Any deterministic non-adaptive $\alpha$-approximator of the average distance
for arbitrary $n$-point metric spaces that uses at most $o_n\bigl(n^{\frac {k+1}{k}}\bigr)$ queries, 
must have $\alpha\ge (1-o_n(1))(2k+2)$.
\end{proposition}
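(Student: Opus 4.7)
The plan is to follow the skeleton of the proof of \autoref{thm:det-adaptive-lb} sketched in \autoref{sec:hardness overview}, which specializes and simplifies considerably in the non-adaptive setting because the set of queried pairs $\sfE=\{\{a_1,b_1\},\ldots,\{a_m,b_m\}\}\subset\tbinom{[n]}{2}$ is prescribed in advance, independently of the input. Thus no inductive construction of the metric via $\pair_{i+1}$ is required: I can build the two adversarial metrics $\od,\ud$ in one shot from the fixed graph $\sfG_0=([n],\sfE)$.

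Given $\sfE$ with $m=o(n^{(k+1)/k})$, I would define, in analogy with~\eqref{eq:def hi},
\[
h(x)\eqdef \min\!\bigg\{\!\bigg\lfloor\tfrac{k}{\log n}\log\!\Big(1+\tfrac{n^{(k+1)/(2k)}}{\sqrt{m}}\deg_{\sfG_0}(x)\Big)\bigg\rfloor,k-1\bigg\}\in\{0,\ldots,k-1\},
\]
assign the edge weight $w(u,v)\eqdef \max\{h(u),h(v)\}+1\in\{1,\ldots,k\}$ to each $\{u,v\}\in\sfE$, form the weighted graph $\sfG=([n],\sfE,w)$, and set $\od(x,y)\eqdef\min\{d_\sfG(x,y),k+1\}$, which takes values in $\{0,\ldots,k+1\}$. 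The second metric $\ud$ is then chosen as a minimizer of $\sum_{(x,y)\in[n]^2}d(x,y)$ subject to the constraint system~\eqref{eq: constraints} (with the present $\od$, $w$ and $h$). Since $\od(a_i,b_i)=w(a_i,b_i)=\ud(a_i,b_i)$ for every $i\in[m]$, the non-adaptive algorithm receives the same oracle responses on the two inputs and therefore produces the same output; chaining~\eqref{eq:alpha approx def} for both metrics yields $\alpha\ge \mathrm{avg}(\od)/\mathrm{avg}(\ud)$.

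The remaining task is to show that $\mathrm{avg}(\od)\ge(1-o(1))(k+1)$ while $\mathrm{avg}(\ud)\le \tfrac12+o(1)$. For the upper metric, the essential step is the analog of \autoref{lem:ball growth}: a ball-growth estimate of the form $|B_\sfG(x,r)|\lesssim \sqrt m/n^{(k-2r+1)/(2k)}$ for $r\le k-1$, and $|B_\sfG(x,k)|\lesssim n^{1/2-1/(2k)}\sqrt m$ whenever $\deg_\sfG(x)<\sqrt m(n-1)/n^{(k+1)/(2k)}$. These follow from a clean induction on $r$ using that $h$ was chosen as an integer threshold of the logarithm of the scaled degree. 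Together with a Markov bound on the number of high-degree vertices (as in~\eqref{eq:H small}), this shows $|\{(x,y):\od(x,y)\le k\}|=o(n^2)$, hence almost all pairs lie at $\od$-distance exactly $k+1$. For the lower metric, I would copy the variational argument behind \autoref{lem:variational} verbatim: taking a pair $(x,y)\notin Y$ with $\ud(x,y)>\tfrac12$ maximal, the minimality of $\ud$ combined with the strict lower bound $h(x)=h(y)=0$ forces a $\ud$-geodesic containing $x,y$ with endpoints satisfying tight constraints, leading to contradiction. The set $Y$ from~\eqref{eq:def Y} is then bounded by $O(n^{3/2-3/(2k)}\sqrt m+n^{1-1/k}m+n^{1/2-3/(2k)}m^{3/2})=o(n^2)$ using the same ball-growth input.

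The main obstacle, as in the adaptive case, is the variational \autoref{lem:variational}: one must carefully show that no small relaxation of $\ud(x,y)$ is possible for $(x,y)\notin Y$, using the combined force of the three relevant lower-bound constraints in~\eqref{eq: constraints} (triangle inequality, edge constraints, and the fifth constraint involving $h$). Fortunately, because $\sfE$ is fixed rather than generated inductively, the induction~\eqref{consistent and less than k} showing consistency of the metric definition across rounds is not needed here, and the variational analysis can be applied directly to the one-shot graph $\sfG$, so the non-adaptive case is genuinely easier than \autoref{thm:det-adaptive-lb}. The final contradiction $\alpha\ge (1-o(1))(k+1)/(\tfrac12+o(1))=(1-o(1))(2k+2)$ is then immediate on letting $n\to\infty$.
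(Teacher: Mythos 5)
Your plan is mathematically sound and would yield the stated bound, but it takes a genuinely different --- and far heavier --- route than the paper's own proof of \autoref{prop:nonadaptive-lb}, which deliberately does \emph{not} reuse the adaptive machinery. The paper's proof is a short explicit construction: set $\eta_n=\sqrt{\theta_n}\,n^{1/k}$ where $m\le\theta_n n^{(k+1)/k}$, let $S$ be the vertices of query-degree at least $\eta_n$, and assign weight $1$ to edges with both endpoints outside $S$ and weight $k+1$ to all other edges. Then $\od$ is the truncated shortest-path metric, and $\ud$ is written down in closed form ($\tfrac12$ for generic non-edge pairs of low-degree vertices, $1$ on low-degree edges, $k+1$ whenever an endpoint lies in $S$); one checks directly that $\ud$ is a metric agreeing with $w$ on $E$. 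The two estimates are then one-liners: $|S|\le 2|E|/\eta_n=o(n)$ gives $\mathrm{avg}(\ud)\le\tfrac12+o(1)$, and for $u\notin S$ the ball $B_{\od}(u,k)$ contains no vertex of $S$ (every edge into $S$ already has weight $k+1$), so $|B_{\od}(u,k)|\le\sum_{i=0}^{k}\eta_n^{i}\lesssim\theta_n^{k/2}n=o(n)$, giving $\mathrm{avg}(\od)\ge(1-o(1))(k+1)$. No linear program, no variational lemma, no multi-level function $h$, and no refined ball-growth estimate of the form in \autoref{cl:ball-size} are needed. By contrast, you import the full adaptive apparatus: the graded threshold $h$, the LP-defined $\ud$ subject to \eqref{eq: constraints}, and \autoref{lem:variational}, whose proof you defer by saying you would ``copy it verbatim.'' Your one specialization is correct --- the one-shot assignment $w(u,v)=\max\{h(u),h(v)\}+1$ is automatically consistent with its own shortest-path metric, since any path with at least two edges already costs at least $h(u)+h(v)+2\ge w(u,v)$ --- so this is not a gap. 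But the entire purpose of the graded $h$, the consistency clause in the weight recursion, and the minimizing $\ud$ is to let the adversary commit to answers \emph{before} seeing future queries; once the query set is known upfront, a single degree threshold and an explicit $\ud$ suffice, and your route amounts to re-proving \autoref{thm:det-adaptive-lb} rather than exploiting non-adaptivity.
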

\begin{proof}
Let $[n]$ be a set of $n$ points.
Suppose the approximator queries the set of pairs%
\footnote{We may assume without loss of generality that there is no query of the form $(u,u)$ in $E$.}
$E\subseteq \binom{[n]}{2}$, and $|E|\leq \theta_n n^{\frac {k+1}{k}}$, where 
\begin{equation}\label{eq:theta-non-adaptive}
    \theta_n\in(0,1),\quad \theta_n \to 0.
\end{equation}
Denote $\eta_n=\sqrt{\theta_n}\, n^{\frac{1}{k}}$.
We further assume that $\eta_n\geq 3$ for almost all $n$ except a finite number of them.
We will explain at the end of the proof why this assumption can be made without loss of generality.

Consider the graph $G=([n],E)$,
let $S=\{v\in [n]:\; \deg_G(v)\ge \eta_n\}$, and denote $s=|S|$.
Define distances $w:E\to [k]$ as follows:
\begin{equation*}
    w(x,y)=\begin{cases}
        1 & (x,y)\in E \cap \bigl(([n]\setminus S)\times ([n]\setminus S)\bigr)\\
        k+1 & (x,y)\in E \setminus \bigl(([n]\setminus S)\times ([n]\setminus S)\bigr).
    \end{cases}
\end{equation*}

We show the existence of two different metrics on $[n]$ that are compatible with the weighted graph $G=([n],E,w)$ as follows.
The first metric, $\ud$, is defined for every $x,y\in V$ by
\[
\ud(x,y)=\begin{cases}
    0 & x=y\\
    \tfrac12& (x,y)\in \bigl(([n]\setminus S)\times ([n]\setminus S)\bigr) \setminus E\ \text{ and } \ x\neq y\\
    1& (x,y)\in \bigl(([n]\setminus S)\times ([n]\setminus S)\bigr) \cap  E\\
    k+1& \text{o/w ($x\neq y$ and either } x\in S \text{ or } y\in S).
\end{cases}
\]
Clearly, $\ud$ is a metric compatible with $G=([n],E,w)$.
We bound its average distance from above.
Clearly,
\[s\le \frac{2|E|}{\eta_n}\leq \frac {2\theta_nn^{1+\frac{1}{k}}}{\sqrt{\theta_n} n^{\frac{1}{k}}} = o(n).\]
The average distance in $\ud$ is therefore
\[ \avg(\ud)\leq
\frac{2\theta_n n^{1+\frac1k}+ \tfrac12 n^2 + 2s(n-s)(k+1)}{n^2}
\leq \frac{o(n^2)+ \tfrac12 n^2+o(n^2)(k+1)}{n^2}\leq \tfrac12+o_n(1).
\]

The second metric, $\od$, is the shortest path metric on the weighted graph $G=([n],E,w)$.
Again, it is clear that $\od$ is a metric and compatible with $G$.
We bound its average distance from below.
Fix some $u\in V\setminus S$.
Consider a ball $B=B_{\od}(u,k)$ of radius $k$ in $G$ around $u$.
This ball cannot contain any vertex of $S$, so all the vertices in $B$
have degree smaller than $\eta_n$, so
\[|B|\leq \sum_{i=0}^{k}\eta_n^i\lesssim
\eta_n^{k} \leq \theta_n^{\frac{k}{2}} n= o(n).\]
Points outside $B$ are at distance at least $k+1$ from $u$.
Obviously, all points (except one) are at distance at least $k+1$ from every point in $S$.
Therefore, the average distance in $\od$ is at least
\[ \avg(\od) \geq \frac{n(n-o(n))(k+1)}{n^2}\geq (1-o_n(1))(k+1).
\]

Since $\ud$ and $\od$ agree on $E$, the algorithm cannot distinguish between them,
and therefore its approximation cannot be better than
$\frac{\avg(\od)}{\avg(\ud)}\geq (1-o_n(1))2(k+1)$.

We are left to explain why $\eta_n\geq 3$ without loss of generality. 
Indeed, Suppose that $\eta_n<3$ for some sufficiently large $n$.
In this case, $|E|\leq 3 n^{1-\frac{1}{k}}<n-1$. This means that the graph $G=([n],E,w)$ is 
disconnected. In this case 
for any $M\geq n(k+1)$ one can define the metric $d_M:[n]\times[n]\to [0,\infty)$ as follows.
\[ d_M(x,y)=
\begin{cases} d_G(x,y) & \text{ if } x,y \text{ in the same connected component,}\\
M & \text{ otherwise.}
\end{cases}
\]
It is straightforward to check that $d_M$ is a metric compatible with the weights $w$, 
for any $M\geq n(k+1)$. Here
$\avg(d_{n(k+1)})\leq n(k+1)$, while
$\avg(d_{(nk)^{10}})\geq n^8k^{10}$.
\end{proof}

\bigskip

We now present the lower bound on deterministic adaptive algorithms.
Define a deterministic adaptive algorithm to be an algorithm that behaves like a deterministic adaptive $\alpha$-approximation algorithm of the average distance (as in \autoref{def:adapt-approximator}), except that it skips the last step of the estimation of the average distance.

\begin{theorem} \label{thm:det-adaptive-lb}
Fix $k\in\mathbb N$.
Let $A$ be a deterministic adaptive approximator according to \autoref{def:adapt-approximator} that makes 
 $m=o(n^{\frac{k+1}{k}})$ queries for $n$ points.
Then there exists an adversary answering $A$'s queries such that
 for the queries $E\subseteq \binom{[n]}2$
and answers $w:E\to [k]$,  the following properties hold:
There exist two bounded metrics on $[n]$, $\od,\ud:[n]\times[n]\to [0,k+1]$ such that
\begin{compactenum}[a. ]
    \item For every $(u,v)\in E$, $\od(u,v)=\ud(u,v)=w(u,v)$. \label{it:(a)}
    \item $\od(x,y)=k+1$ for all pairs $(x,y)\in([n]\times[n])\setminus X$, for some $X\subset [n]\times[n]$, $|X|=o(n^2)$. \label{it:(b)}
    \item $\ud(x,y)=\tfrac12$ for all pairs $(x,y)\in ([n]\times[n])\setminus Y$, for some $Y\subset [n]\times[n]$, $|Y|=o(n^2)$. \label{it:(c)}
\end{compactenum}
\end{theorem}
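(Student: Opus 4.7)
The plan is to run the algorithm $A$ against an adaptive adversary that maintains, in parallel with $A$'s $m$ queries, a sequence of metrics $d_1,\ldots,d_m$ on $[n]$, from which the ``large'' metric $\od$ will be extracted as $d_m$, and then construct the ``small'' metric $\ud$ via an auxiliary linear program. At step $i$, once the algorithm has been given responses $d_j(a_j,b_j)=w(a_j,b_j)$ for $j\le i-1$, the algorithm selects $\{a_i,b_i\}$, and the adversary sets $d_i(a_i,b_i)$ to be the largest integer compatible with both the triangle inequality for previously issued weights and a degree-based threshold. Concretely, introduce $h_i(x)=\min\{\lfloor\tfrac{k}{\log n}\log(1+\tfrac{1}{\sqrt m}n^{(k+1)/(2k)}\deg_{\sfG_i}(x))\rfloor,k-1\}$, where $\sfG_i$ is the weighted graph on the first $i$ queried pairs, and define $d_{i+1}(a,b)$ on new edges by the $\max\{\min\{\max\{h_i(a),h_i(b)\}+1,\,d_{\sfG_i}(a,b)\},\,\max_{\{u,v\}\in\sfE_i}(d_i(u,v)-d_{\sfG_i}(u,a)-d_{\sfG_i}(v,b))\}$ formula. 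An easy induction shows that the new weights are consistent with all previous ones (bounded by $k$ on the edges of $\sfG_{i+1}$), and $d_{i+1}$ is then extended to all of $[n]^2$ by $\min\{d_{\sfG_{i+1}}(a,b),k+1\}$, yielding a metric of diameter at most $k+1$.

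To verify property (b) for $\od=d_m$ I would prove two ball-growth estimates in the final weighted graph $\sfG=\sfG_m$: $|B_\sfG(x,r)|\lesssim \sqrt{m}/n^{(k-2r+1)/(2k)}$ for each $r\in[k-1]$, and $|B_\sfG(x,k)|\lesssim n^{1/2-1/(2k)}\sqrt m$ whenever $\deg_\sfG(x)$ is below the natural threshold $\sqrt m\,(n-1)/n^{(k+1)/(2k)}$. Combining these with the trivial count $|H|\lesssim \sqrt m/n^{(k-1)/(2k)}$ on the set of high-degree vertices (from $\sum_x\deg_\sfG(x)=2m$) and the rough bound $|B_\sfG(x,k)|\le n$ for $x\in H$, the set $X=\{(x,y):\od(x,y)\le k\}$ has size $O(n^{3/2-1/(2k)}\sqrt m)=o(n^2)$ under the hypothesis $m=o(n^{1+1/k})$.

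For $\ud$ I would take any minimizer of $d\mapsto\sum_{x,y}d(x,y)$ over the polytope cut out by the constraints that $d$ is a pseudo-metric on $[n]$, that $d(a,b)=w(a,b)$ on queried edges, and that $\min\{\od(x,y),\max\{h(x),h(y)\}+\tfrac12\}\le d(x,y)\le\od(x,y)$ for every $(x,y)$. A minimizer exists by compactness, property (a) is immediate, and the upper bound $\ud\le k+1$ follows from constraint (iii). The main obstacle, which I expect to be the hardest step, is showing that $\ud(x,y)=\tfrac12$ outside a set $Y$ of size $o(n^2)$. I would take $Y$ to be the union of $\{(x,y):x\in U\text{ or }y\in U\}$ with $U=\bigcup_x B_\ud(x,h(x)-1)$, the ``witness'' sets $W,\widetilde W$ coming from edges $\{u,v\}\in\sfE$ with weight $h(u)+1$, the diagonal, and the edges of $\sfE$ themselves. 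Suppose for contradiction that some $(x,y)\notin Y$ achieves $\ud(x,y)>\tfrac12$ maximally. Then $x,y\notin U$ forces $h(x)=h(y)=0$, making the fifth constraint strict at $(x,y)$; and $\{x,y\}\notin\sfE$ makes the fourth constraint inactive; so by minimality some $z$ must satisfy $\ud(x,z)=\ud(x,y)+\ud(y,z)$. Iterating this gives a maximal $\ud$-geodesic through $(x,y)$, whose endpoints $u,v$ must be either an $\sfE$-edge or saturate the lower bound of constraint (iii); a short case analysis in each of these cases produces a contradiction. To estimate $|Y|$, I would observe that within radius $h(x)$ the $\ud$-balls coincide with $\sfG$-balls (since constraint (iii) forces equality with $\od$ there), apply the ball-growth lemma, and sum against the degree sequence, which yields $|Y|\lesssim n^{3/2-3/(2k)}\sqrt m + n^{1-1/k}m + n^{1/2-3/(2k)}m^{3/2}=o(n^2)$ in the regime $m=o(n^{1+1/k})$.
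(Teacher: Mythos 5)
Your proposal follows essentially the same route as the paper's proof: the same adaptive adversary with the degree-threshold function $h_i$ and the max--min weight formula, the same shortest-path metric $\od=\min\{d_{\sfG_m},k+1\}$ verified via the two ball-growth estimates, the same linear program (with the extra lower- and upper-bound constraints involving $h$) defining $\ud$, the same exceptional set $Y=\bigl(U\times[n]\bigr)\cup\bigl([n]\times U\bigr)\cup W\cup\widetilde W\cup\sfE\cup\{\text{diagonal}\}$, and the same variational/maximal-geodesic argument with a final case analysis. The concluding count $|Y|\lesssim n^{3/2-3/(2k)}\sqrt m+n^{1-1/k}m+n^{1/2-3/(2k)}m^{3/2}=o(n^2)$ matches the paper's computation exactly.
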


\begin{proof}[Proof of\/ \autoref{thm:det-adaptive-lb}]
By the hypotheses of the theorem, there exists a sequence $(\theta_n)_n$, $\theta_n\in(0,e^{-1})$
satisfying:
\begin{equation} \label{eq:theta-eta}
m\leq \theta_n n^{\frac{k+1}{k}}, 
\quad \lim_{n\to\infty} \theta_n\to 0.
\end{equation}

Per the sequence of queries, denote $E_0=\emptyset$,
and $E_i=\{(x_1,y_1),\ldots, (x_i,y_i)\}$.
Let $w:E_i\to [k]$ the weights answered by the adversary up to time $i$.
Denote $G_i=([n],E_i,w)$ the associated weighted graph,
by $d_{G_i}:[n]\times[n]\to\{0,\infty\}\cup \mathbb N$ the shortest path metric on the weighted graph $G_i$.
We use the convention of $d_i(x,y)=\infty$ if $x$ and $y$ are disconnected in $G_i$, and by $d_i=\min\{d_{G_i},k+1\}$.

The adversary, on the $i$-th query $(x_i,y_i)$ will return $w(x_i,y_i)$  according to the following strategy.
Using the conventions $\max\emptyset=-\infty$, and $\max\{-\infty,a\}=a$, let
\begin{equation}\label{eq:def:h_i}
h_{i}(x)=\max  \{h\in\{0,\ldots,k-1\} :\;  \deg_{G_{i}}(x) \geq 
\theta_n^{1/2}\,(n^{\frac{h}{k}}-1)
\}).
\end{equation}
and
\begin{equation} \label{eq:def-w}
    w(x_i,y_i)= \max \left\{
    \begin{matrix}
    \min\bigl\{\max\{h_{i-1}(x_i),h_{i-1}(y_i)\}+1\,,\, d_{G_{i-1}}(x_i,y_i)\bigr\}, \\
    \max \bigl\{ w(u,v)-d_{G_{i-1}}(u,x_i)-d_{G_{i-1}}(v,y_i):\;(u,v)\in E_{i-1} \bigr\}
    \end{matrix}
    \right\}
\end{equation}
Clearly, $\deg_{G_1} \leq \deg_{G_2} \leq \cdots$, so we have the pointwise inequalities $h_1\leq h_2\leq\cdots$.
\begin{claim}  \label{cl:w in [k]}
 For every $x\in [n]$ and $i\in[m]$, $h_i(x)\in \{0,\ldots, k-1\}$.
 For every $(u,v)\in E_m$, $w(u,v)\in[k]$.
\end{claim}
\begin{proof}[Proof of \autoref{cl:w in [k]}]
The first assertion follows from~\eqref{eq:def:h_i}. 
The proof that $w(x_i,y_i)\in[k]$ now proceeds by induction
on $i$ and using \eqref{eq:def-w}.
\end{proof}

\begin{claim}\label{cl:d_i=w}
For every $i\in[m]$, and $(x,y)\in E_i$, we have
\(
d_i(x,y)=d_{G_i}(x,y)=w(x,y).
\)
\end{claim}
\begin{proof}[Proof of~\autoref{cl:d_i=w}]
    By definition of the shortest path metric $d_{G_i}$, clearly $d_{G_i}(x,y)\leq w(x,y)$ for every $i\in[m]$ and every $(x,y)\in E_i$, which further implies that $d_i(x,y)=d_{G_i}(x,y)$ for such $(x,y)$ due to  \autoref{cl:w in [k]}.
    Assume towards a contradiction that \autoref{cl:d_i=w} fails and let $i$ be the smallest index for which there exists  a pair $(x,y)\in E_i$ with
    \begin{equation} \label{eq:d_i(x,y)<w(x,y)}
    d_i(x,y)=d_{G_i}(x,y)<w(x,y).
    \end{equation}
     Let $j\le i$ be the index for which $(x,y)=(x_j,y_j)$. 
     We first prove that $w(x,y)=d_j(x,y)$. 
     This is clearly true if $j=1$, so assume that $j\geq2$. 
     Fix $(u,v)\in E_{j-1}$.
    By the inductive assumption, $d_{j-1}(u,v)=d_{G_{j-1}}(u,v)=w(u,v)$.
    The triangle inequality in $d_{G_{j-1}}$ then gives
    \[d_{G_{j-1}}(x,y) \ge d_{G_{j-1}}(u,v)-d_{G_{j-1}}(u,x)- d_{G_{j-1}}(v,y) = w(u,v)-d_{G_{j-1}}(u,x)- d_{G_{j-1}}(v,y).\]
    Observing~\eqref{eq:def-w}, this means that $w(x,y)\le d_{G_{j-1}}(x,y)$, and therefore $w(x,y)=d_{G_j}(x,y)=d_j(x,y)$.
    This means that $i>j$ and thus  $(x_i,y_i)\neq (x,y)$. By the definition of $x$, $y$ and the minimality of $i$ in~\eqref{eq:d_i(x,y)<w(x,y)}, we have
    \[
    d_i(x,y)<w(x,y)=d_{i-1}(x,y).
    \]
    At time $i$, the only way $d_i(x,y)$  can change relative to $d_{i-1}(x,y)$
    is via a shortest path that goes through the new edge $(x_i,y_i)$. Hence,
    \[
    d_{i-1}(x,x_i)+ w(x_i,y_i) +d_{i-1}(y_i,y) =d_i(x,y)<d_{i-1}(x,y)=w(x,y).
    \]
But this inequality contradicts the second line of~\eqref{eq:def-w} (when choosing $(u,v)=(x,y)$).
    This concludes the proof of \autoref{cl:d_i=w}.
\end{proof}

Let $E=E_m$, and $G=G_m=([n],E,w)$ and $d_G=d_{G_m}$ the shortest path metric on the weighted graph $G$.
Also, for $x\in{[n]}$, let
\begin{equation} \label{eq:h(x)}
h(x)= h_m(x)= \max\bigl\{ h\in \{0,\ldots,k-1\}: \deg_{G}(x)\geq 
\theta_n^{1/2} (
n^{\frac{h}{k}}-1
) 
\bigr\}.
\end{equation}

We next define the two metrics $\od$ and $\ud$ on $[n]$ that are compatible with the weights $w:E\to [k]$,
but $\od$ is mostly distance $k+1$, while $\ud$ is mostly distance $1/2$.
First, $\od$ is defined as
\[\od(x,y)=d_m(x,y)=\min\{d_G(x,y),k+1\}.\]
By \autoref{cl:w in [k]} and \autoref{cl:d_i=w}, $\od$ is indeed compatible with $w$,
i.e., it satisfies \autoref{it:(a)} of \autoref{thm:det-adaptive-lb}.

We next prove that most distances in $\od$ are $k+1$.
Let 
\[
H=\{x\in[n]: \deg_G(x)\geq \theta_n^{1/2}(n-1).\}
\]

\begin{claim} \label{cl:H_(k-1) bound}
    $|H|\lesssim \theta_n^{1/2} n^{\frac{1}{k}}$.
\end{claim}
\begin{proof}
By definition, for $x\in H$,
    $\deg_G(x)\geq \theta_n^{1/2} (n-1)$, and therefore
    \[
 |H| \leq \frac{2\theta_n n^{\frac{k+1}{k}}}{\theta_n^{1/2} (n-1)}  \lesssim \theta_n^{1/2} n^{\frac{1}{k}}
 \ . \qedhere
    \]
\end{proof}

\begin{claim} \label{cl:ball-size}
For every $u\in[n]$ and every $r\in[k]$, if either $u\notin H$ or $r\leq k-1$, then 
\[\bigl|B_{d_G}(u,r)  \bigr|\lesssim 2^{k-1} \theta_n^{1/2}n^{r/{k}}. \]
\end{claim}
\begin{proof}[Proof of \autoref{cl:ball-size}]
We first prove that
\begin{equation}\label{eq:sphere-size}
\bigl|B_{d_G}(u,r) \setminus B_{d_G}(u,r-1) \bigr|\lesssim 2^{k-1} \theta_n^{1/2} n^{r/k} 
.
\end{equation}
Let $S=\{(x_i,y_i)\in E: \max\{h_{i-1}(x_i),h_{i-1}(y_i)\}\geq d_{i-1}(x_i,y_i)\}$, where $h_{i-1}(x)$ is defined in~\eqref{eq:def:h_i}.
Per~\eqref{eq:def-w} (and the triangle inequality), for $(x_i,y_i)\in S$, $w(x_i,y_i)=d_{i-1}(x_i,y_i)$, which means that the edge $(x_i,y_i)$
does not affect the distances in $d_G$. 
In other words, the shortest path metric on the weighted graph $([n],E\setminus S,w)$ is the same as the metric $d_G$.

Per~\eqref{eq:def-w}, every edge $(x_i,y_i)\in E\setminus S$ has $w(x_i,y_i)\ge \max\{h_{i-1}(x_i),h_{i-1}(y_i)\}+1$. 
Fix $v\in[n]$ and $h\in\{0,\ldots,k-1\}$, such that either $v\notin H$ or $h\leq k-2$.
Then, 
\begin{equation}\label{eq:i-low-h}
\bigl|\bigl\{i: \ v\in\{x_i, y_i\}, \ (x_i,y_i)\in E\setminus S \mbox{ and } w(x_i,y_i) \leq h+1 \bigr\} \bigr| \leq \bigl| \bigl\{ i: \ v\in\{x_i, y_i\} \mbox{ and } h_{i-1}(v) \leq h\bigr\} \bigr|.
\end{equation}
Denote the set on the left-hand side of~\eqref{eq:i-low-h} as
$\{i_1<\ldots<i_\ell\}$. 
So  $h_{i_\ell-1}(v)\leq h$ since $h_1\leq h_2\leq\cdots$. 
Observe that $\ell\leq \deg_{G_{i_\ell}}(v)$,
and from the definition of $h(x)$ and $H$,
$\deg_{G_{i_\ell}}(v)< \theta_n^{1/2} (n^{(h+1)/{k}} -1)$.
So, 
\begin{equation} \label{eq:ell}
    \ell \leq \deg_{G_{i_\ell}}(v) \lesssim
\theta_n^{1/2} n^{\frac{h+1}{k}}.
\end{equation}
In other words, 
the number of edges in $E\setminus S$ adjacent to $v$ of weight at most $h+1$ is at most $O(\theta_n^{1/2} n^{\frac{h+1}{k}})$.

$|B_{d_i}(u,r) \setminus B_{d_i}(u,r-1)|$ is bounded above by the number of paths in $E\setminus S$ that begin at $u$ and have a (weighted) length of exactly $r$. 
This can be counted as follows: 
Fix a partition of $r$ to a sum of natural numbers
$r=a_1+a_2+\ldots+a_s$, the number of paths with $s$ edges which begin at $u$ such that $t$-th edge has weight $a_t$, is bounded as follows. 
\begin{itemize}
    \item 
If $r<k$ or $s>1$, then $a_j\leq k-1$ for $j\in\{1,\ldots,s\}$, so applying~\eqref{eq:ell}, 
we conclude that the number of paths is at most
$\prod_{t=1}^s 
\theta_n^{1/2}  n^{a_t/k} 
\leq\theta_n^{1/2} n^{r/k} 
$.
\item 
If $r=k$ and $s=1$, then $a_1=k$.
But in this case $u\notin H$ so by definition of $H$,
the number of edges of weight $k$ adjacent to $u$ is at
most $\theta_n^{1/2}n$.
\end{itemize}

The number of ways to partition $r$ in this fashion is $2^{r-1}$.
This concludes the proof of~\eqref{eq:sphere-size}.
Summing over all the spheres centered at $u$ of radii at most $r$ concludes the proof of
\autoref{cl:ball-size}.
\end{proof}

\begin{corollary} \label{cor:avg-od-lb}
The subset
\[X=\{(x,y)\in[n]\times[n]: \od(x,y)\leq k\} \]
satisfies \autoref{it:(b)} of \autoref{thm:det-adaptive-lb}.
\end{corollary}
\begin{proof}[Proof of \autoref{cor:avg-od-lb}]
By \autoref{cl:ball-size}, for every $u\in [n]\setminus H$,
\[|B_{\od}(u,k)|=|B_{d_G}(u,k)|\leq 2^{k-1}\theta_n^{1/2 }n 
\stackrel{\eqref{eq:theta-eta}}=o(n).\]
By \autoref{cl:H_(k-1) bound} $|H|=o(n)$,
hence, $|X|\leq n\cdot |H|+\sum_{u\in[n]\setminus H } |B_{\od}(u,k)|=o(n^2) $.
\end{proof}

The second metric  $\ud$ will be defined as a solution of a linear program as follows.
 Let $\ud:\binom{[n]}{2}\to \mathbb (0,\infty)$ be the variables of the program.
\begin{align}
\nonumber   \text{Optimize:} \qquad \qquad  &  \min \sum_{\{x,y\}\in\binom{[n]}{2}}\ud(\{x,y\})\\
\nonumber \text{subject to:} \qquad\qquad\\
 \label{eq:LP-boundary-cond} \ud(\{u,v\})&= w(u,v) & \forall \{u,v\}\in E  \\
  \label{eq:LP-ud-lb} \ud(\{x,y\}) &\geq
 \min\bigl\{ \od(x,y), \max\{h(x),h(y)\}+\tfrac12\bigr\}
  & \forall \{x,y\}\in \tbinom{[n]}{2}\setminus E
  \\
 \label{eq:LP-triangle-ineq} \ud(\{x,y\})+\ud(\{y,z\}) & \geq \ud(\{x,z\})   & \forall (x,y,z)\in [n]_{3}  \\
\label{eq:LP-ud-ub} \ud(\{x,y\}) &\leq
  \od(x,y) &   \forall \{x,y\}\in \tbinom{[n]}{2}
\end{align}

\begin{remark}
Any feasible solution to the above LP is clearly a metric compatible with $G=(V,E,w)$,
and therefore satisfies \autoref{it:(a)} of \autoref{thm:det-adaptive-lb}.
However, the optimal
$\ud$ does not necessarily have the smallest average distance.
To obtain the minimal (pseudo)metric, the constraint
~\eqref{eq:LP-ud-lb} should be replaced with $\ud(\{x,y\})\geq 0$,
and the constraint~\eqref{eq:LP-ud-ub} should be removed.
We added them to simplify the analysis. 
\end{remark}

The set of feasible solutions is closed, convex, and bounded.
It is nonempty since it contains the metric $\od$.
Hence, there is an optimal solution for the linear program, denoted by $\ud$.

\begin{claim}\label{cl:ud=od}
For $x,y\in [n]$, if\/ $\ud(x,y)< h(x)+\tfrac12$ then $\ud(x,y)=\od(x,y)\in \mathbb N$.
\end{claim}
\begin{proof}[Proof of \autoref{cl:ud=od}]
For $(x,y)\in E$, it follows from~\eqref{eq:LP-boundary-cond}, \autoref{cl:w in [k]} and \autoref{cl:d_i=w} that 
$\ud(x,y)=\od(x,y)=w(x,y)$.
For $(x,y)\notin E$,
The equality $\ud(x,y)=\od(x,y)$ follows directly
from~\eqref{eq:LP-ud-lb} and~\eqref{eq:LP-ud-ub} and the assumption of the claim.
The integrality of $\ud(x,y)$ follows from the integrality of $\od$
(which follows since $\od=\min\{d_G,k+1\}$, and $d_G$ is integral by \autoref{cl:w in [k]}).
\end{proof}

Define
\[
U= \bigcup\bigl\{B_{\ud}(z,h(z)-1): \; z\in [n],\;h(z)\geq 1\bigr\}.
\]

\begin{claim} \label{cl:|U|=o(n)}
    $|U|=o(n)$.
\end{claim}
\begin{proof}[Proof of \autoref{cl:|U|=o(n)}]
Let $z\in [n]$.
From~\eqref{eq:LP-ud-ub} and \autoref{cl:ud=od} it follows that $B_{\ud}(z,r)=B_{\od}(z,r)$ for any $r< h(z)+1/2$, and in particular
$B_{\ud}(z,h(z)-1)=B_{\od}(z,h(z)-1)$ when $h(z)\geq 1$.
Hence, since $h(z)-1\leq k-1$, 
by \autoref{cl:ball-size},
\[
|B_{\ud}(z,h(z)-1)|\lesssim 2^{k-1} \theta_n^{1/2} n^{\frac{h(z)-1}{k}}
.
\]
So, 
\begin{multline*}
|U|\leq \sum_{z:\ h(z)\geq 1} |B_{\ud}(z,h(z)-1)|
 \lesssim
 2^{k-1} \sum_{z:\ h(z)\geq 1} \theta_n^{1/2} n^{\frac{h(z)-1}{k}}
\\ \leq 
2^{k-1} \theta_n^{1/2}{n^{1-\frac{1}{k}}}
+2^{k-1} n^{\frac{-1}{k}}
\sum_{z:\ h(z)\geq 1} \theta_n^{\frac{1}{2}} (n^{\frac{h(z)}{k}}-1)
\stackrel{\eqref{eq:h(x)}}
\leq o(n)+ 2^{k-1} n^{\frac{-1}{k}} \sum_{z:\ h(z)\geq 1} \deg_G(z)\\
\lesssim o(n)+2^{k} n^{\frac{-1}{k}}\theta_n n^{\frac{k+1}{k}} 
 \stackrel{\eqref{eq:theta-eta}}=o(n).
\quad \qed
\end{multline*}
\renewcommand{\qedsymbol}{}
\end{proof}

Define 
\begin{equation} \label{eq:def W}
    W=\bigcup\Bigl\{ (x,v):\ \exists u\in[n],\; (u,v)\in E,\; w(u,v)=h(u)+1,\;\ud(u,x)\leq h(u)\Bigr\}.
\end{equation}

\begin{claim}\label{cl:|W|=o(n^2)}
    $|W|=o(n^2)$.
\end{claim}
\begin{proof}[Proof of \autoref{cl:|W|=o(n^2)}]
Clearly,
\( W\subseteq \bigcup_{u\in [n]} \bigl(B_{\ud}(u,h(u))\times \{v: (u,v)\in E\}\bigr).\)
Therefore,
\[
|W|\leq \sum_{\substack{u\in[n]}} |B_{\ud}(u,h(u))|\cdot \deg_G(u)
\]
Since $h(u)\leq k-1$, by \autoref{cl:ball-size},
$|B_{\ud}(u,h(u))|\lesssim 2^{k-1} \theta_n^{1/2} n^{\frac{k-1}{k}}$, for any $u\in[n]$. So, 
\begin{equation*}
|W|
 \lesssim
2^{k-1} \theta_n^{1/2}n^{\frac{k-1}{k}}|E| 
\leq
2^{k-1}\theta_n^{1/2}n^{\frac{k-1}{k}} \theta_n n^{\frac{k+1}{k}}  
\stackrel{\eqref{eq:theta-eta}}=o(n^2).
\qedhere
\end{equation*}
\end{proof}

Denote $\bar W=\{(v,x): (x,v)\in W\}$
\begin{claim} \label{cl:ud(x,y)=1}
    For every non-edge pair $(x,y)\notin  W\cup \bar W\cup (U\times[n])\cup ([n]\times U)$, \   $\ud(x,y)\in \bigl\{0,\tfrac12\bigr\}$.
\end{claim}
\begin{proof}[Proof of \autoref{cl:ud(x,y)=1}]
Let  $(x,y)\in [n]\times[n] \setminus \bigl( W\cup (U\times[n])\cup ([n]\times U)  \bigr)$ be a non-edge and
assume without loss of generality that $x\neq y$.
From~\eqref{eq:LP-ud-lb}, $\ud(x,y)\geq \tfrac12$.
Suppose for the sake of contradiction that $\ud(x,y)>\tfrac12$.
Moreover, assume that the pair $x,y$ maximizes the distance $\ud(x,y)$ under these assumptions.
Since $\ud$ is optimal for the LP, one of the lower bound conditions on the distance $\ud(x,y)$ must be tight,
since otherwise it would have been possible to decrease $\ud(x,y)$, contradicting the optimality of $\ud$.
As $x,y\notin U$, 
$h(x)=h(y)=0$ and since
$\ud(x,y)>\tfrac12$, this implies that the constraint \eqref{eq:LP-ud-lb} is not tight.
Condition~\eqref{eq:LP-boundary-cond} is not relevant as $(x,y)\notin E$.
Hence one of the LP triangle inequalities~\eqref{eq:LP-triangle-ineq} involving $(x,y)$ on the left-hand side must be tight.
Therefore, there exists $z\in [n]\setminus \{x,y\}$ such that (without loss of generality), the shortest path from $x$ to $z$ passes via $y$, namely we have
\begin{equation} \label{eq:tight-xyz}
\ud(x,y)+\ud(y,z)=\ud(x,z).
\end{equation}
We can now iterate this process, asking whether there exists a point $w$ such that the shortest path from $x$ to $w$ passes via $z$ or the shortest path from $w$ to $z$ passes via $x$.
Since at each step the length of the larger path increases by at least $\tfrac12$ and it is upper bounded by $k$, this process must end at some point. That is, we eventually reach a pair $(u,v)\in[n]\times[n]$ 
which is not part of any larger shortest path with respect to $\ud$.
The pair $(u,v)$  then
satisfies
\begin{equation} \label{eq:tight-uxyv}
\ud(u,x)+\ud(x,y)+\ud(y,v)=\ud(u,v).
\end{equation}
In particular, observe that since either $u\neq x$ or $v\neq y$,
\begin{equation} \label{eq:ud(u,v)>1}
    \ud(u,v)\geq \ud(x,y)+\max\{\ud(u,x),\ud(v,y)\}\geq
    \ud(x,y)+\tfrac12>1.
\end{equation}
As before, $\ud(u,v)$ must have a tight  LP lower bound but
it cannot be the triangle inequality~\eqref{eq:LP-triangle-ineq}.
Hence, either~\eqref{eq:LP-boundary-cond} or~\eqref{eq:LP-ud-lb} are tight for $\ud(u,v)$.
\begin{enumerate}[(a)]
    \item \label{it:2k-lb-(a)}
    Suppose first that~\eqref{eq:LP-ud-lb} is tight for
    $(u,v)$. In particular $\ud(u,v)\leq \max\{h(u),h(v)\}+\tfrac12$, so $\max\{h(u),h(v)\}\geq 1$ by \eqref{eq:ud(u,v)>1}.
    Suppose without loss of generality that $h(u)=\max\{h(u),h(v)\}$.
    By~\eqref{eq:tight-uxyv} 
    we have 
    \[
    \ud(u,x)\leq \ud(u,v)-\ud(x,y)< h(u)+\tfrac12-\tfrac12=h(u).
    \]
By~\autoref{cl:ud=od}, $\ud(u,x)$ is integral, which
means that $\ud(u,x)\leq h(u)-1$.
Hence $x\in B_{\ud}(u,h(u)-1)\subseteq U$, which contradicts the assumption of \autoref{cl:ud(x,y)=1} that $x\notin U$.

\item Next, suppose~\eqref{eq:LP-boundary-cond} is tight for $(u,v)$, which means that $(u,v)\in E$.
We claim that in this case,
\begin{equation}\label{eq:ud-simple-formula}
\ud(u,v)\leq \max\{h(u),h(v)\}+1.
\end{equation}
Indeed, suppose $(u,v)$ was queried at the $i$-th query, i.e.
  $(u,v)=(x_i,y_i)$ in~\eqref{eq:def-w}.
Suppose toward a contradiction that $w(x_i,y_i)> \max\{h(x_i),h(y_i)\}+1$.
Since $h_i$ is non-decreasing in $i$, we also have
$w(x_i,y_i)> \max\{h_{i-1}(x_i),h_{i-1}(y_i)\}+1$.
But looking at~\eqref{eq:def-w} this can only happen when there exists $(u',v')\in E_{i-1}\subseteq E$ such that
\[
w(u,v)=w(u',v')-d_{i-1}(u',u)-d_{i-1}(v',v)\leq
w(u',v')-\ud(u',u)-\ud(v',v),
\]
since $\ud \leq \od \leq d_{i-1}$. As $(u,v), (u',v')\in E$, the triangle inequality in $\ud$ and the constraint \eqref{eq:LP-boundary-cond} show that the above inequality is equality, i.e.
\begin{equation}\label{eq:tight-ti}
\ud(u,v)+\ud(u',u)+\ud(v',v) = w(u,v)+\ud(u',u)+\ud(v',v)=w(u',v')=\ud(u',v').
\end{equation}
But~\eqref{eq:tight-ti} is a tight triangle inequality~\eqref{eq:LP-triangle-ineq} of the LP.
This contradicts the property of the pair $(u,v)$ (spelled out few lines above~\eqref{eq:tight-uxyv}) not being part of any larger shortest path.
This concludes the proof of~\eqref{eq:ud-simple-formula}. We now consider subcases of this case:

\begin{itemize}
    \item If $\ud(u,v)< \max\{h(u),h(v)\}+1$,
    then (since $\ud(u,v)=w(u,v)$ is integral) we have $\ud(u,v)\leq \max\{h(u),h(v)\}$.   
    Suppose without loss of generality that $h(u)=\max\{h(u),h(v)\}$.
    As in \autoref{it:2k-lb-(a)} above, 
    by~\autoref{cl:ud=od}, $\ud(u,x)<\ud(u,v)\leq h(u)$ is integral, which
means that $\ud(u,x)\leq h(u)-1$.
Hence $x\in B_{\ud}(u,h(u)-1)\subseteq U$, which contradicts the assumption of \autoref{cl:ud(x,y)=1} that $x\notin U$. 

\item
    Otherwise 
     $\ud(u,v)= \max\{h(u),h(v)\}+1$.
         Suppose without loss of generality that we have $h(u)=\max\{h(u),h(v)\}$.
\begin{itemize}
\item
    If $v\neq y$, then $\ud(v,y)\geq \tfrac12$,
    and
    \[\ud(u,x)=\ud(u,v)-\ud(v,y)-\ud(x,y)<h(u)+1-\tfrac12-\tfrac12= h(u).\]
    As in \autoref{it:2k-lb-(a)} above, 
    by~\autoref{cl:ud=od}, $\ud(u,x)<h(u)$ is integral, which
means that $\ud(u,x)\leq h(u)-1$.
Hence $x\in B_{\ud}(u,h(u)-1)\subseteq U$, which contradicts the assumption of \autoref{cl:ud(x,y)=1} that $x\notin U$. 
\item We are left with the case $\ud(u,v)=h(u)+1$,
and $v=y$.
    By~\eqref{eq:tight-uxyv} 
    we have 
    \[
    \ud(u,x)= \ud(u,v)-\ud(x,y)< h(u)+1-\tfrac12=h(u)+\tfrac12.
    \]
By~\autoref{cl:ud=od}, $\ud(u,x)$ is integral, which
means that $\ud(u,x)\leq h(u)$. That is
\[(x,y)=(x,v)\in 
W,\]
which contradicts the assumption of \autoref{cl:ud(x,y)=1} that $(x,y)\notin W$. 
\end{itemize}
\end{itemize}
\end{enumerate}
In all cases of the analysis above we reached a contradiction, so the proof of \autoref{cl:ud(x,y)=1} is complete.
\end{proof}

It follows immediately from \autoref{cl:|U|=o(n)}, \autoref{cl:|W|=o(n^2)}, and \autoref{cl:ud(x,y)=1} that the set
\[
Y=(U\times[n]) \cup ([n]\times U) \cup W \cup \bar W \cup \{(x,x):x\in[n]\} \cup \{(x,y):\{x,y\}\in E\}
\]
satisfies \autoref{it:(c)} of \autoref{thm:det-adaptive-lb}.
This concludes the proof of \autoref{thm:det-adaptive-lb}.
\end{proof}

\begin{corollary}[Generalization and rephrasing of \autoref{thm:impossibility-DA}]
\label{cor:impossibility-DA-power-p}
Fix $k\in \mathbb N$ and $p\in(0,\infty)$.
Suppose that $A$ is a deterministic adaptive $\alpha$-approximator of the average of the $p$-th power of the distances.
If\/ $A$ uses $o(n^{\frac{k+1}{k}})$ queries for $n$ points, then $\alpha\geq (1-o_n(1))\cdot (2k+2)^p$.
\end{corollary}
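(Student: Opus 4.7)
The plan is to derive this corollary as a direct consequence of \autoref{thm:det-adaptive-lb} by the exact same contrast argument used to deduce \autoref{thm:impossibility-DA}, only this time comparing the $p$-th moments of the two adversarial metrics rather than their first moments. Concretely, I would apply \autoref{thm:det-adaptive-lb} to the hypothetical deterministic adaptive $\alpha$-approximator $A$ of the average $p$-th power of distances. This produces metrics $\od,\ud:[n]\times[n]\to[0,k+1]$ that agree on the sequence $E$ of pairs queried by $A$, with $\od\equiv k+1$ off a set $X$ of size $o(n^2)$ and $\ud\equiv \tfrac12$ off a set $Y$ of size $o(n^2)$.

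The crucial observation is that because $A$ is adaptive but deterministic, and because the adversary's responses $w:E\to[0,\infty)$ are identical when the underlying metric is either $\od$ or $\ud$, the algorithm produces the same sequence of queries and the same final output on both inputs. Thus its single estimate $\estimate$ must simultaneously $\alpha$-approximate $\frac{1}{n^2}\sum_{(i,j)\in[n]^2}\od(i,j)^p$ and $\frac{1}{n^2}\sum_{(i,j)\in[n]^2}\ud(i,j)^p$, giving
\begin{equation*}
\frac{1}{n^2}\sum_{(i,j)\in[n]^2}\od(i,j)^p \;\le\; \estimate \;\le\; \frac{\alpha}{n^2}\sum_{(i,j)\in[n]^2}\ud(i,j)^p.
\end{equation*}

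From here the two moments are estimated exactly as in the original $p=1$ deduction, with the only modification being that the constants $k+1$ and $\tfrac12$ get raised to the power $p$. Using property~\autoref{it:(b)} of \autoref{thm:det-adaptive-lb} we get $\frac{1}{n^2}\sum\od^p\ge\frac{n^2-|X|}{n^2}(k+1)^p=(1-o(1))(k+1)^p$, and using property~\autoref{it:(c)} together with the diameter bound $\diam([n],\ud)\le k+1$ we get $\frac{1}{n^2}\sum\ud^p\le \frac{n^2-|Y|}{n^2}(\tfrac12)^p+\frac{|Y|}{n^2}(k+1)^p=(\tfrac12)^p+o(1)$. Dividing the two and letting $n\to\infty$ yields $\alpha \ge (1-o_n(1))\bigl((k+1)/(1/2)\bigr)^p=(1-o_n(1))(2k+2)^p$.

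There is no real obstacle here: all of the combinatorial and variational work has already been performed in \autoref{thm:det-adaptive-lb}, which was formulated precisely to give both a pair of metrics that fool the adaptive algorithm and a quantitative separation between their typical values, and the passage to powers is purely arithmetic. The only minor point worth flagging is that one does not need $t\mapsto t^p$ to be a metric transform (it isn't, for $p>1$); the algorithm sees metric inputs in both cases, and only the \emph{averaged quantity} involves the power, so $p\in(0,\infty)$ is handled uniformly by the same estimates.
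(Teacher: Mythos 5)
Your proposal is correct and follows essentially the same route as the paper: invoke \autoref{thm:det-adaptive-lb} to get the two adversarial metrics $\od,\ud$ agreeing on the queried pairs (so the deterministic adaptive algorithm cannot distinguish them), then bound $\avg(\od^{\,p})\ge(1-o(1))(k+1)^p$ and $\avg(\ud^{\,p})\le(1+o(1))(\tfrac12)^p$ from properties \autoref{it:(b)} and \autoref{it:(c)} and take the ratio. Your closing remark that $t\mapsto t^p$ need not be a metric transform is a fair observation but, as you note, irrelevant since only the averaged quantity involves the power.
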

\begin{proof}
We use the terminology of \autoref{thm:det-adaptive-lb}.
Since $\ud$ and $\od$ agree on the queried pairs $E$, the approximator cannot distinguish between them,
and therefore its approximation factor cannot be better than
$\frac{\avg({\smash{\od}}^{\,p})}{\avg(\ud^p)}$.

\begin{align*}
    \avg\bigl({\smash{\od}}^{\,p}\bigr)&\geq \frac{(n^2-o(n^2))(k+1)^p}{n^2}\geq (1-o_n(1))(k+1)^p,\\
    \avg\bigl(\ud^p\bigr)& \leq \frac{n^2
    \cdot \bigl(\tfrac12\bigr)^p
    + o(n^2)+ o(n^2)(k+1)^p}{n^2}\leq (1+o_n(1))
    \cdot \bigr(\tfrac12\bigr)^p
    .
\end{align*}

Hence, $\frac{\avg({\smash{\od}}^{\,p})}{\avg(\ud^p)}\geq (1-o_n(1))\cdot (2(k+1))^p$.
\end{proof}

\autoref{cor:impossibility-DA-power-p} (with $p=1$) applies only for approximations at least $4$. 
The range $[1,4)$ is handled below. 
It is much simpler to analyze because
the triangle inequality has no effect. 

\begin{proposition} \label{prop:UA-lb14}
Fix  $\e\in(0,1]$. Suppose that $A$ is a deterministic adaptive $\alpha$-approximator of the average distance.
If\/ $A$ uses at most $\e\binom n2$ queries for $n$ points,  
then $\alpha \geq 4 (1-\e/2) / (1+\e)$.
\end{proposition}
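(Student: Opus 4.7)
The plan is to use an oblivious adversary that answers $1$ to every query. Because the approximator $A$ is deterministic, this fixed answering strategy determines a unique set of queried pairs $E \subseteq \binom{[n]}{2}$ with $|E|\leq\e\binom{n}{2}$, depending only on $A$ (not on the hidden metric). I would then exhibit two metrics on $[n]$ that both agree with the adversary's answers on $E$ but have widely separated average distances; the algorithm cannot distinguish them, forcing $\alpha \ge \avg(\od)/\avg(\ud)$.

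Concretely, I would define
\[
\od(x,y)=\begin{cases} 0, & x=y,\\ 1, & \{x,y\}\in E,\\ 2, & \{x,y\}\in \tbinom{[n]}{2}\smallsetminus E,\end{cases}
\qquad
\ud(x,y)=\begin{cases} 0, & x=y,\\ 1, & \{x,y\}\in E,\\ \tfrac12, & \{x,y\}\in \tbinom{[n]}{2}\smallsetminus E.\end{cases}
\]
The only thing to verify is the triangle inequality, which for three distinct points reduces in each case to showing that the longest possible side is bounded by the sum of the two shortest possible sides. For $\od$ any two distances sum to at least $1+1=2$, which dominates the largest possible side $2$; for $\ud$ any two distances sum to at least $\tfrac12+\tfrac12=1$, which dominates the largest possible side $1$. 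Thus both are genuine metrics, and both are consistent with the adversary's transcript.

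Since $\od$ and $\ud$ produce identical responses to $A$'s queries, $A$ must output the same estimate $s$ on both inputs, and the $\alpha$-approximation guarantee~\eqref{eq:alpha approx def} yields $\avg(\od)\le s\le \alpha\avg(\ud)$, hence
\[
\alpha\ \ge\ \frac{\avg(\od)}{\avg(\ud)}\ =\ \frac{|E|+2\bigl(\binom{n}{2}-|E|\bigr)}{|E|+\tfrac12\bigl(\binom{n}{2}-|E|\bigr)}\ =\ \frac{4-2\beta}{1+\beta},
\]
where $\beta\eqdef|E|/\binom{n}{2}\le\e$. Since $\beta\mapsto (4-2\beta)/(1+\beta)$ is strictly decreasing on $[0,1]$ (its derivative equals $-6/(1+\beta)^2$), the right-hand side is at least $(4-2\e)/(1+\e)=4(1-\e/2)/(1+\e)$, giving the desired bound.

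There is no real obstacle here; the whole argument is elementary, and the only subtlety is the initial observation that the determinism of $A$ combined with a \emph{constant} adversary answer decouples the choice of $E$ from any hidden metric, which is what renders the adaptive case no harder than the non-adaptive one (in contrast to \autoref{thm:det-adaptive-lb}, where the adversary's answers must depend on the algorithm's past queries in a non-trivial way).
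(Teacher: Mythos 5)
Your proof is correct and follows essentially the same route as the paper: a constant adversary answering $1$ to every query, the same two metrics $\od$ and $\ud$, and the same ratio computation. The only (cosmetic) difference is that you normalize by $\binom{n}{2}$ and optimize over $\beta=|E|/\binom{n}{2}$, which yields the stated bound exactly, whereas the paper normalizes by $n^2$ and carries an $o_n(1)$ error term.
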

\begin{proof}
 The adversary answers '$w(x,y)=1$' for all queried pairs
 $(x,y)\in \binom{[n]}{2}$.
 Let $E\subseteq \binom{[n]}2$ be the queried pairs.
Define the metrics
\begin{align*}
    \od(x,y)&=\begin{cases}
        0 & x=y\\ 1& (x,y)\in E\\ 2&\text{o/w}.
    \end{cases}
    &
    \ud(x,y)&=\begin{cases}
        0 & x=y\\ 1& (x,y)\in E\\ \tfrac12&\text{o/w}.
    \end{cases}   
\end{align*}
Both are obviously metrics compatible with $G=(V,E)$. 
 In this case we have
\begin{align*}
\avg(\od)&=\frac{2\e\binom{n}{2}\cdot 1+2(1-\e)\binom{n}{2}\cdot 2}{n^2}=2-\e\pm o_n(1),\\
\avg(\ud)&=\frac{2\e\binom{n}{2}\cdot 1+2(1-\e)\binom{n}{2}\cdot \frac12}{n^2}=\frac12+\frac{\e}{2}\pm o_n(1).
\end{align*} 
Hence,
\[
\frac{\avg(\od)}{\avg(\ud)}
\geq \frac{2-\e}{\frac12+\frac{\e}2} -o_n(1)
= 4\frac{1-\frac{\e}2}{1+{\e}} -o_n(1). \qedhere
\]

\end{proof}

\begin{question}
Is the lower bound in \autoref{prop:UA-lb14} is tight for adaptive approximators? 
\end{question}

\section{Distortion growth of embedding into \texorpdfstring{$\ell_2$}{l\_2} and \texorpdfstring{$\ell_1$}{l\_1}}
\label{sec:no-dichotomy}

In this section, we prove \autoref{thm:no-embedding-dichotomy-ell_2} and \autoref{thm:no-embedding-dichotomy-ell_1} about the distortion growth of embedding CAT(0) subsets in $\ell_2$ and $\ell_1$.
The spaces $X_\varphi$ in \autoref{thm:no-embedding-dichotomy-ell_1} are constructed using
concepts similar to those used in \autoref{sec:extrapolation-rrg}.
Specifically,  $X_\varphi$ would be essentially the metric $\mathcal X_\delta$ defined in~\eqref{eq:def:X-delta},
but with a ``non-constant''
 $\delta$.
Other differences are as follows. The $L_1$ embeddability condition~\eqref{eq:F-embed-L1} is replaced with a stronger requirement (\eqref{eq:almost-cover-tree} below) required to prove $L_2$ embeddability, and an added
expansion requirement needed for $L_2$ and $L_1$ non-embeddability.
These additional requirements were already proved to exist in~\cites{ALNRRV}{MN-expanders2}.

\begin{definition}\label{def:L}
Fix two integers $n, d > 3$ and $K\in(1,\infty)$.
Denote by ${L}_K^{n,d}$ the set of $n$-vertex graphs $L=(V_L,E_L)$ whose maximum degree is at most $d$
with the following properties.

\noindent $\bullet$ First,
\begin{equation}\label{eq:hi-girth-log-diam}
\diam(L)\le K\cdot \min\{\girth(L), \log_d n\}.
\end{equation}
$\bullet$ For every $S\subseteq V_L$ with $|S|\le n/2$ we have
\begin{equation}\label{eq:cheeger}
|E_L(S,V_L\setminus S)|\ge |S|/K.
\end{equation}
$\bullet$ For every $S\subset \Sigma(L)$ with $|S|\le \sqrt{|V_L|}$,
there exists $U\subset V_L$ with the following properties.
Denote by $H=(U,E_L(U))$ the induced subgraph of $U$. Then,
\begin{equation} \label{eq:S usbset U}
S\subseteq \Sigma(H)
\end{equation}
and for every $x,y\in S$,
\begin{equation} \label{eq:bi-Lipschitz-U}
d_{\Sigma(L)}(x,y)\le d_{\Sigma(H)}(x,y)\le K\cdot d_{\Sigma(L)}(x,y);
\end{equation}
Moreover, let $\mathcal T(\Sigma(H))$ be the set of spanning trees
of $\Sigma(H)$.
Then, there exists a probability measure $\mu$ over  $\mathcal T(\Sigma(H))$  such that for every non-self-loop edge $e\in E(H)=E_L(U)$, and every $x,y\in [e]$,
\begin{equation}\label{eq:almost-cover-tree}
\mu(\{T \in \mathcal T(\Sigma(H)):\; [x,y]\subseteq T\})\ge 1-\frac{C_d \, d_{\Sigma(H)}(x,y)}{\log n},
\end{equation}
where $[x,y]\subseteq [e]$ is the closed interval whose endpoints are $x$ and $y$, and $[e]$ is the unit interval $\Sigma(H)$ corresponding to the edge $e$.
\end{definition}

\begin{lemma}\label{lem:L_K^{n,d}}
There exists a universal constant $K\in [1,\infty)$ such that for every even%
\footnote{The proof of \autoref{lem:L_K^{n,d}} is based on~\cite[Lemma~3.12]{MN-expanders2}.
In the statement of \cite[Lemma~3.12]{MN-expanders2}, $d$ is not required to be even. This is clearly an oversight, since $nd$ is always even in simple $d$-regular graphs on $n$ vertices.}
integer $d > 3$ there exists $n_0\in \mathbb N$ such that for all
$n>n_0$, ${L}_K^{n,d}\ne \emptyset$.
\end{lemma}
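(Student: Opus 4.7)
The plan is to take $L$ to be a high-girth expander graph---concretely, an LPS Ramanujan graph $\mathrm{LPS}_{p,q}$ with $d = p+1$ (adjusting the construction to handle general even $d$ by taking a suitable direct product or by edge-completion from a smaller Ramanujan graph, absorbing the cost in the constant $K$)---and verify the three conditions in \autoref{def:L} separately.

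First I would handle conditions~\eqref{eq:hi-girth-log-diam} and~\eqref{eq:cheeger}, which are the classical high-girth/expansion properties. By the LPS construction, $\girth(L) \gtrsim \log_{d-1} n$, and by the Moore bound $\diam(L) \gtrsim \log_{d-1} n$, so $\girth(L) \asymp \diam(L) \asymp \log_d n$, giving~\eqref{eq:hi-girth-log-diam} for some universal $K$. Condition~\eqref{eq:cheeger} follows from the Ramanujan spectral bound $\lambda_2(L) \le 2\sqrt{d-1}/d$ combined with the discrete Cheeger inequality $|E_L(S,V_L\setminus S)|/|S| \ge d(1-\lambda_2(L))/2$ applied to any $|S| \le n/2$.

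The main work is to produce, for each small $S\subseteq\Sigma(L)$ with $|S|\le \sqrt{n}$, a subgraph $H$ of $L$ satisfying \eqref{eq:S usbset U}--\eqref{eq:almost-cover-tree}. The idea is to first reduce to the vertex skeleton: let $S_0 \subset V_L$ be the set of endpoints of the (at most $|S|$) edges of $L$ whose simplicial realization meets $S$, so $|S_0| \le 2\sqrt{n}$. Next, fix a small radius $r \asymp \log_d \sqrt{n} = \tfrac{1}{2}\log_d n$, so that balls of radius $r$ around $S_0$ have size $\lesssim d^r \lesssim \sqrt{n}$ each. By the girth hypothesis $\girth(L) \gtrsim \log_d n$, these balls are \emph{trees}. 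Let $U$ be the union of the closed balls $B_L(S_0, r)$, and let $H = L[U]$. Since $r \gtrsim \diam(L)/K$, all shortest paths in $L$ between points of $S_0$ (and hence of $S$) lie in $U$, giving~\eqref{eq:S usbset U} and~\eqref{eq:bi-Lipschitz-U} with constant $K$. Crucially, because each ball is a tree and distinct balls can intersect only on small overlaps, the total number of edges of $H$ outside any fixed spanning tree $T_0$ of $\Sigma(H)$ is bounded by the number of pairs of balls that meet, which is $O(|S_0|^2) = O(n)$---far too weak; so one must replace $T_0$ by a smart tree and bound the \emph{number of independent cycles} of $H$ by $|E(H)|-|V(H)|+1 \le C_d |S|$ using a careful counting that exploits the fact that each cycle must contain an edge inside a ball of radius $r$, and within such a ball the tree structure prevents cycles from forming.

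For \eqref{eq:almost-cover-tree}, I would construct the random spanning tree $T$ of $\Sigma(H)$ as follows. Let $\mathscr{E}$ be a minimal set of edges of $H$ whose removal turns $H$ into a tree, so $|\mathscr{E}| \le C_d |S| \le C_d\sqrt{n}$. For each $e \in \mathscr{E}$ and each point $t\in [e]$, the complement $\Sigma(H)\setminus\{t\}$ has two components which are reconnected by a unique cycle through $e$; deleting a uniformly random point of that cycle (sampled with density $1/\mathrm{length}(\mathrm{cycle})$) yields a spanning tree of $\Sigma(H)$. Iterating this procedure independently over the $|\mathscr{E}|$ independent cycles gives a probability measure $\mu$ on $\mathcal T(\Sigma(H))$. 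A point $[x,y]\subset [e]$ fails to belong to $T$ only if the deletion falls inside $[x,y]$. Since each cycle passing through $e$ has length $\ge \girth(L) \gtrsim \log_d n$, the probability that a given cycle deletes a point in $[x,y]$ is $\lesssim d_{\Sigma(H)}(x,y)/\log_d n$; a union bound over the $O_d(1)$ cycles through $e$ (controlled by the tree structure of balls) yields \eqref{eq:almost-cover-tree} with constant $C_d$.

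The main obstacle will be the probabilistic tree construction step: organizing the cycle deletions so that different cycles interact correctly (so that the final object is indeed a spanning tree, not merely acyclic) and simultaneously establishing the pointwise deletion bound. The cleanest way is to process the cycles in a carefully chosen order along the block decomposition of $H$, and to use the high-girth property of $L$ to guarantee that each block of $H$ is either a tree or contains a single long cycle. With $K$ absorbing all the constants from the ball-size estimate, the Cheeger bound, and $C_d$ depending on $d$ only through the ball-growth factor, we obtain a universal $K$ and the lemma follows.
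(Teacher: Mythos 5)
Your proposal attempts a self-contained construction, whereas the paper's proof is essentially an outsourcing: it takes $L$ to be a random $d$-regular graph with a set of at most $\sqrt{n}$ edges deleted (so that, with high probability, girth $\asymp$ diameter $\asymp\log_d n$ and \eqref{eq:cheeger} holds), invokes~\cite[Lemma~7.5]{MN-expanders2} to produce the set $U\supseteq S$ with $|U|\le n^{2/3}$ satisfying \eqref{eq:S usbset U} and \eqref{eq:bi-Lipschitz-U}, uses a whp property of random regular graphs (subsets of size $\le n^{2/3}$ span at most $(1+K/\log_d n)$ times their size in edges) together with \autoref{lem:Lemma6.3-MN} to get the sparsity of $H$, and then obtains $\mu$ from~\cite[Claim~3.14]{ALNRRV}. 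Your sketch of \eqref{eq:hi-girth-log-diam} and \eqref{eq:cheeger} is fine, but the remainder has genuine gaps.

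The central problem is your choice of $U$ as a union of balls around $S_0$. For \eqref{eq:bi-Lipschitz-U} via your "contains shortest paths'' argument you need radius $r\ge \diam(L)/2$, in which case each ball has $\Theta_d(\sqrt{n})$ vertices and $|U|$ can be $\Theta_d(n)$; an induced subgraph of an expander on a constant fraction of the vertices has cycle rank $\Theta(n)$ no matter what the girth is, so your claimed bound $|E(H)|-|V(H)|+1\le C_d|S|$ is false and no distribution on spanning trees can cover each edge with probability $1-O(1/\log n)$. If instead you shrink $r$ to keep $|U|$ small, \eqref{eq:bi-Lipschitz-U} is no longer justified. This tension is exactly what~\cite[Lemma~7.5]{MN-expanders2} resolves, and you do not reprove it. A second gap: the $(1+O(1/\log n))$-sparsity of $H$ that \eqref{eq:almost-cover-tree} requires does \emph{not} follow from high girth alone --- the irregular Moore bound only forces $(1+O(1))$-sparsity for subsets of polynomial size when the girth is $\Theta(\log n)$ --- it is a separate first-moment property of \emph{random} regular graphs, and it is not known for LPS graphs; so replacing the random graph by an explicit Ramanujan graph silently drops a hypothesis the proof needs. (Relatedly, LPS graphs exist only for degrees $p+1$ with $p\equiv 1\pmod 4$ prime, and neither tensor products nor "edge-completion'' preserve girth; since \autoref{def:L} only bounds the maximum degree this is repairable for $d\ge 6$, but not by the means you propose.) Finally, your construction of $\mu$ by independently deleting a random point from each independent cycle is not shown to yield a spanning tree of $\Sigma(H)$, nor is the union bound over "the $O_d(1)$ cycles through $e$'' justified --- an edge of a $(1+\delta)$-sparse graph can lie on many cycles --- and you yourself flag this as "the main obstacle'' without resolving it; this is precisely the content of~\cite[Claim~3.14]{ALNRRV} that the paper cites.
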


Before proving \autoref{lem:L_K^{n,d}} we recall the notion of
$(1+\delta)$-sparse graphs defined in~\cite{ALNRRV,MN-expanders2}.
\begin{definition}
A graph $G=(V,E)$ is called $(1+\delta)$-sparse if
\begin{equation}
\forall S\subseteq V, \ |E_G(S)|\le (1+\delta)\cdot |S|.
\end{equation}
\end{definition}

\begin{lemma}[{\cite[Lemma~6.3]{MN-expanders2}}]
\label{lem:Lemma6.3-MN}
Fix $\delta\in (0,1)$, and integer $g\ge 3$. Then a $(1+\delta)$-sparse graph $G=(V,E)$ with girth at least $g$ satisfies
\begin{equation}
\forall S\subseteq V, \ |E_G(S)|\le (1+3\max\{\delta,1/(g-1)\})\cdot (|S|-1).
\end{equation}

\end{lemma}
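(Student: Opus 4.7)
The plan is to prove the inequality by induction on $|S|$, writing $\alpha \eqdef \max\{\delta,1/(g-1)\}$ throughout. The base case $|S|=1$ is immediate since $|E_G(S)|=0=(1+3\alpha)(|S|-1)$; for the induction step I fix $S$ with $|S|\ge 2$ and split on the minimum induced degree in $G[S]$, i.e.\ on whether the $2$-core of $G[S]$ is trivial.

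Case A: some vertex $v\in S$ satisfies $\deg_{G[S]}(v)\le 1$. Deleting $v$ loses at most one edge from $E_G(S)$, so the inductive hypothesis applied to $S\setminus\{v\}$ gives
\begin{equation*}
|E_G(S)|\le |E_G(S\setminus\{v\})|+1\le (1+3\alpha)(|S|-2)+1\le (1+3\alpha)(|S|-1),
\end{equation*}
where the last step uses $1\le 1+3\alpha$.

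Case B: every vertex of $G[S]$ has induced degree at least $2$, so $G[S]$ contains a cycle whose length is at least $g$ (by the girth hypothesis on $G$); in particular $|S|\ge g$. The $(1+\delta)$-sparsity hypothesis yields $|E_G(S)|\le (1+\delta)|S|$, so it only remains to check the arithmetic inequality $(1+\delta)|S|\le(1+3\alpha)(|S|-1)$, which rearranges to $(3\alpha-\delta)|S|\ge 1+3\alpha$. When $\alpha=1/(g-1)$ (i.e.\ $\delta\le 1/(g-1)$), I have $3\alpha-\delta\ge 2/(g-1)$ and $1+3\alpha=(g+2)/(g-1)$, so the bound reduces to $|S|\ge (g+2)/2$, which holds because $|S|\ge g\ge 2$. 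When $\alpha=\delta\ge 1/(g-1)$, I have $3\alpha-\delta=2\delta$ and $1+3\alpha=1+3\delta$, so the bound reads $|S|\ge 1/(2\delta)+3/2$; since $\delta\ge 1/(g-1)$ gives $1/(2\delta)\le (g-1)/2$, this again follows from $|S|\ge g$.

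There is no serious obstacle here: the whole argument is a clean induction together with one short arithmetic verification in the minimum-degree-$\ge 2$ case. The only point that needs any attention is ensuring that the two subcases of the definition of $\alpha$ in Case~B are both uniformly absorbed by the girth-forced lower bound $|S|\ge g$, and the elementary calculation above shows this is automatic for every integer $g\ge 3$ and every $\delta\in(0,1)$.
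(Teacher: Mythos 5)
Your proof is correct. Note that the paper itself does not prove this lemma---it is quoted verbatim from \cite[Lemma~6.3]{MN-expanders2}---so there is no in-paper argument to compare against; your induction (peel off a vertex of induced degree at most one, and otherwise use minimum induced degree $\ge 2$ to force a cycle of length at least $g$, hence $|S|\ge g$, and then invoke $(1+\delta)$-sparsity plus the arithmetic check $(3\alpha-\delta)|S|\ge 1+3\alpha$) is a valid self-contained derivation, essentially the standard leaf-peeling/$2$-core argument in inductive form. The only cosmetic caveat is that the statement as quoted ranges over all $S\subseteq V$, and for $S=\emptyset$ the right-hand side is negative; the lemma is of course meant for nonempty $S$, which is exactly where your base case $|S|=1$ starts, and your use of girth $\ge 3$ to rule out loops and parallel edges (so that a vertex of induced degree $\le 1$ really carries at most one edge of $E_G(S)$) is consistent with that reading.
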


\begin{proof}[Proof of \autoref{lem:L_K^{n,d}}]
The proof of this lemma is the same as the proof of~\cite[Lemma~3.12]{MN-expanders2},
which is mainly based on arguments from~\cites{ALNRRV}{ABLT}.
We next describe the outline of that proof and the changes that are made. The description here is not complete and should be read together with the proof of~\cite[Lemma~3.12]{MN-expanders2}.

Fix an even integer $d>3$. Lemma~3.12 from~\cite{MN-expanders2} asserts that with probability $1-C_d/n^{1/3}$ random $d$-regular graphs have the property $\mathcal{L}_K^{n,d}$ as defined in Definition~3.11 in~\cite{MN-expanders2}.
Hence, $\mathcal{L}_K^{n,d}\ne\emptyset$ for sufficiently large $n$.
Per that definition, for each $G\in \mathcal L_K^{n,d}$, there exists a subset of edges $I_G\subset E_G$
(of size at most $\sqrt{n}$, but this is immaterial in the current context) with the following property:
any graph in the class $\{L=(V_G,E_G\setminus I_G):\; G\in\mathcal L_K^{n,d}\}$ has degree at most $d$ and satisfies~\eqref{eq:hi-girth-log-diam} and~\eqref{eq:cheeger}.

We are left to prove that such a graph will also admit an induced subgraph $H=(U,E_L(U))$ with properties~\eqref{eq:S usbset U}, \eqref{eq:bi-Lipschitz-U}, and~\eqref{eq:almost-cover-tree}.
Lemma~3.12 of~\cite{MN-expanders2} does not state those properties, but its proof in~\cite[§7.1]{MN-expanders2} contains them, as we will detail next.

As stated in~\cite{MN-expanders2} before Inequality~(199) there,
a subset $U\subset V_L$ satisfying~\eqref{eq:S usbset U} and~\eqref{eq:bi-Lipschitz-U} is
guaranteed to exist by \cite[Lemma~7.5]{MN-expanders2}
(Inequality~\eqref{eq:bi-Lipschitz-U} is presented in~\cite[Inequality~(201)]{MN-expanders2}).
As explained after~\cite[Formula~(201)]{MN-expanders2}, $|U|\le n^{2/3}$,
which means, as noted there, that $H=(U,E_L(U))$ is $(1+K/\log_d n)$-sparse.
According to \autoref{lem:Lemma6.3-MN} and the bound
$\girth(H)\gtrsim \log_dn$ (see~\eqref{eq:hi-girth-log-diam}), $H$ satisfies
\[\forall T\subset U,\ |E_L(T)| \le (1+K/\log_d n) (|T|-1).
\]
Hence, by \cite[Formula~(170)]{MN-expanders2}, which is based on  \cite[Claim~3.14]{ALNRRV} (see also~\cite[Claim~6.4]{MN-expanders2}), there exists%
\footnote{Formula~(170) in~\cite{MN-expanders2} is stated for pair of points on an edge ``not in $E(\Gamma)$''. In the context of~\cite{MN-expanders2}, $E(\Gamma)$ is the set of edges in ``short'' cycles. In the context of the current paper, since the underline graph $G$ has a high-girth, $E(\Gamma)=\emptyset$.}
 a probability measure $\mu$ on the spanning trees of $H$ satisfying~\eqref{eq:almost-cover-tree}.
\end{proof}

\begin{lemma}\label{lem:subset-L2-embedding}
Let $L=(V,E)\in L_K^{n,d}$. Then
for every $S\subset \Sigma(L)$ with $|S|\le \sqrt{n}$ and every $\alpha>0$,
\begin{align}
c_1(S,\min\{d_{\Sigma(L)}, \alpha\}) &\lesssim_d 1;
\label{eq:ALNRRV-L1-embedding}
\\
\label{eq:ALNRRV-L2-embedding}
c_2(S,\min\{d_{\Sigma(L)}, \alpha\}) &\lesssim_d
\min \Bigl\{
\sqrt{\log |S|} \ , \
\max\Bigl\{ \sqrt{\log\log |S|}, \alpha\big/\sqrt{\log n}\Bigr\}
\Bigr\}.
\end{align}
\end{lemma}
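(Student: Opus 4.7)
Plan. My first step would be to reduce all estimates to the induced subgraph $H=(U,E_L(U))$ furnished by \eqref{eq:S usbset U}--\eqref{eq:bi-Lipschitz-U}: these conditions allow the replacement of $(S,d_{\Sigma(L)})$ by $(S,d_{\Sigma(H)})$ at the cost of the universal factor $K$, and they make available the random-spanning-tree measure $\mu$ on $\mathcal T(\Sigma(H))$ from~\eqref{eq:almost-cover-tree} with its per-edge survival probability $1-C_d/\log n$. The high-girth/small-diameter estimate \eqref{eq:hi-girth-log-diam} will also be used crucially to bound $\diam(\Sigma(T))\lesssim_K\log n$ for every spanning tree $T\in\mathcal T(\Sigma(H))$.

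For \eqref{eq:ALNRRV-L1-embedding}: every finite tree metric embeds isometrically into $L_1$, and a convex combination of $L_1$-metrics is again an $L_1$-metric, so the averaged metric $\bar d(x,y)\eqdef\mathbb{E}_{T\sim\mu}[d_{\Sigma(T)}(x,y)]$ embeds isometrically into $L_1$. The inequality $d_{\Sigma(T)}\ge d_{\Sigma(H)}$ gives $\bar d\ge d_{\Sigma(H)}$ pointwise. Conversely, for $x,y\in\Sigma(H)$ with $r=d_{\Sigma(H)}(x,y)$, a union bound over the $O(r+1)$ edges meeting the $(x,y)$-geodesic, combined with~\eqref{eq:almost-cover-tree}, shows that the whole geodesic is contained in $\Sigma(T)$ with probability $\ge 1-C_d(r+1)/\log n$; on the complement one uses the crude $d_{\Sigma(T)}(x,y)\le\diam(\Sigma(T))\lesssim_K\log n$ from \eqref{eq:hi-girth-log-diam}. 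Together these yield $\bar d(x,y)\lesssim_d r$, so $c_1(S,d_{\Sigma(H)})\lesssim_d 1$. Since subsets of $L_1$ are closed under metric transforms with constant distortion (\cite[Remark~5.5]{MN-expanders2}), the same $O_d(1)$ bound persists after applying the truncation $t\mapsto\min\{t,\alpha\}$, which proves~\eqref{eq:ALNRRV-L1-embedding}.

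For~\eqref{eq:ALNRRV-L2-embedding}: the $\sqrt{\log|S|}$ bound follows by applying Bourgain's theorem to $(S,d_{\Sigma(H)})$ and then using the metric-transform rigidity of subsets of $L_2$ (\cite[Remark~5.4]{MN-quotients}) to pass to $\min\{d_{\Sigma(H)},\alpha\}$. For the bound $\max\{\sqrt{\log\log|S|},\alpha/\sqrt{\log n}\}$ I would exploit the same tree measure $\mu$, but now combine each random tree with Matou\v sek's $\sqrt{\log\log N}$ upper bound for the $L_2$-distortion of an $N$-point tree. Concretely, for each $T\sim\mu$ let $\Phi_T\colon(\Sigma(T),\min\{d_{\Sigma(T)},\alpha\})\to L_2$ be a Matou\v sek embedding of distortion $\lesssim\sqrt{\log\log|S|}$, and aggregate into a single $F\colon S\to L_2(\mu;L_2)$ via $\|F(x)-F(y)\|_2^2=\mathbb{E}_{T\sim\mu}\|\Phi_T(x)-\Phi_T(y)\|_2^2$. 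The contraction comes from the ``good'' event that the geodesic survives in $T$ (and hence $\min\{d_{\Sigma(T)},\alpha\}=\min\{d_{\Sigma(H)},\alpha\}$), occurring with probability $\ge 1-C_d d_{\Sigma(H)}(x,y)/\log n$; the expansion uses the crude cap $\min\{d_{\Sigma(T)},\alpha\}\le\alpha$ on the ``bad'' event. A careful balancing of these two contributions produces the $\max$ bound: the $\sqrt{\log\log|S|}$ term is the cost of Matou\v sek's tree embedding, while the $\alpha/\sqrt{\log n}$ term arises from weighting the truncated-by-$\alpha$ bad-event distance by the per-edge cut probability $O_d(1/\log n)$.

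The hard part will be the $L_2$ bound \eqref{eq:ALNRRV-L2-embedding}, specifically the dependence $\max\{\sqrt{\log\log|S|},\alpha/\sqrt{\log n}\}$, rather than a product of these two quantities that a na\"\i ve composition of Matou\v sek's embedding with the random tree would give. This requires separating the ``good'' and ``bad'' random-tree events for each pair $(x,y)\in S^2$ and exploiting both the truncation at $\alpha$ (to cap the damage from any cut edge) and the high-girth bound~\eqref{eq:hi-girth-log-diam} (which ensures that balls of radius $\ll\log n$ in $\Sigma(H)$ are genuinely tree-like at the relevant scale). These two ingredients interact along the lines of the ALNRRV-type analysis used in the proof of \autoref{lem:L_K^{n,d}} and should deliver the claimed bound after some bookkeeping.
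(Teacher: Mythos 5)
Your overall architecture matches the paper's: reduce to the induced subgraph $H$, use the random spanning tree measure $\mu$ with the per-edge survival estimate \eqref{eq:almost-cover-tree}, average over trees for the $L_1$ bound, and aggregate Matou\v{s}ek tree embeddings in $L_2(\mu;L_2)$ with a good/bad event split for the $\sqrt{\log\log|S|}$ bound. However, there are two genuine errors.

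First, in the $L_1$ argument you average the \emph{untruncated} tree metrics and control the bad event by ``$d_{\Sigma(T)}(x,y)\le\diam(\Sigma(T))\lesssim_K\log n$ from \eqref{eq:hi-girth-log-diam}.'' This is false: \eqref{eq:hi-girth-log-diam} bounds the diameter of the graph $L$, not of its spanning trees. A spanning tree of $H=(U,E_L(U))$ can have diameter comparable to $|U|$ (which is as large as $n^{2/3}$ in the construction behind \autoref{lem:L_K^{n,d}}), so $\E_{T\sim\mu}[d_{\Sigma(T)}(x,y)]$ is not $\lesssim_d d_{\Sigma(H)}(x,y)$; the bad event, though it has probability $O_d(r/\log n)$, can contribute $r\cdot n^{2/3}/\log n\gg r$. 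The fix is to truncate \emph{before} averaging: reduce to $\alpha\lesssim\log_d n$ (legitimate since $\diam(\Sigma(L))\lesssim\log_d n$), work with $\int\min\{d_T,\alpha\}\,\dd\mu(T)$, and bound the bad-event contribution by $\alpha\cdot C_d r/\log n\lesssim_d r$. One then still needs that truncations of $L_1$ metrics embed in $L_1$ with $O(1)$ distortion, which you invoke anyway at the end. (Also note your union bound should give bad probability $C_d r/\log n$, not $C_d(r+1)/\log n$; the additive $+1$ would ruin the Lipschitz bound at small scales. The cleanest route is to verify the upper bound only for pairs lying in a single edge interval, which suffices since $\Sigma(H)$ is a length space.)

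Second, the claim that ``the $\sqrt{\log|S|}$ bound follows by applying Bourgain's theorem to $(S,d_{\Sigma(H)})$'' is wrong: Bourgain's theorem gives distortion $O(\log|S|)$ into $\ell_2$ for a general metric space, not $O(\sqrt{\log|S|})$. The square root requires the $L_1$ structure: one first uses \eqref{eq:ALNRRV-L1-embedding} and then the theorem of Chang--Naor--Ren (improving Arora--Lee--Naor) that $k$-point subsets of $L_1$ embed into $\ell_2$ with distortion $O(\sqrt{\log k})$. Finally, a smaller but real issue in your $\sqrt{\log\log|S|}$ argument: applying Matou\v{s}ek's embedding to $\Sigma(T)$ for $T$ a spanning tree of $H$ yields distortion $\sqrt{\log\log|U|}\asymp\sqrt{\log\log n}$, not $\sqrt{\log\log|S|}$. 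One must apply it to the tree metric \emph{induced on an $O(|S|)$-point vertex set} $W$ with $S\subseteq\Sigma(W)$, and then extend the resulting Euclidean embedding from $W$ to the $1$-complex $\Sigma(W)$ (this is exactly the role of \autoref{lem:biLip-extension} in the paper); your proposal does not address this extension step.
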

\begin{proof}
We may assume without loss of generality that
\begin{equation} \label{eq:alpha-bounded}
\alpha\lesssim \log_d n.
\end{equation}
In fact, by~\eqref{eq:hi-girth-log-diam},
$\diam(L)\lesssim \log_dn$, so for every $\alpha\gtrsim \log_d n$,
$\min\{d_{\Sigma(L)}, \alpha\}\asymp d_{\Sigma(L)}
\asymp \min\{d_{\Sigma(L)}, \log_d n\}$.

We begin with poving~\eqref{eq:ALNRRV-L1-embedding}.
Let $U\subset V$, and $H=(U,E_L(U))$ and $\mu$ be a probability measure over $\mathcal T(\Sigma(H))$ as in \autoref{def:L}.
By~\eqref{eq:S usbset U} and \eqref{eq:bi-Lipschitz-U} it is sufficient to prove
$c_1(\Sigma(H),\min\{d_{\Sigma(H)}, \alpha\}) \lesssim 1$.
To do that, we first argue that
the metric $\int_{\mathcal T(\Sigma(H))}\min\{d_T,\alpha\}\,\mathsf{d}\mu(T)$ is a constant approximation of $\min\{d_{\Sigma(H)},\alpha\}$.
Indeed, for every $T\in\mathcal T(\Sigma(H))$, $d_T\ge d_{\Sigma(H)}$, and therefore
\[
\min\{d_{\Sigma(H)}(x,y),\alpha\}\le \int_{\mathcal T(\Sigma(H))}\min\{d_T(x,y),\alpha\}\,\mathsf d\mu(T), \quad \forall x,y\in\Sigma(H).
\]
In the reverse direction, we prove that
\begin{equation} \label{eq:stam1}
    \min\{d_{\Sigma(H)}(x,y),\alpha\}\gtrsim \int_{\mathcal T(\Sigma(H))}\min\{d_T(x,y),\alpha\}\,\mathsf d\mu(T), \quad \forall x,y\in\Sigma(H).
\end{equation}
If $d_{\Sigma(H)}(x,y)\ge \alpha$, then~\eqref{eq:stam1} is vacuously true.
Thus, we assume  $d_{\Sigma(H)}(x,y)< \alpha$, and in this case
it is sufficient to prove it for $x,y\in\Sigma(H)$ that belongs to the same closed unit interval that corresponds to an edge of $H$.
Let $e\in E(H)$ be such that
$x,y\in [e]$.
Obviously, for $T\in\mathcal T(\Sigma(H))$, if $[x,y]\subset T$, then $d_T(x,y)=d_{\Sigma(H)}(x,y)$.
Recalling~\eqref{eq:almost-cover-tree},
\[\mu(T\in\mathcal T(\Sigma(H)):\; [x,y]\not\subset T)\lesssim C_d d_{\Sigma(H)}(x,y)/\log n,
\]
we have
\begin{equation} \label{eq:aa}
\begin{aligned}
\int_{\mathcal T(\Sigma(H))} \min&\{d_T(x,y),\alpha\}\,\mathsf d\mu(T)
\\ & \le \mu([x,y]\subset T)\cdot \min\{d_{\Sigma(H)}(x,y),\alpha\}  +  \mu([x,y]\not\subset T) \cdot \alpha
\\ & \le d_{\Sigma(H)}(x,y)  +
\frac{C_d d_{\Sigma(H)}(x,y)}{\log n} \alpha
\\ & \stackrel{\eqref{eq:alpha-bounded}}{\lesssim_d} d_{\Sigma(H)}(x,y).
\end{aligned}
\end{equation}
Since tree metrics are $L_1$ metrics, truncations of $L_1$ metrics embed with constant distortion in $L_1$~\cite[Lemma~5.4]{MN-expanders2},
and convex combination of $L_1$ metrics are $L_1$ metrics, we conclude that
\[
c_1\bigl(S,\min\{d_{\Sigma(L)},\alpha\}\bigr) \leq
c_1\bigl(\Sigma(H),\min\{d_{\Sigma(H)},\alpha\}\bigr)\lesssim_d 1.
\]

Next, we prove the first upper bound in~\eqref{eq:ALNRRV-L2-embedding}.
Chang, Naor, and Ren~\cite{CNR}
(improving on~\cite{ALN}) proved that any $k$-point subset of $L_1$ embed in $L_2$ with $O(\sqrt{\log k})$ distortion and therefore, together with~\eqref{eq:ALNRRV-L1-embedding},
\[
c_2\bigl(S,\min\{d_{\Sigma(L)},\alpha\}\bigr)\lesssim_d \sqrt{\log |S|}.
\]

Lastly, we prove the second upper bound
in~\eqref{eq:ALNRRV-L2-embedding}.
Let $W\subset U$ be a minimal subset (under containment) of $U$ such that \(S\subseteq \Sigma(W).\)
Obviously $|W|\le 2|S|$, and $S\subseteq \Sigma(W)\subseteq \Sigma(H)$.
The bound is achieved by embedding only $\Sigma(W)$ (instead of $\Sigma(H)$) directly in $L_2$, avoiding intermediate embedding in $L_1$.
However, in this case, we will first embed $W$
and then extend that embedding to an embedding of $\Sigma(W)$
using \autoref{lem:biLip-extension}.

Fix $T\in\mathcal T(H)$ to be a spanning tree of $H$.
By Matou\v{s}ek tree embedding~\cite{Mat-trees} and the rigidity of Euclidean metrics under truncations~\cite[Lemma~5.2]{MN-quotients}
there exists%
\footnote{The embedding $\Phi_{T,W}$ embeds only the (tree) metric induced on $W$ by $T$, and not $T$ itself.
Otherwise, Matou\v{s}ek's distortion bound would only give $O(\sqrt{\log\log |T|})$, and $|T|$ might be too large for our purpose.}
$\phi_{T,W}:W\to L_2$
such that for any $u,v\in W$
\begin{equation} \label{eq:Mat-tree-embed-1}
\min\{d_{T}(u,v),\alpha\}\le \|\phi_{T,W}(u)-\phi_{T,W}(v)\|_2 \lesssim
\min\Bigl\{\sqrt{\log\log |W|}\, d_{T}(x,y),\alpha\Bigr\}.
\end{equation}
Recall that
$\mu$ is a probability measure over $\mathcal T(\Sigma(H))$ as in \autoref{def:L}.
The Euclidean embedding $\phi:(W,d_H) \to L_2(\mu;L_2)$ of $H$ is defined
for $u\in W$ as
\[ \big[\phi(u)\big](T)=\phi_{T,W}(u).\]

Fix $x,y\in W$. In one direction,
\[
\min\{d_{H}(x,y),\alpha\}^2
\stackrel{\eqref{eq:Mat-tree-embed-1}}\leq
\int_{\mathcal T(H)} \|\phi_{T,W}(x)-\phi_{T,W}(y)\|_2^2
\,\mathsf d\mu(T) = \|\phi(x)-\phi(y)\|^2_{L_2(\mu;L_2)}.
\]
In the other direction, we bound the Lipschitz constant of $\phi$ similarly to~\eqref{eq:aa}. It is sufficient to bound the Lipschitz constant for edges $(x,y)\in E_L(U)$. So
\begin{equation} \label{eq:ab}
\begin{aligned}
\|\phi(x)-\phi(y)\|_{L_2(\mu;L_2)}^2 &=
\int_{\mathcal T(H)} \|\phi_{T,W}(x)-\phi_{T,W}(y)\|_2^2
\,\mathsf d\mu(T)
\\ & \stackrel{\eqref{eq:Mat-tree-embed-1}}\lesssim
\int_{\mathcal T(H)} \min\{(\log\log |W|)\cdot  d^2_T(x,y),\alpha^2\}\,\mathsf d\mu(T)
\\ & \leq
\mu(\{T:(x,y)\subseteq T\})\cdot \min\{\log\log |W|\,,\,\alpha^2\}  +  \mu(\{T:(x,y)\not\subseteq T\}) \cdot \alpha^2
\\ &  \stackrel{\eqref{eq:almost-cover-tree}}\leq
\min\{\log\log |W|\,,\,\alpha^2\}
+  {C_d \alpha^2}/{\log n} .
\\ & \lesssim \max\{\log\log |S|\,,\, C_d\alpha^2/\log n \}.
\end{aligned}
\end{equation}
Applying \autoref{lem:biLip-extension} to the graph $H=(U,E_L(U))$, the subset $W\subseteq U$, and the mapping $\phi$, we deduce that there exists a bi-Lipschitz embedding $\Phi:\Sigma(W)\to L_2$, concluding the proof.
\end{proof}

\begin{remark}
The proof of the Euclidean embedding~\eqref{eq:ALNRRV-L2-embedding} in
\autoref{lem:subset-L2-embedding}
used the special structure of the representation of the graph metric by
a convex combination of spanning trees: each edge of the graph appears in all the spanning trees except a subset whose total weight is at most $c/\log n$.
Far more popular is the more general concept of $C$-approximation of a metric space $(X,d_X)$ by a probability distribution $(\Omega,\mu)$ on dominating tree metrics~\cites{Bartal-trees}{Bartal-trees-2}{FRT}. I.e.,
\begin{inparaenum}[(i)]
    \item any $T\in\Omega$ is a tree metric on the point-set $X$;
    \item $d_X(x,y)\le d_T(x,y)$ for any $x,y\in X$ and $T\in\Omega$;
    \item $\int_{\Omega} d_T(x,y)\mathsf{d}\mu(T)\le C\cdot d_X(x,y)$
    for any $x,y\in X$.
\end{inparaenum}
Despite their popularity, not much is known structurally about this class of metric spaces.
Concretely, we ask:
\end{remark}

\begin{question}
Let $M$ be an $n$-point metric space $C$-approximated by a probability distribution on dominating tree metrics, where $C$ is constant independent of $n$.
Is $c_2(M)=o(\sqrt{\log n})$?
\end{question}

\begin{lemma}
Let $G\in L_K^{n,d}$, and $p\ge 1$. Then%
\footnote{Technically, the Poincar\'e inequality~\eqref{eq:p-poincare-metric} was defined for regular graphs, whereas here $G$ only has an upper bound of $d$ on the degrees. This can be fixed by adding self loops to $G$ that make it $d$-regular.
The additional self-loops do not affect the shortest-path metric in $G$ and do not affect~\eqref{eq:cheeger} and~\eqref{eq:almost-cover-tree}.
Therefore, they are immaterial to the proof.}
\begin{align} \label{eq:PIp-L_K^nd}
\gamma_p(G,\ell_p)&\lesssim_p (dK)^p.
\end{align}
\end{lemma}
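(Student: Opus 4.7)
The plan is to first establish the $1$-Poincar\'e inequality for $\R$-valued functions with constant at most $dK$, then bootstrap it to arbitrary power $p\ge 1$ via Matou\v{s}ek's extrapolation~\eqref{eq:mat-extrapolation}, and finally pass to $\ell_p$-valued functions by a coordinate-wise decomposition.

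First, I will verify that $\gamma_1(G,\R)\le dK$. Since $G$ has maximum degree at most $d$, we have $|E_G|\le nd/2$. For every $S\subseteq V_G$ with $|S|\le n/2$, plugging $f=\mathbf{1}_S$ into the Poincar\'e inequality and combining with the edge expansion~\eqref{eq:cheeger} yields
\[\frac{1}{n^2}\sum_{u,v\in V_G}|f(u)-f(v)|=\frac{2|S||V_G\setminus S|}{n^2}\le\frac{2|S|}{n}\qquad\text{and}\qquad\frac{|E_G(S,V_G\setminus S)|}{|E_G|}\ge\frac{2|S|/K}{nd},\]
so the ratio of the two sides is at most $dK$ and hence $\gamma_1(G,\{0,1\})\le dK$. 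The layer-cake identity $|f(u)-f(v)|=\int_{\R}|\mathbf{1}_{\{f>t\}}(u)-\mathbf{1}_{\{f>t\}}(v)|\dd t$, applied pointwise and then integrated through both sides of the Poincar\'e inequality, extends this bound to all real-valued $f$, giving $\gamma_1(G,\R)\le dK$.

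Second, Matou\v{s}ek's extrapolation~\eqref{eq:mat-extrapolation} with $q=1$ and $p\ge 1$ gives $\gamma_p(G,\R)\lesssim_p\gamma_1(G,\R)^p\le(dK)^p$. Finally, the identity $\|x-y\|_{\ell_p}^p=\sum_i|x_i-y_i|^p$ lets us decompose any $f:V_G\to\ell_p$ coordinate-wise into a family $\{f_i:V_G\to\R\}_{i\in\N}$, and summing the real-valued $p$-Poincar\'e inequality over coordinates yields $\gamma_p(G,\ell_p)\le\gamma_p(G,\R)\lesssim_p(dK)^p$, which is the desired estimate.

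The argument is essentially routine, as the cut-level Cheeger bound, Matou\v{s}ek's extrapolation, and the coordinate decomposition of $\ell_p$ are all available off the shelf. The only subtlety is to extrapolate from $\gamma_1(G,\R)$ directly rather than from the spectral gap $\gamma_2(G,\R)$: the latter route only gives $\gamma_2(G,\R)\lesssim(dK)^2$ via Cheeger (losing a factor of $d$) and then, via~\eqref{eq:mat-extrapolation} with $q=2$, only $\gamma_p(G,\R)\lesssim_p(dK)^{\max\{2,p\}}$, which is too weak in the range $p\in[1,2)$.
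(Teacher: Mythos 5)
Your proposal is correct and follows essentially the same route as the paper: bound $\gamma_1(G,\{0,1\})$ via the edge-expansion condition~\eqref{eq:cheeger}, pass to $\gamma_1(G,\R)$ by decomposing real-valued differences into cut metrics (your layer-cake identity is exactly this), apply Matou\v{s}ek's extrapolation~\eqref{eq:mat-extrapolation} with $q=1$, and conclude by coordinate-wise summation for $\ell_p$. The only differences are an immaterial constant factor of $2$ in the Cheeger step and your (accurate) closing remark about why extrapolating from $q=1$ rather than $q=2$ is needed, which the paper leaves implicit.
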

\begin{proof}
The proof is a standard application of Matou\v{s}ek extrapolation theorem (or Cheeger inequality for graphs~\cites{Dodziuk}{AM}{Alon-cheeger} for the special case $p=2$).
We provide the standard details for completeness.
By setting $S$ to be the smaller (in cardinality) of the two subsets $f^{-1}(0)$ and $f^{-1}(1)$ for a non-constant $f:V\to \{0,1\}$,
we have
\begin{multline*}
\gamma_1(G,\{0,1\})= \sup_{\substack{f:V\to \{0,1\}\\
\text{non-const}}}
\frac
{\frac{1}{n^2} \sum_{u,v\in V} |f(u)-f(v)|}
{\frac{2}{nd}\sum_{\{u,v\}\in E} |f(u)-f(v)|}
\\
=\frac{d}{2n}\,\sup_{\substack {S\subset V\\ 0<|S|\le n/2}}
\frac{|S|\cdot |V\setminus S|}{|E(S,V\setminus S)|}
 \leq
\frac{d}{2}\,\sup_{\substack {S\subset V\\ 0<|S|\le n/2}}
\frac{|S|}{|E(S,V\setminus S)|}
\stackrel{\eqref{eq:cheeger}}{\le} \frac{dK}{2}.
\end{multline*}

Since for $f:V\to \mathbb R$, the metric $d(u,v)=|f(u)-f(v)|$
can be written as a conic combination of cut metrics on $V$, we
have
\begin{equation} \label{eq:PI1-G-R}
\gamma_1(G,\mathbb R)= \gamma_1(G,\{0,1\})\leq dK/2.
\end{equation}
Applying Matou\v{s}ek extrapolation theorem~\eqref{eq:mat-extrapolation}
on $\gamma_1(G,\mathbb R)$,
\[ \gamma_p(G,\mathbb R) \stackrel{\eqref{eq:mat-extrapolation}}\lesssim \gamma_1(G,\mathbb R)^p
\stackrel{\eqref{eq:PI1-G-R}}\leq (dK/2)^p.
\]
Using a standard argument of summation over coordinates,
it is straightforward to check that $\gamma_p(G,\mathbb R)=\gamma_p(G,\ell_p)$ for any regular graph $G$.
\end{proof}

\begin{proof}[Proof of\/ \autoref{thm:no-embedding-dichotomy-ell_2} and \autoref{thm:no-embedding-dichotomy-ell_1}]
We prove both theorems using the same construction.
Fix a metric transform $\varphi:[0,\infty) \to [0,\infty)$ and $d=4$.
Normalizing $\varphi$ by dividing it by $\varphi(1)$,
we may assume without loss of generality that $\varphi(1)=1$.
Let $K$, $n_0$ and $(G_n)_{n>n_0}$ where $G_n\in L_K^{n,d}$
from \autoref{lem:L_K^{n,d}}.
Fix $G=(V,E)=G_n$.
Let
\[
\hat X_G=\Bigl(\Sigma(G), \tfrac{2\pi}{\varphi(\girth(G))} d_{\Sigma(G)}\Bigr).
\]
Observe that
\[
\frac{2\pi}{\varphi(\girth(G))} = \delta \frac{2\pi}{\girth(G)}
\]
for the parameter $\delta = \girth(G)/\varphi(\girth(G))$ which satisfies $\delta\geq1$ since $\varphi$ is concave with $\varphi(1)=1$. Thus, similarly to \autoref{lem:X_delta-hadamard}, we deduce that
$\cone(\hat X_G)$ is a Hadamard space and
\begin{equation} \label{eq:expander-snowflake-in-X}
c_{\cone(\hat X_G)}((\Sigma(G),\min\{\varphi(\girth(G)),d_{\Sigma(G)}\}))\lesssim 1
\end{equation}
since $G\in L_K^{n,d}$.  The space $ X_\varphi$ is defined to be their $\ell_1$-union:
\begin{equation}\label{eq:defX-O-product}
X_\varphi=\Bigl(\biguplus_{n>n_0} \bigl(\cone(\hat X_{G_n}), 0_{\cone(\hat X_{G_n})}\bigr)\Bigr)_1,
\end{equation}
where $0_{\cone(\hat X_G)}$ is the cusp of
$\cone(\hat X_G)$.
By \autoref{lem:ell1-union}, $X$ is a Hadamard space.

We first prove that $D_k(X_\varphi,\ell_p)\gtrsim_p \varphi(\log k)$, $p\in[1,\infty)$.
From~\eqref{eq:expander-snowflake-in-X} it is sufficient
to show that
for any $G=G_n=(V,E)$, we have
\[
c_p((V,\min\{\varphi(\girth(G)),d_G\}))\gtrsim_p \varphi (\log n).
\]
Fix $f:V\to \ell_p$.
By scaling $f$,
assume without loss of generality that $f$ is $1$-Lipschitz, i.e.,
\begin{equation} \label{eq:dist-f}
\|f(u)-f(v)\|_p \le \min \{\varphi(\girth(G)),d_G(u,v)\}
\le \mathrm{dist}(f)\cdot \|f(u)-f(v)\|_p,
\end{equation}
where $\mathrm{dist}(f)$ is the distortion of $f$.
Applying the $p$-Poincar\'e inequality~\eqref{eq:p-poincare-metric} for $G$ on $f$,

\begin{multline}\label{eq:PI-for-snowflake}
\frac{1}{\mathrm{dist}(f)^p n^2} \sum_{u,v\in V} \min\{\varphi(\girth(G)),d_G(u,v)\}^p
\\ \stackrel{\eqref{eq:dist-f}}\le
\frac1{n^{2}} \sum_{u,v\in V}\|f(u)-f(v)\|_p^p
\stackrel{\eqref{eq:PIp-L_K^nd}}\lesssim
\frac{1}{nd}\sum_{(u,v)\in E} \|f(u)-f(v)\|_p^p
\stackrel{\eqref{eq:dist-f}}\le 1.
\end{multline}
Using a simple counting argument,
the set $\{(u,v):d_G(u,v)\ge \girth(G)/4\}$
contains $(1-o(1))n^2$ of the pairs in $V\times V$.
Furthermore, since $\varphi(0)=0$, $\varphi(1)=1$, and $\varphi$ is concave,
$\varphi(x)\le x$ for $x\ge 1$.
Therefore,
the left-hand side of~\eqref{eq:PI-for-snowflake} is at least
$\Omega\bigl(\frac{\varphi(\girth(G))^{p}}{\mathrm{dist}(f)^p}\bigr)$.
Hence,
\[ \mathrm{dist}(f)\gtrsim \varphi(\girth(G))
\stackrel{\eqref{eq:hi-girth-log-diam}}\gtrsim\varphi(\log n). \]
The last inequality used the fact that the graphs $G_n$ have a girth of the order of $\log n$ along with the fact that for a metric transform
$\varphi$ and $\alpha\in(0,1]$, $\varphi(\alpha x)\ge \alpha \varphi(x)$.

\medskip

We next prove that $D_k(X_\varphi,\ell_p)\lesssim\varphi(\log (k+1))$, when either $p=1$ and $\varphi$ is any metric transform satisfying $\varphi(1)=1$; or $p=2$ and $\varphi$ is any metric transform satisfying $\varphi(1)=1$ and $\varphi(t)\gtrsim \sqrt{\log t}$ for $t\geq 2$.
Since $X_\varphi$ is an $\ell_1$-union (\autoref{def:orthogonal-union}), by
\autoref{lem:ell1-union}
it is sufficient to prove
\[\sup_{n>n_0} D_k(\cone(\hat X_{G_n})\embed\ell_p)\lesssim\varphi(\log (k+1)).\]
Moreover, by~\eqref{eq:1-cos identity}, the cone of a metric space $(A,d_A)$ is isometric to the cone of $(A,\min\{d_A,\pi\})$ and so by~\eqref{eq:cone(Lp) in Lp} it further suffices to prove the distortion bound.

\[ \sup_{n>n_0} D_k\bigl ((\Sigma(G_n),\min\{\varphi(\girth(G_n)), d_{\Sigma(G_n)}\})\embed\ell_p\bigr)\lesssim\varphi(\log (k+1)).\]

We fix an $n$-vertex graph $G=G_n=(V,E)$,
and a $k$-point subset
$S\subset \Sigma(G)$, $|S|=k$.
We consider two cases, depending on the size of $k$.

\begin{asparadesc} 
\item[$k\ge \sqrt{n}$:]
In this case, we present a simple embedding of
\((\Sigma(G),\min\{d_{\Sigma(G)},\varphi(\girth(G))\})\)
in $\ell_p(V)$ with a distortion of at most ${2}^{1/p}\varphi(\girth(G))$.
The embedding
$f:\Sigma(G)\to \ell_p(V)$ is defined as follows
\[
f(x)_{u}=
\begin{cases}
2^{-1/p}(1-d_{\Sigma(G)}(x,u)) & \exists v\in V,\ \{u,v\}\in E,\ x\in[u,v]\subset\Sigma(G)\\
0 & \text{ otherwise.}
\end{cases}
\]
It is straightforward to check that this embedding is $1$-Lipschitz and the inverse Lipschitz constant is at most
$\max\{2,2^{1-1/p}\cdot \varphi(\girth(G))\}$.
Therefore,
\begin{multline} \label{eq:hat-X-cobe(L2)}
c_{p}((S,\min\{\varphi(\girth(G)), d_{\Sigma(G)}\}))
\le
c_{p}((\Sigma(G),\min\{\varphi(\girth(G)), d_{\Sigma(G)}\}))
\\ \leq 2\varphi( \girth(G))
\stackrel{\eqref{eq:hi-girth-log-diam}}{\lesssim}
\varphi(\log n) \le 2\varphi(\log k).
\end{multline}

\item[$k< \sqrt{n}$:]
Consider the embedding in $\ell_1$ (\autoref{thm:no-embedding-dichotomy-ell_1}) and in $\ell_2$ (\autoref{thm:no-embedding-dichotomy-ell_2}) separately.
\begin{itemize}
\item 
The embedding into $\ell_1$ is a straightforward application of \autoref{lem:subset-L2-embedding}
with $\alpha=\varphi(\girth(G))$, as
\[
c_1(S, \min\{\varphi(\girth(G)), d_{\Sigma(G)}\})
\stackrel{\eqref{eq:ALNRRV-L1-embedding}}\lesssim 1
= \varphi(1)\leq \varphi(\log(k+1)).
\]
\item
For embedding in $\ell_2$,
recall that we further assume that $\varphi(t)\gtrsim \sqrt{\log t}$.
Denote $\beta=\log (k+1)$. 
First assume that $\varphi(\beta)\lesssim \sqrt{\log n} $.
By~\eqref{eq:ALNRRV-L2-embedding},
\begin{equation*}
c_{2}((S,\min\{\varphi(\beta), d_{\Sigma(G)}\}))
\stackrel{\eqref{eq:ALNRRV-L2-embedding}}
\lesssim 
\max \Bigl\{ \sqrt{\log \beta}\,,\,   \frac{\varphi(\beta)}{\sqrt{\log n}} \Bigr\}
=\sqrt{\log \beta} \lesssim \varphi(\beta).
\end{equation*}
Next assume that $\varphi(\beta)\gtrsim \sqrt{\log n} $.
    In this case,
\begin{equation}\label{eq:theta>1/2}
c_{2}((S,\min\{\varphi(\beta), d_{\Sigma(G)}\}))
\stackrel{\eqref{eq:ALNRRV-L2-embedding}}
\lesssim \sqrt{\beta} \leq \sqrt{\log n} \lesssim \varphi(\beta). \qedhere
\end{equation}
\end{itemize}
\end{asparadesc}
 \end{proof}

\begin{remark}
\label{rem:D(X,ell_p)}
\autoref{thm:no-embedding-dichotomy-ell_2} can be generalized to embedding in $\ell_p$ for any fixed $p\in[1,\infty)$ in the following straightforward way. The space $X_\varphi$ is the same.
The lower bound $D_k(X_\varphi\embed \ell_p)\gtrsim_p \varphi(\log (k+1))$
is actually proved in the above proof for any $p\in[1,\infty)$.
The upper bound is a consequence of (for example) 
Dvoretzky theorem: 
\(
D_k(X_\varphi\embed \ell_p)\leq 
D_k(X_\varphi\embed \ell_2)\cdot D_k(\ell_2\embed \ell_p)
=D_k(X_\varphi\embed \ell_2) \asymp \varphi(\log (k+1)).
\)
\end{remark}

\newpage
\backgroundsetup{contents={}}
\printbibliography   

\end{document}